\documentclass[letterpaper, 12pt]{article}

\usepackage[utf8]{inputenc}
 
\usepackage{geometry}
\geometry{letterpaper, top=25mm, right=20mm, left=20mm, bottom=30mm}
\setlength{\footskip}{15mm}

\usepackage{fancyhdr}
\usepackage{amsmath,amsfonts,amssymb,amsthm}  
\usepackage[inline]{enumitem}
\usepackage{dsfont}
\usepackage{listings}
\usepackage{color}
\usepackage{longtable}
\usepackage{array}
\usepackage{float}
\usepackage{booktabs}
\usepackage{graphicx}
\usepackage{subcaption}
\usepackage[english]{babel}
\usepackage{commath}
\usepackage{color}
\usepackage{multirow}
\usepackage{bm}
\usepackage{tikz}

\usepackage{setspace}
\onehalfspacing

\usepackage{anyfontsize}
\usepackage{dsfont}
\usepackage{mathtools}	
\usepackage{hyperref}
\usepackage{listings}
\usepackage{natbib}
\usepackage{hhline}
\usepackage{authblk}
\usepackage{verbatim}
\usepackage{dcolumn}
\usepackage{calc}

\usepackage[flushleft]{threeparttable}

\newtheorem{proposition}{Proposition} 
\newtheorem{corollary}{Corollary}
\newtheorem{theorem}{Theorem}

\newtheorem{lemma}{Lemma}
\newtheorem{assumption}{Assumption}
\theoremstyle{definition}

\newtheorem{remark}{Remark}

\newenvironment{assbis}[1]
{%
	\addtocounter{assumption}{-1}%
	\begin{assumption}}
	{\end{assumption}}

\usepackage{chngcntr}
\usepackage{apptools}
\AtAppendix{\counterwithin{lemma}{section}}
\AtAppendix{\counterwithin{theorem}{section}}
\AtAppendix{\counterwithin{proposition}{section}}
\AtAppendix{\counterwithin{equation}{section}}
\AtAppendix{\counterwithin{remark}{section}}
\AtAppendix{\counterwithin{table}{section}}
\AtAppendix{\counterwithin{assumption}{subsection}}
\AtAppendix{\counterwithin{corollary}{section}}

\newcommand{\E}{\mathbb{E}}	
\newcommand{\Prob}{\mathbb{P}}
\newcommand{\Var}{\mathbb{V}}

\newcommand{\Cov}{\textrm{Cov}}

\newcommand{\wh}{\widehat}
\newcommand{\mc}{\mathcal}
\newcommand{\wt}{\widetilde}

\DeclareMathOperator*{\argmin}{arg\,min}

\newcommand{\one}[1]{\mathds{1}\{#1\}}

\newcommand\khi{ k_{h,i} }
\newcommand\kX{ k(X_h) }
\newcommand\kXi{ k(X_{h,i}) }
\newcommand\kh{ k_h(X_i-x_0) }
\newcommand\khX{ k_h(X-x_0) }
\newcommand\Xihj{ X_{h,i}^j}

\newcommand{\sss}{\scriptscriptstyle}
\newcommand{\ccdot}{\,\cdot\,}

\newcommand{\Sum}{ \sum_{i=1}^n }

\hypersetup{
	colorlinks   = true, 
	urlcolor     = blue,
	linkcolor    = black, 
	citecolor   =  black 
}

\newcommand{\ThanksText}{I am grateful to my Ph.D. advisor Christoph Rothe for his invaluable support. I thank François Gerard for kindly running my estimation routine on a restricted-use dataset. I thank Timo Dimitriadis, Claudia Noack, Yoshiyasu Rai, and participants of IAAE Conference 2021, EEA-ESEM Virtual 2021, Econometrics Seminar in Mannheim, HKMetrics-Workshop, and Bonn-Mannheim PhD Workshop for their helpful comments. I gratefully acknowledge funding by the German Research Foundation (DFG) through CRC TR 224 (Project A04)  and by the European Research Council (ERC) through grant SH1-77202. }

\author{Tomasz Olma\thanks{\ThanksText
\textit{Address:} University of Mannheim, Department of Economics, L7, 3--5; 68161 Mannheim, Germany. 
\textit{Email:} \href{mailto:tomekolma@gmail.com}{tomasz.j.olma@gmail.com}.
\textit{Website:} \href{ https://tomaszolma.github.io}{tomaszolma.github.io}.}
} 	

\affil{University of Mannheim}

\title{Nonparametric Estimation of Truncated Conditional Expectation Functions}

\date{\today}

\begin{document}
\maketitle
\begin{abstract}
	Truncated conditional expectation functions are objects of interest in a wide range of economic applications, including income inequality measurement, financial risk management, and impact evaluation. They typically involve truncating the outcome variable above or below certain quantiles of its conditional distribution.
	In this paper, based on local linear methods, a novel, two-stage, nonparametric estimator of such functions is proposed.
	In this estimation problem, the conditional quantile function is a nuisance parameter that has to be estimated in the first stage. The proposed estimator is insensitive to the first-stage estimation error owing to the use of a Neyman-orthogonal moment in the second stage.
	This construction ensures that inference methods developed for the standard nonparametric regression can be readily adapted to conduct inference on truncated conditional expectations.
	As an extension, estimation with an estimated truncation quantile level is considered.
	The proposed estimator is applied in two empirical settings: sharp regression discontinuity designs with a~manipulated running variable and randomized experiments with sample selection.
\end{abstract}


\newpage
\section{Introduction}\label{sec:Introduction}
A truncated sample mean is the mean calculated after discarding some of the highest and/or lowest values in a sample.
Such quantities, which estimate the corresponding truncated expectations, are used in a wide range of economic applications.
In studies of inequality, income dispersion can be summarized by reporting the mean income in different quintiles of its distribution, i.e., the mean income of the 20\% of households with the lowest income, followed by the mean income of households between the 20th and 40th percentile of the income distribution, etc.\ \citep[e.g.,][]{Semega2020}. 
In finance, the expected shortfall denotes the expected value of a certain proportion, e.g.\ 5\%, of top losses.
It is a widely-used risk measure, which informs about the performance of a portfolio of assets in the worst-case scenarios \citep[e.g.,][]{Chen2008}. 
Truncated means are also used in settings with contaminated data, where the sharp bounds on the true expected outcome are obtained by considering the extreme scenarios in which the contaminated data points have the highest or the lowest outcomes, and by trimming the respective tails of the outcome distribution \citep{horowitz1995identification}. 
This partial identification approach  has been adapted to several impact evaluation settings to address sample selection problems; see, e.g., \citet{ZhangRubin2003, Lee2009, chen2015bounds}.

In all the above examples, the analysis can be enriched by incorporating covariates.
First, the anatomy of income inequality can be better understood when analyzed conditionally on characteristics such as age or work experience. Second, an estimator of the expected shortfall can be more informative if it takes into account covariates, such as past returns. Third, in impact evaluation, the heterogeneity of treatment effects can be explored based on individuals' characteristics.
Furthermore, \citet{gerard2020bounds} apply the trimming approach of \citet{horowitz1995identification} to regression discontinuity designs with a manipulated running variable, which necessarily involve conditioning on a covariate. 

In this paper, I propose a novel, nonparametric estimator of truncated conditional expectation functions. As in the above-mentioned applications, I consider setups where the outcome variable needs to be truncated above or below certain quantiles of its conditional distribution. 
For ease of exposition, I focus on one-sided truncation.
I consider a nonparametric setting with a continuous outcome variable, denoted by $Y$, and a vector of continuous covariates, denoted by $X$.\footnote{If the covariates take on only a small number of distinct values, then the truncated conditional expectation function can be estimated using sample truncated means binned by covariate values.} For a quantile level $\eta \in (0,1)$ and $x$ in the support of $X$, let $Q(\eta,x)$ be the conditional $\eta$-quantile of $Y$ given $X=x$. 
The object of interest is the following function:
\begin{align}\label{eq:meta}
m(\eta,x) & =\E[Y|Y \leq Q(\eta,X), X=x].
\end{align}
I refer to $\eta$ in the above definition as the truncation quantile level.
It might be chosen by the analyst, in which case it is a fixed, known number, but in some applications the truncation quantile level has to be estimated from the data. The considered setting is nonparametric, meaning that only smoothness restrictions on the functions $m(\eta,x)$ and $Q(\eta,x)$ are imposed.

In this estimation problem, the function $Q(\eta,\ccdot)$ is a nuisance parameter. If it was known, then based on a sample $\{(X_i,Y_i)\}_{i=1}^n$ from the distribution of $(X,Y)$, one could estimate $m(\eta,x)$ using standard nonparametric regression techniques, e.g., kernel estimators, applied to the sample restricted to observations with $Y_i \leq Q(\eta,X_i)$. Alternatively, motivated by the equivalent representation of the estimand as:
\begin{equation}\label{eq:def2}
m(\eta,x) =\frac{1}{\eta} \E[Y \one{Y \leq Q(\eta,X)}|X=x],
\end{equation}
one could run a nonparametric regression with $\frac{1}{\eta}Y_i \one{Y_i \leq Q(\eta,X_i)}$ as the outcome variable. Feasible versions of these two estimators, however, require estimating the function $Q(\eta, \ccdot)$ in the first stage. This additional estimation step may affect the properties of the resulting estimators in a potentially complicated manner.

In order to alleviate the impact of the first-stage estimation error on the final estimator, I~propose a modification of the latter approach using a conditional moment equation that is Neyman-orthogonal to the conditional quantile function \citep{neyman1979c}. Specifically, the proposed estimation approach is based on the following representation of the estimand:
\begin{equation}\label{eq:def3}
m(\eta,x)= \frac{1}{\eta} \E[ Y\one{Y \leq Q(\eta,X)} - Q(\eta,X)(\one{Y \leq Q(\eta,X)}-\eta)|X=x].
\end{equation}
Compared to \eqref{eq:def2}, the conditional moment in \eqref{eq:def3} contains an additional term that is mean-zero conditional on $X$.\footnote{The conditional moment in \eqref{eq:def3} is the quantity of interest when the outcome variable has mass points, but, as shown in this paper, there are reasons to consider this formula even with a continuous outcome variable.} 
Its inclusion renders the whole expression insensitive to small perturbations of $Q(\eta,\ccdot)$ in the sense that its derivative with respect to the the conditional quantile evaluated at the truth equals zero,
\begin{equation}\label{eq:orthogonality}
\frac{\partial}{\partial q} \E[Y\one{Y \leq q} - q(\one{Y \leq q}-\eta)|X=x] |_{q=Q(\eta,x)}=0.
\end{equation}
Such orthogonal, or locally-robust, conditional moments feature prominently in the literature in setups where a nuisance parameter has to be estimated in the first stage \citep[e.g.,][]{belloni2017program, chernozhukov2018double}. The orthogonality property ensures that estimation of the nuisance parameter has no first-order effect on the asymptotic distribution of the final estimator.

Based on the conditional moment in equation \eqref{eq:def3}, my proposed estimator is constructed in two steps using local linear methods \citep{fan1996local}. In the first stage, I estimate the local linear approximation of the function $Q(\eta,\ccdot)$. In the second stage, I run a local linear regression with a generated outcome variable corresponding to the expression under the conditional expectation in equation \eqref{eq:def3}.\footnote{Based on the local linear methods, one can also construct estimators of truncated conditional expectations functions motivated by the expressions in \eqref{eq:meta} and \eqref{eq:def2}. I discuss them in Online Appendix \ref{A:Alternative}.} The estimator is easy to implement, and the bandwidths for the two local linear regressions can be selected as in standard nonparametric regressions.

This paper contains two main theoretical results. First, I show that the proposed estimator is asymptotically equivalent to its infeasible analog using the true conditional quantile function. Given this result, the asymptotic properties follow from the standard theory of local linear estimation. The proposed estimator has good bias properties, and it is straightforward to adapt existing inference methods to do inference on truncated conditional expectation functions.
Second, I study the asymptotic properties of my estimator when the truncation quantile level is estimated from the data.
Under a high-level assumption on $\widehat{\eta}$, I derive an expansion of the proposed estimator evaluated at $\widehat{\eta}$ about the estimator evaluated at the true value $\eta$. This expansion can be used on a case-by-case basis to derive the asymptotic distribution for specific estimators $\wh\eta$.

I apply the proposed estimator in two empirical settings. First, I estimate bounds on the local average treatment effect in regression discontinuity designs with a manipulated running variable \citep{gerard2020bounds}. Second, I estimate bounds on the conditional wage effect of a job training program \citep{Lee2009}. These bounds involve truncated conditional expectation functions with truncation quantile levels that need to be estimated from the data.

\paragraph*{Related Literature.} Nonparametric estimation of truncated conditional expectation functions has been extensively studied in the context of the conditional expected shortfall estimation. \citet{scaillet2005nonparametric}, \citet{Cai2008}, and \citet{Kato2012} propose estimators based on first-stage estimates of the conditional cumulative distribution function (c.d.f.) of the outcome variable. This estimation strategy, however, is not well-suited for estimation at points on the boundary of the support of the conditioning variables. 
The Nadaraya-Watson estimator of the conditional c.d.f., employed by \citet{scaillet2005nonparametric}, exhibits the so-called boundary effects in that its bias is of larger order at the boundary than in the interior.\footnote{Estimation of a conditional c.d.f. can be cast as a regression problem with outcome variable $\one{Y_i \leq y}$.} \citet{Cai2008} and \citet{Kato2012} use the weighted Nadaraya-Watson estimator, which is asymptotically equivalent to the local linear estimator at interior points, but, unlike the local linear estimator, it is guaranteed to yield a proper c.d.f. The weighted Nadaraya-Watson estimator, however, is not defined for boundary points. In contrast, my proposed approach is well-suited for estimation at boundary points.

\citet{linton2013estimation} propose an estimator based on the orthogonal conditional moment equation in \eqref{eq:def3}. Their analysis, however, applies specifically to setups where the conditional variance of the outcome variable is infinite, which results in the first-stage local polynomial quantile estimator converging faster than the final estimator. The proof of \citet{linton2013estimation} does not apply to models with finite variance of the outcome variable considered in this paper, where the first stage and the final estimator have the same rates of convergence. Their estimator is also more computationally intensive as it requires estimating a separate local polynomial quantile regression for each observation used in the second stage.

Various ways of estimating truncated conditional expectation functions have also been proposed in parametric settings. \citet{koenker1978regression},  \citet{Ruppert1980}, and \citet{jurevckova1984regression} consider generalizations of truncated means to linear models. In the first stage, they estimate quantile regressions, and in the second stage they run a regression on a sample truncated according to the first-stage estimates. In an independent work, \citet{Barendse2020} also uses a generated outcome variable based on the orthogonal moment equation. He additionally considers efficient weighting, analogous to, possibly nonlinear, weighted least squares. \citet{dimitriadis2019joint} develop a joint quantile and expected shortfall estimation framework and find estimators that can be more efficient than the simple two-stage procedure described above. The efficiency gains of \citet{dimitriadis2019joint} and \citet{Barendse2020}, however, are specific to parametric models.

In the above-cited papers, it is assumed that the truncation quantile level is chosen by the analyst. A setting with estimated conditional truncation quantile levels and possibly continuous covariates is studied by \citet{semenova2020better}. 
She exploits a moment that is similar to \eqref{eq:def3}, but it includes additional terms that render the expression orthogonal also to the truncation quantile level.\footnote{This property is achieved using a specific conditional moment defining the truncation quantile level.}
Her focus, however, is on integrated truncated conditional expectations, and she does not provide conditional estimates.
Truncated means with estimated trimming proportions have also been studied in the unconditional case, e.g., by \citet{shorack1974random} and \citet{Lee2009}.

\paragraph*{Outline of the Paper.}
The remainder of this paper is structured as follows. In Section~\ref{sec:TLLR}, I~formally introduce the proposed estimator. Its asymptotic properties are studied in Section~\ref{sec:Asymptotics}. In Section~\ref{sec:Inference}, I discuss inference. I present a Monte Carlo study in Section~\ref{sec:MonteCarlo}. In Section~\ref{sec:Applications}, I~consider two empirical applications: (i) sharp regression discontinuity designs with a manipulated running variable and (ii) randomized experiments with sample selection. Section \ref{sec1:Conclusions} concludes.

\section{Estimator}\label{sec:TLLR}
In this section, I formally introduce the proposed estimator. To simplify the exposition, $X$ is assumed to be univariate. A natural extension for the multivariate case is presented in Online Appendix~\ref{A:multivariate}. I consider estimation at a selected covariate value $x_0$. The truncation quantile level, in turn, might be known or might have to be estimated from the data.

In the first stage,  I estimate the conditional $\eta$-quantile function $Q(\eta,\ccdot)$. For the second-stage estimator, it suffices if $Q(\eta,\ccdot)$ is estimated well for covariate values close to $x_0$.  The level and slope of the function $Q(\eta,\ccdot)$ at $x_0$ are estimated in a local linear quantile regression as
\begin{equation*}\label{eq:TLLRq_est}
(\widehat{q}_{0}(\eta, x_0; a), \widehat{q}_1(\eta, x_0;a))^\top = \argmin_{(\beta_0,\beta_1)} \Sum k_{a}(X_i-x_0) \rho_\eta(Y_i-\beta_0-\beta_1(X_i-x_0)),
\end{equation*}
where $\rho_\eta(v)=v(\eta - \one{v \leq 0})$ is the `check' function, $k(\cdot)$ is a kernel function, $a$ is a~bandwidth, and $k_a(v)=k(v/a)/a$. Based on these estimates, for $x$ in the estimation window relevant for the second stage,  $Q(\eta,x)$ is estimated with its implied local linear approximation:
\begin{equation*}\label{eq:TLLRq_fun}
\widehat{Q}^{ll}(\eta, x;x_0,a)= \widehat{q}_{0}(\eta, x_0;a) +  \widehat{q}_{1}(\eta, x_0;a) (x-x_0).
\end{equation*}

In the second stage, I run a local linear regression with a generated outcome variable corresponding to the expression in \eqref{eq:def3} to estimate the truncated conditional expectation $m(\eta,x_0)$:
\begin{equation*}
\widehat{m}(\eta, x_0; a, h) = e_1^\top \argmin_{ (\beta_0,\beta_1) } \Sum k_h(X_i-x_0)\big( \psi_i(\eta, \widehat{Q}^{ll}(\eta,X_i;x_0,a)) -\beta_0 - \beta_1(X_i-x_0)\big)^2,\label{eq:TLLRm_est}
\end{equation*}
where $e_1=(0,1)^\top$, $h$ is another bandwidth, and  
$$
\psi_i(\eta,q )= \frac{1}{\eta} \left( Y_i\one{ Y_i \leq q } -  q(\one{Y_i \leq q}-\eta)\right).
$$
If $\eta$ is not known, but an estimate $\wh \eta$ is available, I estimate $m(\eta,x_0)$ as $\widehat{m}(\wh\eta, x_0; a, h)$.

\section{Asymptotic Properties}\label{sec:Asymptotics}
In this section, I introduce the assumptions and study the asymptotic properties of the proposed estimator. I use the following notation. I put $\partial_x^k m(\eta,x_0)= \frac{\partial^k}{\partial x^k}m(\eta,x)|_{x=x_0}$ and $\partial_x^k Q(\eta,x_0)= \frac{\partial^k}{\partial x^k}Q(\eta,x)|_{x=x_0}$. For positive sequences $b_n$ and $c_n$, I write $b_n \prec c_n$ if $b_n / c_n \to 0$, and $b_n \asymp c_n$ if $C_1 b_n \leq c_n \leq C_2 b_n$ for some positive constants $C_1$ and $C_2.$

\subsection{Assumptions}\label{subsec:Assumptions}
I consider estimation based on independent and identically distributed (i.i.d.) data. This modeling assumption is appropriate for the microeconometric applications considered in this paper.\footnote{The asymptotic analysis could be extended to allow for dependent data satisfying the $\alpha$-mixing condition under assumptions similar to those imposed by \citet{masry1997local} for the standard nonparametric regression.}

\begin{assumption}\label{ass:ass1}\,
	\begin{enumerate}[label=(\alph*),nosep]
		\item $\{(X_i, Y_i)\}_{i=1}^n$ are continuous i.i.d. random variables;
		\item The support of~$X$, denoted by $\mc X$, is an interval, and $x_0\in\mc X$;
		\item $\eta \in (0,1)$.
	\end{enumerate}
\end{assumption}

I follow the classic literature on local polynomial modeling and assume that the covariate is continuous. 
The density of $X$ is denoted by $f_X(x)$. The conditional distribution function of $Y$ given $X$ is denoted by $F_{Y|X}(y|x)$, and the corresponding conditional density by $f_{Y|X}(y|x)$. 
Subsequent assumptions involve smoothness requirements for the functions $Q(\eta,\ccdot)$ and $m(\eta,\ccdot)$. I adopt the following convention. For a point on the left (right) boundary of $\mathcal{X}$, I~define the derivative with respect to the covariate value as the right (left) derivative at that point.

\begin{assumption}\label{ass:quantile}\,
	\begin{enumerate}[label=(\alph*),nosep]
		\item $Q(\eta,x)$ is differentiable with respect to $x$ on $\mc X$ and $\partial_xQ(\eta,x)$ is Lipschitz continuous in $x$;
		\item $f_X(x)$ is continuous and positive on $\mc X$;
		\item $f_{Y|X}(y|x)$ is continuous and positive on $\{(x,y):\,x  \in \mathcal{X},\,y \in [Q(\eta,x)\pm\epsilon]\}$ for some $\epsilon>0$.
	\end{enumerate}
\end{assumption}

Assumption~\ref{ass:quantile} comprises standard conditions for the asymptotic analysis of the local linear quantile estimator.
The smoothness assumption on $Q(\eta,x)$ is used to control the order of the bias introduced by approximating the possibly nonlinear function $Q(\eta,x)$ with its first-order Taylor expansion in~$x$. The restrictions on the density $f_X(x)$ ensure that with high probability there are observations around the estimation point. The restrictions on the conditional density $f_{Y|X}(y|x)$ ensure that the conditional $\eta$-quantile function can be precisely estimated. Assumption~\ref{ass:quantile} implies also smoothness of the coefficients in the local linear approximation of the function $Q(u,\ccdot)$ for quantile levels close to $\eta$, which is exploited in the analysis with an estimated truncation quantile level.

\begin{assumption}\label{ass:mean}\,
	\begin{enumerate}[label=(\alph*),nosep]
		\item $m(\eta,x)$ is twice continuously differentiable with respect to $x$ on $\mc X$;
		\item $\Var[Y|Y \leq Q(\eta,x),X=x]$ is bounded, bounded away from zero, and continuous in $x$ on $\mc X$;
		\item $\E[|Y|^{2+\xi}\one{Y\leq Q(\eta,X)}|X=x]$ is bounded uniformly over $x$ in $\mc X$ for some  $\xi > 0$.
	\end{enumerate}
\end{assumption}

Assumption~\ref{ass:mean} is a natural adaptation of the standard conditions for the local linear estimator in the nonparametric mean regression for estimating truncated conditional expectations.  Even if the function $Q(\eta, \ccdot)$ was known, a continuous second-order derivative of $m(\eta,x)$ with respect to~$x$ would be required to characterize the leading bias introduced by approximating the function $m(\eta,\ccdot)$ with its first-order Taylor expansion. Parts (b) and (c) are needed to obtain asymptotic normality of the proposed estimator.

\begin{assumption}\label{ass:kernel}\,
	\begin{enumerate}[label=(\alph*),nosep]
		\item The kernel $k$ is a continuous, symmetric density function with compact support, say $[-1,1]$;
		\item As $n\to \infty$,  $h \to 0$, $a \to 0$, $nh \to \infty$, and $na \to \infty$.
	\end{enumerate}
\end{assumption}
The restrictions on the kernel are standard. The requirements on the bandwidths are necessary for ensuring consistency in both stages. In a preliminary study of the proposed estimator, the convergence rates of the two bandwidths are not linked, but I impose further restrictions in Theorems~\ref{th:asy_distribution}~and~\ref{th:esteta}. All results cover an important special case where $a\asymp h$.

\subsection{Asymptotic Distribution}\label{subsec:distribution}
In this section, I study the asymptotic properties of the proposed estimator when the truncation quantile level is known. The key technical result is stated in Lemma~\ref{lemma:asy_equivalence}. It shows that the estimator $\widehat{m}$ is asymptotically equivalent to its infeasible analog using the true conditional quantile function.

\begin{lemma}\label{lemma:asy_equivalence}
	Suppose that Assumptions~\ref{ass:ass1}, \ref{ass:quantile}, and \ref{ass:kernel} hold. Then
	\[
	R(\eta,x_0;a,h) \equiv \widehat{m}(\eta,x_0;a,h) - \widetilde{m}(\eta,x_0;h) = O_p(w_n(nh)^{-1/2} + w_n^2 ),
	\]
	where $w_n= a^2 + h^2 + (a+h)(a^3n)^{-1/2}$ and
	\[
	\widetilde{m}(\eta, x_0; h) = e_1^\top \argmin_{ (\beta_0,\beta_1) } \Sum k_h(X_i-x_0)(\psi_i(\eta, Q(\eta,X_i))-\beta_0 - \beta_1(X_i-x_0))^2 .
	\]
	In particular, if $a \asymp h$, then $R(\eta,x_0;a,h)=O_p(h^4+(nh)^{-1})$.
\end{lemma}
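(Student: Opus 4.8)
The plan is to exploit that both $\widehat m$ and $\widetilde m$ are local linear smoothers built from the \emph{same} design $\{X_i\}_{i=1}^n$ and bandwidth $h$, hence share the local linear weights $W_{n,i}(x_0;h)$ that estimate the conditional mean level at $x_0$ and depend only on $\{X_i\}$, $k$, and $h$ (not on the outcomes). Writing $Q_i=Q(\eta,X_i)$, $\widehat Q_i=\widehat Q^{ll}(\eta,X_i;x_0,a)$, and $\Delta_i=\widehat Q_i-Q_i$, the claim reduces to
\[
R(\eta,x_0;a,h)=\Sum W_{n,i}(x_0;h)\,\delta_i,\qquad \delta_i=\psi_i(\eta,\widehat Q_i)-\psi_i(\eta,Q_i).
\]
I would record the standard weight orders $\Sum|W_{n,i}|=O_p(1)$, $\Sum W_{n,i}^2=O_p((nh)^{-1})$, and $\Sum W_{n,i}^2(X_i-x_0)^2=O_p(h/n)$, together with the first-stage rates $\widehat q_0-q_0=O_p(a^2+(na)^{-1/2})$ and $\widehat q_1-q_1=O_p(a+(na^3)^{-1/2})$ for the local linear quantile regression, where $q_0=Q(\eta,x_0)$ and $q_1=\partial_xQ(\eta,x_0)$. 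Since $|X_i-x_0|\le h$ on the support of the weights and $\partial_xQ(\eta,\ccdot)$ is Lipschitz (Assumption~\ref{ass:quantile}), these combine to the uniform bound $\sup_{i:\,k_h(X_i-x_0)\neq0}|\Delta_i|=O_p(w_n)$, which drives the whole argument.

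Next I would split $\delta_i=[g_i(\widehat Q_i)-g_i(Q_i)]+[u_i(\widehat Q_i)-u_i(Q_i)]$ into conditional mean and residual, with $g_i(q)=\E[\psi_i(\eta,q)\mid X_i]$ and $u_i(q)=\psi_i(\eta,q)-g_i(q)$. A direct computation gives $g_i'(q)=-\eta^{-1}(F_{Y|X}(q\mid X_i)-\eta)$ and $g_i''(q)=-\eta^{-1}f_{Y|X}(q\mid X_i)$, so the Neyman orthogonality~\eqref{eq:orthogonality} is exactly $g_i'(Q_i)=0$. A second-order expansion then yields $g_i(\widehat Q_i)-g_i(Q_i)=\tfrac12 g_i''(\xi_i)\Delta_i^2$ for $\xi_i$ between $Q_i$ and $\widehat Q_i$; as $\sup_i|\Delta_i|=o_p(1)$, eventually $\xi_i$ lies where $f_{Y|X}$ is bounded (Assumption~\ref{ass:mean}), so $|g_i(\widehat Q_i)-g_i(Q_i)|\le C\sup_i\Delta_i^2$, and the conditional-mean part is at most $C\sup_i\Delta_i^2\,\Sum|W_{n,i}|=O_p(w_n^2)$. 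Orthogonality is essential here: without the vanishing linear term this bound would only be $O_p(w_n)$.

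For the residual I would use the exact identity $\eta[\psi_i(\eta,q)-\psi_i(\eta,Q_i)]=(Y_i-q)D_i(q)-\Delta\,[\one{Y_i\le Q_i}-\eta]$, with $\Delta=q-Q_i$ and $D_i(q)=\one{Y_i\le q}-\one{Y_i\le Q_i}$. After centering, this decomposes $u_i(\widehat Q_i)-u_i(Q_i)=A_i+\widetilde B_i$ into a nonlinear mean-zero piece $A_i$ (driven by the small-probability event $Y_i\in I_i$, the interval between $Q_i$ and $\widehat Q_i$) and a piece $\widetilde B_i=-\eta^{-1}\Delta_i[\one{Y_i\le Q_i}-\eta]$ that is \emph{linear} in $\Delta_i$. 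The linear piece is the dominant stochastic term, and it is handled by scalar pull-out: writing $\Delta_i=(\widehat q_0-q_0)+(\widehat q_1-q_1)(X_i-x_0)+r_i$ with deterministic $r_i=O(h^2)$, the sum $\Sum W_{n,i}\widetilde B_i$ factors as $-\eta^{-1}[(\widehat q_0-q_0)S_0+(\widehat q_1-q_1)S_1+S_r]$, where $S_0=\Sum W_{n,i}[\one{Y_i\le Q_i}-\eta]$, $S_1=\Sum W_{n,i}(X_i-x_0)[\one{Y_i\le Q_i}-\eta]$, and $S_r=\Sum W_{n,i}r_i[\one{Y_i\le Q_i}-\eta]$. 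Conditioning on $\{X_i\}$, the terms $\one{Y_i\le Q_i}-\eta$ are independent, mean-zero, and bounded, so $S_0=O_p((nh)^{-1/2})$, $S_1=O_p((h/n)^{1/2})$, and $S_r=O_p(h^{3/2}n^{-1/2})$; multiplying by the first-stage rates and simplifying (using $a(na^3)^{-1/2}=(na)^{-1/2}$ and $ha=O(a^2+h^2)$) shows each contribution is $O_p(w_n(nh)^{-1/2})$.

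It remains to control the nonlinear piece $\Sum W_{n,i}A_i$, and this is where I expect the main obstacle, because it is evaluated at the \emph{random, data-dependent} threshold $\widehat Q_i$. Using $|A_i|\le \eta^{-1}|\Delta_i|\one{Y_i\in I_i}+C\Delta_i^2$, the crude bound $|\Sum W_{n,i}A_i|\le \sup_i|\Delta_i|\cdot\Sum|W_{n,i}|\one{Y_i\in I_i}+C\sup_i\Delta_i^2\,\Sum|W_{n,i}|$ reduces matters to showing $\Sum|W_{n,i}|\one{Y_i\in I_i}=O_p(w_n)$ uniformly over $\widehat Q_i$ in a shrinking first-stage neighborhood. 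Since $\one{Y_i\le \widehat q_0+\widehat q_1(X_i-x_0)}$ ranges over a VC class indexed by the two scalars $(\widehat q_0,\widehat q_1)$, a maximal inequality controls the fluctuation of this indicator sum around its conditional mean $\Sum|W_{n,i}|\Prob(Y_i\in I_i\mid X_i)=O_p(\sup_i|\Delta_i|)=O_p(w_n)$, giving $\Sum|W_{n,i}|\one{Y_i\in I_i}=O_p(w_n)$ and hence $\Sum W_{n,i}A_i=O_p(w_n^2)$. This stochastic-equicontinuity argument for the plug-in threshold is the delicate step; everything else is routine bookkeeping. Combining the three parts yields $R=O_p(w_n(nh)^{-1/2}+w_n^2)$, and substituting $a\asymp h$ (so $w_n\asymp h^2+(nh)^{-1/2}$) and absorbing the cross term $h^2(nh)^{-1/2}\le h^4+(nh)^{-1}$ by AM--GM gives the stated $O_p(h^4+(nh)^{-1})$.
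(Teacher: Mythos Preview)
Your proposal is correct and lands on the same key decomposition as the paper: isolate the term linear in $\Delta_i=\widehat Q_i-Q_i$ (your $\widetilde B_i$, the paper's $L_1+L_2+L_3$) and show the remainder is $O_p(w_n^2)$. The organization differs in two ways. First, you split $\delta_i$ into a conditional-mean and a residual piece and invoke a second-order Taylor expansion of $g_i$ to make the orthogonality~\eqref{eq:orthogonality} explicit; the paper instead manipulates $\psi_i(\eta,\widehat Q_i)-\psi_i(\eta,Q_i)$ purely algebraically (adding and subtracting $\widehat Q_i\one{Y_i\le Q_i}$) and never introduces $g_i$ or its derivatives. Second---and this is where you work harder than necessary---for the nonlinear piece $\Sum W_{n,i}A_i$ you propose a VC/maximal-inequality argument to handle the random threshold; the paper's Lemma~\ref{lemma:precision1} does this by elementary means: because the relevant summands are nonnegative and monotone in the threshold, the supremum over linear plug-ins within $lw_n$ of $Q(\eta,\cdot)$ is attained at the deterministic envelope $Q(\eta,x)\pm lw_n$, so one bounds the sum by a nonnegative random variable with expectation $O(w_n^2)$ and applies Markov's inequality. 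No empirical-process tools are needed. Your route works, but the monotonicity/Markov trick is shorter and is what the paper actually does. One small correction: the boundedness of $f_{Y|X}$ near $Q(\eta,\cdot)$ that you use for the Taylor remainder comes from Assumption~\ref{ass:quantile}(c), not Assumption~\ref{ass:mean}; Lemma~\ref{lemma:asy_equivalence} does not require Assumption~\ref{ass:mean} at all.
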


The remainder $R(\eta,x_0;a,h)$ is driven by the estimation error from the first stage on the interval $\mathcal{X}(x_0,h)\equiv [x_0-h,x_0+h] \cap \mathcal{X}$, which is relevant for the second-stage estimator. There are two sources of this estimation error. First, the function $Q(\eta, \ccdot)$ is replaced with its local linear approximation, which results in an error of order $O(h^2)$. Second, the intercept and slope of this approximation are estimated at rates $O_p(a^2+(an)^{-1/2})$ and $O_p(a+(a^3n)^{-1/2})$, respectively.\footnote{In fact, these are the only properties of the first-stage estimator required in the proof of Lemma~\ref{lemma:asy_equivalence}.} As a result, the estimated conditional quantile function satisfies
\begin{equation}
\sup_{x \in \mathcal{X}(x_0,h)} |\widehat{Q}^{ll}(\eta,x;x_0,a)- Q(\eta,x)|  = O_p(w_n).
\end{equation}
If $h(nh)^{-1/3}\prec a$, then $w_n\to 0$, and  $R(\eta,x_0;a,h)$ is of order smaller than $O_p(w_n)$. This low sensitivity to the first-stage estimation error is obtained by construction, owing to the use of an orthogonal moment.

Lemma~\ref{lemma:asy_equivalence} holds regardless of whether the variance of the outcome variable is finite or infinite. If Assumption~\ref{ass:mean} holds in addition to the assumptions of Lemma~\ref{lemma:asy_equivalence}, then asymptotic normality of $\wh m$  follows from the standard theory of local linear estimation. If the variance of the outcome variable is infinite, then the asymptotic distribution of $\wh m$ can be obtained under alternative assumptions following the steps of \citet{linton2013estimation}. I focus on the former case. 

The asymptotic distribution presented in Theorem~\ref{th:asy_distribution} involves typical kernel constants, which differ depending on whether $x_0$ lies in the interior or on the boundary of the support of $X$, but this dependence is left implicit. Let $\mu_2=\int v^2 k(v)dv$, $\kappa_0= \int k(v)^2dv $, $\bar\mu= (\bar{\mu}_2^2 - \bar{\mu}_1 \bar{\mu}_{3})/(\bar{\mu}_2\bar{\mu}_0-\bar{\mu}_1^2)$, and $\bar\kappa= \int_0^\infty(k(v)(\bar{\mu}_1 v - \bar{\mu}_2))^2dv/ (\bar{\mu}_2\bar{\mu}_0-\bar{\mu}_1^2)^2 $, where $\bar{\mu}_{j}= \int_{0}^{\infty}v^j k(v)dv$. 
If $x_0$ lies in the interior of $\mathcal{X}$, I~put $\mu=\mu_2$ and $\kappa= \kappa_0 $. 
If $x_0$ lies on the boundary of $\mathcal{X}$, I~put $\mu=\bar\mu$ and $\kappa= \bar\kappa$.

\begin{theorem}\label{th:asy_distribution}
	Suppose that Assumptions~\ref{ass:ass1}--\ref{ass:kernel} hold, and  $h(nh)^{-1/6} \prec a \prec \sqrt{h} $, e.g., $a=h$. Then
	\begin{align*}
	\sqrt{nh}\big(\widehat{m}(\eta, x_0; a, h)  - m(\eta,x_0) - \mathcal{B}(\eta, x_0)h^2 \big) \xrightarrow{d} \mathcal{N}(0, V(\eta,x_0)),
	\end{align*}
	where
	\begin{align*}
	&\mathcal{B}(\eta,x_0)  =  \frac{1}{2} \mu\, \partial_x^2 m(\eta, x_0) + o_p(1),\\
	&V(\eta, x_0) = \frac{\kappa}{\eta f_X(x_0)} \left( \Var[Y|Y\leq Q(\eta,x_0), X=x_0] + (1-\eta) \left(Q(\eta,x_0) - m(\eta,x_0)\right)^2 \right).
	\end{align*}
\end{theorem}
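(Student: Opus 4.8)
The plan is to derive the limit from Lemma~\ref{lemma:asy_equivalence}, which reduces the feasible estimator $\widehat m$ to the infeasible estimator $\widetilde m$, and then to read the asymptotic distribution of $\widetilde m$ off the classical theory of local linear regression. The first step is to check that the bandwidth window $h(nh)^{-1/6}\prec a\prec\sqrt h$ makes the lemma's remainder negligible on the $\sqrt{nh}$ scale. With $w_n=a^2+h^2+(a+h)(a^3n)^{-1/2}$, the lemma gives
\[
\sqrt{nh}\,R(\eta,x_0;a,h)=O_p\!\left(w_n+(nh)^{1/2}w_n^2\right),
\]
so it suffices that $w_n\prec(nh)^{-1/4}$. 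The lower bound $a\succ h(nh)^{-1/6}$ is exactly what forces the first-stage term $(a+h)(a^3n)^{-1/2}$ below $(nh)^{-1/4}$, while $a\prec\sqrt h$ pushes the intercept term $a^2$ below $h$; in the regime where the displayed bias $\mathcal B(\eta,x_0)h^2$ is at most of the order of the standard error $(nh)^{-1/2}$ (so that the stated limit is non-degenerate, e.g.\ $h\asymp n^{-1/5}$), one has $h\asymp(nh)^{-1/4}$ and hence every term of $w_n$ is $\prec(nh)^{-1/4}$. Thus $\sqrt{nh}\,R=o_p(1)$ and $\widehat m$ and $\widetilde m$ share the same limit.

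It then remains to analyze $\widetilde m$, which is a local linear regression of the generated outcome $Z_i\equiv\psi_i(\eta,Q(\eta,X_i))$ on $X_i$ at $x_0$, with the \emph{true} quantile function plugged in. Since $Q(\eta,\ccdot)$ is deterministic here, $Z_i$ is an i.i.d.\ transformation of $(X_i,Y_i)$, and the claimed convergence will follow from the standard local linear central limit theorem \citep{fan1996local} once the conditional mean and variance of $Z_i$ given $X_i=x$ are identified and the regularity conditions verified. For the mean, the defining identity $F_{Y|X}(Q(\eta,x)|x)=\eta$ annihilates the orthogonalizing term, since $\E[\one{Y\le Q(\eta,x)}-\eta\mid X=x]=0$, and representation~\eqref{eq:def2} then gives $\E[Z_i\mid X=x]=m(\eta,x)$. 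Assumption~\ref{ass:mean}(a) supplies the two continuous derivatives that produce the leading bias $\tfrac12\mu\,\partial_x^2 m(\eta,x_0)$, i.e.\ $\mathcal B(\eta,x_0)$, with the interior/boundary constant $\mu$ entering as declared.

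The variance constant requires a short computation. Writing $\psi_i(\eta,Q)=Q+\tfrac1\eta(Y_i-Q)\one{Y_i\le Q}$ with $Q=Q(\eta,x)$, the conditional variance of $Z_i$ equals $\eta^{-2}\Var\big((Y-Q)\one{Y\le Q}\mid X=x\big)$. Conditioning on the event $\{Y\le Q\}$, which has probability $\eta$, gives $\E[(Y-Q)\one{Y\le Q}\mid X=x]=\eta\,(m-Q)$ and $\E[(Y-Q)^2\one{Y\le Q}\mid X=x]=\eta\big(\Var[Y\mid Y\le Q,X=x]+(m-Q)^2\big)$, whence
\[
\Var(Z_i\mid X=x)=\frac1\eta\Big(\Var[Y\mid Y\le Q(\eta,x),X=x]+(1-\eta)\big(Q(\eta,x)-m(\eta,x)\big)^2\Big).
\]
Evaluating at $x_0$ and multiplying by the kernel/design factor $\kappa/f_X(x_0)$ reproduces $V(\eta,x_0)$ exactly. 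Assumption~\ref{ass:mean}(b) makes this variance continuous and bounded away from zero at $x_0$, and Assumption~\ref{ass:mean}(c), together with the local boundedness of $Q(\eta,\ccdot)$ near $x_0$, furnishes the uniform $(2+\xi)$-th conditional moment of $Z_i$ needed to check the Lyapunov condition for the triangular-array central limit theorem behind the local linear result.

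Because the heavy lifting is done in Lemma~\ref{lemma:asy_equivalence}, the remaining steps are largely bookkeeping, and the main obstacle is the first one: verifying that the window $h(nh)^{-1/6}\prec a\prec\sqrt h$ sends $\sqrt{nh}\,R$ to zero, i.e.\ that the first-stage sampling error and the linearization error of $Q(\eta,\ccdot)$, as aggregated in $w_n$, are negligible at the sharper $(nh)^{-1/4}$ scale rather than merely at the $o_p(1)$ scale from the lemma. The conditional-variance calculation, though elementary, is the step that exposes the extra term $(1-\eta)(Q(\eta,x_0)-m(\eta,x_0))^2$, which distinguishes $V(\eta,x_0)$ from the variance of a naive regression on the truncated outcome; the only further care is in propagating the boundary-adjusted constants $\bar\mu,\bar\kappa$ through the classical argument when $x_0$ lies on the boundary of $\mathcal X$.
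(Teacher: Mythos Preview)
Your strategy---Lemma~\ref{lemma:asy_equivalence} plus standard local linear theory---is exactly the paper's, and your variance computation is correct and essentially identical to the paper's. The gap is in how you dispose of the remainder.

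You aim to show $\sqrt{nh}\,R=o_p(1)$, for which you correctly identify the sufficient condition $w_n\prec(nh)^{-1/4}$. But you then restrict to the regime $nh^5=O(1)$ to force $a^2$ and $h^2$ below $(nh)^{-1/4}$, justifying this by saying the limit would otherwise be ``degenerate.'' That justification is wrong: the bias $\mathcal{B}(\eta,x_0)h^2$ is \emph{subtracted}, so the limit $\mathcal{N}(0,V)$ is nondegenerate for every admissible $h$, and the theorem covers oversmoothing ($nh^5\to\infty$) as well. In that regime your sufficient condition can genuinely fail: with $a$ near $\sqrt{h}$ one has $w_n^2\gtrsim a^4$ close to $h^2$, so $\sqrt{nh}\,w_n^2\asymp(nh^5)^{1/2}\to\infty$, and $\sqrt{nh}\,R=o_p(1)$ need not hold.

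The paper avoids this by proving the stronger statement $R=o_p\big(h^2+(nh)^{-1/2}\big)$ via a term-by-term analysis of $w_n(nh)^{-1/2}+w_n^2$. Each piece is shown to be either $o(h^2)$ or $o((nh)^{-1/2})$: in particular, the upper bandwidth bound $a\prec\sqrt{h}$ is used to obtain $a^4=o(h^2)$ (not $a^2=o((nh)^{-1/4})$ as in your argument), and this $o_p(h^2)$ contribution is absorbed into the $o_p(1)$ part of $\mathcal{B}(\eta,x_0)$. The lower bound $a\succ h(nh)^{-1/6}$ handles $h^2(a^3n)^{-1}=o((nh)^{-1/2})$, which is the binding constraint. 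Reworking your first step along these lines---splitting $R$ into a bias-like piece $o_p(h^2)$ and a stochastic-like piece $o_p((nh)^{-1/2})$---removes the extraneous restriction and completes the proof.
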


As in the standard nonparametric regression, the leading bias is proportional to the second derivative of the function that is being estimated. The variance is fully analogous to the variance of the unconditional truncated mean.
The additional conditions imposed on the bandwidths ensure that the remainder $R(\eta,x_0;a,h)$ is of order $o_p(h^2+(nh)^{-1/2})$. These conditions admit certain degrees of both under- and oversmoothing in the first stage relative to the second stage. For example, if $h \asymp n^{-1/5}$, then I require that $n^{-1/3}\prec a \prec n^{-1/10}$. Subject to these restrictions, the choice of the first-stage bandwidth does not affect the first-order asymptotic distribution of $\wh m$. In practice, the two bandwidths can be set equal. This choice yields the minimal rate of the remainder $R(\eta,x_0;a,h)$.

\subsection{Estimated Truncation Quantile Level}\label{sec:Extensions}
In some applications, the truncation quantile level of interest has to be estimated from the data.
In this section, I study the properties of the proposed estimator evaluated at an estimated truncation quantile level $\wh \eta$ under a high-level condition on $\wh \eta$. 

Theorem~\ref{th:esteta} provides an expansion of the estimator with an estimated truncation quantile level about the estimator using the true quantile level. To keep the exposition transparent, I restrict the analysis to bandwidths such that $a \asymp h$.
The estimator $\wh \eta$ is only required to converge at a rate not slower than the estimator $\wh{m}(\eta,x_0;a,h)$ does.

\begin{theorem}\label{th:esteta} Suppose that $\widehat{\eta} - \eta_n = O_p\big((nh)^{-1/2}\big)$ and $\eta_n-\eta=O(h^2)$  for some deterministic sequence $\eta_n$. If Assumptions~\ref{ass:ass1}--\ref{ass:kernel} hold and $a \asymp h$, then
	\begin{align*}
	\widehat{m}(\widehat{\eta},x_0;a,h) =  \widetilde{m}(\eta,x_0;h) + \partial_{\eta} m(\eta,x_0) (\widehat{\eta}-\eta)  + O_p(h^4+ (nh)^{-1}),
	\end{align*}
	where $\partial_{\eta} m(\eta,x_0)  =\frac{1}{\eta} (Q(\eta,x_0)-m(\eta,x_0)).$
\end{theorem}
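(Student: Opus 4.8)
The plan is to prove the theorem in two steps: first pass from the feasible estimator $\widehat m(\widehat\eta,x_0;a,h)$ to its infeasible analog $\widetilde m(\widehat\eta,x_0;h)$, and then expand $\widetilde m(\cdot,x_0;h)$ in its first argument around $\eta$. Throughout I use only the combined rate $\widehat\eta-\eta=O_p(h^2+(nh)^{-1/2})$ implied by the two maintained conditions on $\widehat\eta$, and I fix a shrinking neighborhood $\mc N_n=\{u:|u-\eta|\le M(h^2+(nh)^{-1/2})\}$ which, for $M$ large, contains $\widehat\eta$ with probability approaching one. The first step requires a version of Lemma~\ref{lemma:asy_equivalence} that holds uniformly over $u\in\mc N_n$. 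Since that lemma is driven only by the convergence rates of the first-stage intercept and slope, which hold uniformly over quantile levels in a neighborhood of $\eta$, I would re-run its argument keeping $u$ as a free index to obtain $\sup_{u\in\mc N_n}|\widehat m(u,x_0;a,h)-\widetilde m(u,x_0;h)|=O_p(h^4+(nh)^{-1})$ when $a\asymp h$. Evaluating at $u=\widehat\eta$ reduces the claim to expanding $\widetilde m(\widehat\eta,x_0;h)-\widetilde m(\eta,x_0;h)$.

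The leading term comes from the orthogonality. Because $\widetilde m(u,x_0;h)$ is linear in the generated outcomes $\psi_i(u,Q(u,X_i))$ with local linear weights $\omega_i$ that do not depend on $u$, its derivative $\partial_u\widetilde m(u,x_0;h)$ is exactly the local linear regression at $x_0$ of the outcome derivative. Writing $\psi_i(u,q)=\tfrac1u(Y_i-q)\one{Y_i\le q}+q$ and differentiating through $q=Q(u,X_i)$ gives
\[
\frac{d}{du}\psi_i(u,Q(u,X_i))=-\frac{1}{u^2}(Y_i-Q(u,X_i))\one{Y_i\le Q(u,X_i)}+\Big(1-\tfrac1u\one{Y_i\le Q(u,X_i)}\Big)\partial_uQ(u,X_i).
\]
Taking the conditional expectation given $X_i$ and using $\E[\one{Y_i\le Q(u,X_i)}\mid X_i]=u$, the quantile-slope term vanishes---this is exactly the orthogonality in~\eqref{eq:orthogonality}---while the first term reduces to $\tfrac1u(Q(u,X_i)-m(u,X_i))$. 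Thus the derivative outcome has conditional mean $\partial_um(u,x_0)=\tfrac1u(Q(u,x_0)-m(u,x_0))$ at $x_0$, which has a Lipschitz $x$-derivative under Assumptions~\ref{ass:quantile}(a) and~\ref{ass:mean}(a) and a finite conditional second moment under Assumptions~\ref{ass:quantile}(c) and~\ref{ass:mean}(c). Standard local linear theory then gives $\partial_u\widetilde m(\eta,x_0;h)=\partial_\eta m(\eta,x_0)+O_p(h^2+(nh)^{-1/2})$, so the leading term $\partial_u\widetilde m(\eta,x_0;h)(\widehat\eta-\eta)$ equals $\partial_\eta m(\eta,x_0)(\widehat\eta-\eta)$ up to $O_p((h^2+(nh)^{-1/2})^2)=O_p(h^4+(nh)^{-1})$, where the cross term $h^2(nh)^{-1/2}$ is absorbed via $ab\le\tfrac12(a^2+b^2)$.

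It remains to bound the second-order remainder $\sum_i\omega_iD_i(\widehat\eta)$, where $D_i(u)=\psi_i(u,Q(u,X_i))-\psi_i(\eta,Q(\eta,X_i))-\tfrac{d}{du}\psi_i(\eta,Q(\eta,X_i))(u-\eta)$. The map $u\mapsto\psi_i(u,Q(u,X_i))$ is Lipschitz but has a single kink at $u_i^\ast=F_{Y|X}(Y_i\mid X_i)$, so I would split the observations. For $i$ whose kink does not lie between $\eta$ and $u$, I further split the outcome derivative into the truncated-outcome term, with conditional mean $\partial_um$, and the orthogonal quantile-slope term, with conditional mean zero. The former is Lipschitz in $u$ with modulus $O(1+|Y_i|)$, so integrating its first-order increment over $[\eta,u]$ yields a contribution of size $O((1+|Y_i|)(u-\eta)^2)$, and $\sum_i|\omega_i|(1+|Y_i|\one{Y_i\le Q(\eta,X_i)})=O_p(1)$ by Assumption~\ref{ass:mean}(c). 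The latter, being conditionally mean zero, contributes a term with conditional variance $o_p((nh)^{-1}|u-\eta|^2)$, since $\sum_i\omega_i^2=O_p((nh)^{-1})$ and mere continuity of $\partial_uQ$ in $u$ makes each per-observation remainder $o(|u-\eta|)$; crucially, no second derivative of $Q$ in $u$ is required, which is the orthogonality paying off a second time. For $i$ whose kink lies between $\eta$ and $u$, the Lipschitz bound gives a contribution of size $O(|u-\eta|)$ per observation, but such observations number $O_p(nh\,|u-\eta|)$ among those with $\omega_i\ne0$ and carry weights $O_p((nh)^{-1})$, so they too contribute $O_p((u-\eta)^2)$. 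Collecting, $\sum_i\omega_iD_i(u)=O_p((u-\eta)^2+(nh)^{-1/2}|u-\eta|)$, which at $u=\widehat\eta$ is $O_p(h^4+(nh)^{-1})$.

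The main obstacle is uniformity over the data-dependent index $\widehat\eta$. Because $\widehat\eta$ is random I cannot invoke pointwise-in-$u$ rates; instead I need the displayed bounds, and the uniform-in-$u$ version of Lemma~\ref{lemma:asy_equivalence}, to hold as suprema over $u\in\mc N_n$. The delicate piece is the kink term, where the relevant object is $\sup_{u\in\mc N_n}\sum_i|\omega_i|\one{u_i^\ast\text{ lies between }\eta\text{ and }u}$: controlling the number of observations whose conditional rank falls in a shrinking window, uniformly over the window endpoint, calls for a maximal inequality for the empirical process indexed by $u$, which is manageable because the indicator class $\{\one{Y\le Q(u,X)}:u\in\mc N_n\}$ is monotone, hence of bounded variation, in $u$. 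Once this uniform control is in place, substituting $u=\widehat\eta$ and collecting the three $O_p(h^4+(nh)^{-1})$ contributions completes the proof.
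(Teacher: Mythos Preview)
Your plan is correct but takes a genuinely different route from the paper. The paper does \emph{not} pass through $\widetilde m(\widehat\eta,x_0;h)$. Instead it decomposes
\[
\psi_i(\widehat\eta,\widehat Q(\widehat\eta,X_i;a))-\psi_i(\eta,Q(\eta,X_i))
=\big[\psi_i(\widehat\eta,q)-\psi_i(\eta,q)\big]_{q=\widehat Q(\widehat\eta,X_i;a)}
+\big[\psi_i(\eta,\widehat Q(\widehat\eta,X_i;a))-\psi_i(\eta,Q(\eta,X_i))\big].
\]
The first bracket is algebraically trivial: since $\psi_i(u,q)=\tfrac1u(Y_i-q)\one{Y_i\le q}+q$, varying $u$ with $q$ \emph{fixed} only changes the scalar factor $1/u$, no indicator flips, and a one-line expansion of $1/\widehat\eta-1/\eta$ yields exactly the $\partial_\eta m(\eta,x_0)(\widehat\eta-\eta)$ term. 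The second bracket is handled by re-running the proof of Lemma~\ref{lemma:asy_equivalence}, which only needs that the estimated line $\widehat Q(\widehat\eta,\cdot;a)$ is within $O_p(r_n)$ of $Q(\eta,\cdot)$; the paper's technical effort is therefore concentrated in showing $h^j(\widehat q_j(\widehat\eta;a)-q_j(\eta))=O_p(r_n)$ (its Lemma~\ref{lemma:est_quantile}), proved via stochastic equicontinuity of the quantile-regression criterion in both $b$ and~$u$. Your route instead Taylor-expands $u\mapsto\psi_i(u,Q(u,X_i))$ with both arguments moving together, which forces the kink-counting and the mean-zero quantile-slope analysis you describe. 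Both decompositions rest on the same underlying uniformity for the first-stage quantile regression; what the paper's decomposition buys is the complete avoidance of kink bookkeeping and of the second-order remainder $D_i(u)$, at the price of a dedicated lemma for $\widehat q(\widehat\eta;a)$, while your decomposition buys a more transparent link to $\partial_\eta m$ via the local linear regression of $\partial_u\psi$, at the price of the indicator analysis.
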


The coefficient on the term $\wh \eta - \eta$ equals the first derivative of the estimand with respect to the pre-estimated parameter, which is typical for such two-step estimation problems. It is also in line with the results of \citet{shorack1974random} and \citet{Lee2009}, who study the unconditional truncated mean with random trimming proportions.
The expansion in Theorem~\ref{th:esteta} can be used on a case-by-case basis to derive the asymptotic distribution of $\widehat{m}( \wh{\eta}, x_0;a,h)$ for specific estimators $\wh{\eta}$. Two such examples are discussed in Section \ref{sec:Applications}.
I note that in Theorem~\ref{th:esteta}, it is essential that $\eta<1$, imposed in Assumption~\ref{ass:ass1}. Otherwise, if $Y$ has unbounded support, the derivative $\partial_{\eta} m(\eta,x_0)$ is infinite, and the above expansion is not valid.

\section{Bandwidth Choice and Inference}\label{sec:Inference}
The asymptotic results in Section~\ref{sec:Asymptotics} suggest that the bandwidth can be selected and inference can be conducted following methods developed for the standard nonparametric regression, ignoring the fact that the conditional quantile function is estimated in the first stage.

The bandwidth can be chosen, e.g., so as to minimize the asymptotic mean squared error, defined as $AMSE_n(h) = B(\eta,x_0)^2 h^4 + V(\eta,x_0)/(nh)$, where $B(\eta,x_0) = \frac{1}{2} \mu\, \partial_x^2 m(\eta, x_0)$.
The optimal bandwidth is then given by $h_{\text{opt}}=\left(V(\eta,x_0)/(4B(\eta,x_0)^2)\right)^{1/5}n^{-1/5}$. It can be estimated following procedures analogous to those proposed by \citet{imbens2012optimal} and \citet{calonico2014robust}. To implement these bandwidth selectors, under additional assumptions, one can estimate $\partial^2_x m(\eta,x_0)$ using the local quadratic version of the estimator $\wh m$, discussed in Online Appendix~\ref{subsec:porder}, and estimate the asymptotic variance based on the second-stage residuals.

Given a bandwidth $h$, the asymptotic distribution in Theorem~\ref{th:asy_distribution} forms the basis for conducting statistical inference.\footnote{As discussed in Section~\ref{sec:Asymptotics}, the estimator is relatively insensitive to the choice of the first-stage bandwidth. In practice, one can set $a=h$.} Constructing a confidence interval (CI) requires accounting for the bias, which can be done by adapting any of the three following approaches.
The first, classic approach is called undersmoothing (US). It relies on choosing a~`small' bandwidth that ensures that the bias is asymptotically negligible. If $h \prec n^{-1/5}$, or equivalently $nh^5 \to 0$, then the bias is of smaller order than the standard error. As a result, an asymptotically valid  $1-\alpha$ CI can be formed as 
\begin{equation}\label{eq1:CI_US}
CI_\alpha^{US}=	[\wh{m}(\eta,x_0;h,h)\pm  z_{1-\alpha/2} \cdot \wh{se}(h)],
\end{equation}
where $z_u$ is the $u$-quantile of the standard normal distribution and $\wh{se}(h)$ is some consistent standard error.
The two further approaches allow for bandwidths of order $n^{-1/5}$, such as the AMSE-optimal bandwidth.

The second approach is analogous to the robust bias corrections proposed by \citet{calonico2014robust}. It involves subtracting an estimate of the leading bias term and accounting for the additional variation in the bias-corrected estimator when forming a CI. The CI takes the form as in~\eqref{eq1:CI_US}, except that a bias-corrected estimator and an adjusted standard error are used.

The third approach is motivated by the `bias-aware' approach of \citet{armstrong2020simple}, who propose `honest' CIs that account for the largest possible bias under restrictions on the smoothness of the function that is being estimated. Suppose that $|\partial_x^2m(\eta,x_0)|$ is bounded by some known constant $M$. Then the leading bias term is bounded in absolute value by $\frac{1}{2}|\mu|Mh^2$, and an asymptotically valid $1-\alpha$ confidence interval can be formed as
\begin{equation}
CI_\alpha= [\widehat{m}(\eta,x_0;h,h) \pm  \textrm{cv}_{1-\alpha}(\widehat{r}(h)) \cdot \widehat{se}(h) ],
\end{equation}
where $\widehat{r}(h)=\frac{1}{2}|\mu|Mh^2/\widehat{se}(h)$ and $\textrm{cv}_{1-\alpha}(t)$ is the $1-\alpha$ quantile of the folded normal distribution $|\mathcal{N}(t,1)|$.\footnote{I do not discuss coverage properties uniform in the data generating processes, which would require ensuring that the remainder in Lemma~\ref{lemma:asy_equivalence} is uniformly small. } One can also account for the maximal bias of the infeasible estimator $\wt m$ conditional on the realizations of the covariate. The bandwidth can be also chosen so as to minimize the worst-case mean squared error or the length of the CI. Implementation of bandwidth selectors and of the CIs requires imposing a bound on $\partial_x^2m(\eta,x)$. See \citet{armstrong2020simple} and \citet{noack2021bias} for discussions of the choice of the smoothness constant in the standard nonparametric regression.


\section{Monte Carlo Study}\label{sec:MonteCarlo}
In this section, I present simulation evidence for two claims. First, I show that the feasible estimator $\widehat{m}$ is close to the infeasible estimator $\widetilde{m}$ in terms of the mean squared difference. Second, I show that inference based on $\widehat{m}$ performs almost identically as inference based on the infeasible estimator $\widetilde{m}$. For concreteness, in this simulation study, I use the third approach discussed in Section~\ref{sec:Inference}, which exploits a bound on $\partial_x^2m(\eta,x)$. In its implementation, I account for the exact worst-case bias of the infeasible estimator $\widetilde{m}$ conditional on the realizations of the covariate.

I generate data from a location-scale model of the form
\begin{equation}
Y=m(X) + \sigma(X)\varepsilon,
\end{equation}
where $X$ is uniformly distributed on $[-1,1]$ and $\varepsilon \sim \mathcal{N}(0,1)$.  I consider three specifications for the conditional expectation function, which were used by \citet{armstrong2020simple} in their Monte Carlo study comparing different inference methods. Let
\begin{align*}
m_1(x) & = x^2-2s(|x|-0.25),\\
m_2(x) & = x^2-2s(|x|-0.2)+2s(|x|-0.5) - 2s(|x|-0.65),\\
m_3(x) & = (x+1)^2-2s(x+0.2)+2s(x-0.2) - 2 s(x-0.4) + 2s(x-0.7) - 0.92,
\end{align*}
where $s(x)=\max\{x,0\}^2 $.  These functions are illustrated in Figure~\ref{fig:model}. Their second derivatives are bounded in absolute value by $M=2$. I~consider homoskedastic and hetersokedastic residuals, induced by functions $\sigma_1(x)=0.5$ and $\sigma_2(x)=0.5+x$, respectively.

\begin{figure}[thb]
	\centering
\begin{tikzpicture}[x=1pt,y=1pt]
\definecolor{fillColor}{RGB}{255,255,255}
\path[use as bounding box,fill=fillColor,fill opacity=0.00] (0,0) rectangle (252.94,187.90);
\begin{scope}
\path[clip] (  0.00,  0.00) rectangle (252.94,187.90);
\definecolor{drawColor}{RGB}{255,255,255}
\definecolor{fillColor}{RGB}{255,255,255}

\path[draw=drawColor,line width= 0.6pt,line join=round,line cap=round,fill=fillColor] (  0.00,  0.00) rectangle (252.94,187.90);
\end{scope}
\begin{scope}
\path[clip] ( 55.03, 30.69) rectangle (247.44,182.40);
\definecolor{fillColor}{RGB}{255,255,255}

\path[fill=fillColor] ( 55.03, 30.69) rectangle (247.44,182.40);
\definecolor{drawColor}{RGB}{0,0,0}

\path[draw=drawColor,line width= 1.1pt,dash pattern=on 4pt off 4pt ,line join=round] ( 63.78, 96.53) --
	( 67.42,100.06) --
	( 71.07,103.21) --
	( 74.71,105.99) --
	( 78.35,108.40) --
	( 82.00,110.44) --
	( 85.64,112.11) --
	( 89.29,113.40) --
	( 92.93,114.33) --
	( 96.57,114.89) --
	(100.22,115.07) --
	(103.86,114.89) --
	(107.51,114.33) --
	(111.15,113.40) --
	(114.80,112.11) --
	(118.44,110.44) --
	(122.08,108.77) --
	(125.73,107.47) --
	(129.37,106.54) --
	(133.02,105.99) --
	(136.66,105.80) --
	(140.31,105.99) --
	(143.95,106.54) --
	(147.59,107.47) --
	(151.24,108.77) --
	(154.88,110.44) --
	(158.53,112.11) --
	(162.17,113.40) --
	(165.81,114.33) --
	(169.46,114.89) --
	(173.10,115.07) --
	(176.75,114.89) --
	(180.39,114.33) --
	(184.04,113.40) --
	(187.68,112.11) --
	(191.32,110.44) --
	(194.97,108.40) --
	(198.61,105.99) --
	(202.26,103.21) --
	(205.90,100.06) --
	(209.55, 96.53);

\path[draw=drawColor,line width= 1.1pt,line join=round] ( 63.78,103.95) --
	( 67.42,105.99) --
	( 71.07,107.66) --
	( 74.71,108.95) --
	( 78.35,109.88) --
	( 82.00,110.44) --
	( 85.64,110.62) --
	( 89.29,110.44) --
	( 92.93,110.25) --
	( 96.57,110.44) --
	(100.22,110.99) --
	(103.86,111.55) --
	(107.51,111.73) --
	(111.15,111.55) --
	(114.80,110.99) --
	(118.44,110.07) --
	(122.08,108.77) --
	(125.73,107.47) --
	(129.37,106.54) --
	(133.02,105.99) --
	(136.66,105.80) --
	(140.31,105.99) --
	(143.95,106.54) --
	(147.59,107.47) --
	(151.24,108.77) --
	(154.88,110.07) --
	(158.53,110.99) --
	(162.17,111.55) --
	(165.81,111.73) --
	(169.46,111.55) --
	(173.10,110.99) --
	(176.75,110.44) --
	(180.39,110.25) --
	(184.04,110.44) --
	(187.68,110.62) --
	(191.32,110.44) --
	(194.97,109.88) --
	(198.61,108.95) --
	(202.26,107.66) --
	(205.90,105.99) --
	(209.55,103.95);

\path[draw=drawColor,line width= 1.1pt,line join=round] ( 63.78, 37.58) --
	( 67.42, 37.77) --
	( 71.07, 38.32) --
	( 74.71, 39.25) --
	( 78.35, 40.55) --
	( 82.00, 42.22) --
	( 85.64, 44.26) --
	( 89.29, 46.67) --
	( 92.93, 49.45) --
	( 96.57, 52.60) --
	(100.22, 56.12) --
	(103.86, 60.01) --
	(107.51, 64.28) --
	(111.15, 68.91) --
	(114.80, 73.92) --
	(118.44, 79.29) --
	(122.08, 85.04) --
	(125.73, 90.79) --
	(129.37, 96.16) --
	(133.02,101.17) --
	(136.66,105.80) --
	(140.31,110.07) --
	(143.95,113.96) --
	(147.59,117.48) --
	(151.24,120.63) --
	(154.88,123.78) --
	(158.53,127.31) --
	(162.17,131.20) --
	(165.81,135.46) --
	(169.46,139.73) --
	(173.10,143.62) --
	(176.75,147.14) --
	(180.39,150.29) --
	(184.04,153.07) --
	(187.68,155.48) --
	(191.32,157.89) --
	(194.97,160.68) --
	(198.61,163.83) --
	(202.26,167.35) --
	(205.90,171.24) --
	(209.55,175.51);

\node[text=drawColor,anchor=base,inner sep=0pt, outer sep=0pt, scale=  0.92] at (227.77, 91.25) {Design 1};

\node[text=drawColor,anchor=base,inner sep=0pt, outer sep=0pt, scale=  0.92] at (227.77,101.63) {Design 2};

\node[text=drawColor,anchor=base,inner sep=0pt, outer sep=0pt, scale=  0.92] at (227.77,171.70) {Design 3};
\end{scope}
\begin{scope}
\path[clip] (  0.00,  0.00) rectangle (252.94,187.90);
\definecolor{drawColor}{RGB}{0,0,0}

\path[draw=drawColor,line width= 0.6pt,line join=round] ( 55.03, 30.69) --
	( 55.03,182.40);
\end{scope}
\begin{scope}
\path[clip] (  0.00,  0.00) rectangle (252.94,187.90);
\definecolor{drawColor}{gray}{0.30}

\node[text=drawColor,anchor=base east,inner sep=0pt, outer sep=0pt, scale=  0.73] at ( 50.08, 28.62) {-1.0};

\node[text=drawColor,anchor=base east,inner sep=0pt, outer sep=0pt, scale=  0.73] at ( 50.08, 65.70) {-0.5};

\node[text=drawColor,anchor=base east,inner sep=0pt, outer sep=0pt, scale=  0.73] at ( 50.08,102.77) {0.0};

\node[text=drawColor,anchor=base east,inner sep=0pt, outer sep=0pt, scale=  0.73] at ( 50.08,139.85) {0.5};

\node[text=drawColor,anchor=base east,inner sep=0pt, outer sep=0pt, scale=  0.73] at ( 50.08,176.92) {1.0};
\end{scope}
\begin{scope}
\path[clip] (  0.00,  0.00) rectangle (252.94,187.90);
\definecolor{drawColor}{gray}{0.20}

\path[draw=drawColor,line width= 0.6pt,line join=round] ( 52.28, 31.65) --
	( 55.03, 31.65);

\path[draw=drawColor,line width= 0.6pt,line join=round] ( 52.28, 68.73) --
	( 55.03, 68.73);

\path[draw=drawColor,line width= 0.6pt,line join=round] ( 52.28,105.80) --
	( 55.03,105.80);

\path[draw=drawColor,line width= 0.6pt,line join=round] ( 52.28,142.88) --
	( 55.03,142.88);

\path[draw=drawColor,line width= 0.6pt,line join=round] ( 52.28,179.95) --
	( 55.03,179.95);
\end{scope}
\begin{scope}
\path[clip] (  0.00,  0.00) rectangle (252.94,187.90);
\definecolor{drawColor}{RGB}{0,0,0}

\path[draw=drawColor,line width= 0.6pt,line join=round] ( 55.03, 30.69) --
	(247.44, 30.69);
\end{scope}
\begin{scope}
\path[clip] (  0.00,  0.00) rectangle (252.94,187.90);
\definecolor{drawColor}{gray}{0.20}

\path[draw=drawColor,line width= 0.6pt,line join=round] ( 63.78, 27.94) --
	( 63.78, 30.69);

\path[draw=drawColor,line width= 0.6pt,line join=round] (100.22, 27.94) --
	(100.22, 30.69);

\path[draw=drawColor,line width= 0.6pt,line join=round] (136.66, 27.94) --
	(136.66, 30.69);

\path[draw=drawColor,line width= 0.6pt,line join=round] (173.10, 27.94) --
	(173.10, 30.69);

\path[draw=drawColor,line width= 0.6pt,line join=round] (209.55, 27.94) --
	(209.55, 30.69);

\path[draw=drawColor,line width= 0.6pt,line join=round] (245.99, 27.94) --
	(245.99, 30.69);
\end{scope}
\begin{scope}
\path[clip] (  0.00,  0.00) rectangle (252.94,187.90);
\definecolor{drawColor}{gray}{0.30}

\node[text=drawColor,anchor=base,inner sep=0pt, outer sep=0pt, scale=  0.73] at ( 63.78, 19.68) {-1.0};

\node[text=drawColor,anchor=base,inner sep=0pt, outer sep=0pt, scale=  0.73] at (100.22, 19.68) {-0.5};

\node[text=drawColor,anchor=base,inner sep=0pt, outer sep=0pt, scale=  0.73] at (136.66, 19.68) {0.0};

\node[text=drawColor,anchor=base,inner sep=0pt, outer sep=0pt, scale=  0.73] at (173.10, 19.68) {0.5};

\node[text=drawColor,anchor=base,inner sep=0pt, outer sep=0pt, scale=  0.73] at (209.55, 19.68) {1.0};

\node[text=drawColor,anchor=base,inner sep=0pt, outer sep=0pt, scale=  0.73] at (245.99, 19.68) {1.5};
\end{scope}
\begin{scope}
\path[clip] (  0.00,  0.00) rectangle (252.94,187.90);
\definecolor{drawColor}{RGB}{0,0,0}

\node[text=drawColor,anchor=base,inner sep=0pt, outer sep=0pt, scale=  0.92] at (151.24,  7.64) {$x$};
\end{scope}
\begin{scope}
\path[clip] (  0.00,  0.00) rectangle (252.94,187.90);
\definecolor{drawColor}{RGB}{0,0,0}

\node[text=drawColor,anchor=base,inner sep=0pt, outer sep=0pt, scale=  0.92] at ( 19.48,102.76) {$m_j(x)$};
\end{scope}
\end{tikzpicture}
	\vspace{-12pt}
	\caption{Conditional expectation functions $m_j(x)$.}
	\label{fig:model}
\end{figure}
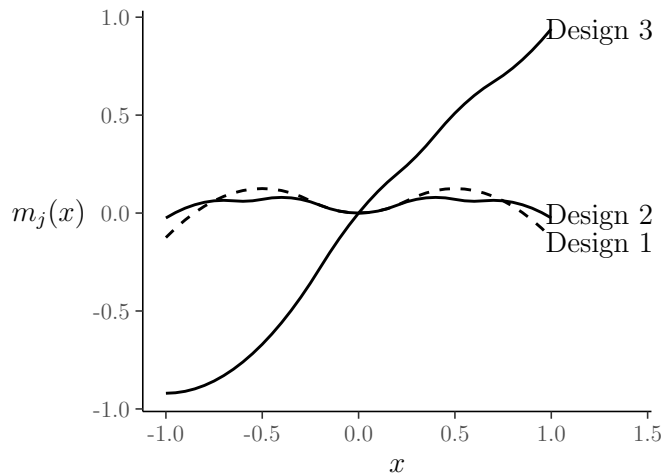

\begin{figure}[htb]
	\centering
	\begin{subfigure}{0.48\textwidth}
		\scalebox{0.95}{
\begin{tikzpicture}[x=1pt,y=1pt]
\definecolor{fillColor}{RGB}{255,255,255}
\path[use as bounding box,fill=fillColor,fill opacity=0.00] (0,0) rectangle (231.26,173.45);
\begin{scope}
\path[clip] (  0.00,  0.00) rectangle (231.26,173.45);
\definecolor{drawColor}{RGB}{255,255,255}
\definecolor{fillColor}{RGB}{255,255,255}

\path[draw=drawColor,line width= 0.6pt,line join=round,line cap=round,fill=fillColor] (  0.00,  0.00) rectangle (231.26,173.45);
\end{scope}
\begin{scope}
\path[clip] ( 27.08, 30.69) rectangle (225.76,167.95);
\definecolor{fillColor}{RGB}{255,255,255}

\path[fill=fillColor] ( 27.08, 30.69) rectangle (225.76,167.95);
\definecolor{drawColor}{RGB}{0,0,0}

\path[draw=drawColor,line width= 1.1pt,line join=round] ( 36.11, 68.12) --
	( 39.87, 72.07) --
	( 43.64, 75.61) --
	( 47.40, 78.73) --
	( 51.16, 81.43) --
	( 54.92, 83.72) --
	( 58.69, 85.59) --
	( 62.45, 87.05) --
	( 66.21, 88.09) --
	( 69.98, 88.71) --
	( 73.74, 88.92) --
	( 77.50, 88.71) --
	( 81.27, 88.09) --
	( 85.03, 87.05) --
	( 88.79, 85.59) --
	( 92.55, 83.72) --
	( 96.32, 81.85) --
	(100.08, 80.39) --
	(103.84, 79.35) --
	(107.61, 78.73) --
	(111.37, 78.52) --
	(115.13, 78.73) --
	(118.90, 79.35) --
	(122.66, 80.39) --
	(126.42, 81.85) --
	(130.18, 83.72) --
	(133.95, 85.59) --
	(137.71, 87.05) --
	(141.47, 88.09) --
	(145.24, 88.71) --
	(149.00, 88.92) --
	(152.76, 88.71) --
	(156.53, 88.09) --
	(160.29, 87.05) --
	(164.05, 85.59) --
	(167.81, 83.72) --
	(171.58, 81.43) --
	(175.34, 78.73) --
	(179.10, 75.61) --
	(182.87, 72.07) --
	(186.63, 68.12);

\path[draw=drawColor,line width= 1.1pt,dash pattern=on 2pt off 2pt on 6pt off 2pt ,line join=round] ( 36.11, 56.48) --
	( 39.87, 60.43) --
	( 43.64, 63.96) --
	( 47.40, 67.08) --
	( 51.16, 69.79) --
	( 54.92, 72.07) --
	( 58.69, 73.95) --
	( 62.45, 75.40) --
	( 66.21, 76.44) --
	( 69.98, 77.07) --
	( 73.74, 77.27) --
	( 77.50, 77.07) --
	( 81.27, 76.44) --
	( 85.03, 75.40) --
	( 88.79, 73.95) --
	( 92.55, 72.07) --
	( 96.32, 70.20) --
	(100.08, 68.75) --
	(103.84, 67.71) --
	(107.61, 67.08) --
	(111.37, 66.87) --
	(115.13, 67.08) --
	(118.90, 67.71) --
	(122.66, 68.75) --
	(126.42, 70.20) --
	(130.18, 72.07) --
	(133.95, 73.95) --
	(137.71, 75.40) --
	(141.47, 76.44) --
	(145.24, 77.07) --
	(149.00, 77.27) --
	(152.76, 77.07) --
	(156.53, 76.44) --
	(160.29, 75.40) --
	(164.05, 73.95) --
	(167.81, 72.07) --
	(171.58, 69.79) --
	(175.34, 67.08) --
	(179.10, 63.96) --
	(182.87, 60.43) --
	(186.63, 56.48);

\path[draw=drawColor,line width= 1.1pt,dash pattern=on 4pt off 4pt ,line join=round] ( 36.11,103.13) --
	( 39.87,107.08) --
	( 43.64,110.61) --
	( 47.40,113.73) --
	( 51.16,116.44) --
	( 54.92,118.73) --
	( 58.69,120.60) --
	( 62.45,122.05) --
	( 66.21,123.09) --
	( 69.98,123.72) --
	( 73.74,123.93) --
	( 77.50,123.72) --
	( 81.27,123.09) --
	( 85.03,122.05) --
	( 88.79,120.60) --
	( 92.55,118.73) --
	( 96.32,116.85) --
	(100.08,115.40) --
	(103.84,114.36) --
	(107.61,113.73) --
	(111.37,113.53) --
	(115.13,113.73) --
	(118.90,114.36) --
	(122.66,115.40) --
	(126.42,116.85) --
	(130.18,118.73) --
	(133.95,120.60) --
	(137.71,122.05) --
	(141.47,123.09) --
	(145.24,123.72) --
	(149.00,123.93) --
	(152.76,123.72) --
	(156.53,123.09) --
	(160.29,122.05) --
	(164.05,120.60) --
	(167.81,118.73) --
	(171.58,116.44) --
	(175.34,113.73) --
	(179.10,110.61) --
	(182.87,107.08) --
	(186.63,103.13);

\node[text=drawColor,anchor=base,inner sep=0pt, outer sep=0pt, scale=  0.92] at (205.44, 64.32) {$m(x)$};

\node[text=drawColor,anchor=base,inner sep=0pt, outer sep=0pt, scale=  0.92] at (205.44, 52.67) {$m(\eta,x)$};

\node[text=drawColor,anchor=base,inner sep=0pt, outer sep=0pt, scale=  0.92] at (205.44, 99.33) {$Q(\eta,x)$};
\end{scope}
\begin{scope}
\path[clip] (  0.00,  0.00) rectangle (231.26,173.45);
\definecolor{drawColor}{RGB}{0,0,0}

\path[draw=drawColor,line width= 0.6pt,line join=round] ( 27.08, 30.69) --
	( 27.08,167.95);
\end{scope}
\begin{scope}
\path[clip] (  0.00,  0.00) rectangle (231.26,173.45);
\definecolor{drawColor}{gray}{0.30}

\node[text=drawColor,anchor=base east,inner sep=0pt, outer sep=0pt, scale=  0.73] at ( 22.13, 33.89) {-0.5};

\node[text=drawColor,anchor=base east,inner sep=0pt, outer sep=0pt, scale=  0.73] at ( 22.13, 75.49) {0.0};

\node[text=drawColor,anchor=base east,inner sep=0pt, outer sep=0pt, scale=  0.73] at ( 22.13,117.08) {0.5};

\node[text=drawColor,anchor=base east,inner sep=0pt, outer sep=0pt, scale=  0.73] at ( 22.13,158.68) {1.0};
\end{scope}
\begin{scope}
\path[clip] (  0.00,  0.00) rectangle (231.26,173.45);
\definecolor{drawColor}{gray}{0.20}

\path[draw=drawColor,line width= 0.6pt,line join=round] ( 24.33, 36.92) --
	( 27.08, 36.92);

\path[draw=drawColor,line width= 0.6pt,line join=round] ( 24.33, 78.52) --
	( 27.08, 78.52);

\path[draw=drawColor,line width= 0.6pt,line join=round] ( 24.33,120.11) --
	( 27.08,120.11);

\path[draw=drawColor,line width= 0.6pt,line join=round] ( 24.33,161.71) --
	( 27.08,161.71);
\end{scope}
\begin{scope}
\path[clip] (  0.00,  0.00) rectangle (231.26,173.45);
\definecolor{drawColor}{RGB}{0,0,0}

\path[draw=drawColor,line width= 0.6pt,line join=round] ( 27.08, 30.69) --
	(225.76, 30.69);
\end{scope}
\begin{scope}
\path[clip] (  0.00,  0.00) rectangle (231.26,173.45);
\definecolor{drawColor}{gray}{0.20}

\path[draw=drawColor,line width= 0.6pt,line join=round] ( 36.11, 27.94) --
	( 36.11, 30.69);

\path[draw=drawColor,line width= 0.6pt,line join=round] ( 73.74, 27.94) --
	( 73.74, 30.69);

\path[draw=drawColor,line width= 0.6pt,line join=round] (111.37, 27.94) --
	(111.37, 30.69);

\path[draw=drawColor,line width= 0.6pt,line join=round] (149.00, 27.94) --
	(149.00, 30.69);

\path[draw=drawColor,line width= 0.6pt,line join=round] (186.63, 27.94) --
	(186.63, 30.69);

\path[draw=drawColor,line width= 0.6pt,line join=round] (224.26, 27.94) --
	(224.26, 30.69);
\end{scope}
\begin{scope}
\path[clip] (  0.00,  0.00) rectangle (231.26,173.45);
\definecolor{drawColor}{gray}{0.30}

\node[text=drawColor,anchor=base,inner sep=0pt, outer sep=0pt, scale=  0.73] at ( 36.11, 19.68) {-1.0};

\node[text=drawColor,anchor=base,inner sep=0pt, outer sep=0pt, scale=  0.73] at ( 73.74, 19.68) {-0.5};

\node[text=drawColor,anchor=base,inner sep=0pt, outer sep=0pt, scale=  0.73] at (111.37, 19.68) {0.0};

\node[text=drawColor,anchor=base,inner sep=0pt, outer sep=0pt, scale=  0.73] at (149.00, 19.68) {0.5};

\node[text=drawColor,anchor=base,inner sep=0pt, outer sep=0pt, scale=  0.73] at (186.63, 19.68) {1.0};

\node[text=drawColor,anchor=base,inner sep=0pt, outer sep=0pt, scale=  0.73] at (224.26, 19.68) {1.5};
\end{scope}
\begin{scope}
\path[clip] (  0.00,  0.00) rectangle (231.26,173.45);
\definecolor{drawColor}{RGB}{0,0,0}

\node[text=drawColor,anchor=base,inner sep=0pt, outer sep=0pt, scale=  0.92] at (126.42,  7.64) {$x$};
\end{scope}
\end{tikzpicture}}
		\vspace{-12pt}
		\subcaption{Homoskedastic case with $\sigma(x)=0.5$.}
	\end{subfigure}
	\begin{subfigure}{0.48\textwidth}
		\scalebox{0.95}{
\begin{tikzpicture}[x=1pt,y=1pt]
\definecolor{fillColor}{RGB}{255,255,255}
\path[use as bounding box,fill=fillColor,fill opacity=0.00] (0,0) rectangle (231.26,173.45);
\begin{scope}
\path[clip] (  0.00,  0.00) rectangle (231.26,173.45);
\definecolor{drawColor}{RGB}{255,255,255}
\definecolor{fillColor}{RGB}{255,255,255}

\path[draw=drawColor,line width= 0.6pt,line join=round,line cap=round,fill=fillColor] (  0.00,  0.00) rectangle (231.26,173.45);
\end{scope}
\begin{scope}
\path[clip] ( 27.08, 30.69) rectangle (225.76,167.95);
\definecolor{fillColor}{RGB}{255,255,255}

\path[fill=fillColor] ( 27.08, 30.69) rectangle (225.76,167.95);
\definecolor{drawColor}{RGB}{0,0,0}

\path[draw=drawColor,line width= 1.1pt,line join=round] ( 36.11, 68.12) --
	( 39.87, 72.07) --
	( 43.64, 75.61) --
	( 47.40, 78.73) --
	( 51.16, 81.43) --
	( 54.92, 83.72) --
	( 58.69, 85.59) --
	( 62.45, 87.05) --
	( 66.21, 88.09) --
	( 69.98, 88.71) --
	( 73.74, 88.92) --
	( 77.50, 88.71) --
	( 81.27, 88.09) --
	( 85.03, 87.05) --
	( 88.79, 85.59) --
	( 92.55, 83.72) --
	( 96.32, 81.85) --
	(100.08, 80.39) --
	(103.84, 79.35) --
	(107.61, 78.73) --
	(111.37, 78.52) --
	(115.13, 78.73) --
	(118.90, 79.35) --
	(122.66, 80.39) --
	(126.42, 81.85) --
	(130.18, 83.72) --
	(133.95, 85.59) --
	(137.71, 87.05) --
	(141.47, 88.09) --
	(145.24, 88.71) --
	(149.00, 88.92) --
	(152.76, 88.71) --
	(156.53, 88.09) --
	(160.29, 87.05) --
	(164.05, 85.59) --
	(167.81, 83.72) --
	(171.58, 81.43) --
	(175.34, 78.73) --
	(179.10, 75.61) --
	(182.87, 72.07) --
	(186.63, 68.12);

\path[draw=drawColor,line width= 1.1pt,dash pattern=on 2pt off 2pt on 6pt off 2pt ,line join=round] ( 36.11, 68.12) --
	( 39.87, 71.49) --
	( 43.64, 74.44) --
	( 47.40, 76.98) --
	( 51.16, 79.10) --
	( 54.92, 80.81) --
	( 58.69, 82.10) --
	( 62.45, 82.97) --
	( 66.21, 83.43) --
	( 69.98, 83.47) --
	( 73.74, 83.10) --
	( 77.50, 82.31) --
	( 81.27, 81.10) --
	( 85.03, 79.48) --
	( 88.79, 77.44) --
	( 92.55, 74.99) --
	( 96.32, 72.53) --
	(100.08, 70.49) --
	(103.84, 68.87) --
	(107.61, 67.66) --
	(111.37, 66.87) --
	(115.13, 66.50) --
	(118.90, 66.54) --
	(122.66, 67.00) --
	(126.42, 67.87) --
	(130.18, 69.16) --
	(133.95, 70.45) --
	(137.71, 71.33) --
	(141.47, 71.78) --
	(145.24, 71.83) --
	(149.00, 71.45) --
	(152.76, 70.66) --
	(156.53, 69.45) --
	(160.29, 67.83) --
	(164.05, 65.79) --
	(167.81, 63.34) --
	(171.58, 60.47) --
	(175.34, 57.18) --
	(179.10, 53.48) --
	(182.87, 49.36) --
	(186.63, 44.83);

\path[draw=drawColor,line width= 1.1pt,dash pattern=on 4pt off 4pt ,line join=round] ( 36.11, 68.12) --
	( 39.87, 73.82) --
	( 43.64, 79.11) --
	( 47.40, 83.98) --
	( 51.16, 88.43) --
	( 54.92, 92.47) --
	( 58.69, 96.09) --
	( 62.45, 99.30) --
	( 66.21,102.09) --
	( 69.98,104.46) --
	( 73.74,106.42) --
	( 77.50,107.96) --
	( 81.27,109.09) --
	( 85.03,109.80) --
	( 88.79,110.10) --
	( 92.55,109.97) --
	( 96.32,109.85) --
	(100.08,110.15) --
	(103.84,110.86) --
	(107.61,111.98) --
	(111.37,113.53) --
	(115.13,115.48) --
	(118.90,117.86) --
	(122.66,120.65) --
	(126.42,123.86) --
	(130.18,127.48) --
	(133.95,131.10) --
	(137.71,134.31) --
	(141.47,137.10) --
	(145.24,139.47) --
	(149.00,141.43) --
	(152.76,142.97) --
	(156.53,144.10) --
	(160.29,144.81) --
	(164.05,145.10) --
	(167.81,144.98) --
	(171.58,144.44) --
	(175.34,143.49) --
	(179.10,142.12) --
	(182.87,140.34) --
	(186.63,138.13);

\node[text=drawColor,anchor=base,inner sep=0pt, outer sep=0pt, scale=  0.92] at (205.44, 64.32) {$m(x)$};

\node[text=drawColor,anchor=base,inner sep=0pt, outer sep=0pt, scale=  0.92] at (205.44, 41.03) {$m(\eta,x)$};

\node[text=drawColor,anchor=base,inner sep=0pt, outer sep=0pt, scale=  0.92] at (205.44,134.33) {$Q(\eta,x)$};
\end{scope}
\begin{scope}
\path[clip] (  0.00,  0.00) rectangle (231.26,173.45);
\definecolor{drawColor}{RGB}{0,0,0}

\path[draw=drawColor,line width= 0.6pt,line join=round] ( 27.08, 30.69) --
	( 27.08,167.95);
\end{scope}
\begin{scope}
\path[clip] (  0.00,  0.00) rectangle (231.26,173.45);
\definecolor{drawColor}{gray}{0.30}

\node[text=drawColor,anchor=base east,inner sep=0pt, outer sep=0pt, scale=  0.73] at ( 22.13, 33.89) {-0.5};

\node[text=drawColor,anchor=base east,inner sep=0pt, outer sep=0pt, scale=  0.73] at ( 22.13, 75.49) {0.0};

\node[text=drawColor,anchor=base east,inner sep=0pt, outer sep=0pt, scale=  0.73] at ( 22.13,117.08) {0.5};

\node[text=drawColor,anchor=base east,inner sep=0pt, outer sep=0pt, scale=  0.73] at ( 22.13,158.68) {1.0};
\end{scope}
\begin{scope}
\path[clip] (  0.00,  0.00) rectangle (231.26,173.45);
\definecolor{drawColor}{gray}{0.20}

\path[draw=drawColor,line width= 0.6pt,line join=round] ( 24.33, 36.92) --
	( 27.08, 36.92);

\path[draw=drawColor,line width= 0.6pt,line join=round] ( 24.33, 78.52) --
	( 27.08, 78.52);

\path[draw=drawColor,line width= 0.6pt,line join=round] ( 24.33,120.11) --
	( 27.08,120.11);

\path[draw=drawColor,line width= 0.6pt,line join=round] ( 24.33,161.71) --
	( 27.08,161.71);
\end{scope}
\begin{scope}
\path[clip] (  0.00,  0.00) rectangle (231.26,173.45);
\definecolor{drawColor}{RGB}{0,0,0}

\path[draw=drawColor,line width= 0.6pt,line join=round] ( 27.08, 30.69) --
	(225.76, 30.69);
\end{scope}
\begin{scope}
\path[clip] (  0.00,  0.00) rectangle (231.26,173.45);
\definecolor{drawColor}{gray}{0.20}

\path[draw=drawColor,line width= 0.6pt,line join=round] ( 36.11, 27.94) --
	( 36.11, 30.69);

\path[draw=drawColor,line width= 0.6pt,line join=round] ( 73.74, 27.94) --
	( 73.74, 30.69);

\path[draw=drawColor,line width= 0.6pt,line join=round] (111.37, 27.94) --
	(111.37, 30.69);

\path[draw=drawColor,line width= 0.6pt,line join=round] (149.00, 27.94) --
	(149.00, 30.69);

\path[draw=drawColor,line width= 0.6pt,line join=round] (186.63, 27.94) --
	(186.63, 30.69);

\path[draw=drawColor,line width= 0.6pt,line join=round] (224.26, 27.94) --
	(224.26, 30.69);
\end{scope}
\begin{scope}
\path[clip] (  0.00,  0.00) rectangle (231.26,173.45);
\definecolor{drawColor}{gray}{0.30}

\node[text=drawColor,anchor=base,inner sep=0pt, outer sep=0pt, scale=  0.73] at ( 36.11, 19.68) {-1.0};

\node[text=drawColor,anchor=base,inner sep=0pt, outer sep=0pt, scale=  0.73] at ( 73.74, 19.68) {-0.5};

\node[text=drawColor,anchor=base,inner sep=0pt, outer sep=0pt, scale=  0.73] at (111.37, 19.68) {0.0};

\node[text=drawColor,anchor=base,inner sep=0pt, outer sep=0pt, scale=  0.73] at (149.00, 19.68) {0.5};

\node[text=drawColor,anchor=base,inner sep=0pt, outer sep=0pt, scale=  0.73] at (186.63, 19.68) {1.0};

\node[text=drawColor,anchor=base,inner sep=0pt, outer sep=0pt, scale=  0.73] at (224.26, 19.68) {1.5};
\end{scope}
\begin{scope}
\path[clip] (  0.00,  0.00) rectangle (231.26,173.45);
\definecolor{drawColor}{RGB}{0,0,0}

\node[text=drawColor,anchor=base,inner sep=0pt, outer sep=0pt, scale=  0.92] at (126.42,  7.64) {$x$};
\end{scope}
\end{tikzpicture}}
		\vspace{-12pt}
		\subcaption{Hetersokedastic case with $\sigma(x)=0.5\cdot(1+x)$.}
	\end{subfigure}
	\caption[Truncated conditional expectation functions.]{Truncated conditional expectation functions for $m(x)=m_1(x)$ and  $\eta=0.8$.}
	\label{fig:model_trunc}
\end{figure}
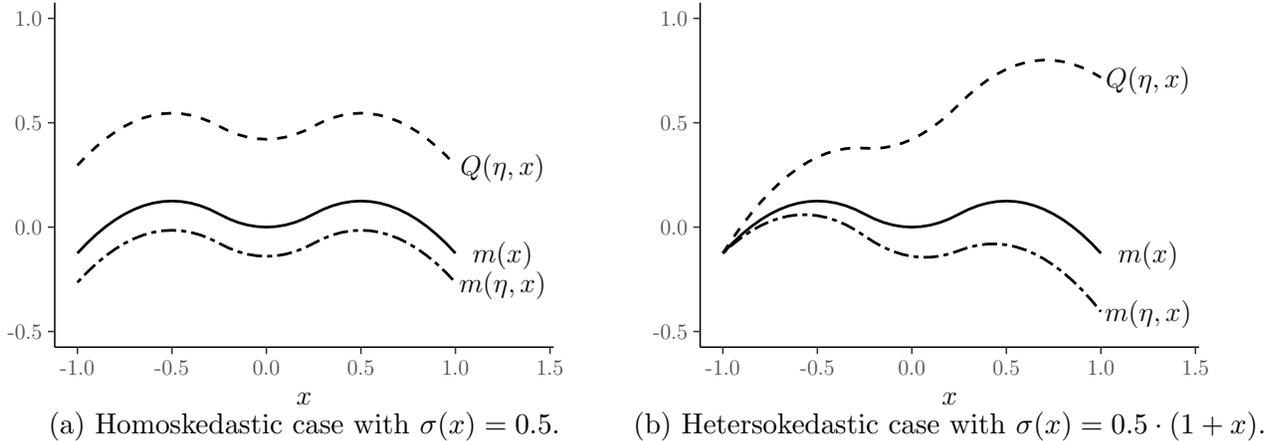

Due to normality of the residuals, the truncated conditional expectation functions have a simple, closed-form expression. It holds that
\begin{equation}
m(\eta,x)=m(x)- \frac{\phi(q_\eta)}{\eta} \sigma(x), 
\end{equation}
where $\phi(\cdot)$ is the density and $q_\eta$ is the $\eta$-quantile of the standard normal distribution, respectively.
With homoskedastic residuals, the truncated conditional expectation functions have the same shape as the respective conditional expectation functions, but they are shifted downwards. 
With heteroskedastic residuals, the slopes change as well, but this type of heteroskedasticity does not affect the curvature. Figure~\ref{fig:model_trunc} illustrates that for $\eta=0.8$ and $m(x)=m_1(x)$. Other cases are analogous.

In all simulations, the sample size is $n=1,000$, and the number of replications is $S=10,000$. The truncated conditional expectation functions are estimated at $x_0=0$ and three quantile levels, $\eta \in \{0.2, 0.5, 0.8\}$. I use the triangular kernel and the EHW variance estimator.

In Table \ref{table:sim1}, I report the root mean squared error (RMSE) of the infeasible estimator $\widetilde{m}$ and the feasible estimator $\wh{m}$, as well as the root mean squared error difference between the two.
The estimators are evaluated with the RMSE-optimal bandwidth chosen for the infeasible estimator using the bandwidth selector of \citet{armstrong2020simple} employing the true smoothness constant ($M=2$). In all cases, the difference between the infeasible and feasible estimators is small compared to their mean squared errors.\footnote{This qualitative conclusion remains the same when using the true bound multiplied or divided by two.} Moreover, the results are very similar in the homoskedastic and heteroskedastic settings, which shows that the estimator adapts to different slopes of the conditional quantile and truncated expectation functions very well. Depending on the specific data generating process, the feasible estimator may perform slightly better or worse than the infeasible estimator, which is due to the higher-order remainder terms in Lemma~\ref{lemma:asy_equivalence}.

\begin{table}[htb]
	\centering
	\caption{RMSE and root mean squared distance to $\wt m$.}
	\scalebox{0.95}{\begin{threeparttable}[hbt]
			\begin{tabular}{llcccccc}
				\hline
				& & \multicolumn{3}{c}{RMSE} & \multicolumn{3}{c}{Distance to $\wt m$}  \\
				& Design for $m_j$: & 1 & 2 & 3 & 1 & 2 & 3 \\ \hline 
				\multicolumn{8}{l}{\textit{Homoskedastic errors} } \\
				\multirow{2}{*}{$\eta=0.2$}						&	Infeasible $\widetilde{m}$& 5.044 & 5.002 & 5.146 & - & - & - \\ 
				& Feasible $\wh{m}$& 5.273 & 5.222 & 4.965 & 0.563 & 0.569 & 0.575 \vspace{4pt} \\
				\multirow{2}{*}{$\eta=0.5$}						&	Infeasible $\widetilde{m}$ & 4.094 & 4.068 & 4.134 & - & - & - \\ 
				& Feasible $\wh{m}$  & 4.202 & 4.174 & 4.041 & 0.277 & 0.280 & 0.282 \vspace{4pt} \\ 
				\multirow{2}{*}{$\eta=0.8$}					&	Infeasible $\widetilde{m}$  & 3.742 & 3.721 & 3.759 & - & - & - \\ 
				& Feasible $\wh{m}$& 3.804 & 3.782 & 3.707 & 0.164 & 0.165 & 0.166 \vspace{4pt}\\ 
				\multicolumn{8}{l}{\textit{Heteroskedastic errors} } \\
				\multirow{2}{*}{$\eta=0.2$}							&	Infeasible $\widetilde{m}$ & 5.095 & 5.032 & 5.177 & - & - & - \\ 
				&	Feasible $\wh{m}$ & 5.306 & 5.236 & 5.006 & 0.548 & 0.551 & 0.556 \vspace{4pt} \\ 
				\multirow{2}{*}{$\eta=0.5$}				&	Infeasible $\widetilde{m}$ &  4.126 & 4.091 & 4.157 & - & - & - \\ 
				&	Feasible $\wh{m}$  & 4.230 & 4.192 & 4.070 & 0.271 & 0.271 & 0.273 \vspace{4pt} \\ 
				\multirow{2}{*}{$\eta=0.8$}							&	Infeasible $\widetilde{m}$ & 3.766 & 3.742 & 3.782 & - & - & - \\ 
				&	Feasible $\wh{m}$ &  3.825 & 3.800 & 3.731 & 0.161 & 0.160 & 0.161 \\ 
				\hline
			\end{tabular}
			\begin{tablenotes}\small
				\item\hspace{-3pt}\textit{Notes:} All values are multiplied by 100. The estimators are evaluated with the RMSE-optimal bandwidth for the infeasible estimator $\wt m$ based on the true smoothness constant. The sample size is $n=1,000$, and the number of simulations is $S=10,000$.
			\end{tablenotes}
	\end{threeparttable}}
	\label{table:sim1}
\end{table}

\begin{table}[htb]
	\centering 
	\caption{Coverage, average bandwidth, and average length of the 95\% CI.}
	\scalebox{0.95}{\begin{threeparttable}[hbt]
			\begin{tabular}{llccccccccc}
				\hline
				& & \multicolumn{3}{c}{Coverage} & \multicolumn{3}{c}{Bandwidth} & \multicolumn{3}{c}{CI length} \\
				& Design for $m_j$:	& 1 & 2 & 3 & 1 & 2 & 3 & 1 & 2 & 3 \\ 
				\hline
				\multicolumn{10}{l}{\textit{Homoskedastic errors} } \\
				
				\multirow{2}{*}{$\eta=0.2$} & Infeasible $\widetilde{m}$ & 92.1 & 92.4 & 96.1 & 0.373 & 0.372 & 0.369 & 0.099 & 0.099 & 0.099 \\ 
				& Feasible $\wh{m}$ & 92.1 & 92.3 & 96.1 & 0.366 & 0.368 & 0.374 & 0.100 & 0.100 & 0.098\vspace{4pt} \\ 
				\multirow{2}{*}{$\eta=0.5$} &	Infeasible $\widetilde{m}$ & 93.5 & 93.7 & 96.0 & 0.334 & 0.334 & 0.333 & 0.080 & 0.080 & 0.080 \\ 
				& Feasible $\wh{m}$ & 93.6 & 93.8 & 95.9 & 0.331 & 0.332 & 0.335 & 0.081 & 0.081 & 0.080\vspace{4pt} \\ 
				\multirow{2}{*}{$\eta=0.8$} &	Infeasible $\widetilde{m}$  & 94.4 & 94.6 & 95.7 & 0.319 & 0.319 & 0.318 & 0.073 & 0.073 & 0.073 \\ 
				& Feasible $\wh{m}$  & 94.4 & 94.5 & 95.9 & 0.318 & 0.318 & 0.320 & 0.074 & 0.074 & 0.073\vspace{4pt} \\ 
				
				\multicolumn{10}{l}{\textit{Heteroskedastic errors} } \\	
				
				\multirow{2}{*}{$\eta=0.2$} & 		Infeasible $\widetilde{m}$ & 92.1 & 92.7 & 96.3 & 0.382 & 0.384 & 0.379 & 0.100 & 0.100 & 0.100 \\ 
				&		Feasible $\wh{m}$ & 92.5 & 93.0 & 96.1 & 0.375 & 0.380 & 0.385 & 0.101 & 0.101 & 0.099\vspace{4pt} \\ 
				\multirow{2}{*}{$\eta=0.5$} &		Infeasible $\widetilde{m}$  & 93.4 & 93.8 & 96.2 & 0.341 & 0.344 & 0.341 & 0.081 & 0.081 & 0.081 \\ 
				&		Feasible $\wh{m}$  & 93.6 & 94.0 & 96.0 & 0.337 & 0.342 & 0.344 & 0.081 & 0.081 & 0.080\vspace{4pt} \\ 
				\multirow{2}{*}{$\eta=0.8$} & 	Infeasible $\widetilde{m}$  & 94.4 & 94.6 & 95.8 & 0.325 & 0.328 & 0.326 & 0.074 & 0.074 & 0.074 \\ 
				&		Feasible $\wh{m}$  & 94.4 & 94.6 & 95.8 & 0.323 & 0.327 & 0.328 & 0.074 & 0.074 & 0.074 \\ 
				\hline
			\end{tabular}
			\begin{tablenotes}\small
				\item\hspace{-3pt}\textit{Notes:}
				The estimators are evaluated with their respective RMSE-optimal bandwidths based on the true smoothness constant. The sample size is $n=1,000$, and the number of simulations is $S=10,000$.
			\end{tablenotes}	
	\end{threeparttable}}
	\label{table:sim2}
\end{table}

In Table \ref{table:sim2}, I present results regarding the bandwidth choice, empirical coverage, and length of 95\% CIs. Here, I also use the true smoothness constant ($M=2$). The bandwidth selector for the feasible estimator chooses virtually the same bandwidth as would be chosen for the infeasible estimator $\wt m$, and the coverage is nearly identical.
I note that even for the infeasible estimator, the CI based on the true smoothness constant can have coverage below the nominal confidence level despite correctly accounting for maximal bias.
The reason for that is that although $Y$ is conditionally normally distributed, the outcome variable $\psi(\eta,Q(\eta,X))$ is not. The non-normality is more pronounced for lower truncation quantile levels.
In Online Appendix~\ref{A:ROT}, I~discuss a rule of thumb for choosing the smoothness constant that performs well in this simulation setting.

\newpage
\section{Applications}\label{sec:Applications}
In this section, the theoretical results from Section~\ref{sec:Asymptotics} are applied to two empirical settings: (i)~sharp regression discontinuity designs with a manipulated running variable and (ii) randomized experiments with sample selection. 
Both applications involve evaluation of truncated conditional expectations at estimated truncation quantile levels that are given by the ratio of two densities and two conditional expectations, respectively.

\subsection{Sharp Regression Discontinuity Designs with Manipulation}\label{sec:RDD}
\citet{gerard2020bounds} study regression discontinuity (RD) designs with a manipulated running variable. They develop a complex estimation approach applicable to fuzzy RD designs, which encompass sharp RD designs as a special case. Their estimation routine involves numerical integration and   inference is based on a bootstrap procedure. I study an approach tailored specifically to sharp RD designs that is easier to implement.

\subsubsection{Partial Identification under Manipulation} In a sharp RD design, units receive a treatment if and only if a special covariate, the running variable, exceeds a fixed cutoff value.
If the distribution of units' potential outcomes varies smoothly with the running variable around the cutoff, then the (local to the cutoff) average treatment effect is identified by the difference in average outcomes of the treated and untreated units whose realization of the running variable is just to the right or just to the left of the cutoff, respectively \citep[see, e.g.,][]{lee2010regression}.
The key identifying assumption,  however, is often questionable if the running variable is not exogenously determined.

To allow for violations of the smoothness assumption, 
\citet{gerard2020bounds} develop a framework where there are two unobservable types of units: \textit{always-assigned} units, for which the realization of the running variable is always to the right of the cutoff, and hence they are assigned the treatment; and \textit{potentially-assigned} units, whose density of the running variable is smooth around the cutoff, and hence they satisfy the standard assumptions of an RD design. 
\citet{gerard2020bounds} show that the average treatment effect for the subpopulation of potentially-assigned units at the cutoff, denoted by $\Gamma$, is partially identified. The bounds are derived as follows.
Under their behavioral model, the share of always-assigned units among the units just to the right of the cutoff, denoted by $\tau$, is identified by the discontinuity in the density of the running variable, denoted by $f_X$, at the cutoff as
\begin{equation*}
\tau = 1- \frac{f_X(x_0^-)}{f_X(x_0^+)},
\end{equation*}
where $x_0$ is the cutoff value.\footnote{For a generic function $g(\cdot)$, I put $g(x_0^+)=\lim_{x\to x_0^+}g(x)$ and $g(x_0^-)=\lim_{x\to x_0^-}g(x)$.} Given $\tau$, the sharp bounds on $\Gamma$ are obtained by considering the `extreme' scenarios in which the always-assigned units have the lowest or the highest outcomes among the units just to the right of the cutoff. The resulting lower and upper bound are given by:
\begin{align*}
& \Gamma^L = \E[Y|X=x_0^+, Y\leq Q(1-\tau,x^+_0)] - \E[Y|X=x_0^-], \\
& \Gamma^U = \E[Y|X=x_0^+, Y\ge Q(\tau,x_0^+)] - \E[Y|X=x_0^-].
\end{align*}

\subsubsection{Estimation and Inference} I discuss the main ingredients of the bounds estimator and its asymptotic properties. The details are given in Appendix \ref{sec:Appendix_inference}.
The bounds $\Gamma^L$ and $\Gamma^U$ involve truncated conditional expectation functions, which I estimate using the estimator $\widehat{m}$ developed in this paper.\footnote{Estimation with truncation from below can be performed using the procedure developed for estimation with truncation from above by taking the negative of the estimator applied to the data $\{X_i, -Y_i\}^{n}_{i=1}$.} Since $\tau$ is the proportion of truncated data, the quantile level $\eta$ in the previous sections corresponds to $1-\tau$, i.e. $\eta$ is the proportion of potentially-assigned units just to the right of the cutoff.
The first step is to estimate $\tau$.  The density limits can be estimated using estimators such as the linear smoother of the histogram 
\citep{cheng1997boundary, mccrary2008manipulation}, the linear smoother of the empirical density function \citep{jones1993simple, LejeuneSarda1992}, or the local quadratic smoother of the empirical distribution function \citep{cattaneo2020simple}.

Under regularity conditions, the resulting estimator of the truncation quantile level, $\widehat{\eta}=1-\widehat{\tau}$, satisfies the high-level assumption of Theorem~\ref{th:esteta}. Moreover, since $\widehat{\eta}$ depends only on the running variable, it is conditionally uncorrelated with the estimators of the truncated conditional expectations with known $\eta$, which simplifies the asymptotic variance formula. The conditional expectation just to the left of the cutoff, $\E[Y|X=x_0^-]$, can be estimated using a~standard local linear estimator. The estimators of the bounds have an asymptotically normal distribution, which can be used to form confidence intervals for the partially identified treatment effect.

\subsubsection{Empirical Application} I evaluate the procedure that I propose by implementing it for the empirical application of \citet{gerard2020bounds}.\footnote{The authors kindly implemented my procedure on their restricted-use data for comparison purposes.}
They investigate the effect of unemployment insurance (UI) benefits on the formal reemployment in Brazil. They exploit the rule that a worker involuntarily laid off from a private-sector firm is eligible for the UI benefit only if there was at least 16 months between the date of her layoff and the date of the last layoff after which she applied for and drew UI benefits. This rule creates a discontinuity in the eligibility for UI benefits, which is reflected in a 70pp increase in the actual take-up of UI benefits. In the following, I focus on an intention-to-treat analysis, where the eligibility for UI benefits is the treatment, and the outcome of interest is the duration without a formal job after the layoff.

\begin{figure}[hbt]
	\captionsetup[subfigure]{justification=centering}
	\begin{subfigure}{0.48\textwidth}	
		\centering
		\includegraphics[width=1\textwidth]{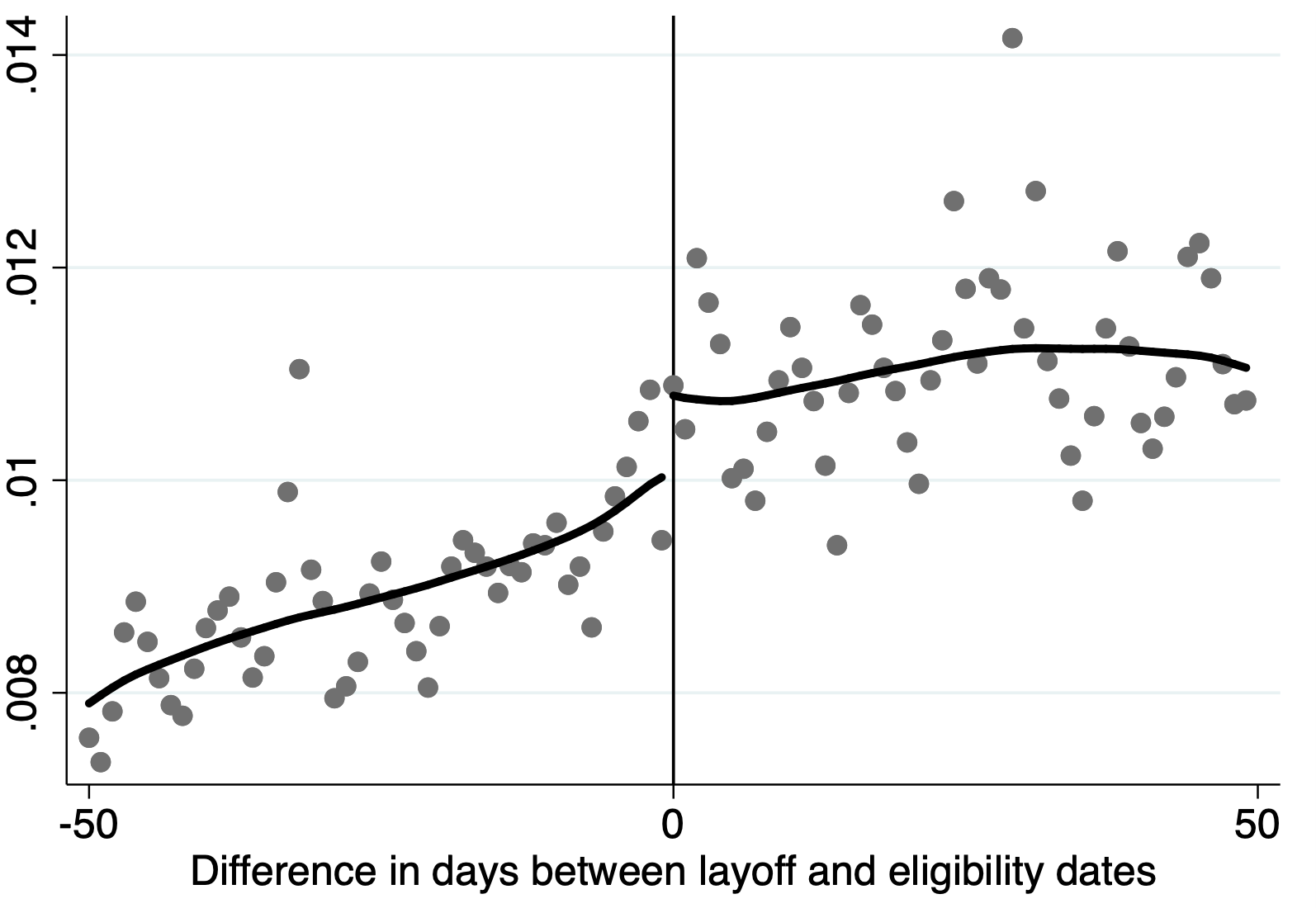}
		\vspace{-4pt}
		\subcaption{Frequency.}
	\end{subfigure}
	\begin{subfigure}{0.48\textwidth}
		\centering	
		\includegraphics[width=1\textwidth]{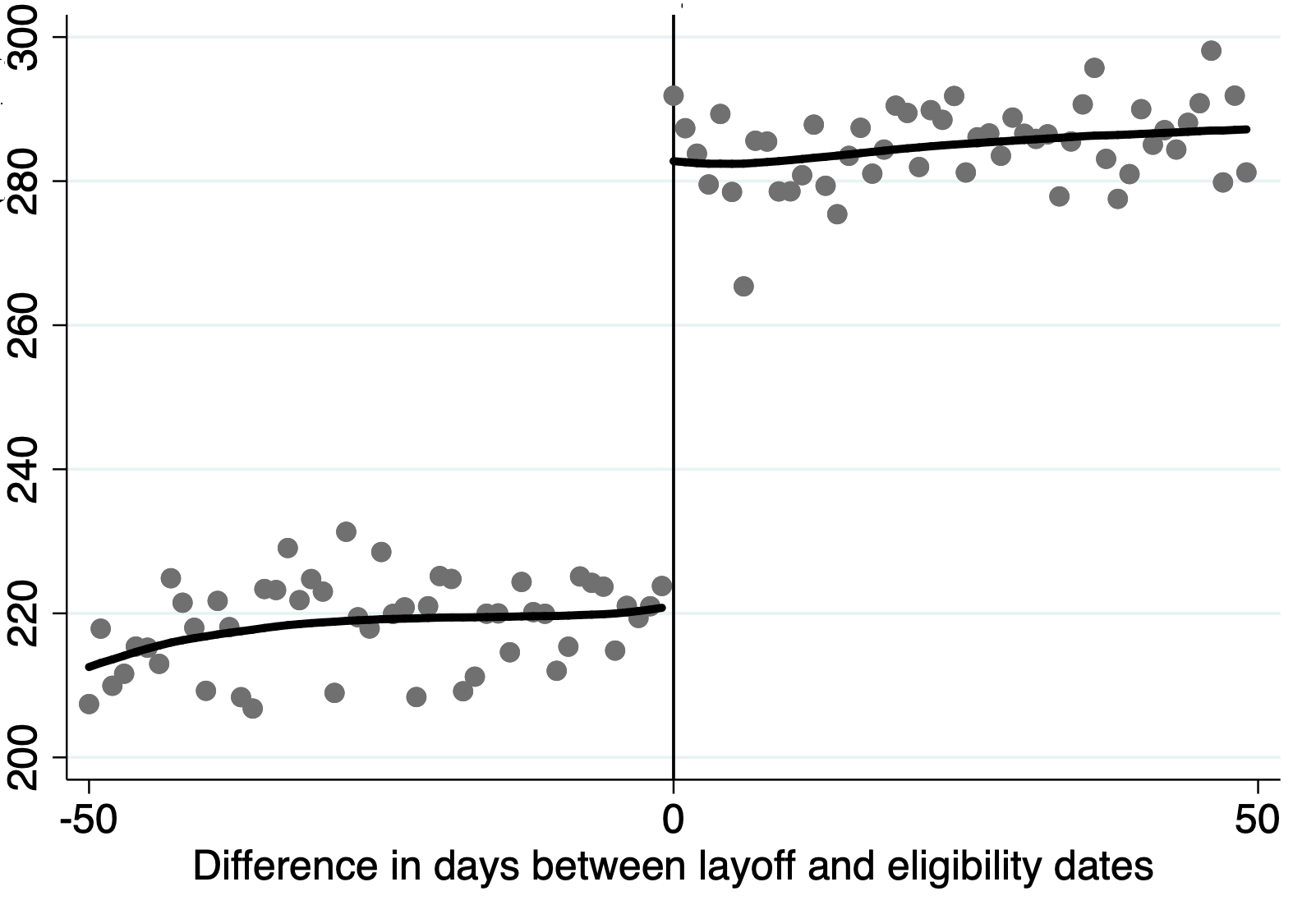}
		\vspace{-4pt}
		\subcaption{Duration without a formal job.}
	\end{subfigure}
	\caption{Graphical evidence for the intention-to-treat analysis.}
	\footnotesize{\textit{Notes:}  The dots represent the frequency (left panel) and the average duration of unemployment censored at 24 months (right panel) by day. The figure is based on 169,575 observations. \textit{Source: \citet{gerard2020bounds}}.}
	\label{fig:evidence}
\end{figure}

Despite the 16-month rule being rather arbitrary, \citet{gerard2020bounds} point out the following ways in which violations of the standard RD assumptions may arise in this setup. Some workers may provoke their layoffs or ask their employers to report their quit as involuntary once they become eligible for a UI benefit. Other workers may have managed to delay their layoff to a~date when they were eligible for the UI benefit. All theses workers are always-assigned units in the manipulation framework outlined in the previous subsection.

Figure \ref{fig:evidence} reproduces the graphical evidence for this RD design. The running variable is the difference in days between the layoff date and the eligibility date, so that the cutoff is at 0. In the left panel, I present the density of the running variable. The share of always-assigned units is estimated to be 6.4\%, which is relatively well separated from zero. This is essential for the good quality of the normal approximation of the asymptotic distribution of $\widehat{\tau}$. In the right panel, the dots represent the average outcome by day (of all observations). There is a marked jump in the mean duration without a formal job at the cutoff. I note that a substantial share, about 12--14\%, of duration outcomes is censored at 24 months. This, however, does not require any adjustment in my estimation and inference procedure.

Following \citet{gerard2020bounds}, I conduct two types of analysis. First, I estimate bounds on $\Gamma$ using an estimated proportion of the always-assigned units to the right of the cutoff. Second, I~conduct a sensitivity analysis, where I report bounds for different levels of potential manipulation. I report my results along with the original estimates of \citet{gerard2020bounds}. Their estimator is based on a local linear estimator of the conditional c.d.f., and they conduct inference via bootstrap. For comparability with \citet{gerard2020bounds}, all estimators use a 30-day bandwidth, and the confidence intervals are formally justified by undersmoothing. 

In Table~\ref{table:results}, I present estimates of the bounds and the 95\% confidence intervals for $\Gamma$ with estimated $\tau$. As a reference point, the point estimate ignoring the possibility of manipulation indicates that the eligibility for UI benefits increases the duration of unemployment by about 62 days. When accounting for manipulation, however, the estimated identified set spans the range from 31 to 81 days.

\begin{table}[th]
	\centering
	\caption[Estimated effects of UI benefits on the duration without a formal job.]{Estimated effects of UI benefits on the duration without a formal job in days.}
	\scalebox{0.95}{\begin{threeparttable}
			\begin{tabular}{lcc@{\hskip 0.3in}cccc}
				\hline
				& \multicolumn{2}{c}{Results of \citet{gerard2020bounds}} & \multicolumn{2}{c}{My results} \\
				& Estimate & 95\% CI & Estimate & 95\% CI  \\ \cline{2-3} \cline{4-5}
				Share of always-assigned units & 0.064 & [0.038; 0.089]  \\  [4pt]
				LATE: Ignoring manipulation & 61.9 & [55.7; 68.1] & 61.9 & [55.5; 68.3] \\ 
				LATE: Bounds for $\Gamma$ & [31.4; 80.9] & [18.9; 89.6]  & [31.4; 80.9] & [19.4; 89.5] \\   \hline
			\end{tabular}
			\begin{tablenotes}
				\small
				\item\hspace{-3pt}\textit{Note:} There are 102,791 observations in the 30-day estimation window.
			\end{tablenotes}
	\end{threeparttable}}
	\label{table:results}
\end{table}

In the second part of the analysis, a certain hypothetical, fixed degree of manipulation in the data is presumed. The results are presented in Figure~\ref{fig:fixed_results}. At $\tau=0$, the results correspond to the `no manipulation case', where the treatment effect is point identified. The bounds become wider as the presumed degree of manipulation increases.
 The vertical black line marks the estimated proportion of always-assigned units among all units just to the right of the cutoff.

The results are nearly identical when using the procedure of \citet{gerard2020bounds} and mine. This similarity, however, is specific to this dataset, where the conditional quantile functions at the truncation quantile levels are flat. I show in Appendix \ref{A:WNW} that compared to my estimator, approaches based on first-stage estimates of the conditional c.d.f. have an additional bias term when the conditional quantile function has a nonzero slope.

\begin{figure}[htb]
	\captionsetup[subfigure]{justification=centering}
	\begin{subfigure}{0.48\textwidth}	
		\centering
		\scalebox{0.95}{
\begin{tikzpicture}[x=1pt,y=1pt]
\definecolor{fillColor}{RGB}{255,255,255}
\path[use as bounding box,fill=fillColor,fill opacity=0.00] (0,0) rectangle (231.26,187.90);
\begin{scope}
\path[clip] (  0.00,  0.00) rectangle (231.26,187.90);
\definecolor{drawColor}{RGB}{255,255,255}
\definecolor{fillColor}{RGB}{255,255,255}

\path[draw=drawColor,line width= 0.6pt,line join=round,line cap=round,fill=fillColor] (  0.00,  0.00) rectangle (231.26,187.90);
\end{scope}
\begin{scope}
\path[clip] ( 58.00, 30.69) rectangle (225.76,182.40);
\definecolor{fillColor}{RGB}{255,255,255}

\path[fill=fillColor] ( 58.00, 30.69) rectangle (225.76,182.40);
\definecolor{drawColor}{gray}{0.87}

\path[draw=drawColor,line width= 0.1pt,line join=round] ( 58.00, 60.57) --
	(225.76, 60.57);

\path[draw=drawColor,line width= 0.1pt,line join=round] ( 58.00,106.54) --
	(225.76,106.54);

\path[draw=drawColor,line width= 0.1pt,line join=round] ( 58.00,152.52) --
	(225.76,152.52);

\path[draw=drawColor,line width= 0.3pt,line join=round] ( 58.00, 37.58) --
	(225.76, 37.58);

\path[draw=drawColor,line width= 0.3pt,line join=round] ( 58.00, 83.56) --
	(225.76, 83.56);

\path[draw=drawColor,line width= 0.3pt,line join=round] ( 58.00,129.53) --
	(225.76,129.53);

\path[draw=drawColor,line width= 0.3pt,line join=round] ( 58.00,175.51) --
	(225.76,175.51);

\path[draw=drawColor,line width= 0.3pt,line join=round] ( 65.63, 30.69) --
	( 65.63,182.40);

\path[draw=drawColor,line width= 0.3pt,line join=round] (114.43, 30.69) --
	(114.43,182.40);

\path[draw=drawColor,line width= 0.3pt,line join=round] (141.88, 30.69) --
	(141.88,182.40);

\path[draw=drawColor,line width= 0.3pt,line join=round] (218.14, 30.69) --
	(218.14,182.40);
\definecolor{drawColor}{RGB}{0,0,0}

\path[draw=drawColor,line width= 1.1pt,dash pattern=on 4pt off 4pt ,line join=round] ( 65.63,109.09) --
	( 84.69,104.25) --
	(103.75, 98.66) --
	(114.19, 95.47) --
	(141.88, 86.55) --
	(218.14, 61.58);

\path[draw=drawColor,line width= 1.1pt,line join=round] ( 65.63,112.01) --
	( 84.69,106.73) --
	(103.75,101.17) --
	(114.19, 98.00) --
	(141.88, 89.12) --
	(218.14, 63.99);

\path[draw=drawColor,line width= 1.1pt,line join=round] ( 65.63,112.01) --
	( 84.69,115.34) --
	(103.75,118.81) --
	(114.19,120.77) --
	(141.88,126.11) --
	(218.14,141.77);

\path[draw=drawColor,line width= 1.1pt,dash pattern=on 4pt off 4pt ,line join=round] ( 65.63,114.94) --
	( 84.69,117.82) --
	(103.75,121.32) --
	(114.19,123.29) --
	(141.88,128.67) --
	(218.14,144.43);

\path[draw=drawColor,line width= 0.6pt,line join=round] (114.43, 30.69) -- (114.43,182.40);
\definecolor{drawColor}{gray}{0.70}

\path[draw=drawColor,line width= 0.6pt,line join=round,line cap=round] ( 58.00, 30.69) rectangle (225.76,182.40);
\end{scope}
\begin{scope}
\path[clip] (  0.00,  0.00) rectangle (231.26,187.90);
\definecolor{drawColor}{gray}{0.30}

\node[text=drawColor,anchor=base east,inner sep=0pt, outer sep=0pt, scale=  0.73] at ( 53.05, 34.55) {-100};

\node[text=drawColor,anchor=base east,inner sep=0pt, outer sep=0pt, scale=  0.73] at ( 53.05, 80.53) {0};

\node[text=drawColor,anchor=base east,inner sep=0pt, outer sep=0pt, scale=  0.73] at ( 53.05,126.50) {100};

\node[text=drawColor,anchor=base east,inner sep=0pt, outer sep=0pt, scale=  0.73] at ( 53.05,172.48) {200};
\end{scope}
\begin{scope}
\path[clip] (  0.00,  0.00) rectangle (231.26,187.90);
\definecolor{drawColor}{gray}{0.70}

\path[draw=drawColor,line width= 0.3pt,line join=round] ( 55.25, 37.58) --
	( 58.00, 37.58);

\path[draw=drawColor,line width= 0.3pt,line join=round] ( 55.25, 83.56) --
	( 58.00, 83.56);

\path[draw=drawColor,line width= 0.3pt,line join=round] ( 55.25,129.53) --
	( 58.00,129.53);

\path[draw=drawColor,line width= 0.3pt,line join=round] ( 55.25,175.51) --
	( 58.00,175.51);
\end{scope}
\begin{scope}
\path[clip] (  0.00,  0.00) rectangle (231.26,187.90);
\definecolor{drawColor}{gray}{0.70}

\path[draw=drawColor,line width= 0.3pt,line join=round] ( 65.63, 27.94) --
	( 65.63, 30.69);

\path[draw=drawColor,line width= 0.3pt,line join=round] (114.43, 27.94) --
	(114.43, 30.69);

\path[draw=drawColor,line width= 0.3pt,line join=round] (141.88, 27.94) --
	(141.88, 30.69);

\path[draw=drawColor,line width= 0.3pt,line join=round] (218.14, 27.94) --
	(218.14, 30.69);
\end{scope}
\begin{scope}
\path[clip] (  0.00,  0.00) rectangle (231.26,187.90);
\definecolor{drawColor}{gray}{0.30}

\node[text=drawColor,anchor=base,inner sep=0pt, outer sep=0pt, scale=  0.73] at ( 65.63, 19.68) {0};

\node[text=drawColor,anchor=base,inner sep=0pt, outer sep=0pt, scale=  0.73] at (114.43, 19.68) {$\hat{\tau}=0.064$};

\node[text=drawColor,anchor=base,inner sep=0pt, outer sep=0pt, scale=  0.73] at (141.88, 19.68) {0.1};

\node[text=drawColor,anchor=base,inner sep=0pt, outer sep=0pt, scale=  0.73] at (218.14, 19.68) {0.2};
\end{scope}
\begin{scope}
\path[clip] (  0.00,  0.00) rectangle (231.26,187.90);
\definecolor{drawColor}{RGB}{0,0,0}

\node[text=drawColor,anchor=base,inner sep=0pt, outer sep=0pt, scale=  0.92] at (141.88,  7.64) {Hypothetical $\tau$};
\end{scope}
\begin{scope}
\path[clip] (  0.00,  0.00) rectangle (231.26,187.90);
\definecolor{drawColor}{RGB}{0,0,0}

\node[text=drawColor,anchor=base west,inner sep=0pt, outer sep=0pt, scale=  0.92] at (  5.50,102.76) {LATE};
\end{scope}
\end{tikzpicture}}
		\vspace{-4pt}
		\subcaption{Procedure of \citet{gerard2020bounds}.}
	\end{subfigure}
	\begin{subfigure}{0.48\textwidth}
		\centering	
		\scalebox{0.95}{
\begin{tikzpicture}[x=1pt,y=1pt]
\definecolor{fillColor}{RGB}{255,255,255}
\path[use as bounding box,fill=fillColor,fill opacity=0.00] (0,0) rectangle (231.26,187.90);
\begin{scope}
\path[clip] (  0.00,  0.00) rectangle (231.26,187.90);
\definecolor{drawColor}{RGB}{255,255,255}
\definecolor{fillColor}{RGB}{255,255,255}

\path[draw=drawColor,line width= 0.6pt,line join=round,line cap=round,fill=fillColor] (  0.00,  0.00) rectangle (231.26,187.90);
\end{scope}
\begin{scope}
\path[clip] ( 58.00, 30.69) rectangle (225.76,182.40);
\definecolor{fillColor}{RGB}{255,255,255}

\path[fill=fillColor] ( 58.00, 30.69) rectangle (225.76,182.40);
\definecolor{drawColor}{gray}{0.87}

\path[draw=drawColor,line width= 0.1pt,line join=round] ( 58.00, 60.57) --
	(225.76, 60.57);

\path[draw=drawColor,line width= 0.1pt,line join=round] ( 58.00,106.54) --
	(225.76,106.54);

\path[draw=drawColor,line width= 0.1pt,line join=round] ( 58.00,152.52) --
	(225.76,152.52);

\path[draw=drawColor,line width= 0.3pt,line join=round] ( 58.00, 37.58) --
	(225.76, 37.58);

\path[draw=drawColor,line width= 0.3pt,line join=round] ( 58.00, 83.56) --
	(225.76, 83.56);

\path[draw=drawColor,line width= 0.3pt,line join=round] ( 58.00,129.53) --
	(225.76,129.53);

\path[draw=drawColor,line width= 0.3pt,line join=round] ( 58.00,175.51) --
	(225.76,175.51);

\path[draw=drawColor,line width= 0.3pt,line join=round] ( 65.63, 30.69) --
	( 65.63,182.40);

\path[draw=drawColor,line width= 0.3pt,line join=round] (114.43, 30.69) --
	(114.43,182.40);

\path[draw=drawColor,line width= 0.3pt,line join=round] (141.88, 30.69) --
	(141.88,182.40);

\path[draw=drawColor,line width= 0.3pt,line join=round] (218.14, 30.69) --
	(218.14,182.40);
\definecolor{drawColor}{RGB}{0,0,0}

\path[draw=drawColor,line width= 1.1pt,dash pattern=on 4pt off 4pt ,line join=round] ( 65.63,109.05) --
	( 84.69,104.22) --
	(103.75, 98.63) --
	(114.19, 95.44) --
	(141.88, 86.51) --
	(218.14, 61.57);

\path[draw=drawColor,line width= 1.1pt,line join=round] ( 65.63,112.01) --
	( 84.69,106.73) --
	(103.75,101.17) --
	(114.19, 98.00) --
	(141.88, 89.12) --
	(218.14, 63.99);

\path[draw=drawColor,line width= 1.1pt,line join=round] ( 65.63,112.01) --
	( 84.69,115.34) --
	(103.75,118.81) --
	(114.19,120.77) --
	(141.88,126.11) --
	(218.14,141.77);

\path[draw=drawColor,line width= 1.1pt,dash pattern=on 4pt off 4pt ,line join=round] ( 65.63,114.97) --
	( 84.69,117.85) --
	(103.75,121.35) --
	(114.19,123.32) --
	(141.88,128.71) --
	(218.14,144.48);

\path[draw=drawColor,line width= 0.6pt,line join=round] (114.43, 30.69) -- (114.43,182.40);
\definecolor{drawColor}{gray}{0.70}

\path[draw=drawColor,line width= 0.6pt,line join=round,line cap=round] ( 58.00, 30.69) rectangle (225.76,182.40);
\end{scope}
\begin{scope}
\path[clip] (  0.00,  0.00) rectangle (231.26,187.90);
\definecolor{drawColor}{gray}{0.30}

\node[text=drawColor,anchor=base east,inner sep=0pt, outer sep=0pt, scale=  0.73] at ( 53.05, 34.55) {-100};

\node[text=drawColor,anchor=base east,inner sep=0pt, outer sep=0pt, scale=  0.73] at ( 53.05, 80.53) {0};

\node[text=drawColor,anchor=base east,inner sep=0pt, outer sep=0pt, scale=  0.73] at ( 53.05,126.50) {100};

\node[text=drawColor,anchor=base east,inner sep=0pt, outer sep=0pt, scale=  0.73] at ( 53.05,172.48) {200};
\end{scope}
\begin{scope}
\path[clip] (  0.00,  0.00) rectangle (231.26,187.90);
\definecolor{drawColor}{gray}{0.70}

\path[draw=drawColor,line width= 0.3pt,line join=round] ( 55.25, 37.58) --
	( 58.00, 37.58);

\path[draw=drawColor,line width= 0.3pt,line join=round] ( 55.25, 83.56) --
	( 58.00, 83.56);

\path[draw=drawColor,line width= 0.3pt,line join=round] ( 55.25,129.53) --
	( 58.00,129.53);

\path[draw=drawColor,line width= 0.3pt,line join=round] ( 55.25,175.51) --
	( 58.00,175.51);
\end{scope}
\begin{scope}
\path[clip] (  0.00,  0.00) rectangle (231.26,187.90);
\definecolor{drawColor}{gray}{0.70}

\path[draw=drawColor,line width= 0.3pt,line join=round] ( 65.63, 27.94) --
	( 65.63, 30.69);

\path[draw=drawColor,line width= 0.3pt,line join=round] (114.43, 27.94) --
	(114.43, 30.69);

\path[draw=drawColor,line width= 0.3pt,line join=round] (141.88, 27.94) --
	(141.88, 30.69);

\path[draw=drawColor,line width= 0.3pt,line join=round] (218.14, 27.94) --
	(218.14, 30.69);
\end{scope}
\begin{scope}
\path[clip] (  0.00,  0.00) rectangle (231.26,187.90);
\definecolor{drawColor}{gray}{0.30}

\node[text=drawColor,anchor=base,inner sep=0pt, outer sep=0pt, scale=  0.73] at ( 65.63, 19.68) {0};

\node[text=drawColor,anchor=base,inner sep=0pt, outer sep=0pt, scale=  0.73] at (114.43, 19.68) {$\hat{\tau}=0.064$};

\node[text=drawColor,anchor=base,inner sep=0pt, outer sep=0pt, scale=  0.73] at (141.88, 19.68) {0.1};

\node[text=drawColor,anchor=base,inner sep=0pt, outer sep=0pt, scale=  0.73] at (218.14, 19.68) {0.2};
\end{scope}
\begin{scope}
\path[clip] (  0.00,  0.00) rectangle (231.26,187.90);
\definecolor{drawColor}{RGB}{0,0,0}

\node[text=drawColor,anchor=base,inner sep=0pt, outer sep=0pt, scale=  0.92] at (141.88,  7.64) {Hypothetical $\tau$};
\end{scope}
\begin{scope}
\path[clip] (  0.00,  0.00) rectangle (231.26,187.90);
\definecolor{drawColor}{RGB}{0,0,0}

\node[text=drawColor,anchor=base west,inner sep=0pt, outer sep=0pt, scale=  0.92] at (  5.50,102.76) {LATE};
\end{scope}
\end{tikzpicture}}
		\vspace{-4pt}
		\subcaption{Estimation with $\wh{m}$.}
	\end{subfigure}
	\caption{Fixed-manipulation inference.}
	\footnotesize{\textit{Notes:} The horizontal axis displays the hypothetical proportion	of potentially-assigned workers. The solid lines present the estimates of the bounds and the	dashed lines mark 95\% confidence intervals. The figures are based on 102,791 observations.}
	\label{fig:fixed_results}
\end{figure}
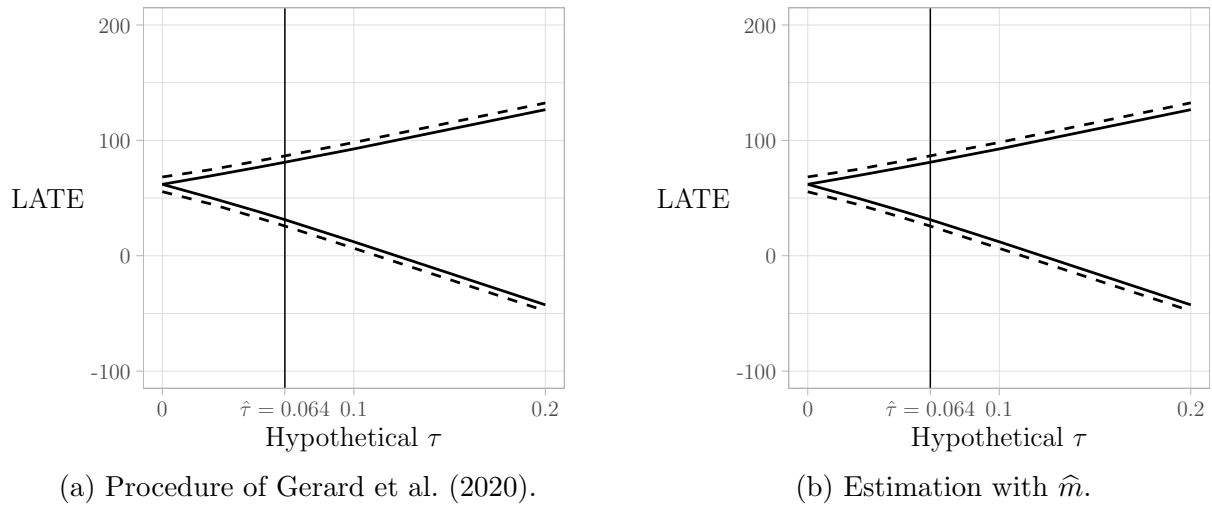

\subsection{Conditional Lee Bounds}\label{sec:Lee}
\citet{Lee2009} studies the effect of a job training program on wage rates. In his analysis, he uses covariates to narrow down the bounds on the unconditional effect \citep[see also][]{semenova2020better}. The conditional treatment effects, however, may be of interest in their own right.

\subsubsection{Partial Identification of the Wage Effect}
In a randomized experiment, units are randomly split into the treatment and control groups. Treatment effects are then typically identified by comparing units in the two treatment arms. In some settings, however, such comparisons are invalid due to sample selection. For example, job training affects not only the wage rates but also the employment status. As a result, individuals whose wages are observed in the treatment and control groups are not comparable even if the treatment was random assigned.

For such settings, \citet{Lee2009} derives bounds on the wage effect for the subpopulation of \textit{always-observed} individuals, i.e. those who would work regardless of whether they obtained the treatment. In the first step, he identifies the proportion of individuals whose employment status is affected by the treatment status. By random assignment to the program, this proportion is given by the difference in the employment rates in the treatment and control group. If the training program weakly encourages to work, then the bounds on the wage rates of the always-observed in the treatment group are obtained by considering the extreme scenarios in which the always-observed individuals have the highest or the lowest wage rates among the employed.\footnote{If the treatment discourages from working, then the control group would need to be truncated.} This reasoning holds unconditionally as well as conditionally on covariates. 

To state these bounds formally, let $D$ be the treatment indicator and $S$ the employment indicator. Further, let $X$ be some additional covariate. The conditional proportion of individuals among the employed in the treatment group who are employed if and only if they are treated is identified as 
\begin{equation*}
p(x)= 1 - \frac{\Prob[S=1|D=0,X=x]}{\Prob[S=1|D=1,X=x]}.
\end{equation*}
The sharp lower and upper bounds on the local average treatment effect on wage rates are given by \citep[Proposition 1b]{Lee2009}
\begin{align*}
\Delta^L(x) =  \E[Y|D=1, S=1, Y  \leq Q_{D=1,S=1}&(1-p(x),x), X=x] - \E[Y|D=0,S=1, X=x], \\
\Delta^U(x) =   \E[Y|D=1, S=1, Y  \ge Q_{D=1,S=1}&(p(x),x), X=x]  - \E[Y|D=0,S=1, X=x],
\end{align*}
where $Q_{D=1,S=1}(u,x)$ denotes the $u$-quantile of $Y$ conditional on $D=1$, $S=1$, and $X=x$.

\subsubsection{Estimation and Inference}
I discuss the main ingredients of the bounds estimator. The details are given in Appendix \ref{A:Lee}.
The conditional probabilities $\Prob[S=1|D=d,X=x]$ can be estimated using a local linear estimator with $S_i$ as the outcome and $X_i$ as a regressor, run on the sample restricted to observations with $D_i=d$ for $d \in \{0,1\}$.
Under regularity conditions, the resulting estimator $\wh{\eta}=1-\wh{p}(x_0)$ satisfies the high-level assumption of Theorem~\ref{th:esteta}. The truncated conditional expectations in the definition of $\Delta^L(x)$ and $\Delta^U(x)$  can be estimated using the estimator proposed in this paper and the conditional expectation function in the control group can be estimated using the standard local linear estimator. Restricting the samples based on the values of indicators $S_i$ and $D_i$ does not cause any complications in the asymptotic analysis.
The estimators of the bounds have an asymptotically normal distribution, which can be used to form confidence intervals for the partially identified treatment effect.

\subsubsection{Empirical Application}
I evaluate the effect of the job training offered under the Job Corps program in the United States. I use data from the National Job Corps Study conducted in mid 90s. 
I follow \citet{Lee2009} closely in terms of the sample definition.
The individuals who applied to the program were followed for four years after random assignment. There are 3599 individuals in the control group and 5546 in the treatment group, giving a total of 9145 observations.
I investigate the effect on wage rates four years after the random assignment, conditioning on the usual weekly earnings at the most recent job reported at the baseline.

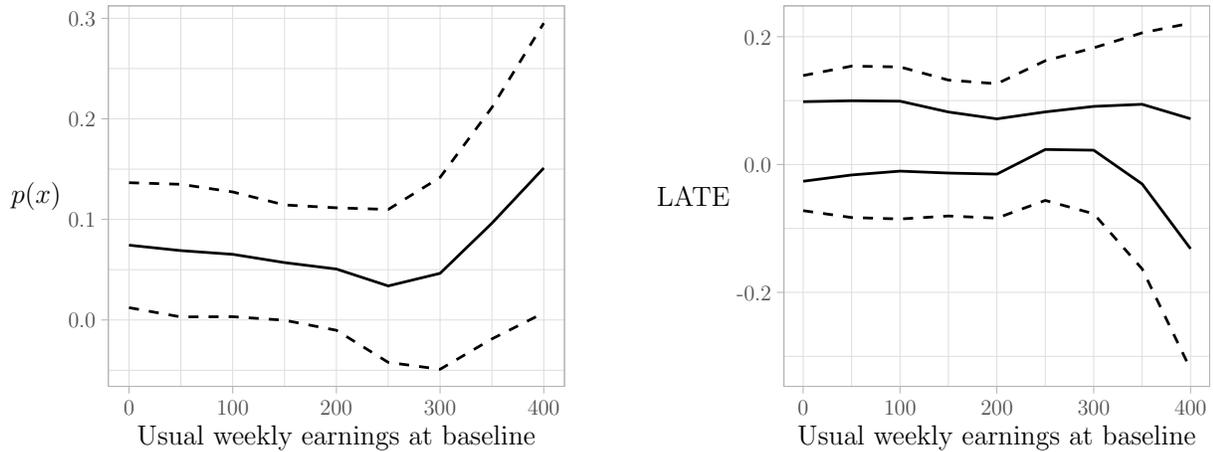
\begin{figure}[htb]
	\captionsetup[subfigure]{justification=centering}
	\begin{subfigure}{0.48\textwidth}	
		\centering
		\scalebox{0.95}{
\begin{tikzpicture}[x=1pt,y=1pt]
\definecolor{fillColor}{RGB}{255,255,255}
\path[use as bounding box,fill=fillColor,fill opacity=0.00] (0,0) rectangle (231.26,187.90);
\begin{scope}
\path[clip] (  0.00,  0.00) rectangle (231.26,187.90);
\definecolor{drawColor}{RGB}{255,255,255}
\definecolor{fillColor}{RGB}{255,255,255}

\path[draw=drawColor,line width= 0.6pt,line join=round,line cap=round,fill=fillColor] (  0.00,  0.00) rectangle (231.26,187.90);
\end{scope}
\begin{scope}
\path[clip] ( 44.11, 30.69) rectangle (225.76,182.40);
\definecolor{fillColor}{RGB}{255,255,255}

\path[fill=fillColor] ( 44.11, 30.69) rectangle (225.76,182.40);
\definecolor{drawColor}{gray}{0.87}

\path[draw=drawColor,line width= 0.1pt,line join=round] ( 44.11, 37.19) --
	(225.76, 37.19);

\path[draw=drawColor,line width= 0.1pt,line join=round] ( 44.11, 77.27) --
	(225.76, 77.27);

\path[draw=drawColor,line width= 0.1pt,line join=round] ( 44.11,117.34) --
	(225.76,117.34);

\path[draw=drawColor,line width= 0.1pt,line join=round] ( 44.11,157.41) --
	(225.76,157.41);

\path[draw=drawColor,line width= 0.1pt,line join=round] ( 73.01, 30.69) --
	( 73.01,182.40);

\path[draw=drawColor,line width= 0.1pt,line join=round] (114.30, 30.69) --
	(114.30,182.40);

\path[draw=drawColor,line width= 0.1pt,line join=round] (155.58, 30.69) --
	(155.58,182.40);

\path[draw=drawColor,line width= 0.1pt,line join=round] (196.86, 30.69) --
	(196.86,182.40);

\path[draw=drawColor,line width= 0.3pt,line join=round] ( 44.11, 57.23) --
	(225.76, 57.23);

\path[draw=drawColor,line width= 0.3pt,line join=round] ( 44.11, 97.30) --
	(225.76, 97.30);

\path[draw=drawColor,line width= 0.3pt,line join=round] ( 44.11,137.37) --
	(225.76,137.37);

\path[draw=drawColor,line width= 0.3pt,line join=round] ( 44.11,177.44) --
	(225.76,177.44);

\path[draw=drawColor,line width= 0.3pt,line join=round] ( 52.37, 30.69) --
	( 52.37,182.40);

\path[draw=drawColor,line width= 0.3pt,line join=round] ( 93.65, 30.69) --
	( 93.65,182.40);

\path[draw=drawColor,line width= 0.3pt,line join=round] (134.94, 30.69) --
	(134.94,182.40);

\path[draw=drawColor,line width= 0.3pt,line join=round] (176.22, 30.69) --
	(176.22,182.40);

\path[draw=drawColor,line width= 0.3pt,line join=round] (217.51, 30.69) --
	(217.51,182.40);
\definecolor{drawColor}{RGB}{0,0,0}

\path[draw=drawColor,line width= 1.1pt,line join=round] ( 52.37, 87.04) --
	( 73.01, 84.87) --
	( 93.65, 83.39) --
	(114.30, 80.09) --
	(134.94, 77.53) --
	(155.58, 70.80) --
	(176.22, 75.81) --
	(196.86, 95.79) --
	(217.51,117.78);

\path[draw=drawColor,line width= 1.1pt,dash pattern=on 4pt off 4pt ,line join=round] ( 52.37,111.93) --
	( 73.01,111.29) --
	( 93.65,108.25) --
	(114.30,103.06) --
	(134.94,101.93) --
	(155.58,101.27) --
	(176.22,114.04) --
	(196.86,141.91) --
	(217.51,175.51);

\path[draw=drawColor,line width= 1.1pt,dash pattern=on 4pt off 4pt ,line join=round] ( 52.37, 62.14) --
	( 73.01, 58.45) --
	( 93.65, 58.54) --
	(114.30, 57.12) --
	(134.94, 53.12) --
	(155.58, 40.34) --
	(176.22, 37.58) --
	(196.86, 49.68) --
	(217.51, 60.06);
\definecolor{drawColor}{gray}{0.70}

\path[draw=drawColor,line width= 0.6pt,line join=round,line cap=round] ( 44.11, 30.69) rectangle (225.76,182.40);
\end{scope}
\begin{scope}
\path[clip] (  0.00,  0.00) rectangle (231.26,187.90);
\definecolor{drawColor}{gray}{0.30}

\node[text=drawColor,anchor=base east,inner sep=0pt, outer sep=0pt, scale=  0.73] at ( 39.16, 54.20) {0.0};

\node[text=drawColor,anchor=base east,inner sep=0pt, outer sep=0pt, scale=  0.73] at ( 39.16, 94.27) {0.1};

\node[text=drawColor,anchor=base east,inner sep=0pt, outer sep=0pt, scale=  0.73] at ( 39.16,134.34) {0.2};

\node[text=drawColor,anchor=base east,inner sep=0pt, outer sep=0pt, scale=  0.73] at ( 39.16,174.41) {0.3};
\end{scope}
\begin{scope}
\path[clip] (  0.00,  0.00) rectangle (231.26,187.90);
\definecolor{drawColor}{gray}{0.70}

\path[draw=drawColor,line width= 0.3pt,line join=round] ( 41.36, 57.23) --
	( 44.11, 57.23);

\path[draw=drawColor,line width= 0.3pt,line join=round] ( 41.36, 97.30) --
	( 44.11, 97.30);

\path[draw=drawColor,line width= 0.3pt,line join=round] ( 41.36,137.37) --
	( 44.11,137.37);

\path[draw=drawColor,line width= 0.3pt,line join=round] ( 41.36,177.44) --
	( 44.11,177.44);
\end{scope}
\begin{scope}
\path[clip] (  0.00,  0.00) rectangle (231.26,187.90);
\definecolor{drawColor}{gray}{0.70}

\path[draw=drawColor,line width= 0.3pt,line join=round] ( 52.37, 27.94) --
	( 52.37, 30.69);

\path[draw=drawColor,line width= 0.3pt,line join=round] ( 93.65, 27.94) --
	( 93.65, 30.69);

\path[draw=drawColor,line width= 0.3pt,line join=round] (134.94, 27.94) --
	(134.94, 30.69);

\path[draw=drawColor,line width= 0.3pt,line join=round] (176.22, 27.94) --
	(176.22, 30.69);

\path[draw=drawColor,line width= 0.3pt,line join=round] (217.51, 27.94) --
	(217.51, 30.69);
\end{scope}
\begin{scope}
\path[clip] (  0.00,  0.00) rectangle (231.26,187.90);
\definecolor{drawColor}{gray}{0.30}

\node[text=drawColor,anchor=base,inner sep=0pt, outer sep=0pt, scale=  0.73] at ( 52.37, 19.68) {0};

\node[text=drawColor,anchor=base,inner sep=0pt, outer sep=0pt, scale=  0.73] at ( 93.65, 19.68) {100};

\node[text=drawColor,anchor=base,inner sep=0pt, outer sep=0pt, scale=  0.73] at (134.94, 19.68) {200};

\node[text=drawColor,anchor=base,inner sep=0pt, outer sep=0pt, scale=  0.73] at (176.22, 19.68) {300};

\node[text=drawColor,anchor=base,inner sep=0pt, outer sep=0pt, scale=  0.73] at (217.51, 19.68) {400};
\end{scope}
\begin{scope}
\path[clip] (  0.00,  0.00) rectangle (231.26,187.90);
\definecolor{drawColor}{RGB}{0,0,0}

\node[text=drawColor,anchor=base,inner sep=0pt, outer sep=0pt, scale=  0.92] at (134.94,  7.64) {Usual weekly earnings at baseline};
\end{scope}
\begin{scope}
\path[clip] (  0.00,  0.00) rectangle (231.26,187.90);
\definecolor{drawColor}{RGB}{0,0,0}

\node[text=drawColor,anchor=base west,inner sep=0pt, outer sep=0pt, scale=  0.92] at (  5.50,103.79) {$p(x)$};
\end{scope}
\end{tikzpicture}}
		\vspace{-4pt}
		\subcaption{The proportion of the employed induced to work by the treatment.}
	\end{subfigure}
	\begin{subfigure}{0.48\textwidth}
		\centering	
		\scalebox{0.95}{
\begin{tikzpicture}[x=1pt,y=1pt]
\definecolor{fillColor}{RGB}{255,255,255}
\path[use as bounding box,fill=fillColor,fill opacity=0.00] (0,0) rectangle (231.26,187.90);
\begin{scope}
\path[clip] (  0.00,  0.00) rectangle (231.26,187.90);
\definecolor{drawColor}{RGB}{255,255,255}
\definecolor{fillColor}{RGB}{255,255,255}

\path[draw=drawColor,line width= 0.6pt,line join=round,line cap=round,fill=fillColor] (  0.00,  0.00) rectangle (231.26,187.90);
\end{scope}
\begin{scope}
\path[clip] ( 56.09, 30.69) rectangle (225.76,182.40);
\definecolor{fillColor}{RGB}{255,255,255}

\path[fill=fillColor] ( 56.09, 30.69) rectangle (225.76,182.40);
\definecolor{drawColor}{gray}{0.87}

\path[draw=drawColor,line width= 0.1pt,line join=round] ( 56.09, 42.76) --
	(225.76, 42.76);

\path[draw=drawColor,line width= 0.1pt,line join=round] ( 56.09, 93.70) --
	(225.76, 93.70);

\path[draw=drawColor,line width= 0.1pt,line join=round] ( 56.09,144.64) --
	(225.76,144.64);

\path[draw=drawColor,line width= 0.1pt,line join=round] ( 83.08, 30.69) --
	( 83.08,182.40);

\path[draw=drawColor,line width= 0.1pt,line join=round] (121.64, 30.69) --
	(121.64,182.40);

\path[draw=drawColor,line width= 0.1pt,line join=round] (160.21, 30.69) --
	(160.21,182.40);

\path[draw=drawColor,line width= 0.1pt,line join=round] (198.77, 30.69) --
	(198.77,182.40);

\path[draw=drawColor,line width= 0.3pt,line join=round] ( 56.09, 68.23) --
	(225.76, 68.23);

\path[draw=drawColor,line width= 0.3pt,line join=round] ( 56.09,119.17) --
	(225.76,119.17);

\path[draw=drawColor,line width= 0.3pt,line join=round] ( 56.09,170.11) --
	(225.76,170.11);

\path[draw=drawColor,line width= 0.3pt,line join=round] ( 63.80, 30.69) --
	( 63.80,182.40);

\path[draw=drawColor,line width= 0.3pt,line join=round] (102.36, 30.69) --
	(102.36,182.40);

\path[draw=drawColor,line width= 0.3pt,line join=round] (140.93, 30.69) --
	(140.93,182.40);

\path[draw=drawColor,line width= 0.3pt,line join=round] (179.49, 30.69) --
	(179.49,182.40);

\path[draw=drawColor,line width= 0.3pt,line join=round] (218.05, 30.69) --
	(218.05,182.40);
\definecolor{drawColor}{RGB}{0,0,0}

\path[draw=drawColor,line width= 1.1pt,line join=round] ( 63.80,112.50) --
	( 83.08,115.01) --
	(102.36,116.54) --
	(121.64,115.77) --
	(140.93,115.34) --
	(160.21,125.16) --
	(179.49,124.88) --
	(198.77,111.42) --
	(218.05, 85.61);

\path[draw=drawColor,line width= 1.1pt,line join=round] ( 63.80,144.21) --
	( 83.08,144.58) --
	(102.36,144.41) --
	(121.64,140.10) --
	(140.93,137.37) --
	(160.21,140.14) --
	(179.49,142.33) --
	(198.77,143.17) --
	(218.05,137.41);

\path[draw=drawColor,line width= 1.1pt,dash pattern=on 4pt off 4pt ,line join=round] ( 63.80,100.79) --
	( 83.08, 98.05) --
	(102.36, 97.47) --
	(121.64, 98.66) --
	(140.93, 97.81) --
	(160.21,104.87) --
	(179.49, 99.55) --
	(198.77, 77.57) --
	(218.05, 37.58);

\path[draw=drawColor,line width= 1.1pt,dash pattern=on 4pt off 4pt ,line join=round] ( 63.80,154.63) --
	( 83.08,158.38) --
	(102.36,158.06) --
	(121.64,152.82) --
	(140.93,151.36) --
	(160.21,160.53) --
	(179.49,165.67) --
	(198.77,171.64) --
	(218.05,175.51);
\definecolor{drawColor}{gray}{0.70}

\path[draw=drawColor,line width= 0.6pt,line join=round,line cap=round] ( 56.09, 30.69) rectangle (225.76,182.40);
\end{scope}
\begin{scope}
\path[clip] (  0.00,  0.00) rectangle (231.26,187.90);
\definecolor{drawColor}{gray}{0.30}

\node[text=drawColor,anchor=base east,inner sep=0pt, outer sep=0pt, scale=  0.73] at ( 51.14, 65.20) {-0.2};

\node[text=drawColor,anchor=base east,inner sep=0pt, outer sep=0pt, scale=  0.73] at ( 51.14,116.14) {0.0};

\node[text=drawColor,anchor=base east,inner sep=0pt, outer sep=0pt, scale=  0.73] at ( 51.14,167.08) {0.2};
\end{scope}
\begin{scope}
\path[clip] (  0.00,  0.00) rectangle (231.26,187.90);
\definecolor{drawColor}{gray}{0.70}

\path[draw=drawColor,line width= 0.3pt,line join=round] ( 53.34, 68.23) --
	( 56.09, 68.23);

\path[draw=drawColor,line width= 0.3pt,line join=round] ( 53.34,119.17) --
	( 56.09,119.17);

\path[draw=drawColor,line width= 0.3pt,line join=round] ( 53.34,170.11) --
	( 56.09,170.11);
\end{scope}
\begin{scope}
\path[clip] (  0.00,  0.00) rectangle (231.26,187.90);
\definecolor{drawColor}{gray}{0.70}

\path[draw=drawColor,line width= 0.3pt,line join=round] ( 63.80, 27.94) --
	( 63.80, 30.69);

\path[draw=drawColor,line width= 0.3pt,line join=round] (102.36, 27.94) --
	(102.36, 30.69);

\path[draw=drawColor,line width= 0.3pt,line join=round] (140.93, 27.94) --
	(140.93, 30.69);

\path[draw=drawColor,line width= 0.3pt,line join=round] (179.49, 27.94) --
	(179.49, 30.69);

\path[draw=drawColor,line width= 0.3pt,line join=round] (218.05, 27.94) --
	(218.05, 30.69);
\end{scope}
\begin{scope}
\path[clip] (  0.00,  0.00) rectangle (231.26,187.90);
\definecolor{drawColor}{gray}{0.30}

\node[text=drawColor,anchor=base,inner sep=0pt, outer sep=0pt, scale=  0.73] at ( 63.80, 19.68) {0};

\node[text=drawColor,anchor=base,inner sep=0pt, outer sep=0pt, scale=  0.73] at (102.36, 19.68) {100};

\node[text=drawColor,anchor=base,inner sep=0pt, outer sep=0pt, scale=  0.73] at (140.93, 19.68) {200};

\node[text=drawColor,anchor=base,inner sep=0pt, outer sep=0pt, scale=  0.73] at (179.49, 19.68) {300};

\node[text=drawColor,anchor=base,inner sep=0pt, outer sep=0pt, scale=  0.73] at (218.05, 19.68) {400};
\end{scope}
\begin{scope}
\path[clip] (  0.00,  0.00) rectangle (231.26,187.90);
\definecolor{drawColor}{RGB}{0,0,0}

\node[text=drawColor,anchor=base,inner sep=0pt, outer sep=0pt, scale=  0.92] at (140.93,  7.64) {Usual weekly earnings at baseline};
\end{scope}
\begin{scope}
\path[clip] (  0.00,  0.00) rectangle (231.26,187.90);
\definecolor{drawColor}{RGB}{0,0,0}

\node[text=drawColor,anchor=base west,inner sep=0pt, outer sep=0pt, scale=  0.92] at (  5.50,102.76) {LATE};
\end{scope}
\end{tikzpicture}}
		\vspace{-4pt}
		\subcaption{Bounds on the LATE for the always observed (log wages).}
	\end{subfigure}
	\caption{Conditional Lee bounds for the Job Corps program.}
	\footnotesize{\textit{Notes:}  The solid lines present the estimates of the bounds on the average treatment effect conditional on usual weekly earnings at baseline. The	dashed lines mark pointwise 95\% confidence intervals.}
	\label{fig:CondLee}
\end{figure}

The results are presented in Figure \ref{fig:CondLee}.  The bandwidth is selected based on smoothness constants calibrated through the procedure described in Online Appendix~\ref{A:ROT}. The point estimates indicate that the treatment encourages taking up employment. The bounds on the treatment effect on wage rates are relatively flat for low weakly earnings at the baseline, where they are very similar to the unconditional estimates of \citet{Lee2009}. I note that there is a mass point in the distribution of the covariate at zero, but this does not invalidate the results.

\section{Conclusions}\label{sec1:Conclusions}
I propose a nonparametric estimator of truncated conditional expectation functions based on an orthogonal conditional moment and local linear methods. When the truncation quantile level is known, I show that the proposed estimator is asymptotically equivalent to its infeasible analog that uses the true conditional quantile function, and I find its asymptotic distribution. I~also consider estimation with an estimated truncation quantile level.
The proposed estimator is applied in two empirical settings: sharp regression discontinuity designs with a manipulated running variable and randomized experiments with sample selection.

\appendix
\section{Proofs}
I define some additional, shorthand notation.
Let $q_j(\eta)= \partial_x^j Q(\eta, x_0)$ and $\widehat{q}_j(\eta;a)=\widehat{q}_j(\eta, x_0;a)$ for $j \in \{0,1\}$, $\widehat{Q}(\eta,x;a)=\widehat{Q}^{ll}(\eta,x;x_0,a)$, $\khi=\kh$, $X_{h,i}=(X_i-x_0)/h$,  $\widetilde{X}_{h,i}=(1, X_{h,i})^\top$, $Q^*(\eta, x)= q_0(\eta)+q_1(\eta)(x-x_0)$, $L_i(b)=b_0+b_1(X_i-x_0)$, $Y_i'(b)=Y_i-L_i(b)$, and $\mathcal{X}_h=\mathcal{X}(x_0,h)$. 
I put $C_f\equiv \sup \{|f_{Y|X}(y,x)|: x \in \mathcal{X} \text{ and } y \in [Q(\eta,x)\pm\epsilon] \} < \infty$, where  $\epsilon$ is as in Assumption~\ref{ass:quantile}.
Two-dimensional vectors are indexed starting with zero, so that, e.g.,  $q(\eta)=(q_0(\eta), q_1(\eta))^\top$.

\subsection{Auxiliary Lemmas}
I show some auxiliary results that are used throughout the proofs.

\begin{lemma}\label{lemma:basics} Suppose that Assumptions \ref{ass:ass1}, \ref{ass:quantile}, and \ref{ass:kernel} hold. Then for $j \in \mathbb{N}$ it holds that
	\begin{enumerate}[label=(\roman*),nosep]
		\item $\displaystyle 	S_{n,j}\equiv\frac{1}{n} \Sum \khi \Xihj = \mu_jf_X(x_0) + o_p(1).  $
		\item $\displaystyle  	\frac{1}{n} \Sum \khi \Xihj \left( \one{Y_i \leq Q(\eta,X_i)} -\eta \right) = O_p((nh)^{-1/2}). $
	\end{enumerate}
\end{lemma}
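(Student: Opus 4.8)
The plan is to treat both statements as Chebyshev-type arguments for i.i.d.\ triangular arrays, where the only structural inputs are the change of variables $u=(x-x_0)/h$ and the defining property of the conditional quantile. In both parts I would compute the mean, bound the second moment by an $O(h^{-1})$ term via the substitution, and conclude from $nh\to\infty$.

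For part (i), I would first compute the mean. Writing $\E[S_{n,j}] = \E[k_h(X-x_0)X_{h}^j]$ and substituting $u=(x-x_0)/h$ gives
\[
\E[S_{n,j}] = \int k(u)\,u^j\,f_X(x_0+hu)\,du,
\]
where the integration is over $\{u: x_0+hu \in \mathcal{X}\}$, contained in $[-1,1]$ by Assumption~\ref{ass:kernel}(a). By continuity of $f_X$ (Assumption~\ref{ass:quantile}(b)) and dominated convergence this tends to $f_X(x_0)\int u^j k(u)\,du = \mu_j f_X(x_0)$; at a boundary point the domain is one-sided, which is exactly what the boundary convention for $\mu_j$ encodes. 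Next I would bound the variance: the same substitution yields $\E[(k_h(X-x_0)X_{h}^j)^2] = \frac1h\int k(u)^2 u^{2j} f_X(x_0+hu)\,du = O(h^{-1})$, so $\Var[S_{n,j}] = n^{-1}\Var[k_h X_{h}^j] = O((nh)^{-1}) = o(1)$. Chebyshev's inequality then gives $S_{n,j}-\E[S_{n,j}] = O_p((nh)^{-1/2}) = o_p(1)$, which combined with the mean computation proves (i).

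For part (ii), set $Z_i = \khi X_{h,i}^j(\one{Y_i\le Q(\eta,X_i)}-\eta)$. The key observation is that continuity of $Y$ (Assumption~\ref{ass:ass1}(a)) makes $F_{Y|X}(Q(\eta,x)\,|\,x)=\eta$, so conditional on $X_i$ the indicator $\one{Y_i\le Q(\eta,X_i)}$ is $\mathrm{Bernoulli}(\eta)$; hence $\E[\one{Y_i\le Q(\eta,X_i)}-\eta\mid X_i]=0$ and, by the tower property, $\E[Z_i]=0$ \emph{exactly}. The $Z_i$ are i.i.d.\ and mean-zero, so $\Var[n^{-1}\sum_i Z_i] = n^{-1}\E[Z_1^2]$. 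Conditioning on $X_1$ and using $\E[(\one{Y\le Q(\eta,X)}-\eta)^2\mid X] = \eta(1-\eta)$ gives $\E[Z_1^2] = \eta(1-\eta)\,\E[\khi^2 X_{h,1}^{2j}] = O(h^{-1})$ by the second-moment computation from part (i). Thus $\Var[n^{-1}\sum_i Z_i] = O((nh)^{-1})$, and since the mean vanishes, Chebyshev yields $n^{-1}\sum_i Z_i = O_p((nh)^{-1/2})$, the claim.

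Neither step poses a genuine obstacle; the content lies entirely in the two reductions. The only points requiring care are (a) ensuring $f_X$ is uniformly bounded on the shrinking window $[x_0-h,x_0+h]\cap\mathcal{X}$ so the $O(h^{-1})$ moment bounds hold, which is immediate from continuity of $f_X$ on the interval $\mathcal{X}$ together with the compact support of $k$, and (b) reading $\mu_j$ as the one-sided truncated kernel moment when $x_0$ is a boundary point, consistent with the convention fixed before Theorem~\ref{th:asy_distribution}. The essential structural input, and the reason the centering in (ii) is exactly zero rather than merely small, is the quantile identity $F_{Y|X}(Q(\eta,x)\,|\,x)=\eta$.
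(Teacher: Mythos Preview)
Your proposal is correct and is exactly the ``standard kernel calculations'' the paper invokes as its entire proof: compute the mean by the change of variables $u=(x-x_0)/h$, bound the second moment by $O(h^{-1})$, and apply Chebyshev, using in part~(ii) the quantile identity $F_{Y|X}(Q(\eta,x)\mid x)=\eta$ to make the centering exact. There is nothing to add; your write-up simply spells out what the paper leaves implicit.
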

\begin{proof}
	Standard kernel calculations.
\end{proof}

\begin{lemma}\label{lemma:Bahadur_quantile}
	Suppose that Assumptions~\ref{ass:ass1}, \ref{ass:quantile}, and \ref{ass:kernel} hold. Then
	$$
	a^j(\widehat{q}_j(\eta;a) - q_j(\eta)) = O_p(a^2+(an)^{-1/2}) \text{ for } j\in \{0,1\}.
	$$
\end{lemma}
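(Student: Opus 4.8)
The plan is to establish a standard Bahadur-type rate for the first-stage local linear quantile estimator by combining Knight's identity with a convexity (``minimizer-in-a-ball'') argument, reducing everything to kernel-moment bounds of the type in Lemma~\ref{lemma:basics}, now applied with bandwidth $a$ in place of $h$. First I would reparametrize the criterion. Writing $t=(t_0,t_1)^\top$ with $t_0=\beta_0-q_0(\eta)$ and $t_1=a(\beta_1-q_1(\eta))$, and setting $X_{a,i}=(X_i-x_0)/a$, $\widetilde X_{a,i}=(1,X_{a,i})^\top$, $k_{a,i}=k_a(X_i-x_0)$, and $\varepsilon_i^*=Y_i-Q^*(\eta,X_i)$, the vector $\widehat t=(\widehat q_0(\eta;a)-q_0(\eta),\,a(\widehat q_1(\eta;a)-q_1(\eta)))^\top$ minimizes the convex function
\[
Z_n(t)=\sum_{i=1}^n k_{a,i}\big(\rho_\eta(\varepsilon_i^*-\widetilde X_{a,i}^\top t)-\rho_\eta(\varepsilon_i^*)\big),
\]
and the claim is exactly $\widehat t=O_p(\delta_n)$ with $\delta_n=a^2+(na)^{-1/2}$. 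Applying Knight's identity $\rho_\eta(u-v)-\rho_\eta(u)=-v(\eta-\one{u\le0})+\int_0^v(\one{u\le z}-\one{u\le0})\,dz$ term by term yields $Z_n(t)=-S_n^\top t+B_n(t)$, where
\[
S_n=\sum_{i=1}^n k_{a,i}(\eta-\one{\varepsilon_i^*\le0})\widetilde X_{a,i},\qquad B_n(t)=\sum_{i=1}^n k_{a,i}\int_0^{\widetilde X_{a,i}^\top t}(\one{\varepsilon_i^*\le z}-\one{\varepsilon_i^*\le0})\,dz\ge0.
\]

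Next I would bound the score $S_n$. Since $Q^*(\eta,\ccdot)$ is the first-order Taylor expansion of $Q(\eta,\ccdot)$ at $x_0$ and $\partial_xQ(\eta,\ccdot)$ is Lipschitz (Assumption~\ref{ass:quantile}(a)), the approximation error $r_i:=Q(\eta,X_i)-Q^*(\eta,X_i)$ satisfies $|r_i|\le C(X_i-x_0)^2=O(a^2)$ on the kernel support. Because $F_{Y|X}(Q(\eta,X_i)|X_i)=\eta$, a mean-value expansion together with boundedness of $f_{Y|X}$ near $Q$ (Assumption~\ref{ass:quantile}(c)) gives $\E[\eta-\one{\varepsilon_i^*\le0}\mid X_i]=F_{Y|X}(Q(\eta,X_i)|X_i)-F_{Y|X}(Q^*(\eta,X_i)|X_i)=O(a^2)$. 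Kernel calculations as in Lemma~\ref{lemma:basics} then give $\|\E S_n\|=O(na^2)$, while $\Var(\eta-\one{\varepsilon_i^*\le0}\mid X_i)\le\tfrac14$ combined with $\E[k_{a,i}^2\widetilde X_{a,i}\widetilde X_{a,i}^\top]=O(1/a)$ gives $\|S_n-\E S_n\|=O_p(\sqrt{n/a})$. Hence $\|S_n\|=O_p(na^2+\sqrt{n/a})=O_p(n\delta_n)$.

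For the curvature I would analyze $B_n$. Its conditional mean expands as $\E[B_n(t)\mid X_1,\dots,X_n]=\tfrac12 t^\top D_n t+O(n\|t\|^3)$ uniformly for $\|t\|\le M\delta_n$, where $D_n=\sum_i k_{a,i}f_{Y|X}(Q(\eta,X_i)|X_i)\widetilde X_{a,i}\widetilde X_{a,i}^\top$ satisfies $D_n/n\to f_X(x_0)f_{Y|X}(Q(\eta,x_0)|x_0)\,S$ with $S$ the (interior or boundary) kernel-moment matrix. This limit is positive definite because $f_X$ and $f_{Y|X}$ are positive (Assumptions~\ref{ass:quantile}(b),(c)) and $k$ is a symmetric, compactly supported density (Assumption~\ref{ass:kernel}(a)), so $\E[B_n(t)\mid X]\ge\tfrac12\lambda n\|t\|^2(1+o_p(1))$ for some $\lambda>0$. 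Bounding each Knight integral by its argument gives $\Var(B_n(t)\mid X)=O_p((n/a)\|t\|^3)$, which is $o_p((n\|t\|^2)^2)$ on $\|t\|\asymp\delta_n$ precisely because $na\to\infty$ (Assumption~\ref{ass:kernel}(b)); hence $B_n(t)=\tfrac12 t^\top D_n t(1+o_p(1))$ there.

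Finally I would run the convexity argument. On the sphere $\|t\|=M\delta_n$ the above bounds give $Z_n(t)\ge-\|S_n\|M\delta_n+\tfrac12\lambda nM^2\delta_n^2(1+o_p(1))$. Since $\|S_n\|\le C_\epsilon n\delta_n$ with probability at least $1-\epsilon$, the right-hand side equals $n\delta_n^2 M\big(-C_\epsilon+\tfrac12\lambda M(1+o_p(1))\big)$, which is strictly positive once $M$ is large. As $Z_n$ is convex with $Z_n(0)=0$, positivity on the sphere forces the minimizer $\widehat t$ into the open ball $\|t\|<M\delta_n$; letting $M$ grow yields $\widehat t=O_p(\delta_n)$, which is the assertion. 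The main obstacle is the curvature step: converting the lower bound on the deterministic quadratic $\E[B_n(t)\mid X]$ into a lower bound on the random $B_n(t)$ that holds \emph{simultaneously} over the sphere. I would secure this uniformity with the convexity lemma of Pollard (1991), for which pointwise control at a finite net of directions suffices once $Z_n$ is known to be convex; the boundary case causes no extra difficulty, since the relevant kernel-moment matrix $S$ remains positive definite there, in line with the boundary constants $\bar\mu,\bar\kappa$ appearing in Theorem~\ref{th:asy_distribution}.
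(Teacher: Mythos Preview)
Your proposal is correct and follows essentially the same route as the paper, which defers to the proof of Theorem~2 in \citet{fan1994robust} (equivalently, the known-$\eta$ special case of the paper's own Lemma~\ref{lemma:Bahadur_est_quantile}): a reparametrization, the Knight/check-function decomposition into a linear score term and a nonnegative curvature term, and Pollard's convexity argument. The only cosmetic difference is that you center directly at $q(\eta)$ and let the $O(a^2)$ Taylor remainder enter the score bound, whereas the paper centers at the bandwidth-dependent population minimizer $q^*(\eta;a)$ and handles the $O(a^2)$ piece separately; both organizations yield the same rate.
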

\begin{proof}
	The lemma follows, e.g., from the proof of Theorem 2 of \citet{fan1994robust} or from the proof of Lemma~\ref{lemma:est_quantile}, where I allow for an estimated truncation quantile level.
\end{proof}

\begin{lemma}\label{lemma:Qhat} Suppose that Assumptions~\ref{ass:ass1}, \ref{ass:quantile}, and \ref{ass:kernel} hold. Then
	\[
	\sup_{x \in \mathcal{X}_h}|\widehat{Q}(\eta,x;a)-Q(\eta,x)| =O_p(w_n),
	\]	
	where $w_n=a^2 + h^2 + (a+h)(a^3n)^{-1/2}$, as defined in Lemma~\ref{lemma:asy_equivalence}.
\end{lemma}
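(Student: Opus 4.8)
The plan is to decompose the uniform error into a deterministic approximation (bias) term and a stochastic estimation term, and to control each separately on the shrinking interval $\mathcal{X}_h$, where $|x-x_0|\le h$. Recalling that $\widehat{Q}(\eta,x;a)=\widehat{q}_0(\eta;a)+\widehat{q}_1(\eta;a)(x-x_0)$ and writing $Q^*(\eta,x)=q_0(\eta)+q_1(\eta)(x-x_0)$ for the true first-order Taylor polynomial of $Q(\eta,\ccdot)$ at $x_0$, I would use the split
\[
\widehat{Q}(\eta,x;a)-Q(\eta,x)=\big[\widehat{q}_0(\eta;a)-q_0(\eta)\big]+\big[\widehat{q}_1(\eta;a)-q_1(\eta)\big](x-x_0)+\big[Q^*(\eta,x)-Q(\eta,x)\big],
\]
where the first two bracketed terms form the estimation error and the last is the approximation error.

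For the estimation error, I would take absolute values and use $|x-x_0|\le h$ on $\mathcal{X}_h$ to obtain the uniform bound $|\widehat{q}_0(\eta;a)-q_0(\eta)|+h\,|\widehat{q}_1(\eta;a)-q_1(\eta)|$. Lemma~\ref{lemma:Bahadur_quantile} gives $\widehat{q}_0(\eta;a)-q_0(\eta)=O_p(a^2+(an)^{-1/2})$ and, dividing the $j=1$ bound by $a$, $\widehat{q}_1(\eta;a)-q_1(\eta)=O_p(a+(a^3n)^{-1/2})$. It then remains to check by elementary order bookkeeping that every resulting term is dominated by $w_n=a^2+h^2+(a+h)(a^3n)^{-1/2}$: indeed $a^2\le w_n$; $ha\le\tfrac12(h^2+a^2)\le w_n$; $(an)^{-1/2}=a(a^3n)^{-1/2}\le(a+h)(a^3n)^{-1/2}\le w_n$; and $h(a^3n)^{-1/2}\le(a+h)(a^3n)^{-1/2}\le w_n$. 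Hence the estimation error is $O_p(w_n)$, uniformly over $x\in\mathcal{X}_h$.

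For the approximation error, I would invoke Assumption~\ref{ass:quantile}(a), which makes $\partial_xQ(\eta,\ccdot)$ Lipschitz with some constant $L$. Since $q_0(\eta)=Q(\eta,x_0)$ and $q_1(\eta)=\partial_xQ(\eta,x_0)$, the integral form of the Taylor remainder gives
\[
|Q^*(\eta,x)-Q(\eta,x)|=\left|\int_{x_0}^{x}\big(\partial_xQ(\eta,x_0)-\partial_xQ(\eta,t)\big)\,dt\right|\le\frac{L}{2}(x-x_0)^2\le\frac{L}{2}h^2,
\]
uniformly over $x\in\mathcal{X}_h$ (the one-sided derivative convention at a boundary point poses no issue, as $\mathcal{X}_h$ extends only into $\mathcal{X}$). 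This is $O(h^2)\le w_n$, and combining the two bounds via the triangle inequality yields the claim.

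I do not expect a substantial obstacle here: the essential stochastic input, the rate control of the two first-stage coefficients, is already delivered by Lemma~\ref{lemma:Bahadur_quantile}, and the linearity of $\widehat{Q}$ in $x$ converts those pointwise coefficient bounds into a uniform bound over the whole interval at the cost of only a factor $h$ on the slope term. The only point requiring mild care is the order bookkeeping confirming that each of $a^2$, $ha$, $(an)^{-1/2}$, $h(a^3n)^{-1/2}$, and $h^2$ is of order at most $w_n$; this is precisely where the grouping $(a+h)(a^3n)^{-1/2}$ in the definition of $w_n$ is doing the work.
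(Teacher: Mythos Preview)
Your proposal is correct and follows essentially the same approach as the paper: decompose $\widehat{Q}(\eta,x;a)-Q(\eta,x)$ into the intercept error, the slope error times $(x-x_0)$, and the Taylor remainder, then invoke Lemma~\ref{lemma:Bahadur_quantile} for the first two and the Lipschitz property of $\partial_xQ(\eta,\ccdot)$ for the third. The paper uses the mean-value form of the remainder where you use the integral form, and it simply asserts the final $O_p(w_n)$ where you spell out the order bookkeeping, but the argument is otherwise identical.
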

\begin{proof} Using the Taylor expansion of $Q(\eta,x)$ in $x$ with the mean-value form of the remainder, Assumption~\ref{ass:quantile}(a), and the triangle inequality, it follows that
	\begin{align*}
	\sup_{x \in \mathcal{X}_{h}}&|\widehat{Q}(\eta,x;a)-Q(\eta,x)| \\
	&  \leq \left|\widehat{q}_0(\eta;a)-q_0(\eta)\right| + \sup_{x \in \mathcal{X}_{h}} \left|(\widehat{q}_1(\eta;a)-q_1(\eta))(x-x_0)\right| +  \sup_{x, \wt x \in \mc X_h} |( \partial_x^1Q(\eta,\wt x)  - q_1(\eta) )  (x-x_0) | \\
	&= O_p(a^2+(an)^{-1/2} + h(a+(a^3n)^{-1/2}) + h^2)=O_p(w_n). \qedhere
	\end{align*}
\end{proof}

\begin{lemma}\label{lemma:precision1} Suppose that Assumptions~\ref{ass:ass1}, \ref{ass:quantile}, and \ref{ass:kernel} hold, and $\wt Q$ is a,~possibly random, function such that $\sup_{x \in \mathcal{X}_h}|\wt{Q}(\eta,x)-Q(\eta,x)| =O_p(w_n)$. For $j \in \mathbb{N}$ it holds that:
	\begin{enumerate}[wide, labelindent=0pt, label=(\roman*),nosep]
		\item  $ \displaystyle \frac{1}{n} \Sum \khi  \Xihj \left(Y_i-Q(\eta, X_i) \right) \left( \one{Y_i \leq \wt{Q}(\eta,X_i) } - \one{Y_i \leq Q(\eta, X_i)} \right) = O_p\left( w_n^2 \right)$,
		\item  $ \displaystyle  \frac{1}{n} \Sum \khi  \Xihj \left(\wt{Q}(\eta, X_i)  - Q(\eta, X_i) \right)  \left( \one{Y_i \leq \wt{Q}(\eta,X_i) } - \one{Y_i \leq Q(\eta, X_i)} \right) = {O_p\left( w_n^2 \right)},$
		\item $ \displaystyle	\frac{1}{n}\Sum \khi \Xihj \left( \one{Y_i \leq Q(\eta,X_i)} - \one{Y_i \leq \wt{Q}(\eta,X_i)} \right) =  O_p( w_n ).$
	\end{enumerate}
	
\end{lemma}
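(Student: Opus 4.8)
All three bounds share a single mechanism, so the plan is to isolate it once and then read off the three rates. Since the hypothesis only controls $\wt Q-Q$ in probability, I would first pass to a deterministic band. Fix a constant $C$ and set $\delta_n=Cw_n$; by the $O_p$ hypothesis on $\sup_{x\in\mathcal{X}_h}|\wt Q(\eta,x)-Q(\eta,x)|$, the event
\[
A_n=\Big\{\textstyle\sup_{x\in\mathcal{X}_h}|\wt Q(\eta,x)-Q(\eta,x)|\le\delta_n\Big\}
\]
has $\Prob(A_n)$ arbitrarily close to $1$ (uniformly in large $n$) once $C$ is large. The crucial pointwise fact is that $\one{Y_i\le\wt Q(\eta,X_i)}-\one{Y_i\le Q(\eta,X_i)}$ is nonzero only when $Y_i$ lies between $Q(\eta,X_i)$ and $\wt Q(\eta,X_i)$, hence only when $|Y_i-Q(\eta,X_i)|\le\delta_n$ on $A_n$. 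This decouples the randomness of $\wt Q$ from the summands, since on $A_n$
\[
\big|\one{Y_i\le\wt Q(\eta,X_i)}-\one{Y_i\le Q(\eta,X_i)}\big|\le\one{|Y_i-Q(\eta,X_i)|\le\delta_n},
\]
and the right-hand side no longer references $\wt Q$.

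The key estimate is then the deterministic sandwich sum $T_{n,j}=\frac1n\Sum|\khi|\,|X_{h,i}|^j\one{|Y_i-Q(\eta,X_i)|\le\delta_n}$, which I would show is $O_p(w_n)$ by Markov's inequality. Conditioning on $X_i$ and using Assumption~\ref{ass:quantile}(c), the conditional density is bounded by $C_f$ in a neighborhood of the quantile, so $\Prob(|Y_i-Q(\eta,X_i)|\le\delta_n\mid X_i)\le 2C_f\delta_n$ once $\delta_n\le\epsilon$, and is trivially $\le 1\le\delta_n/\epsilon$ otherwise; in all cases it is $O(\delta_n)=O(w_n)$ uniformly over $X_i\in\mathcal{X}_h$. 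A standard kernel calculation as in Lemma~\ref{lemma:basics} gives $\E[|\khi|\,|X_{h,i}|^j]=f_X(x_0)\int|u|^jk(u)\,du+o(1)=O(1)$, whence $\E[T_{n,j}]=O(w_n)$ and $T_{n,j}=O_p(w_n)$.

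It remains to assemble the three claims. Part (iii) is, up to sign, dominated on $A_n$ by $T_{n,j}$, hence is $O_p(w_n)$. For parts (i) and (ii) the additional multiplier is bounded by $\delta_n$ precisely on the support of the indicator difference: in (ii) because $|\wt Q(\eta,X_i)-Q(\eta,X_i)|\le\delta_n$ on $A_n$, and in (i) because the indicator difference forces $|Y_i-Q(\eta,X_i)|\le\delta_n$. In each case the sum is bounded on $A_n$ by $\delta_n T_{n,j}=Cw_n\cdot O_p(w_n)=O_p(w_n^2)$, and since $\Prob(A_n)\to1$ the contribution off $A_n$ is asymptotically negligible, yielding the stated rates.

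The only genuine obstacle is the randomness of $\wt Q$: because the events $\{Y_i\le\wt Q(\eta,X_i)\}$ lack a clean conditional-independence structure, one cannot condition on $\{X_i\}$ directly. Replacing the random sup-bound by the deterministic band $\delta_n$ and then dominating the indicator difference by an event that does not reference $\wt Q$ is exactly what restores the i.i.d.\ Markov argument; everything else is routine kernel bookkeeping. I would also flag that Assumption~\ref{ass:quantile}(c) bounds the density only near the quantile, so the two-case treatment $\delta_n\le\epsilon$ versus $\delta_n>\epsilon$ is needed for full rigor when $w_n$ does not vanish.
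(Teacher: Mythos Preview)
Your proposal is correct and follows essentially the same approach as the paper: both pass from the random $\wt Q$ to a deterministic band of width $lw_n$ (the paper phrases this as taking a supremum over $g\in\mathcal{M}_n(l)$, you phrase it as working on the high-probability event $A_n$), then dominate the indicator difference by $\one{|Y_i-Q(\eta,X_i)|\le lw_n}$ and apply Markov's inequality using the bounded conditional density from Assumption~\ref{ass:quantile}(c). Your unified treatment via the single sandwich sum $T_{n,j}$ is slightly more streamlined than the paper's explicit split into the $Y_i>Q$ and $Y_i\le Q$ pieces, but the substance is identical.
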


\begin{proof}
	I prove only part (i). Parts (ii) and (iii) follow analogously.
	The proof is similar to the proof of Lemma~A.3 of \cite{Kato2012}.	For $l>0$, let
	\[
	\mathcal{M}_n(l)=\{g:\,\mathcal{X} \to \mathbb{R} \text{ s.t. } \sup_{x \in \mathcal{X}_h}|g(x)-Q(\eta,x)| \leq lw_n \}.
	\]
	For a function $g: \mathcal{X}\to \mathbb{R}$, let
	\[
	U_n(g) \equiv \bigg| \frac{1}{n} \Sum \khi  \Xihj (Y_i-Q(\eta,X_i)) \left( \one{Y_i \leq g(X_i)} - \one{Y_i \leq Q(\eta,X_i)} \right) \bigg|.
	\]
	It suffices to show that for each fixed $l>0$,
	\begin{equation}\label{claim:Ui}
	\sup_{ g \in \mathcal{M}_n(l)} 	U_n( g ) = O_p(w_n^2).
	\end{equation}
	It holds that
	\begin{align*}
	U_n( g ) & \leq \frac{1}{n} \Sum \khi |\Xihj| (Y_i-Q(\eta,X_i)) \one{Q(\eta,X_i)<Y_i\leq g(X_i)} \\
	&\quad + \frac{1}{n} \Sum \khi |\Xihj|  (Q(\eta,X_i)-Y_i) \one{g(X_i)<Y_i \leq Q(\eta,X_i)}.
	\end{align*}
	Let $U_{n,1}(g)$ and $U_{n,2}(g)$ denote the first and the second element in the above sum, respectively. They are both nonnegative. It holds that
	\begin{align*}
	\sup_{ g \in \mathcal{M}_n(l)} U_{n,1}( g )
	= \frac{1}{n} \Sum \khi |\Xihj|  (Y_i-Q(\eta,X_i)) \one{Q(\eta,X_i)<Y_i\leq Q(\eta,X_i)+l w_n } \equiv 
	\bar{U}_{n,1}.
	\end{align*}
	Further, for $n$ large enough,
	\begin{align*}
	E\left[ \bar{U}_{n,1}\right] & \leq \E\left[  k_h(X-x_0) |X_h^j|  lw_n \one{Q(\eta,X)<Y \leq Q(\eta,X)+l w_n}\right] \\
	& \leq C_f l^2 w_n^2 \int k_h(x-x_0) f(x)  dx = O(w_n^2).
	\end{align*}
	Since $\bar{U}_{n,1}$ is nonnegative, it follows from Markov's inequality that $ \bar{U}_{n,1}=O_p(w_n^2)$. Applying the same reasoning to $U_{n,2}(g)$ yields \eqref{claim:Ui}.
\end{proof}

\subsection{Proofs of Lemma~\ref{lemma:asy_equivalence} and Theorem~\ref{th:asy_distribution}}

\begin{proof}[Proof of Lemma~\ref{lemma:asy_equivalence}]
	By standard algebra of the weighted least squares estimator, it holds that
	\[
	\widehat{m}(\eta,x_0;a,h) - \widetilde{m}(\eta,x_0;h) = \frac{S_{n,2}(\wh\Psi_{n,0}(a)  - \widetilde{\Psi}_{n,0}) - S_{n,1} (\wh\Psi_{n,1}(a)- \widetilde{\Psi}_{n,1})}{S_{n,2} S_{n,0} - S_{n,1}^2 }.
	\]
	where  $\wh\Psi_{n,j}(a) =\frac{1}{n} \Sum \khi \Xihj \psi_i(\eta, \widehat{Q}(\eta, X_i;a))$, $\widetilde{\Psi}_{n,j}=\frac{1}{n} \Sum \khi \Xihj \psi_i(\eta, Q(\eta, X_i))$, and $S_{n,j}$ is defined in Lemma \ref{lemma:basics}.
	The denominator converges in probability to a positive number. I~consider the terms in the numerator.	For $j \in \{0,1\}$, it holds that
	\begin{align*}
	\wh\Psi_{n,j}(a)  - \widetilde{\Psi}_{n,j} & = \frac{1}{n} \Sum \khi \Xihj \bigg( \frac{1}{\eta} Y_i \left( \one{Y_i \leq \widehat{Q}(\eta,X_i;a)} - \one{Y_i \leq Q(\eta, X_i)} \right) \\
	& \quad- \frac{1}{\eta} \widehat{Q}(\eta,X_i;a) \one{Y_i \leq \widehat{Q}(\eta,X_i;a)} + \frac{1}{\eta} Q(\eta,X_i) \one{Y_i \leq Q(\eta,X_i)} \\
	& \quad \pm \frac{1}{\eta} \widehat{Q}(\eta,X_i;a) \one{Y_i \leq Q(\eta,X_i)} -  (Q(\eta,X_i) -  \widehat{Q}(\eta,X_i;a)  ) \bigg)\\
	& =  \frac{1}{n} \Sum  \khi \Xihj   \frac{1}{\eta} \left( Q(\eta,X_i) -\widehat{Q}(\eta,X_i;a) \right) \left( \one{Y_i \leq Q(\eta,X_i)} -\eta \right)    + O_p(w_n^2),
	\end{align*}
	where the last equality follows from Lemma \ref{lemma:precision1}.	
	Further,
	\begin{align*}
	\frac{1}{n} \Sum & \khi \Xihj  \frac{1}{\eta} \left( Q(\eta,X_i) -\widehat{Q}(\eta,X_i;a) \right) \left( \one{Y_i \leq Q(\eta,X_i)} -\eta \right)   \\
	& = \frac{1}{\eta} (  q_0(\eta)-\widehat{q}_0(\eta;a) ) \frac{1}{n} \Sum \khi \Xihj   \left( \one{Y_i \leq Q(\eta,X_i)} -\eta \right) \\
	& \quad + \frac{1}{\eta} h(  q_1(\eta)-\widehat{q}_1(\eta;a) ) \frac{1}{n} \Sum \khi X_{h,i}^{j+1}  \left( \one{Y_i \leq Q(\eta,X_i)} -\eta \right) \\
	& \quad +  \frac{1}{n} \Sum \khi \Xihj \frac{1}{\eta} ( Q(\eta,X_i) - q_0(\eta) - q_1(\eta)(X_i-x_0) ) \left( \one{Y_i \leq Q(\eta,X_i)} -\eta \right) \\
	&\equiv L_1+L_2+L_3.
	\end{align*}
	By Lemmas~\ref{lemma:basics} and~\ref{lemma:Bahadur_quantile}, it holds that
	$L_1= O_p( a^2+(na)^{-1/2})O_p((nh)^{-1/2})$ and $L_2= h/aO_p( a^2+(na)^{-1/2})O_p((nh)^{-1/2})$. Further, $\E[L_3|X_1,...,X_n]=0$ and $\Var[L_3|X_1,...,X_n]=O_p(h^4(nh)^{-1})$, which implies that $L_3=O_p(h^2(nh)^{-1/2})$.
	In total, 
	\begin{align*}
	\wh\Psi_{n,j}(a)  - \widetilde{\Psi}_{n,j} & = O_p( a^2+(na)^{-1/2}+ h(a+(a^3n)^{-1/2}) + h^2  )O_p((nh)^{-1/2}) + O_p(w_n^2)\\
	& = O_p(w_n (nh)^{-1/2} + w_n^2),
	\end{align*}
	which concludes the proof.
\end{proof}

\begin{remark}
	The proof of Lemma~\ref{lemma:asy_equivalence} does not explicitly use the orthogonality condition stated in equation \eqref{eq:orthogonality}. However, this property is the reason why the terms involving $ \widehat{q}(\eta;a)$ are negligible in the expansion of $\wh\Psi_{n,j}(a)  - \widetilde{\Psi}_{n,j}$.
\end{remark}

\begin{proof}[Proof of Theorem~\ref{th:asy_distribution}] The first step of the proof is to show that under the assumptions made on the bandwidths, the remainder in Lemma~\ref{lemma:asy_equivalence} is of order $o_p(h^2 + (nh)^{-1/2} )$.
	Recall that $w_n= a^2 + h^2 + (a+h)(a^3n)^{-1/2}$.
	By Assumption~\ref{ass:kernel}(b), it holds that
	\begin{align*}
	O_p(w_n(nh)^{-1/2}  + w_n^2) & = O_p\left(	 w_n(nh)^{-1/2} + a^4 + h^4 + (a^2+h^2)(a^3n)^{-1} \right) \\ 
	& = O_p\left(  h(a^3n)^{-1/2} (nh)^{-1/2} + a^4 + (an)^{-1} + h^2(a^3n)^{-1}\right) + o_p(h^2+(nh)^{-1/2}).
	\end{align*}
	Further, the following equivalence statements hold
	\begin{equation*}
	\begin{aligned}	
	h^2/(a^3n) \to 0 &\iff (nh)^{-1}h \prec a,   \\     
	(nh)^{1/2}/(an) \to 0 &\iff (nh)^{-1/2}h \prec a,  
	\end{aligned}\qquad
	\begin{aligned}
	a^4/h^2 \to 0 & \iff a \prec \sqrt{h}, \\
	(nh)^{1/2} h^2/(a^3n) \to 0 & \iff (nh)^{-1/6} h \prec a. 
	\end{aligned}
	\end{equation*}
	The conditions on the right-hand sides hold under the assumptions made.
	
	The asymptotic first-order distribution of $\wh m(\eta,x_0;a,h)$ is hence the same as that of $\wt m(\eta,x_0;h)$. The variance is derived as follows:
	\begin{align*}
	\Var[\psi(\eta, Q(\eta,X)) |X=x_0] & = \E\left[ \left(\psi(\eta, Q(\eta,X)) - m(\eta,X)\right)^2 |X=x_0\right] \\
	& = \frac{1}{\eta^2} \E\left[ \left( (Y-m(\eta,X)) \one{Y \leq Q(\eta,X)} \right. \right. \\
	& \quad - \left. \left. (Q(\eta,X) - m(\eta,X))(\eta - \one{Y \leq Q(\eta,X)} ) \right)^2\big|X=x_0 \right] \\
	&=\frac{1}{\eta}\Var[Y|Y\leq Q(\eta,X), X=x_0] + \frac{(1-\eta)}{\eta}(Q(\eta,x_0)-m(\eta,x_0))^2. \qedhere
	\end{align*}\end{proof}

\subsection{Proof of Theorem~\ref{th:esteta}}
The main burden of the proof lies in studying the properties of the local linear quantile estimator with estimated quantile level, $\wh q(\wh\eta;h)$. Under the assumptions made, it has the same rate of convergence as the local linear quantile estimator with a known quantile level. 	Since $a \asymp h$, $w_n=h^2+(nh)^{-1/2}\equiv r_n$.

\begin{lemma}\label{lemma:est_quantile}
Under the assumptions of Theorem~\ref{th:esteta},
$h^j(\widehat{q}_j(\widehat{\eta};a) - q_j(\eta)) =O_p(r_n) \text{ for } j \in \{0,1\}$.	
\end{lemma}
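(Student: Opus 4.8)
The plan is to reduce the estimated-level problem to the known-level result of Lemma~\ref{lemma:Bahadur_quantile} by combining the smoothness of the map $u\mapsto q_j(u)$ near $\eta$ with a Bahadur representation for the local linear quantile estimator that holds \emph{uniformly} over a shrinking neighborhood of $\eta$. First I would record that, by the triangle inequality and the hypotheses of Theorem~\ref{th:esteta}, $\widehat\eta-\eta=(\widehat\eta-\eta_n)+(\eta_n-\eta)=O_p((nh)^{-1/2})+O(h^2)=O_p(r_n)$, so that $\widehat\eta$ lies in the deterministic interval $\mathcal{N}_n=\{u:\,|u-\eta|\le M_n r_n\}$ with probability tending to one, for any slowly diverging $M_n\to\infty$. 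I then split
\[
\widehat q_j(\widehat\eta;a)-q_j(\eta)=\big(\widehat q_j(\widehat\eta;a)-q_j(\widehat\eta)\big)+\big(q_j(\widehat\eta)-q_j(\eta)\big),
\]
and treat the two brackets separately. For the second bracket, the continuity and positivity of $f_{Y|X}$ near the conditional quantile (Assumption~\ref{ass:quantile}(c)) make $u\mapsto q_j(u)=\partial_x^j Q(u,x_0)$ Lipschitz on a neighborhood of $\eta$, as noted after Assumption~\ref{ass:quantile}; hence $q_j(\widehat\eta)-q_j(\eta)=O_p(|\widehat\eta-\eta|)=O_p(r_n)$, and multiplying by $h^j\le 1$ preserves this.

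The heart of the argument is a uniform Bahadur representation for the first bracket. I would obtain it by the standard convexity route. Reparametrizing by $\theta=\widetilde D_a(\beta-q(u))$, with $\widetilde D_a=\mathrm{diag}(1,a)$ and $\widetilde X_{a,i}=(1,(X_i-x_0)/a)^\top$, I apply Knight's identity to the check function $\rho_u$ to split the localized objective into a linear score term and a remainder, and show via the bandwidth-$a$ analog of Lemma~\ref{lemma:basics} and routine kernel calculations that the remainder converges to a nonrandom positive-definite quadratic form in $\theta$, uniformly over $\theta$ in compacta and $u\in\mathcal{N}_n$; positivity of the Hessian $H$, which is proportional to $f_{Y|X}(Q(\eta,x_0)|x_0)f_X(x_0)$, follows from Assumption~\ref{ass:quantile}. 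The convexity (argmax) lemma then transfers uniform convergence of the objectives to the minimizers and yields
\[
\widetilde D_a\big(\widehat q(u;a)-q(u)\big)=H^{-1}\,\frac{1}{n}\Sum k_a(X_i-x_0)\widetilde X_{a,i}\big(\one{Y_i\le Q(u,X_i)}-u\big)+\mathrm{rem}_n(u),
\]
with $\sup_{u\in\mathcal{N}_n}\|\mathrm{rem}_n(u)\|=o_p(r_n)$.

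It remains to evaluate the score at $u=\widehat\eta$. I write it as the score at $\eta$, which is $O_p((na)^{-1/2})=O_p((nh)^{-1/2})$ by the bandwidth-$a$ analog of Lemma~\ref{lemma:basics}(ii), plus the centered increment $\nu_n(\widehat\eta)$, where
\[
\nu_n(u)=\frac{1}{n}\Sum k_a(X_i-x_0)\widetilde X_{a,i}\Big[\big(\one{Y_i\le Q(u,X_i)}-\one{Y_i\le Q(\eta,X_i)}\big)-(u-\eta)\Big];
\]
here the subtracted $(u-\eta)$ is exactly the conditional mean, since $F_{Y|X}(Q(u,X_i)|X_i)=u$ by definition of the conditional quantile. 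A conditional-variance computation gives $\Var[\nu_n(u)\mid X_1,\dots,X_n]=O_p(|u-\eta|/(na))$, so on $\mathcal{N}_n$ this is $O_p(M_n r_n/(nh))=o_p(r_n^2)$ because $r_n\ge(nh)^{-1/2}\gg(nh)^{-1}$; a maximal inequality over the VC-type class $\{\one{\ccdot\le Q(u,\ccdot)}:u\in\mathcal{N}_n\}$, as in Lemma~\ref{lemma:precision1} and Lemma~A.3 of \citet{Kato2012}, upgrades this to $\sup_{u\in\mathcal{N}_n}\|\nu_n(u)\|=o_p(r_n)$. Since $\widehat\eta\in\mathcal{N}_n$ with probability approaching one, $\nu_n(\widehat\eta)=o_p(r_n)$, whence $\widetilde D_a(\widehat q(\widehat\eta;a)-q(\widehat\eta))=O_p(r_n)$, i.e. $a^j(\widehat q_j(\widehat\eta;a)-q_j(\widehat\eta))=O_p(r_n)$. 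Because $a\asymp h$, this equals $h^j(\widehat q_j(\widehat\eta;a)-q_j(\widehat\eta))=O_p(r_n)$, and adding the bias bracket from the first paragraph gives the claim.

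The step I expect to be the main obstacle is establishing the \emph{uniform}-in-$u$ Bahadur representation together with the equicontinuity bound on $\nu_n$: because the check-function objective is non-smooth, one cannot differentiate in the quantile level, and the data-dependence of $\widehat\eta$ forces control of the quantile process simultaneously over the whole neighborhood $\mathcal{N}_n$ rather than at a single point. The variance calculation shows that the budget is tight only up to the harmless factor $M_n$, so the delicate part is the empirical-process bookkeeping rather than the rate itself.
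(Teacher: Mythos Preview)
Your strategy is close to the paper's and the tools are the right ones---convexity of the check-loss objective and stochastic equicontinuity in the quantile level. The paper, however, centers not at the true Taylor coefficients $q(u)$ but at the \emph{pseudo-true} value $q^*(u;h)=\argmin_b \E[\rho_u(Y-L(b))k_h(X-x_0)]$, the bandwidth-dependent population minimizer. This choice makes the score $W_n(u)$ exactly mean zero by construction, pushes the $O(h^2)$ smoothing bias into the deterministic piece $q^*(\eta;h)-q(\eta)$ (handled via Guerre--Sabbah-style implicit-function arguments in Lemma~\ref{lemma:q_star_eta_hat}), and reduces the random part to showing $\widehat\theta_n(\widehat\eta)=S^{-1}W_n(\widehat\eta)+o_p(1)$ via equicontinuity of $T_n$ and $W_n$ at the single random point $\widehat\eta$ (Lemmas~\ref{lemma:equi1}--\ref{lemma:Bahadur_est_quantile}). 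Your route instead keeps the bias inside the Bahadur representation centered at $q(u)$; this is legitimate but requires slightly more bookkeeping.

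There is one concrete slip in your display. If you center at $q(u)$ and apply Knight's identity, the linear score is $\frac{1}{n}\Sum k_a(X_i-x_0)\widetilde X_{a,i}\big(\one{Y_i\le L_i(q(u))}-u\big)$ with $L_i(q(u))=q_0(u)+q_1(u)(X_i-x_0)$, not $\one{Y_i\le Q(u,X_i)}$. The two indicators differ by a term whose conditional mean is $F_{Y|X}(L_i(q(u))\mid X_i)-u=O(|X_i-x_0|^2)$, which after kernel averaging contributes an $O(a^2)$ bias. With your score written as $\one{Y_i\le Q(u,X_i)}-u$ (exact mean zero), that $O(a^2)$ must sit in $\mathrm{rem}_n(u)$, so $\sup_{u\in\mathcal N_n}\|\mathrm{rem}_n(u)\|=O_p(r_n)$, not $o_p(r_n)$ as you claim. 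This does not break the lemma---the final rate is still $O_p(r_n)$---but you should either write the score with $L_i(q(u))$ and carry the $O(a^2)$ bias explicitly, or downgrade the remainder to $O_p(r_n)$. The paper's $q^*(u;h)$ centering sidesteps this issue entirely, which is its main practical advantage over your decomposition.
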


\begin{proof}
	See Section~\ref{sec:proof_lemma_esteta}.
\end{proof}

\begin{proof}[Proof of Theorem~\ref{th:esteta}]
 Using Lemma~\ref{lemma:est_quantile}, the proof of Lemma~\ref{lemma:asy_equivalence} immediately implies that
	\[
	\frac{1}{n} \Sum \khi \Xihj  \psi_i( \eta, \widehat{Q}(\widehat{\eta},X_i;a)) -   \frac{1}{n} \Sum \khi \Xihj \psi_i(\eta, Q(\eta,X_i)) = O_p(r_n^2) \text{ for } j \in \{0,1\}.
	\]
	 Moreover,
	\begin{align*}
	\frac{1}{n} \Sum \khi \Xihj & \psi_i(\widehat{\eta}, \widehat{Q}(\widehat{\eta},X_i;a)) - \frac{1}{n} \Sum \khi \Xihj \psi_i(\eta, \widehat{Q}(\widehat{\eta},X_i;a)) \\
	& = \frac{1}{n} \Sum \khi \Xihj (Y_i - \widehat{Q}(\widehat{\eta}, X_i;a)) \one{Y_i \leq \widehat{Q}(\widehat{\eta}, X_i;a)} \left( \frac{1}{\widehat{\eta}} - \frac{1}{\eta} \right) \\
	&=\left(\frac{1}{n} \Sum \khi \Xihj (Y_i - Q(\eta, X_i)) \one{Y_i \leq Q(\eta, X_i)}+ O_p(r_n)\right) \left( \frac{1}{\widehat{\eta}} - \frac{1}{\eta} \right),
	\end{align*}
	where the second equality follows from Lemma~\ref{lemma:precision1} and the convergence rate of $\wh q(\wh \eta;a)$.
	Further, using the convergence rate of the local linear estimator, it follows that
	\[
	\wh m(\wh\eta,x_0;a,h) = \wt m(\eta,x_0;h) + \eta (m(\eta,x_0)-Q(\eta,x_0) + O_p(r_n)) \left( \frac{1}{\widehat{\eta}} - \frac{1}{\eta} \right).
	\]
	The proof is concluded by noting that
	\[
	 \frac{1}{\widehat{\eta}} - \frac{1}{\eta}  = -\frac{1}{\eta^2}\left(\widehat{\eta}-\eta\right) + O_p(r_n^2).\qedhere
	\]
\end{proof}

\subsection{Proof of Lemma~\ref{lemma:est_quantile}}\label{sec:proof_lemma_esteta}
To use the conventional notation, I write $h$ instead of $a$ in this proof. To begin with, I decompose the expression of interest as:
\begin{equation}\label{eq:decomposition}
	h^j(\wh q_j(\wh\eta;h) - q_j(\eta)) = h^j(\wh q_j(\wh\eta;h) - q^*_j(\wh\eta;h)) + h^j(q^*_j(\widehat{\eta};h)-q^*_j(\eta;h)) + h^j(q^*_j(\eta;h) - q_j(\eta)),	
\end{equation}
where 
\begin{equation}\label{eq:LLQestimand}
	(q^*_0(u;h), q^*_1(u;h))^\top = \argmin_{(b_0,b_1) \in \mathbb{R}^2} \E\left[ \rho_u(Y_i - b_0 - b_1(X-x_0)) \kX \right]
\end{equation}
is the bandwidth-dependent estimand of the local linear quantile estimator.
Under the assumptions made, $q^*(u;h)$ is uniquely defined for $u$ in a sufficiently small neighborhood of $\eta$ and $h$ small enough; see the proof of Lemma A.1 of \citet{guerre2012uniform}.\footnote{\citet{guerre2012uniform} assume that $f_{Y|X}(y|x)$ is positive on $\mathbb R \times \mc X$, but the asymptotic results for $q^*(u;h)$ rely on $f_{Y|X}(y|x)$ being positive on a neighborhhod of $(x_0,Q(u,x_0))$.}

Lemma~\ref{lemma:est_quantile} is proven in Lemmas~\ref{lemma:q_star_eta_hat} and~\ref{lemma:Bahadur_est_quantile}, where I analyze the three summands on the right-hand side of \eqref{eq:decomposition}. Let $v_n=(nh)^{-1/2}$ and $Q^*(u,x;h) = q^*_0(u;h) + q^*_1(u;h)(x-x_0)$.

\begin{lemma}\label{lemma:q_star_eta_hat}
	Suppose that the assumptions of Theorem~\ref{th:esteta} hold. Then for $j\in\{0,1\}$,
	\[
	(i)\, h^j(q^*_j(\eta;h) - q_j(\eta))=O(h^2) \text{ and } (ii)\, h^j(q^*_j(\widehat{\eta};h)-q^*_j(\eta;h)) = O(h^2)+O_p(v_n).
	\]
\end{lemma}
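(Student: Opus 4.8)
The plan is to work directly with the population first-order conditions (FOCs) characterizing the estimand $q^*(u;h)$ in \eqref{eq:LLQestimand}, treating part (i) as a smoothing-bias calculation and part (ii) as an implicit-function-theorem sensitivity calculation in the quantile level $u$. Writing $X_h=(X-x_0)/h$ and conditioning on $X$, the subgradient FOCs read
\[
G_l(q^*(u;h),u)\equiv \E\big[(F_{Y|X}(Q^*(u,X;h)\mid X)-u)\,X_h^{\,l}\,k(X_h)\big]=0,\qquad l\in\{0,1\}.
\]
Existence, uniqueness, and differentiability of $u\mapsto q^*(u;h)$ for $u$ near $\eta$ and $h$ small are available from the proof of Lemma~A.1 of \citet{guerre2012uniform}, so I may treat this map as well-defined and smooth throughout.

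For part (i), I would put $\delta_0=q^*_0(\eta;h)-q_0(\eta)$ and $\delta_1=h(q^*_1(\eta;h)-q_1(\eta))$, so that the candidate linear index equals $Q(\eta,X)+\delta_0+\delta_1 X_h-R(X)$, where $R(X)=Q(\eta,X)-q_0(\eta)-q_1(\eta)(X-x_0)$. Assumption~\ref{ass:quantile}(a) (Lipschitz $\partial_x Q$) gives $|R(X)|\le \tfrac{L}{2}(X-x_0)^2$, and since $k$ is supported on $[-1,1]$ we have $|X-x_0|\le h$ wherever $k(X_h)\ne 0$, so $R(X)=O(h^2)$ there. Using $F_{Y|X}(Q(\eta,X)\mid X)=\eta$ (continuity from Assumption~\ref{ass:ass1}(a)) and a first-order expansion of $F_{Y|X}$ about $Q(\eta,X)$ (Assumption~\ref{ass:quantile}(c)), the FOCs reduce to a linear system in $(\delta_0,\delta_1)$ with matrix entries $\E[f_{Y|X}(\cdot\mid X)X_h^{\,l+m}k(X_h)]$ and right-hand side $\E[f_{Y|X}(\cdot\mid X)R(X)X_h^{\,l}k(X_h)]$. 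After dividing by the common factor $h$, the matrix converges to $f_X(x_0)f_{Y|X}(Q(\eta,x_0)\mid x_0)$ times the invertible kernel moment matrix with entries $\int v^{l+m}k(v)\,dv$, while the right-hand side is $O(h^2)$; inverting yields $\delta_0,\delta_1=O(h^2)$, which is exactly claim (i). This is the routine part, being essentially the standard bias expansion of the local-linear quantile estimand.

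For part (ii), I would control the sensitivity of $q^*$ to the quantile level. Differentiating $G_l(q^*(u;h),u)=0$ in $u$ and applying the implicit function theorem gives
\[
\big(\partial_u q^*_0(u;h),\ h\,\partial_u q^*_1(u;h)\big)^\top = J(u;h)^{-1}\big(\E[k(X_h)],\,\E[X_h k(X_h)]\big)^\top,
\]
where $J(u;h)$ has entries $\E[f_{Y|X}(Q^*(u,X;h)\mid X)X_h^{\,l+m}k(X_h)]$. Once the common factor $h$ is cancelled, the right-hand side is $O(1)$, and the essential point is that $J(u;h)^{-1}$ is bounded uniformly over $u$ in a neighborhood of $\eta$ and over small $h$: this holds because $J(u;h)$, suitably normalized, converges to $f_X(x_0)f_{Y|X}(Q(\eta,x_0)\mid x_0)$ times the invertible kernel moment matrix, using positivity and continuity of $f_{Y|X}$ near $(x_0,Q(\eta,x_0))$ (Assumption~\ref{ass:quantile}(c)) and positivity of $f_X$ (Assumption~\ref{ass:quantile}(b)). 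Hence $\partial_u q^*_0=O(1)$ and $h\,\partial_u q^*_1=O(1)$ uniformly. A mean-value expansion then gives $h^j(q^*_j(\widehat\eta;h)-q^*_j(\eta;h))=h^j\partial_u q^*_j(\bar u;h)(\widehat\eta-\eta)$ for some $\bar u$ between $\eta$ and $\widehat\eta$, and decomposing $\widehat\eta-\eta=(\widehat\eta-\eta_n)+(\eta_n-\eta)$ with $\widehat\eta-\eta_n=O_p(v_n)$ and $\eta_n-\eta=O(h^2)$ delivers the claimed $O(h^2)+O_p(v_n)$.

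The hard part will be making the uniformity in part (ii) rigorous. Because the mean-value point $\bar u$ is random, the derivative bound $h^j\partial_u q^*_j(u;h)=O(1)$ must hold uniformly over a fixed neighborhood of $\eta$ and for all small $h$, not merely pointwise at $\eta$; this reduces to showing that the normalized Jacobian $J(u;h)$ stays uniformly bounded away from singularity as $h\to 0$ and $u$ ranges over the neighborhood. That in turn requires controlling the argument $Q^*(u,X;h)$ of $f_{Y|X}$ so that it remains inside the region where Assumption~\ref{ass:quantile}(c) guarantees positivity and continuity; part (i), applied at a generic $u$ near $\eta$, is precisely what keeps $Q^*(u,X;h)$ close to $Q(u,X)$ and hence inside this region. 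Finally, the interior and boundary cases are treated identically once the kernel moment matrix is replaced by its appropriate full or one-sided version, which is invertible in both cases.
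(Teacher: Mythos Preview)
Your proposal is correct and follows essentially the same route as the paper: set up the population FOCs for $q^*(u;h)$, obtain part~(i) by a Taylor expansion of $F_{Y|X}$ about $Q(\eta,X)$ to reduce the FOCs to a linear system in $(\delta_0,\delta_1)$ with an invertible kernel-moment matrix and an $O(h^2)$ right-hand side, and obtain part~(ii) by differentiating the FOCs in $u$ (implicit function theorem), bounding $\big(\partial_u q_0^*,\,h\,\partial_u q_1^*\big)$ uniformly in $u$ near $\eta$ and small $h$, and applying the mean value theorem together with the decomposition $\widehat\eta-\eta=(\widehat\eta-\eta_n)+(\eta_n-\eta)$. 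The paper additionally makes explicit the preliminary step that $q_0^*(u;h)\to q_0(u)$ and $hq_1^*(u;h)\to 0$ uniformly in $u$ (via continuity of $q^*(u;h)$ in $(u,h)$), which is what justifies the Taylor expansion in part~(i) and keeps the Jacobian uniformly nonsingular in part~(ii); you invoke this implicitly through your citation of \citet{guerre2012uniform}, but it is worth stating as a separate step so that the linearization argument in part~(i) is not circular.
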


\begin{proof}
	The proof follows the lines of the proofs of Theorem~1 and Lemma~A.1 of \citet{guerre2012uniform}.\footnote{The proofs of \citet{guerre2012uniform} are more involved as they provide results uniform in the evaluation point, bandwidth, and quantile level. In the setting considered here, $x_0$ is fixed, and $h$ is a fixed sequence.} I outline only the main steps.
	
	The first-order condition of the population minimization problem in~\eqref{eq:LLQestimand} reads
	\[
	\int k(v) \begin{bmatrix}
		1 \\ v
	\end{bmatrix} \left( F_{Y|X}( q_0^*(u;h) + hq_1^*(u;h) v  |x_0+vh) - u \right)f_X(x_0+vh)dv = 0,
	\]
	which is well-defined also for $h=0$. Note that
	\[
	\int k(v) \begin{bmatrix}
		1 \\ v
	\end{bmatrix} \left( F_{Y|X}( q_0(u) |x_0) - u \right)f_X(x_0)dv = 0.
	\]
	Given uniqueness of $q^*(u;h)$, continuity of $F_{Y|X}(y|x)$ and $f_X(x)$ implies continuity of $q^*(u;h)$ in $u$ and $h$.\footnote{\citet{guerre2012uniform} invoke the implicit function theorem and differentiability of  $F_{Y|X}(y|x)$ and $f_X(x)$ to prove this claim, but continuity of these functions is sufficient at this point of the proof.} It therefore follows 	
	 that $q_0^*(u;h)\to q_0(u)$ and $hq_1^*(u;h)\to0$ as $h\to0$ uniformly over $u$ in a sufficiently small neighborhood of $\eta$.
	
Further,  it follows that $q^*_j(u;h)$ is  differentiable in $u$ with
	\[
	\begin{bmatrix}
		\partial_{u}^1q^*_0(u;h) \\
		h\partial_{u}^1q^*_1(u;h)
	\end{bmatrix} =
	\E \left[ \khX  f_{Y|X}(Q^*(u,X;h)|X) \widetilde{X}_h \widetilde{X}_h^\top \right]^{-1} \E\left[ \khX \widetilde{X}_h\right],
	\]
	which is bounded uniformly over $u$ in a sufficiently small neighborhood of $\eta$ and $h$ small enough.  Hence, part~(ii) follows using the mean value theorem. Part~(i) follows along the lines of the proof of Theorem~1.
\end{proof}

Next, I prove two stochastic equicontinuity results that are then used to show convergence of the criterion function of the local linear quantile estimator with an estimated quantile level. I~introduce the following additional notation. Let $\mathcal{M}_n(q,l)= \{b: |b_0-q_0|\leq l_0v_n \text{ and } h|b_1-q_1|\leq l_1v_n \}$. For a vector $l=(l_0,l_1)^\top$, I put $|l|\equiv ||l||_1=|l_0|+|l_1|$.

\begin{lemma}\label{lemma:equi1} Suppose that Assumptions~\ref{ass:ass1}, \ref{ass:quantile}, and \ref{ass:kernel} hold. Let $A_{n,i}=v_n \widetilde{X}_{h,i}^\top \theta$ for some $\theta$ and
	\begin{align*}
		T(b) & = \Sum \kXi (Y_i'(b) - A_{n,i}) \left( \one{ Y_i'(b) \leq A_{n,i} } - \one{ Y_i'(b) \leq 0}  \right),\\
		\bar{T}(b) & = T(b) - \E[T(b)].
	\end{align*}
	For any sequence $q_n \to q(\eta)$ and constant $M>0$, it holds that
	\[
	\sup_{ b \in \mathcal{M}_n(q_n,M)  } | \bar{T}(b) | = o_p(1).
	\]
\end{lemma}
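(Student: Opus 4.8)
The plan is to establish this as a standard stochastic equicontinuity result, combining a pointwise concentration bound with a discretization of the shrinking parameter set $\mathcal{M}_n(q_n,M)$. The starting point is to record two elementary properties of each summand. Writing $\xi_i(b)=\kXi(Y_i'(b)-A_{n,i})(\one{Y_i'(b)\le A_{n,i}}-\one{Y_i'(b)\le 0})$, the indicator difference is nonzero only when $Y_i'(b)$ lies between $0$ and $A_{n,i}$, and on that event $|Y_i'(b)-A_{n,i}|\le|A_{n,i}|$. Since $|X_{h,i}|\le 1$ on the support of the kernel, $|A_{n,i}|=|v_n\widetilde{X}_{h,i}^\top\theta|\le Cv_n$ for a constant depending only on $\theta$, so $|\xi_i(b)|\le Cv_n\,\kXi\,\one{|Y_i'(b)|\le Cv_n}$. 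Because $f_{Y|X}$ is bounded by $C_f$ near the relevant quantile, the conditional probability of $\{|Y_i'(b)|\le Cv_n\}$ given $X_i$ is $O(v_n)$, and $\E[\kXi]=O(h)$.

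First I would bound the variance pointwise. From the envelope, $\E[\xi_i(b)^2]\le C^2v_n^2\,\E[\kXi^2\one{|Y_i'(b)|\le Cv_n}]=O(v_n^2)\cdot O(v_n)\cdot O(h)=O(v_n^3h)$, the last $O(h)$ coming from $\E[\kXi^2]$. Summing over the $n$ independent observations and using $nv_n^2h=1$ gives $\Var(T(b))\le\Sum\E[\xi_i(b)^2]=O(nv_n^3h)=O(v_n)=o(1)$, so $\bar{T}(b)=o_p(1)$ for each fixed $b$ by Chebyshev. Next I would discretize: for a fixed $\delta>0$, cover $\mathcal{M}_n(q_n,M)$ by finitely many cells, partitioning the $b_0$-range (of length $O(v_n)$) and the $b_1$-range (of length $O(v_n/h)$) into $O(1/\delta)$ pieces each so that $L_i(b)=b_0+b_1(X_i-x_0)$ varies by at most $\delta v_n$ across each cell (using $|X_i-x_0|\le h$). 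Writing $b^{(k)}$ for the cell centers, each of the finitely many $\bar{T}(b^{(k)})$ is $o_p(1)$ by the previous step, hence $\max_k|\bar{T}(b^{(k)})|=o_p(1)$.

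The main work, and the only genuine obstacle, is to control the within-cell oscillation $\sup_b|\bar{T}(b)-\bar{T}(b^{(k)})|$, since $b\mapsto\xi_i(b)$ is discontinuous through the indicators and admits no Lipschitz bound. The key device is to bound the oscillation of $\xi_i$ over a cell by an envelope that splits into a ramp part and a jump part. Viewing $s\mapsto(s-A_{n,i})(\one{s\le A_{n,i}}-\one{s\le 0})$ with $s=Y_i'(b)$, this map is piecewise linear with slope bounded by $1$ on the interval between $0$ and $A_{n,i}$ and has a single jump of size $|A_{n,i}|=O(v_n)$ at $s=0$. Since $Y_i'(b)$ moves by at most $\delta v_n$ within a cell, the oscillation is at most $\kXi\big(\delta v_n\,\one{|Y_i'(b^{(k)})|\le 2Cv_n}+|A_{n,i}|\,\one{|Y_i'(b^{(k)})|\le\delta v_n}\big)$. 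Taking expectations and summing over $i$, the ramp part contributes $n\cdot\delta v_n\cdot O(v_nh)=O(\delta)$ and the jump part contributes $n\cdot O(v_n)\cdot O(\delta v_n)\cdot O(h)=O(\delta)$, again using $nv_n^2h=1$. As these envelopes are nonnegative, Markov's inequality gives $\sup_b|T(b)-T(b^{(k)})|=O_p(\delta)$, and the same expectation bound controls $|\E T(b)-\E T(b^{(k)})|=O(\delta)$, so $\sup_b|\bar{T}(b)-\bar{T}(b^{(k)})|=O_p(\delta)$.

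Combining the two pieces, $\sup_{b\in\mathcal{M}_n(q_n,M)}|\bar{T}(b)|\le\max_k|\bar{T}(b^{(k)})|+\max_k\sup_{b}|\bar{T}(b)-\bar{T}(b^{(k)})|=o_p(1)+O_p(\delta)$ for every fixed $\delta>0$. Since $\delta$ is arbitrary, letting $\delta\to0$ (via the usual two-step $\varepsilon$-$\delta$ argument, so that the finite grid never needs to grow) delivers $\sup_b|\bar{T}(b)|=o_p(1)$. The crux is the envelope decomposition into ramp and jump contributions in the third paragraph; everything else is routine kernel bookkeeping built on the scaling identity $nv_n^2h=1$. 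If one preferred a growing grid, Bernstein's inequality together with $|\xi_i(b)|=O(v_n)$ and the variance bound $O(v_n)$ would permit polynomially many cells, but a fixed $\delta$ suffices here.
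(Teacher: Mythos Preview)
Your overall strategy---pointwise variance control, finite discretization of $\mathcal{M}_n(q_n,M)$, and an envelope bound on the within-cell oscillation---is exactly the paper's approach, following \citet{Bickel1975}. Your ramp-plus-jump decomposition of the oscillation is a clean two-term repackaging of the paper's three-term envelope $\widetilde{T}(V_n(b),\delta)$, and the underlying bounds are correct.

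There is, however, a genuine gap at the Markov step. You assert that Markov gives $\sup_b|T(b)-T(b^{(k)})|\le\text{envelope}_k=O_p(\delta)$ and then pass to $\max_k\sup_b|\bar T(b)-\bar T(b^{(k)})|=O_p(\delta)$. But the number of cells is $N(\delta)\asymp 1/\delta^2$, and Markov plus a union bound only gives
\[
P\Big(\max_k \text{envelope}_k>t\Big)\le N(\delta)\cdot\frac{C'\delta}{t}\asymp \frac{1}{\delta t},
\]
which blows up as $\delta\to 0$. A single global envelope does not rescue this: replacing the jump indicator $\one{|Y_i'(b^{(k)})|\le\delta v_n}$ by its maximum over $k$ yields $\one{|Y_i'(q_n)|\le(2M+\delta)v_n}$, so the jump contribution to the global envelope has expectation $O(1)$, not $O(\delta)$. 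What is missing is \emph{concentration} of the envelope: since each summand is bounded by $O(v_n)$ times an indicator of an $O(v_n)$-neighborhood, one has $\Var[\text{envelope}_k]=O(nv_n^3h)=O(v_n)=o(1)$, hence $\text{envelope}_k=\E[\text{envelope}_k]+o_p(1)\le C'\delta+o_p(1)$, and the max over the finitely many cells (for fixed $\delta$) inherits this. This is precisely the step the paper inserts (``The reasoning leading to \eqref{eq:conv_seq} yields also that $\max_{b \in J_n(\delta)} |\widetilde{T}(b, \delta)-\E[\widetilde{T}(b,\delta)] | = o_p(1)$''). You already have this variance bound in hand---you use it for the pointwise step and mention it in your closing remark about Bernstein---so the fix is one line; but as written, the claim that Markov alone suffices for fixed $\delta$ is incorrect.
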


\begin{proof}
	I will show that \textit{(i)} $\bar{T}(q_n)=o_p(1)$ and \textit{(ii)} $\sup_{ b \in \mathcal{M}_n(q_n,M)  } | \bar{T}(b) - \bar{T}(q_n)| = o_p(1)$.
	
	\noindent\textit{Part (i).} Note that
	\[
	T(b) = \Sum \kXi  (Y_i'(b) - A_{n,i}) \left( \one{0 < Y'_i(b) \leq A_{n,i} } - \one{ A_{n,i}  < Y'_i(b) \leq 0} \right).
	\]
	Using the fact that $f_{Y|X}(y|x)$ is bounded over $(x,y)$ in a sufficiently small neighborhood of $(x_0,Q(\eta,x_0))$, I obtain that
	\[
	\Var[T(q_n)] \leq  \Sum \E\big[ k(X_{h,i})^2 A_{n,i}^2 \one{ |Y'_i(q_n)| \leq |A_{n,i}| } \big] = O(nh v_n^3 )=o(1).
	\]
	Hence, $\bar{T}(q_n)=o_p(1)$.
	
	\noindent\textit{Part (ii).} I follow the lines of the proof of Lemma 4.1 of \citet{Bickel1975}. A similar claim has been also shown in Lemma~A.4 of \citet{RuppertCarroll1978}. Let $\Delta_i(q,b) \equiv Y_i'(q) - Y_i'(b)=L_i(b-q)$. It holds that
	\begin{align*}
		T&(q)  - T(b)  \\
		& =   \Sum \kXi  \big( (Y_i'(q) - Y_i'(b)) \left( \one{0 < Y'_i(q) \leq A_{n,i} } - \one{ A_{n,i}  < Y'_i(q)  \leq 0} \right) \\
		&\quad + (Y_i'(b)-A_{n,i}) \left(  \one{ Y_i'(q) \leq A_{n,i} } - \one{ Y_i'(q) \leq 0}  -  \one{ Y_i'(b) \leq A_{n,i} } + \one{ Y_i'(b) \leq 0} \right) \big) \\
		& =   \Sum \kXi \big( \Delta_i(q,b) \left( \one{0 < Y'_i(q) \leq A_{n,i}} - \one{ A_{n,i}  < Y'_i(q)  \leq 0} \right) \\
		&\quad + (Y_i'(q)-A_{n,i}-\Delta_i(q,b)) \left(  \one{ \Delta_i(q,b) < Y_i'(q) - A_{n,i} \leq 0 } -  \one{ 0 < Y_i'(q) - A_{n,i} \leq \Delta_i(q,b) } \right)\\
		&\quad + (Y_i'(q)-A_{n,i}-\Delta_i(q,b)) \left( \one{ 0 < Y_i'(q) \leq \Delta_i(q,b)} - \one{\Delta_i(q,b) < Y_i'(q) \leq 0} \right) \big).
	\end{align*}
	For $l=(l_0,l_1)$, let $b_{n,0}(l)=q_{n,0} + l_0 v_n$ and $b_{n,1}(l)=q_{n,1} + l_1v_n/h$. Note that for indices $i$ such that  $X_i \in \mathcal{X}_h$, it holds that $|\Delta_i(q_n,b_n(l))| \leq v_n |l|$. Therefore,
	\begin{align*}
		\Var[T(b_n(l))-T(q_n)]  \leq & 3 \Sum  \E\big[ k(X_{h,i})^2(v_n|l|)^2 \one{ | Y'_i(q_n)| \leq |A_{n,i}|}  \\
		& +   k(X_{h,i})^2 (v_n |l|)^2 \one{ | Y_i'(q_n) - A_{n,i}| \leq v_n|l| }  \\ 
		& +   k(X_{h,i})^2 (v_n |l| + |A_{n,i}| )^2\one{ | Y_i'(q_n) | \leq v_n|l| }  \big]\\ 
		= & O(nh v_n^3).
	\end{align*}
	Hence, for any fixed $l$,
	\begin{equation}\label{eq:conv_seq}
		\bar{T}(b_n(l)) - \bar{T}(q_n) = o_p(1).
	\end{equation}
	For a fixed $\delta \in (0,1)$, decompose $\mathcal{M}_n(q_n,M)$ as the union of cubes with vertices on the grid $J_n(\delta)=\{ q_n +  \delta M v_n (j_0,j_1/h)^\top:  j_i \in \{0,\pm 1, ..., \pm \lceil 1/\delta \rceil \} \text{ for } i=0,1 \} $, where $\lceil \cdot \rceil$ is the ceiling function. For $b \in \mathcal{M}_n(q_n,M)$, let $V_n(b)$ be the lowest vertex of the cube containing~$b$.
	It follows from \eqref{eq:conv_seq} that  
	\[
	\max \left\{ |\bar{T}(V_n(b)) - \bar{T}(q_n ) | : b \in \mathcal{M}_n(q_n,M)  \right\} =  o_p(1).
	\]
	Next, I consider the behavior on a cube. Note that for $X_i \in \mathcal{X}_h$, it holds that 
	$$\sup\{ |\Delta_i(V_n(b),b)|: b \in \mathcal{M}_n(V_n(b), \delta M) \} =2\delta Mv_n. $$
	Further,
	\begin{align*}
		|T(V_n(b)) - T(b)| & \leq  \Sum \kXi  \big( 2\delta Mv_n \one{ |Y'_i(V_n(b))| \leq |A_{n,i}| }\\
		&\quad+ 2\delta Mv_n  \one{ |Y_i'(V_n(b)) - A_{n,i}| \leq 2\delta Mv_n } \\
		&\quad+ (2 \delta Mv_n + |A_{n,i}|) \one{ | Y_i'(V_n(b))| \leq 2\delta Mv_n } \big)\\
		&\equiv \widetilde{T}(V_n(b),\delta).
	\end{align*}
	The reasoning leading to \eqref{eq:conv_seq} yields also that
	\[
	\max_{b \in J_n(\delta)} |\widetilde{T}(b, \delta)-\E[\widetilde{T}(b,\delta)] | = o_p(1).
	\]
	Moreover,
	\[
	\max_{b \in J_n(\delta)} \E[\widetilde{T}(b,\delta)] \leq \delta O(1).
	\]	
	uniformly in $\delta \in (0,1)$, which concludes the proof. 
\end{proof}

\begin{lemma}\label{lemma:equi2} Suppose that Assumptions~\ref{ass:ass1}, \ref{ass:quantile}, and \ref{ass:kernel} hold. Let
	\begin{align*}
		S(b) & = \frac{1}{\sqrt{nh}} \Sum \kXi X_{h,i}^j \one{Y_i'(b)\leq 0} ,\\
		\bar{S}(b)&= S(b)-\E[S(b)].
	\end{align*}
	For any sequence $q_n=(q_{n,0},q_{n,1})^\top \to q(\eta)$ and constant $M>0$, it holds that
	\begin{align*}
		\sup_{ b \in \mathcal{M}_n(q_n,M)  } |\bar{S}(b) - \bar{S}(q_n)| = o_p(1).
	\end{align*}
\end{lemma}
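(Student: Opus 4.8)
The plan is to replicate, almost verbatim, the grid-and-bracketing scheme already carried out for $\bar T$ in the proof of Lemma~\ref{lemma:equi1}, since $\bar S$ has the same combinatorial structure. I would first establish the pointwise statement $\bar S(b) - \bar S(q_n) = o_p(1)$ for any (possibly $n$-dependent) $b \in \mathcal{M}_n(q_n, M)$, and then upgrade it to the uniform statement by discretizing $\mathcal{M}_n(q_n, M)$ and controlling the oscillation of $\bar S$ inside each cell of the grid $J_n(\delta)$.

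For the pointwise step I would use that, for $X_i \in \mathcal{X}_h$ and $b \in \mathcal{M}_n(q_n, M)$, the increment $|\Delta_i(q_n, b)| = |L_i(b-q_n)| \le 2Mv_n$. Since $\one{Y_i'(b) \le 0} = \one{Y_i \le L_i(b)}$ is monotone in the threshold $L_i(b)$, the difference $\one{Y_i'(b) \le 0} - \one{Y_i'(q_n) \le 0}$ vanishes unless $|Y_i'(q_n)| \le 2Mv_n$. Using that $f_{Y|X}$ is bounded near $(x_0, Q(\eta,x_0))$, this gives
\[
\Var[S(b) - S(q_n)] \le \frac{1}{nh}\Sum \E\big[\kXi^2 X_{h,i}^{2j}\,\one{|Y_i'(q_n)| \le 2Mv_n}\big] = O(v_n) = o(1),
\]
so that $\bar S(b) - \bar S(q_n) = o_p(1)$, the analogue of equation~\eqref{eq:conv_seq}.

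For the uniform step I would fix $\delta \in (0,1)$, discretize $\mathcal{M}_n(q_n, M)$ with the grid $J_n(\delta)$ and its lowest-vertex map $V_n(\cdot)$ exactly as in Lemma~\ref{lemma:equi1}. Because $J_n(\delta)$ has a fixed number $O(\delta^{-2})$ of vertices, the pointwise bound yields $\max_{v \in J_n(\delta)} |\bar S(v) - \bar S(q_n)| = o_p(1)$. The within-cell oscillation is again bracketed by monotonicity: for $b$ in the cell of $V_n(b)$ one has $|L_i(b) - L_i(V_n(b))| \le 2\delta M v_n$ whenever $X_i \in \mathcal{X}_h$, hence
\[
|S(b) - S(V_n(b))| \le \frac{1}{\sqrt{nh}}\Sum \kXi |X_{h,i}|^j \one{|Y_i'(V_n(b))| \le 2\delta M v_n} \equiv \widetilde S(V_n(b),\delta),
\]
and the same bound holds for $|\E S(b) - \E S(V_n(b))|$. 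A variance computation identical to the pointwise step gives $\max_{v \in J_n(\delta)} |\widetilde S(v,\delta) - \E \widetilde S(v,\delta)| = o_p(1)$, while the expected sweep band has width $O(\delta v_n)$, so $\max_{v \in J_n(\delta)} \E \widetilde S(v,\delta) \le C\delta$ uniformly in $\delta$. Collecting the pieces,
\[
\sup_{b \in \mathcal{M}_n(q_n, M)} |\bar S(b) - \bar S(q_n)| \le o_p(1) + 2C\delta,
\]
and letting $n \to \infty$ first and then $\delta \to 0$ concludes the argument.

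The main obstacle — and the reason a naive Lipschitz/modulus-of-continuity estimate fails — is that $\one{Y_i'(b) \le 0}$ is discontinuous in $b$, so the increment of $S$ over a cell is not small pointwise. The crux is the monotone bracketing that replaces the indicator increment by the indicator of the thin band $\{|Y_i'(V_n(b))| \le 2\delta M v_n\}$, whose expected contribution to $\widetilde S$ is $O(\delta)$ rather than $O(1)$; this is precisely the device of \citet{Bickel1975} and \citet{RuppertCarroll1978} invoked in Lemma~\ref{lemma:equi1}, here applied to the simpler integrand in which the factor $(Y_i'(b) - A_{n,i})$ is absent.
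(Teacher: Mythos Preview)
Your proposal is correct and follows essentially the same approach as the paper's own proof: both arguments use the grid $J_n(\delta)$ from Lemma~\ref{lemma:equi1}, establish the pointwise bound $\Var[S(b)-S(q_n)]=O(v_n)$, bracket the within-cell oscillation by the thin-band indicator $\widetilde S(V_n(b),\delta)$, and conclude via $\E\widetilde S=O(\delta)$ and $\delta\to 0$. Your explicit remark that the same bound controls $|\E S(b)-\E S(V_n(b))|$ is a small expository improvement over the paper, which leaves this implicit.
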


\begin{proof}
	The proof is similar to, but simpler than, the proof of Lemma \ref{lemma:equi1}. I am using the notation defined therein. I note that
	\begin{align*}
		S(q)-S(b)=  \frac{1}{\sqrt{nh}} \Sum k(X_{h,i}) X_{h,i}^j\left( \one{Y_i'(q) \leq \Delta_i(q,b) } - \one{Y_i'(q) \leq 0 } \right).
	\end{align*}
	
	For any fixed $l$, it holds that
	\begin{align*}
		\Var[S(b_n(l))-S(q_n)] = O_p(v_n)=o_p(1).
	\end{align*}
	Hence,
	\[
	\max \left\{ |\bar{S}(V_n(b)) - \bar{S}(q_n ) | : b \in \mathcal{M}_n(q_n,M)  \right\} =  o_p(1).
	\]
	Moreover,
	\begin{align*}
		|S(V_n(b))-S(b)| & \leq \frac{1}{\sqrt{nh}} \Sum k(X_{h,i}) |X_{h,i}| \one{ |Y_i'(V_n(b))| \leq 2\delta M v_n } \\
		& \equiv \widetilde{S}(V_n(b), \delta)
	\end{align*}
	It holds
	\[
	\max_{b \in J(\delta)} |\widetilde{S}(b, \delta)-\E[\widetilde{S}(b,\delta)] | = O_p(v_n)=o_p(1).
	\]
	Finally,
	\[
	\max_{b \in J(\delta)} \E[\widetilde{S}(b,\delta)] \leq \delta O_p(1).
	\]
	uniformly in $\delta$, which concludes the proof.
\end{proof}

\begin{lemma}\label{lemma:Bahadur_est_quantile}
	Suppose that the assumptions of Theorem~\ref{th:esteta} hold. Then
	$$
	h^j(\widehat{q}_j(\widehat{\eta};h) - q^*_j(\wh\eta;h))  = O_p(v_n) \text{ for } j \in \{0,1\}. 
$$
\end{lemma}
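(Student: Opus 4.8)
My plan is to establish the $v_n$-rate through the standard convexity argument for check-function M-estimation, feeding on the equicontinuity Lemmas~\ref{lemma:equi1} and~\ref{lemma:equi2} and on the smoothness of the population estimand $q^*(\ccdot;h)$ from Lemma~\ref{lemma:q_star_eta_hat}. Since $\widehat q(\widehat\eta;h)$ minimizes the convex map $b\mapsto\Sum\khi\rho_{\widehat\eta}(Y_i'(b))$ (equivalently $b\mapsto\Sum\kXi\rho_{\widehat\eta}(Y_i'(b))$, which shares the same argmin), I recenter at $q^*(\widehat\eta;h)$ and reparametrize, setting $\delta=\big((b_0-q^*_0(\widehat\eta;h))/v_n,\,h(b_1-q^*_1(\widehat\eta;h))/v_n\big)^\top$ so that $L_i(b)=Q^*(\widehat\eta,X_i;h)+v_n\widetilde{X}_{h,i}^\top\delta$. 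The rescaled minimizer $\widehat\delta$ has coordinates $(\widehat q_0(\widehat\eta;h)-q^*_0(\widehat\eta;h))/v_n$ and $h(\widehat q_1(\widehat\eta;h)-q^*_1(\widehat\eta;h))/v_n$, so the assertion is exactly $\widehat\delta=O_p(1)$. I work with the convex objective increment
\[
G_n(\delta)=\Sum\kXi\Big(\rho_{\widehat\eta}\big(Y_i'(q^*(\widehat\eta;h))-v_n\widetilde{X}_{h,i}^\top\delta\big)-\rho_{\widehat\eta}\big(Y_i'(q^*(\widehat\eta;h))\big)\Big),
\]
whose minimizer is $\widehat\delta$. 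By the standard convexity lemma for convex M-estimation, it then suffices to show that, for each fixed $\delta$, $G_n(\delta)=-\delta^\top W_n+\tfrac12\delta^\top D\delta+o_p(1)$ with $W_n=O_p(1)$ and a fixed positive-definite matrix $D$; the minimizer of the limiting quadratic, $D^{-1}W_n$, is then $O_p(1)$, forcing $\widehat\delta=O_p(1)$.

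Applying Knight's identity to each summand splits $G_n(\delta)=-\delta^\top W_n+R_n(\delta)$, where the linear part carries $W_n=v_n\Sum\kXi\widetilde{X}_{h,i}\big(\widehat\eta-\one{Y_i'(q^*(\widehat\eta;h))\le 0}\big)$ and $R_n(\delta)=\Sum\kXi\int_0^{v_n\widetilde{X}_{h,i}^\top\delta}\big(\one{Y_i'(q^*(\widehat\eta;h))\le s}-\one{Y_i'(q^*(\widehat\eta;h))\le 0}\big)\,ds$. The crucial structural feature is that $R_n(\delta)$ is free of $\widehat\eta$, so the estimated level enters only through the score $W_n$. To bound $W_n$, I subtract and add the deterministic $\eta_n$: the component $v_n\Sum\kXi\widetilde{X}_{h,i}(\eta_n-\one{Y_i\le Q^*(\eta_n,X_i;h)})$ is, conditionally on the $X_i$, a mean-zero sum by the first-order condition defining $q^*(\eta_n;h)$ in~\eqref{eq:LLQestimand}, with conditional variance $O(1)$, hence $O_p(1)$; the level increment contributes $(\widehat\eta-\eta_n)\,v_n\Sum\kXi\widetilde{X}_{h,i}=O_p(v_n)\cdot O_p(\sqrt{nh})=O_p(1)$; and the indicator increment equals $-(S(q^*(\widehat\eta;h))-S(q^*(\eta_n;h)))$ with $S$ as in Lemma~\ref{lemma:equi2}, which is $o_p(1)$ by that lemma together with the elementary bound $\sup_{b\in\mathcal{M}_n(q_n,M)}|\E[S(b)]-\E[S(q_n)]|=O(h)$.

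For the remainder I use the elementary identity $\int_0^A(\one{u\le s}-\one{u\le 0})\,ds=-(u-A)(\one{u\le A}-\one{u\le 0})$, which shows $R_n(\delta)=-T(q^*(\widehat\eta;h))$ with $T$ the statistic of Lemma~\ref{lemma:equi1} for perturbation direction $\theta=\delta$. Writing $R_n(\delta)=-\bar{T}(q^*(\widehat\eta;h))-\E[T(b)]\big|_{b=q^*(\widehat\eta;h)}$, the centered term is $o_p(1)$ by Lemma~\ref{lemma:equi1}: Lemma~\ref{lemma:q_star_eta_hat}, the differentiability of $q^*(\ccdot;h)$, and $\widehat\eta-\eta_n=O_p(v_n)$ place $q^*(\widehat\eta;h)$ inside the ball $\mathcal{M}_n(q_n,M)$ around the deterministic center $q_n=q^*(\eta_n;h)$ with probability approaching one for large $M$. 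The deterministic term, via a one-term Taylor expansion of $F_{Y|X}$ around $Q^*(\widehat\eta,X_i;h)$ and a kernel calculation, converges to $\tfrac12\delta^\top D\delta$ with $D=f_{Y|X}(Q(\eta,x_0)|x_0)\,f_X(x_0)\int k(v)(1,v)^\top(1,v)\,dv$, positive definite by Assumption~\ref{ass:quantile}(b)--(c) and the kernel. Combining the two pieces gives the required quadratic expansion of $G_n$, and the convexity lemma yields $\widehat\delta=D^{-1}W_n+o_p(1)=O_p(1)$, i.e. the claim.

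The main obstacle is the joint randomness of the estimated level $\widehat\eta$ and of the recentering point $q^*(\widehat\eta;h)$, which rules out a direct appeal to empirical-process bounds taken at a fixed center. Two features resolve this. First, Knight's identity confines $\widehat\eta$ to the score $W_n$ and leaves the quadratic remainder $\widehat\eta$-free, so the curvature can be computed as if the level were fixed. Second, the uniformity of Lemmas~\ref{lemma:equi1} and~\ref{lemma:equi2} over the deterministic ball $\mathcal{M}_n(q_n,M)$—combined with the fact that $q^*(\widehat\eta;h)$ lands in that ball with high probability—lets me evaluate the relevant empirical processes at the random center without incurring additional error.
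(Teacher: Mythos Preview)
Your argument is essentially the paper's: both recenter at $q^*(\widehat\eta;h)$, decompose the criterion via Knight's identity into the score $W_n$ and the remainder $T_n$, invoke Lemmas~\ref{lemma:equi1}--\ref{lemma:equi2} to handle the random recentering point, and finish with Pollard's convexity lemma. One small slip: the bound $\sup_{b\in\mathcal{M}_n(q_n,M)}|\E[S(b)]-\E[S(q_n)]|=O(h)$ is not right---the c.d.f.\ perturbation is $O(v_n)$ but the prefactor is $\sqrt{nh}$, so the correct order is $O(1)$---yet this is harmless because you only need $W_n=O_p(1)$; the paper gets the sharper $W_n(\widehat\eta)=W_n(\eta_n)+o_p(1)$ by using $\E[W_n(u)]\equiv 0$ directly, which makes the level- and indicator-mean increments cancel.
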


\begin{proof}
The proof proceeds similarly to the proof of Theorem~2 of \citet{fan1994robust}, but I allow for an estimated quantile level. Recall that $\rho_u(v)=v(u-\one{v\leq 0})$ and
	\[
	\widehat{q}(u;h) = \argmin_{(b_0,b_1) \in \mathbb{R}^2} \Sum \rho_u(Y_i - b_0 - b_1(X_i-x_0)) \kXi.
	\]
	Let $\widehat{\theta}_n(u) = \sqrt{nh} \left(\widehat{q}_0(u;h) - q_0^*(u;h), \, h(\widehat{q}_1(u;h) - q_1^*(u;h)) \right)^\top$. For a given $u$, the vector $\widehat{\theta}_n(u)$ minimizes the function
	
	\[
	G_n(u,\theta) =  \Sum \left[ \rho_u(Y_{i}^*(u;h)- v_n \theta^\top \widetilde{X}_{h,i} ) - \rho_u(Y_{i}^*(u;h)) \right] \kXi,
	\]
	where $Y_i^*(u;h)=Y_i- Q^*(u,X_i;h)$.	
	Let
	\begin{align*}
		W_n(u) & = v_n \Sum  \kXi  \widetilde{X}_{h,i} ( u - \one{Y_i^*(u;h) \leq 0 } ),\\
		T_n(u,\theta) & = - \Sum \kXi (Y_i^*(u;h)- v_n \theta^\top \widetilde{X}_{h,i} ) \left( \one{Y_i^*(u;h) - v_n \theta^\top \widetilde{X}_{h,i} < 0} - \one{Y_i^*(u;h)<0} \right).
	\end{align*}
	It holds that $G_n(u,\theta)=T_n(u,\theta) - \theta^\top W_n(u)$.
	
	By the mean value theorem,
	\begin{align*}
		\E[T_n(u,\theta)] & = - \E \Bigg[  \Sum \kXi \int_0^{v_n\theta^\top \widetilde{X}_{h,i} } \big(y  - v_n\theta^\top \widetilde{X}_{h,i}\big) f_{Y^*(u;h)|X}(y|X_i)dy \Bigg] \\
		& = \frac{1}{2}  \E \Bigg[ \Sum \kXi  f_{Y^*(u;h)|X}( \widetilde{z}_i(u)|X_i) (v_n\theta^\top \widetilde{X}_{h,i})^2 \Bigg] 
	\end{align*}
	for some $|\widetilde{z}_i(u)|\leq v_n|\theta|$. It follows that
	\[
	\E[T_n(u,\theta)] =  \frac{1}{2} \E\left[ k_h(X-x_0) (\theta^\top \widetilde{X}_{h} )^2\right] f_{Y|X}(q_0(u)|x_0)  + o(1)
	\]
uniformly for $u$ in a sufficiently small neighborhood of $\eta$.
	Using Lemma \ref{lemma:equi1}, it follows that
	\[
	T_n(\widehat{\eta},\theta) = \theta^\top S \theta + o_p(1),
	\]
	where
	\[
	S=f_{Y|X}(q_0(\eta)|x_0) f_X(x_0) \begin{bmatrix}
		\mu_0 & \mu_1 \\
		\mu_1 & \mu_2
	\end{bmatrix}.
	\]
	The convex, random function $\widehat{T}_n(\theta)\equiv T_n(\widehat{\eta},\theta)$ converges pointwise in $\theta$ to the convex function $\theta^\top S \theta$. By the convexity lemma of \citet{pollard1991asymptotics}, this convergence is uniform on any compact set.
	The function $\frac{1}{2}\theta^\top S \theta - \theta^\top W_n(\widehat{\eta}) $ is minimized at $S^{-1}W_n(\widehat{\eta})$. 
	Since by construction $\E[W_n(u)]=0$, Lemma \ref{lemma:equi2} implies that
	\[
	W_n(\widehat{\eta})=W_n(\eta_n)+o_p(1)=O_p(1).
	\]
	Using convexity again, the consistency argument of \citet[][proof of Theorem~1]{pollard1991asymptotics} implies that $\widehat{\theta}_n(\widehat{\eta})=S^{-1}W_n(\widehat{\eta})+o_p(1)$, which concludes the proof.
\end{proof}

\section{Estimation and Inference Details for Applications}
I formally introduce the estimators discussed in Sections~\ref{sec:RDD} and~\ref{sec:Lee}. Their asymptotic distributions follow easily from Theorems~\ref{th:asy_distribution} and~\ref{th:esteta}, and hence I state them without further proofs. For estimation of truncated conditional expectations truncated from above or from below, I use the following notation:
$\psi_i^L( u, q)= \psi_i(u,q)$ and $\psi_i^U( u, q)= \frac{1}{u}Y_i \one{ q \leq Y_i } -  \frac{1}{u}q(\one{ q \leq Y_i } - u)$.

\subsection{Regression Discontinuity Designs with Manipulation}\label{sec:Appendix_inference} 
In the following, I assume that if a function is continuous on an open interval, then its limits at the boundary points exist and are finite. I define $k^-_h(v)=\one{v < 0}k_h(v)$ and $k^+_h(v)=\one{v \ge 0}k_h(v)$.

The one-sided density limits  $f_X(x_0^-)$ and $f_X(x_0^+)$ can be estimated using, e.g., the `linear' boundary kernel \citep{jones1993simple} as
\begin{align}\label{cond:f1}
	\wh{f}_X^\star = \frac{1}{n} \Sum k^\star_h(X_i-x_0) \frac{\bar{\mu}_2  - \bar{\mu}_1 |(X_i-x_0)/h| }{\bar{\mu}_2\bar{\mu}_0 - \bar\mu_1^2} \text{ for } \star \in\{+,-\},
\end{align}
where $\bar{\mu}_{j}= \int_{0}^{\infty}v^j k(v)dv$ as defined in Section~\ref{sec:Asymptotics}. 
The share of always-assigned units among all units just to the right of the cutoff is then estimated as:
$\widehat{\tau} = \max\big\{ 1- \widehat{f}_X^-/\widehat{f}_X^+, 0\big\}$.  In the notation from the main text, $\widehat{\eta}= 1-\wh\tau=\min\big\{\widehat{f}_X^-/\widehat{f}_X^+,1 \big\}$.
To analyze this estimator, I impose smoothness assumptions on the density of the running variable.

\begin{assumption}\label{ass:smooth_density}
	$f_X(x)$ is bounded away from zero and twice continuously differentiable on $(x_0-\epsilon,x_0) \cup (x_0, x_0+\epsilon)$ for some $\epsilon>0$. 
\end{assumption}

Lemma \ref{lemma:eta} states the asymptotically linear representation of $\wh{\eta}$.

\begin{lemma}\label{lemma:eta}
	Suppose that Assumptions~\ref{ass:ass1}, \ref{ass:kernel}, and \ref{ass:smooth_density} hold. Then
	\begin{align*}
	\frac{1}{\eta}\big(\wh{\eta} - \eta \big) = \frac{   \wh{f}_X^- -f_X(x_0^-)}{f_X(x_0^-)}  - \frac{ \wh{f}_X^+ - f_X(x_0^+)}{f_X(x_0^+)}  + o(h^2) + o_p((nh)^{-1/2})=O(h^2) + O_p( (nh)^{-1/2} ).
	\end{align*}
\end{lemma}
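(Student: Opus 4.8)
The plan is to handle three ingredients in turn: the asymptotic behavior of each one-sided density estimator, the delta-method linearization of their ratio, and the effect of the truncation by the minimum.

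First, I would show that each boundary estimator satisfies $\wh{f}_X^\star - f_X(x_0^\star) = O(h^2) + O_p((nh)^{-1/2})$ for $\star \in \{+,-\}$. By a change of variables $X-x_0 = hv$ and a second-order Taylor expansion of $f_X$ (valid on each side of $x_0$ by Assumption~\ref{ass:smooth_density}), I would compute $\E[\wh{f}_X^+]$. The weight $(\bar\mu_2 - \bar\mu_1 v)/(\bar\mu_2\bar\mu_0 - \bar\mu_1^2)$ is the equivalent local-linear kernel at the boundary, and the moment identities $\int_0^\infty k(v)(\bar\mu_2 - \bar\mu_1 v)\,dv = \bar\mu_2\bar\mu_0 - \bar\mu_1^2$ and $\int_0^\infty v k(v)(\bar\mu_2 - \bar\mu_1 v)\,dv = 0$ imply that the leading constant reproduces $f_X(x_0^+)$ and, crucially, that the $O(h)$ bias term vanishes, leaving a bias of exact order $h^2$ driven by $f_X''(x_0^+)$; twice continuous differentiability renders the Taylor remainder $o(h^2)$. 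A standard second-moment computation gives $\Var[\wh{f}_X^+] = O((nh)^{-1})$, so the stochastic deviation from the mean is $O_p((nh)^{-1/2})$. The same argument applies to $\wh{f}_X^-$.

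Second, I would remove the minimum. By Assumption~\ref{ass:ass1}(c) together with the manipulation setup, $\eta = f_X(x_0^-)/f_X(x_0^+) < 1$, while $f_X(x_0^-)$ is bounded away from zero by Assumption~\ref{ass:smooth_density}. Since $\wh{f}_X^-/\wh{f}_X^+ \xrightarrow{p} \eta < 1$, the event $\{\wh{f}_X^-/\wh{f}_X^+ < 1\}$ has probability tending to one, so with probability approaching one $\wh\eta = \wh{f}_X^-/\wh{f}_X^+$ and the minimum can be ignored in the asymptotic expansion.

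Third, I would linearize the ratio. Writing $\wh{f}_X^\star = f_X(x_0^\star) + \Delta^\star$ with $\Delta^\star = O(h^2) + O_p((nh)^{-1/2})$, a direct manipulation gives $\frac{1}{\eta}(\wh{f}_X^-/\wh{f}_X^+ - \eta) = \frac{\Delta^-}{f_X(x_0^-)} - \frac{\Delta^+}{f_X(x_0^+)} + r_n$, where $r_n$ collects the second-order terms $\Delta^-\Delta^+$ and $(\Delta^+)^2$ arising from a geometric expansion of $(f_X(x_0^+)+\Delta^+)^{-1}$, legitimate because $\Delta^+ = o_p(1)$. Each such product is of order $(O(h^2)+O_p((nh)^{-1/2}))^2 = O(h^4) + O_p(h^2(nh)^{-1/2}) + O_p((nh)^{-1})$, and every one of these three pieces is $o(h^2) + o_p((nh)^{-1/2})$ since $h \to 0$ and $nh \to \infty$. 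Substituting the definitions of $\Delta^\star$ yields the stated representation, and the final $O(h^2) + O_p((nh)^{-1/2})$ bound follows from the orders of $\Delta^\star$. The main obstacle is the first step: verifying that the linear boundary kernel cancels the first-order bias so that the bias is genuinely $O(h^2)$ rather than $O(h)$. This is precisely what makes the deterministic part of $\wh\eta - \eta$ of order $h^2$ and the stochastic part $O_p((nh)^{-1/2})$, so that $\wh\eta$ meets the high-level condition of Theorem~\ref{th:esteta}; the ratio linearization and the treatment of the minimum are routine once the one-sided rates are in hand.
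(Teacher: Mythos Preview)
Your proposal is correct and follows the standard argument; the paper states this lemma without proof, treating it as a routine consequence of the known properties of the linear boundary-kernel density estimator together with a delta-method expansion of the ratio, which is exactly what you outline. One minor point worth making explicit in step two: on the vanishing-probability event $\{\wh{f}_X^-/\wh{f}_X^+ \geq 1\}$ the discrepancy between $\wh\eta$ and the unrestricted ratio is not small, but because the event itself has probability $o(1)$, the contribution is $o_p(c_n)$ for \emph{any} sequence $c_n$, so the $o_p((nh)^{-1/2})$ claim goes through.
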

Lemma~\ref{lemma:eta} implies that $\wh \eta$ satisfies the high-level assumption of Theorem~\ref{th:esteta}.
I note that the asymptotic bias and variance of $\frac{1}{\eta}\big(\wh{\eta} - \eta \big)$ are given by
\[
A_\eta= \frac{1}{2}\bar{\mu} \left( \frac{f_X''(x_0^-) }{f_X(x_0^-)} - \frac{f_X''(x_0^+)}{f_X(x_0^+)} \right) h^2 + o(h^2) \text{ and } W_\eta=\bar{\kappa} \left( \frac{1}{f_X(x_0^+)}+\frac{1}{f_X(x_0^-)} \right).
\]
These quantities appear in the asymptotic distribution of the bounds.

Let $m^L(\eta,x) = \E[Y|X=x, Y\leq Q(\eta,x)]$, $m^U(\eta,x) = \E[Y|X=x, Y\ge Q(1-\eta,x)]$, and $m(x) = \E[Y|X=x]$.
The truncated conditional expectations $m^L(\eta,x_0^+)$ and $m^U(\eta,x_0^+)$ are estimated as
\begin{align*}
\widehat{m}^{L}_{+}(\widehat{\eta},x_0) & = e_1^\top \argmin_{ \beta_0, \beta_1 } \Sum k_h^+(X_i-x_0)  \left(  \psi_i^L( \widehat{\eta}, \widehat{Q}^{ll}_+(\widehat{\eta},X_i;x_0,h))  -\beta_0 - \beta_1 (X_i-x_0)\right)^2 , \\
\widehat{m}^{U}_{+}(\widehat{\eta},x_0) & = e_1^\top \argmin_{ \beta_0, \beta_1 } \Sum k_h^+(X_i-x_0) \left(  \psi_i^U( \widehat{\eta}, \widehat{Q}^{ll}_+(1-\widehat{\eta},X_i;x_0,h))-\beta_0 - \beta_1(X_i-x_0)\right)^2,
\end{align*}
where the estimated quantile function $\widehat{Q}^{ll}_+$ is defined as in Section \ref{sec:TLLR}, except that it uses only observations to the right of the cutoff.

The conditional expectation $m(x_0^-)$ is estimated using the local linear estimator:
\[
\widehat{m}_-(x_0) =  e_1^\top  \argmin_{ \beta_0, \beta_1 } \Sum  k_h^-(X_i-x_0)  (Y_i-\beta_0 - \beta_1 (X_i-x_0))^2.
\]
The final estimators of the bounds on $\Gamma$ are defined as
\begin{align*}
\widehat{\Gamma}^L = \widehat{m}^{L}_{+}(\widehat{\eta},x_0) - \widehat{m}_-(x_0) \text{ and } \widehat{\Gamma}^U = \widehat{m}^{U}_{+}(\widehat{\eta},x_0) - \widehat{m}_-(x_0).
\end{align*}

I impose standard assumptions for the analysis of $\widehat{m}_-$.

\begin{assumption}\label{ass:left}  For some $\epsilon  > 0$, the following hold on $(x_0-\epsilon, x_0)$.
	\begin{enumerate}[label=(\alph*), nosep]
		\item $m(x)$ is twice continuously differentiable in $x$;
		\item $\Var[Y|X=x]$ is continuous, bounded, and bounded away from zero;
		\item There exists $\xi >0$ such that $\E\big[ |Y|^{2+\xi}\big|X=x \big]$ is uniformly bounded.
	\end{enumerate}
\end{assumption}

Corollary~\ref{th:Gamma} establishes joint asymptotic normality of the bounds estimators.
\begin{corollary}\label{th:Gamma}
	Suppose that Assumptions \ref{ass:ass1}--\ref{ass:kernel}, \ref{ass:smooth_density}, and \ref{ass:left} hold, mutatis mutandis. Then
	\[
	\sqrt{nh} \begin{bmatrix}
	\widehat{\Gamma}^L - \Gamma^L - (B_+^{L} - B_-) \\
	\widehat{\Gamma}^U - \Gamma^U - (B_+^{U} - B_-)
	\end{bmatrix}
	\xrightarrow{d} \mathcal{N} \left(0, \begin{bmatrix}
	V_+^{L} + V_- & Cov_+ + V_-\\ 
	Cov_+ + V_- & V_+^{U} + V_- 
	\end{bmatrix} \right),
	\]	
	where for $* \in \{L,U\}$
	
	\begin{align*}
	B^{*}_+ & = 	\frac{1}{2} \bar{\mu} \partial^2_x m^*(\eta,x_0^+)h^2   +  D^{*}_+ A_\eta + o_p(h^2), & & B_- = \frac{1}{2}  \bar{\mu} \partial^2_x m(x_0^-)h^2 + o_p(h^2), \\	
	V^{*}_+ & =   \frac{\bar{\kappa} }{ f_X(x_0^+)}  \Var[ \psi^*|X=x_0^+] + ( D^{*}_+)^2  W_\eta, & &	V_- =   \frac{\bar{\kappa}}{ f_X(x_0^-)}\Var[Y|X=x_0^-],\\
	Cov_+&= \frac{  \bar{\kappa}   }{f_X(x_0^+) } \Cov[ \psi^L, \psi^U  |X=x_0^+]  +  D^L_+ D^U_+  W_\eta & &
	\end{align*}
	with $\psi^L \equiv \psi^L( \eta,Q(\eta,X) )$, $\psi^U \equiv \psi^U( \eta,Q(1-\eta,X) )$, $D^{L}_+\equiv Q(\eta,x_0^+) -m^L(\eta,x_0^+)$, and $D^{U}_+ \equiv Q(1-\eta,x_0^+) -m^U(\eta,x_0^+)$.
\end{corollary}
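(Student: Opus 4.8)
The plan is to reduce Corollary~\ref{th:Gamma} to a single stacked multivariate central limit theorem for three asymptotically orthogonal sources of sampling noise, leaning on Theorems~\ref{th:asy_distribution} and~\ref{th:esteta} together with Lemma~\ref{lemma:eta}. First I would treat the two truncated-expectation estimators to the right of the cutoff. Lemma~\ref{lemma:eta} shows that $\wh\eta$ satisfies the high-level condition of Theorem~\ref{th:esteta} (take the deterministic centering $\eta_n$ to absorb the $O(h^2)$ bias term), so the boundary-kernel version of Theorem~\ref{th:esteta} applies to both $\wh m^L_+(\wh\eta,x_0)$ and $\wh m^U_+(\wh\eta,x_0)$, giving
\[
\wh m^*_+(\wh\eta,x_0) = \wt m^*_+(\eta,x_0;h) + D^*_+\cdot\frac{1}{\eta}(\wh\eta-\eta) + O_p\big(h^4+(nh)^{-1}\big),\qquad *\in\{L,U\},
\]
because $\partial_\eta m^*(\eta,x_0^+)=\frac{1}{\eta}D^*_+$ by the derivative formula in Theorem~\ref{th:esteta}; for the upper-truncation case this identity follows by applying the lower-truncation result to $-Y$ with the $\eta$-quantile $-Q(1-\eta,\cdot\,)$, which also explains the sign in the definition of $D^U_+$.

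Next I would substitute the asymptotically linear representation of $\frac{1}{\eta}(\wh\eta-\eta)$ from Lemma~\ref{lemma:eta} and the standard local-linear expansion of $\wh m_-(x_0)$ (valid under Assumption~\ref{ass:left}) into $\wh\Gamma^*=\wh m^*_+(\wh\eta,x_0)-\wh m_-(x_0)$. Collecting the deterministic $O(h^2)$ terms reproduces the stated biases: $B^*_+$ combines the usual $\frac{1}{2}\bar\mu\,\partial_x^2 m^*(\eta,x_0^+)h^2$ from $\wt m^*_+$ with the contribution $D^*_+ A_\eta$ inherited from the bias of $\wh\eta$, while $B_-$ is just the local-linear bias of $\wh m_-$. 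This step is mechanical once the expansion above is in hand.

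The substance of the proof lies in establishing joint asymptotic normality of the centered, de-biased vector and identifying its covariance matrix. Each of $\wt m^L_+$, $\wt m^U_+$, $\wh m_-$, and the leading stochastic part of $\frac{1}{\eta}(\wh\eta-\eta)$ can be written as $\frac{1}{\sqrt{nh}}\sum_i(\cdot)$ for a row-wise i.i.d.\ triangular array, so I would stack the four influence terms and apply a Lindeberg--Feller CLT, with the Lyapunov bound supplied by the $(2+\xi)$-moment conditions in Assumptions~\ref{ass:mean} and~\ref{ass:left}. The covariance bookkeeping is where care is needed. The influence of $\wh\eta$ is, by Lemma~\ref{lemma:eta}, a function of the running variable alone (through $\wh f_X^\pm$), hence conditionally nonrandom given $\{X_i\}$, whereas the influences of $\wt m^L_+$, $\wt m^U_+$, and $\wh m_-$ are conditionally mean-zero given $\{X_i\}$ (driven by $Y$-residuals). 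This conditional-uncorrelatedness argument kills every cross term between $\wh\eta$ and the three regression pieces, leaving the additive blocks $(D^*_+)^2 W_\eta$ on the diagonal and $D^L_+ D^U_+ W_\eta$ off the diagonal. Separately, $\wh m_-$ uses only observations with $X_i<x_0$ while $\wt m^L_+,\wt m^U_+$ use only $X_i>x_0$, so $k_h^-$ and $k_h^+$ never overlap and these summands are exactly uncorrelated; since both $\wh\Gamma^L$ and $\wh\Gamma^U$ subtract the same $\wh m_-(x_0)$, this produces the shared $V_-$ in both the diagonal and off-diagonal entries, while the within-right-side dependence yields the $\frac{\bar\kappa}{f_X(x_0^+)}\Cov[\psi^L,\psi^U|X=x_0^+]$ term in $Cov_+$.

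I expect the main obstacle to be not any single sharp estimate but the careful verification of these orthogonality claims at the level of the triangular-array summands: in particular, confirming that the density-estimation noise entering $\wh\eta$ is asymptotically uncorrelated with the $Y$-driven noise in all three regression estimators, and that the left- and right-side contributions remain asymptotically independent once the boundary kernel constants $\bar\mu,\bar\kappa$ are tracked consistently. Given Theorems~\ref{th:asy_distribution}, \ref{th:esteta} and Lemma~\ref{lemma:eta}, these are routine but bookkeeping-intensive, which is why the corollary is stated without a separate proof.
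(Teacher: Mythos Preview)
Your proposal is correct and matches the paper's intended argument: the paper explicitly states that the corollary follows from Theorems~\ref{th:asy_distribution} and~\ref{th:esteta} without further proof, and the surrounding text singles out precisely the two orthogonality observations you use (that $\wh\eta$ depends only on the running variable and is therefore conditionally uncorrelated with the $Y$-driven regression noise, and that left- and right-of-cutoff contributions do not overlap). Your handling of the derivative $\partial_\eta m^*(\eta,x_0^+)=\tfrac{1}{\eta}D^*_+$ for both $L$ and $U$ via the sign-flip reduction is also consistent with how the paper treats upper truncation.
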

Since $\widehat{\eta}$ is obtained based only on realizations of the covariate, there is no asymptotic covariance between $\widehat{\eta}$ and the estimators of the three  conditional expectations. The component in the asymptotic covariance due to estimation of $\eta$ is negative since $D^L_+ > 0$ and $D^U_-< 0$.

\subsection{Conditional Lee Bounds}\label{A:Lee}
For $d\in \{0,1\}$, let $s_d(x)=\Prob(S=1|D=d,X=x)$. The probability $s_d(x_0)$ can be estimated using the standard local linear estimator with the sample restricted to observations with $D_i=d$,
\begin{equation*}
\wh{s}_d(x_0)= e_1^\top \argmin_{ \beta_0, \beta_1 } \Sum k_h (S_i -\beta_0 - \beta_1(X_i-x_0))^2\one{D_i=d} \text{ for } d \in \{0,1\}.
\end{equation*}
Let $\wh{\eta}(x_0) = \wh{s}_0(x_0)/\wh{s}_1(x_0)$.
\begin{assumption}\label{ass:smooth_s}\,
	\begin{enumerate}[label=(\alph*), nosep]
		\item $s_d(x)$ is twice continuously differentiable on $\mc X$ for $d \in \{0,1\}$;
		\item $\E[D|X=x]$ is continuous in $x$ on $\mc X$.
	\end{enumerate}
\end{assumption}

Lemma \ref{lemma:eta_Lee} states the asymptotically linear representation of $\wh{\eta}(x_0)$.

\begin{lemma}\label{lemma:eta_Lee}
	Suppose that Assumptions \ref{ass:ass1}, \ref{ass:quantile}(b), \ref{ass:kernel}, and \ref{ass:smooth_s} hold. Then
	\begin{align*}
	\frac{1}{\eta}\big(\wh{\eta}(x_0) - \eta(x_0) \big) = \frac{   \wh{s}_0(x_0) -s_0(x_0)}{s_0(x_0)}  - \frac{ \wh{s}_1(x_0) - s_1(x_0)}{s_1(x_0)}  +  o_p(h^2+(nh)^{-1/2}) = O_p(h^2 + (nh)^{-1/2}).
	\end{align*}
\end{lemma}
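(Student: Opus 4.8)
The plan is to treat $\wh\eta(x_0)=\wh s_0(x_0)/\wh s_1(x_0)$ as a smooth ratio transformation of the pair of standard local linear regression estimators $(\wh s_0(x_0),\wh s_1(x_0))$ and to linearize it with a first-order Taylor (delta-method) argument, following the same lines as the proof of Lemma~\ref{lemma:eta} for the density ratio. The argument has three ingredients: a standard expansion for each $\wh s_d(x_0)$, the linearization of the ratio, and control of the quadratic remainder.

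First, I would establish, separately for $d\in\{0,1\}$, the standard local linear expansion
\[
\wh s_d(x_0)-s_d(x_0)= O_p\big(h^2+(nh)^{-1/2}\big),
\]
with an $O(h^2)$ conditional bias, controlled by the twice continuous differentiability of $s_d$ in Assumption~\ref{ass:smooth_s}(a), and an $O_p((nh)^{-1/2})$ stochastic term. Each $\wh s_d$ is an ordinary local linear estimator of the conditional mean of the Bernoulli outcome $S_i$ computed on the subsample $\{D_i=d\}$; its effective design density is $f_X(x)\Prob(D=d\mid X=x)$, which is continuous and bounded away from zero at $x_0$ by Assumption~\ref{ass:quantile}(b) combined with Assumption~\ref{ass:smooth_s}(b) and positivity of both conditional treatment probabilities. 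The conditional variance $s_d(x)(1-s_d(x))$ of $S_i$ is bounded, which is all that the stated rate requires.

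Second, since $\eta=\eta(x_0)=s_0(x_0)/s_1(x_0)\in(0,1)$ by Assumption~\ref{ass:ass1}(c), the denominator $s_1(x_0)$ is bounded away from zero, and a first-order Taylor expansion of the ratio about $(s_0(x_0),s_1(x_0))$ gives
\[
\frac{\wh s_0(x_0)}{\wh s_1(x_0)}-\frac{s_0(x_0)}{s_1(x_0)} = \frac{\wh s_0(x_0)-s_0(x_0)}{s_1(x_0)} - \frac{s_0(x_0)}{s_1(x_0)^2}\big(\wh s_1(x_0)-s_1(x_0)\big) + \mc R_n,
\]
where $\mc R_n$ collects the quadratic terms. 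Dividing through by $\eta=s_0(x_0)/s_1(x_0)$ reproduces exactly the two leading terms $\tfrac{\wh s_0(x_0)-s_0(x_0)}{s_0(x_0)}-\tfrac{\wh s_1(x_0)-s_1(x_0)}{s_1(x_0)}$ displayed in the lemma.

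Third, I would verify that $\tfrac{1}{\eta}\mc R_n=o_p(h^2+(nh)^{-1/2})$. By the first step the quadratic remainder satisfies $\mc R_n=O_p\big((h^2+(nh)^{-1/2})^2\big)=O_p\big(h^4+h^2(nh)^{-1/2}+(nh)^{-1}\big)$, and each of these three terms is $o(h^2+(nh)^{-1/2})$ because $h\to0$ and $nh\to\infty$ under Assumption~\ref{ass:kernel}(b). Combining the three steps yields the asymptotically linear representation, and the final $O_p(h^2+(nh)^{-1/2})$ bound is immediate from the rates of the two leading terms. The only point requiring care is the first step---checking that conditioning on $\{D_i=d\}$ leaves a well-posed local linear problem with effective design density bounded away from zero at $x_0$---which is where Assumption~\ref{ass:smooth_s}(b) and the random-assignment structure enter; the remainder of the argument is the routine delta-method computation already carried out, in form, in Lemma~\ref{lemma:eta}.
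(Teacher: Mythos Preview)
Your proposal is correct and is precisely the argument the paper has in mind: the paper states Lemma~\ref{lemma:eta_Lee} (like Lemma~\ref{lemma:eta}) without proof, noting that these results ``follow easily,'' and your delta-method linearization of the ratio of two standard local linear estimators, with the quadratic remainder controlled at $O_p\big((h^2+(nh)^{-1/2})^2\big)$, is exactly that easy argument. The only mild caveat is that positivity of $\Prob(D=d\mid X=x_0)$ for both $d$ is not written down explicitly in Assumption~\ref{ass:smooth_s} but is implicitly required (it appears in the denominators of $W_\eta^{Lee}$); your flagging of this point is appropriate.
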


I note that the asymptotic bias and variance of $\frac{1}{\eta}\big(\wh{\eta}(x_0) - \eta(x_0) \big)$ are given by
\begin{align*}
& A_\eta^{Lee}= \frac{1}{2}\mu \left( \frac{s_0''(x_0) }{s_0(x_0)} - \frac{s_1''(x_0)}{s_1(x_0)} \right) h^2 + o_p(h^2), \\ 
& W_\eta^{Lee}=\frac{\kappa}{f_X(x_0)} \left( \frac{s_0(x_0)(1-s_0(x_0))}{\Prob[D=0|X=x_0] s_0(x_0)^2}+\frac{s_1(x_0)(1-s_1(x_0))}{\Prob[D=1|X=x_0]s_1(x_0)^2} \right).
\end{align*}
These quantities appear in the asymptotic distribution of the bounds.

Let $Q_1(u,x)=Q_{D=1,S=1}(u,x)$, $m^L_1(\eta,x) = \E[Y|X=x, Y\leq Q_1(\eta,x), D=1,S=1]$, $m^U_1(\eta,x) = \E[Y|X=x, Y\ge Q_1(1-\eta,x), D=1, S=1]$, and $m_0(x) = \E[Y|X=x,D=0,S=1]$.
The truncated conditional expectations $m^L_1(\eta,x_0)$ and $m^U_1(\eta,x_0)$ are estimated as
\begin{align*}
&\widehat{m}_1^L(\eta,x_0)= e_1^\top \argmin_{ \beta_0, \beta_1 } \Sum  \kh S_iD_i  \left(  \psi_i^L( \eta, \widehat{Q}^{ll}_1(\eta,X_i;x_0,h))  -\beta_0 - \beta_1(X_i-x_0)\right)^2 , \\
&\widehat{m}_1^U(\eta,x_0) = e_1^\top \argmin_{ \beta_0, \beta_1 } \Sum  \kh S_iD_i \left(  \psi_i^U( \eta, \widehat{Q}^{ll}_1(1-\eta,X_i;x_0,h))-\beta_0 - \beta_1(X_i-x_0)\right)^2 ,
\end{align*}
where  the estimated quantile function $\widehat{Q}^{ll}_1$ is defined as in Section~\ref{sec:TLLR}, except that it uses only observations with $D_i=1$ and $S_i=1$. 
The conditional expectation $m_0(x_0)$ is estimated using the local linear estimator:
\[
\widehat{m}_0(x_0) =  e_1^\top  \argmin_{ \beta_0, \beta_1 } \Sum  \kh S_i(1-D_i)  (Y_i-\beta_0 - \beta_1(X_i-x_0))^2 .
\]

The final estimators of the bounds on the conditional average treatment effect are defined as
\[
\widehat{\Delta}^L(x_0) = \widehat{m}^{L}_1(\wh\eta(x_0), x_0) - \widehat{m}_0(x_0) \text{ and } \widehat{\Delta}^U(x_0) = \widehat{m}^{U}_1(\wh\eta(x_0),x_0) - \widehat{m}_0(x_0).
\]
I impose standard assumptions for the analysis of $\widehat{m}_0(x_0)$.

\begin{assumption}\label{ass:control}\,
	\begin{enumerate}[label=(\alph*), nosep]
		\item $m_0(x)$ is twice continuously differentiable with respect to $x$ on $\mc X$;
		\item $\Var[Y|X=x,S=1,D=0]$ is continuous in $x$ on $\mc X$;
		\item $\E\left[ |Y|^{2+\xi}|X=x, S=1, D=0 \right]$ is bounded uniformly over $x\in\mc X$ for some $\xi >0$.
	\end{enumerate}
\end{assumption}

Corollary~\ref{th:Delta} establishes joint asymptotic normality of the bounds estimators. The dependence on $x_0$ is dropped to ease the notation.
\newpage
\begin{corollary}\label{th:Delta}
Suppose that Assumptions \ref{ass:ass1}--\ref{ass:kernel}, \ref{ass:smooth_s}, and \ref{ass:control} hold, mutatis mutandis. If $h=O(n^{-1/5})$, then
	\[
	\sqrt{nh} \begin{bmatrix}
	\widehat{\Delta}^L - \Delta^L - (B_1^{L} - B_0) \\
	\widehat{\Delta}^U - \Delta^U - (B_1^{U} - B_0)
	\end{bmatrix}\\
	\xrightarrow{d} \mathcal{N} \left(0, \begin{bmatrix}
	V_1^{L} +  V_0 & Cov_1 +  V_0\\ 
	Cov_1 +  V_0 & V_1^{U} +  V_0 
	\end{bmatrix} \right),
	\]	
	where for $* \in \{L,U\}$
	\begin{align*}
	B^{*}_1 & = 	\frac{1}{2} \mu \partial^2_x m_1^*(\eta,x_0)h^2  + D^{*}_1 A^{Lee}_\eta + o_p(h^2), &	B_0 & = \frac{1}{2}  \mu \partial^2_x m_0(x_0)h^2 + o_p(h^2), \\
	V^{*}_1 & =   \frac{ \kappa \Var[\psi^*|X=x_0, S=1, D=1] }{ f_X(x_0) \E[SD|X=x_0] }   +   (D^{*}_1)^2  W^{Lee}_\eta, & V_0 & =   \frac{ \kappa \Var[Y|X=x_0,S=1,D=0] }{ f_X(x_0)\E[S(1-D)|X=x_0]},\\
	Cov_1&= \frac{  \kappa  \Cov[ \psi^L, \psi^U  |X=x_0,S=1,D=1]  }{f_X(x_0)\E[SD|X=x_0] }  + D^{L}_1 D^{U}_1 W_\eta^{Lee} & & 
	\end{align*}
	with $ \psi^L \equiv  \psi^L( \eta(X),Q_1(\eta(X),X) ) $, 
	$\psi^U \equiv \psi^U( \eta(X),Q_1(1-\eta(X),X) ) $,
	$D_1^{L} \equiv Q_1(\eta,x_0) -m^L_1(\eta,x_0)$, 
	and $D_1^{U} \equiv Q_1(1-\eta,x_0) -m^U_1(\eta,x_0)$.
\end{corollary}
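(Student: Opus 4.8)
The plan is to reduce the joint behavior of $(\wh\Delta^L,\wh\Delta^U)$ to that of the three infeasible conditional-expectation estimators $\wt m^L_1(\eta,x_0)$, $\wt m^U_1(\eta,x_0)$, $\wh m_0(x_0)$ together with the first-stage error $\frac1\eta(\wh\eta-\eta)$, and then to a central limit theorem for a triangular array of conditionally independent influence functions, structurally paralleling the RDD case of Corollary~\ref{th:Gamma}. First I would check that $\wh\eta(x_0)$ verifies the high-level condition of Theorem~\ref{th:esteta}: Lemma~\ref{lemma:eta_Lee} gives the asymptotically linear representation of $\frac1\eta(\wh\eta-\eta)$ with bias $A_\eta^{Lee}=O(h^2)$ and stochastic part of order $O_p((nh)^{-1/2})$, so collecting the deterministic $O(h^2)$ part into a sequence $\eta_n$ yields $\wh\eta-\eta_n=O_p((nh)^{-1/2})$ and $\eta_n-\eta=O(h^2)$. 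Applying Theorem~\ref{th:esteta} to $\wh m^L_1(\wh\eta,x_0)$, and its truncation-from-below analog (obtained via $Y\mapsto-Y$, as noted in the text) to $\wh m^U_1(\wh\eta,x_0)$, then gives, for $*\in\{L,U\}$,
\[
\wh m^*_1(\wh\eta,x_0)=\wt m^*_1(\eta,x_0)+D^*_1\cdot\tfrac1\eta(\wh\eta-\eta)+O_p\big(h^4+(nh)^{-1}\big),
\]
since $\partial_\eta m^*_1(\eta,x_0)=\frac1\eta D^*_1$. Because $h=O(n^{-1/5})$, the remainder is $o_p((nh)^{-1/2})$ after scaling by $\sqrt{nh}$, while the $h^2$-bias survives; this is precisely the role of the bandwidth restriction.

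Next I would establish joint asymptotic normality of the leading terms. The control mean $\wh m_0$ is a standard local linear estimator on the $\{D=0,S=1\}$ subsample, so its linearization and variance $V_0$ follow from the theory underlying Theorem~\ref{th:asy_distribution}; likewise $\wt m^L_1$ and $\wt m^U_1$ are local linear regressions of $\psi^L_i$ and $\psi^U_i$ on the common $\{D=1,S=1\}$ subsample, delivering $V^L_1$, $V^U_1$ and the cross term $Cov_1$ through $\Cov[\psi^L,\psi^U\,|\,X=x_0,S=1,D=1]$. Since the $\{D=0\}$ and $\{D=1\}$ subsamples are disjoint, $\wh m_0$ is asymptotically uncorrelated with $\wt m^L_1,\wt m^U_1$; as $\wh m_0$ enters both bounds with coefficient $-1$, its variance $V_0$ appears on both the diagonal and the off-diagonal of the limiting covariance matrix.

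The decisive step is to show that $\frac1\eta(\wh\eta-\eta)$ is asymptotically uncorrelated with all three conditional-expectation estimators, despite sharing subsamples with them. By Lemma~\ref{lemma:eta_Lee}, its influence is a kernel-weighted average of $(S_i-s_0(X_i))/s_0$ over $\{D_i=0\}$ minus $(S_i-s_1(X_i))/s_1$ over $\{D_i=1\}$. The only potentially nonzero cross covariances are with $\wt m^*_1$ through $\wh s_1$ and with $\wh m_0$ through $\wh s_0$. The former reduces to $\E[(S-s_1)(\psi^*-m^*_1)\one{S=1}\,|\,X=x_0,D=1]$, and using $S(S-s_1)=S(1-s_1)$ together with the orthogonal-moment identity $\E[\psi^*-m^*_1\,|\,S=1,X=x_0,D=1]=0$ this vanishes; an identical computation with $\E[Y-m_0\,|\,S=1,X=x_0,D=0]=0$ kills the $\wh s_0$ cross term. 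Hence $\frac1\eta(\wh\eta-\eta)$ contributes additively and independently, producing $(D^*_1)^2W_\eta^{Lee}$ in the variances and $D^L_1D^U_1W_\eta^{Lee}$ in the covariance, while its bias contributes $D^*_1A_\eta^{Lee}$ to $B^*_1$.

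To finish, I would assemble the pieces: the infeasible estimators supply the leading biases $\frac12\mu\,\partial_x^2 m^*_1(\eta,x_0)h^2$ and $\frac12\mu\,\partial_x^2 m_0(x_0)h^2$ and the covariance block built from $V^*_1$, $V_0$, and $Cov_1$; the first-stage term adds the $W_\eta^{Lee}$ contributions and the $A_\eta^{Lee}$ bias; and the conclusion follows from the Cram\'er--Wold device and a Lyapunov CLT for the resulting triangular array, with the $O_p(h^4+(nh)^{-1})$ remainders shown to be $o_p((nh)^{-1/2})$ under $h=O(n^{-1/5})$. The main obstacle is exactly the orthogonality verification of the third step, where the subsamples overlap and the cancellation must be argued carefully through the conditional mean-zero property of the orthogonal moment and of the control-group residual; the remaining steps are a bookkeeping assembly of Theorems~\ref{th:asy_distribution} and~\ref{th:esteta} and Lemma~\ref{lemma:eta_Lee}.
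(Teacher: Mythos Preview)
Your proposal is correct and follows the route the paper indicates: the paper does not give a separate proof of Corollary~\ref{th:Delta} but states that it ``follows easily from Theorems~\ref{th:asy_distribution} and~\ref{th:esteta}'', which is exactly the skeleton you implement via Lemma~\ref{lemma:eta_Lee} and the Cram\'er--Wold device. Where you go beyond the paper is in the third step: unlike the RDD case of Corollary~\ref{th:Gamma}, where the paper explicitly notes that $\wh\eta$ depends only on the running variable and is therefore conditionally uncorrelated with the outcome-based estimators, here $\wh\eta$ depends on $S_i$ and shares subsamples with $\wt m^*_1$ and $\wh m_0$; the paper is silent on why the cross terms nonetheless vanish, and your argument via $S(S-s_d)=S(1-s_d)$ together with the conditional mean-zero property of $\psi^*-m^*_1$ and $Y-m_0$ on $\{S=1\}$ is the correct justification. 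This orthogonality is implicit in the additive form of $V^*_1$ and $Cov_1$ but is not spelled out in the paper, so your proposal in fact supplies a detail the paper omits.
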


\bibliographystyle{apalike}
\bibliography{References}

\newpage
\pagenumbering{arabic}
\renewcommand*{\thepage}{OA--\arabic{page}}

\begin{center}
\Large\textbf{Online Appendix: Nonparametric Estimation of Truncated Conditional Expectation Functions}

\vspace{0.5cm}
\large Tomasz Olma\vspace{0.2cm}

\today
\end{center}

\section{Extensions}
It is straightforward to generalize the results from the main text to allow for a vector of covariates and to use an arbitrary order of polynomials. I discuss extensions in these two directions separately to avoid cumbersome notation, and to highlight different orders of the remainder term in the respective asymptotic equivalence results. These results can be proven using exactly the same steps as in the proof of Lemma~\ref{lemma:asy_equivalence}, and I therefore omit the proofs.

\subsection{Multivariate Case}\label{A:multivariate}
Let $d$ be the dimension of $X$, and let $a=(a_1,...,a_d)$ and $h=(h_1,...,h_d)$ be vectors of bandwidths. Let $k(v)= \prod_{j=1}^{d}\mathcal{K}(v_j)$ be a $d$-dimensional product kernel built from the univariate kernel function $\mathcal{K}(\cdot)$. I put $|h|=\prod_{j=1}^{d}h_j$ and $k_h(v)= \prod_{j=1}^{d}\mathcal{K}(v_j/h_j)/h_j$, and similarly for $a$.

In the first step, I run a multivariate local linear quantile regression,
\begin{equation*}
(\widehat{q}_{0}(\eta,x_0;a),
\widehat{q}_{1}(\eta,x_0;a)^\top)^\top
= \argmin_{\beta_0,\beta_1} \Sum \rho_\eta(Y_i-\beta_0-\beta_1^\top(X_i-x_0))k_a(X_i-x_0).
\end{equation*}
Further,
\begin{equation*}
\widehat{Q}^{ll}(\eta,x;x_0,a)= \widehat{q}_{0}(\eta,x_0;a) +  \widehat{q}_{1}(\eta,x_0;a)^\top (x-x_0).
\end{equation*}
Finally,
\begin{equation*}
\widehat{m}(\eta, x_0; a, h) = e_1^\top \argmin_{ \beta_0, \beta_1 } \Sum k_h(X_i-x_0)( \psi_i(\eta, \widehat{Q}^{ll}(\eta,X_i;x_0,a)) -\beta_0 - \beta_1^\top(X_i-x_0))^2,
\end{equation*}
where $e_1=(1,0,...,0)^\top$ is a $(d+1)$-dimensional vector. Likewise, the infeasible estimator $\widetilde{m}(\eta, x_0; h)$ is defined as above but with $\psi_i(\eta, Q(\eta,X_i))$ as the outcome variable.

I maintain the smoothness assumptions on $Q(\eta, \ccdot)$ on the understanding that for boundary points the derivatives exist in the directions in which $x$ can be perturbed within $\mathcal{X}$. The assumptions on the kernel and the bandwidths are as follows.

\begin{assbis}{ass:kernel}\label{ass:kernelmulti}\,
	\begin{enumerate}[label=(\alph*), nosep]
		\item Kernel: $\mathcal{K}$ is a continuous, symmetric density function with compact support, say $[-1,1]$;
		\item As $n \to \infty$, $\max_j h_j \to 0$, $\max_j a_j  \to 0$, $n|h| \to \infty$, and $n|a| \to \infty$.
	\end{enumerate}
\end{assbis}

Lemma~\ref{th:main_multi} is the multivariate version of Lemma~\ref{lemma:asy_equivalence}.

\begin{lemma}\label{th:main_multi} Suppose that Assumptions \ref{ass:ass1}, \ref{ass:quantile}, and \ref{ass:kernelmulti} hold, $h_j \asymp a_j$ for $j \in \{1,...,d\}$, and $\mathcal{X}$ is a convex set. Then
	\[
	\widehat{m}(\eta,x_0;a,h) = \widetilde{m}(\eta,x_0;h) + O_p\Big( \sum_j h_j^4 + (n|h|)^{-1} \Big).
	\]
\end{lemma}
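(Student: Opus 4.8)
The plan is to follow the proof of Lemma~\ref{lemma:asy_equivalence} line by line, replacing each scalar object with its $(d+1)$-dimensional analog and tracking the resulting orders. First I would set up the weighted-least-squares algebra. Writing $\widetilde{X}_{h,i}=(1,(X_{i,1}-x_{0,1})/h_1,\dots,(X_{i,d}-x_{0,d})/h_d)^\top$ and $S_n=\frac{1}{n}\Sum k_h(X_i-x_0)\widetilde{X}_{h,i}\widetilde{X}_{h,i}^\top$, the standard algebra of the weighted least squares estimator gives
\[
\widehat{m}(\eta,x_0;a,h)-\widetilde{m}(\eta,x_0;h)=e_1^\top S_n^{-1}\big(\widehat{\Psi}_n(a)-\widetilde{\Psi}_n\big),
\]
where $\widehat{\Psi}_n(a)$ and $\widetilde{\Psi}_n$ are the $(d+1)$-vectors whose $\ell$-th entries are $\frac{1}{n}\Sum k_h(X_i-x_0)(\widetilde{X}_{h,i})_\ell\,\psi_i(\eta,\widehat{Q}(\eta,X_i;a))$ and $\frac{1}{n}\Sum k_h(X_i-x_0)(\widetilde{X}_{h,i})_\ell\,\psi_i(\eta,Q(\eta,X_i))$ for $\ell\in\{0,1,\dots,d\}$. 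The multivariate analog of Lemma~\ref{lemma:basics} (product-kernel calculations) shows $S_n$ converges in probability to an invertible matrix, so it suffices to bound each entry of $\widehat{\Psi}_n(a)-\widetilde{\Psi}_n$.

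Next I would port the auxiliary lemmas. The key prerequisite is the multivariate Bahadur representation for the local linear quantile estimator (the analog of Lemma~\ref{lemma:Bahadur_quantile}), giving $\widehat{q}_0(\eta;a)-q_0(\eta)=O_p(\sum_k a_k^2+(n|a|)^{-1/2})$ together with the directional slope rates $a_\ell(\widehat{q}_{1,\ell}(\eta;a)-q_{1,\ell}(\eta))=O_p(\sum_k a_k^2+(n|a|)^{-1/2})$. Combining these with a Taylor expansion of $Q(\eta,\cdot)$ about $x_0$ yields the analog of Lemma~\ref{lemma:Qhat}, namely $\sup_{x\in\mathcal{X}_h}|\widehat{Q}(\eta,x;a)-Q(\eta,x)|=O_p(w_n)$ with $w_n\asymp\sum_k h_k^2+(n|h|)^{-1/2}$, where the simplification uses $h_\ell\asymp a_\ell$. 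Convexity of $\mathcal{X}$ enters precisely here: it guarantees the segment $[x_0,x]\subset\mathcal{X}$ for every $x\in\mathcal{X}_h$, so that the mean-value form of the remainder is valid and Lipschitz continuity of $\partial_xQ(\eta,\cdot)$ (Assumption~\ref{ass:quantile}(a)) bounds it by $O(|x-x_0|^2)=O(\sum_k h_k^2)$ uniformly on $\mathcal{X}_h$. With this uniform rate, Lemma~\ref{lemma:precision1} transfers unchanged, since its proof uses only the $O_p(w_n)$ bound and boundedness of $f_{Y|X}$ near the quantile.

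Then I would reproduce the main expansion. Exactly as in the univariate proof, the orthogonal structure of $\psi_i$ makes the indicator cross terms telescope and, after the $O_p(w_n^2)$ reductions of Lemma~\ref{lemma:precision1}, the $\ell$-th entry of $\widehat{\Psi}_n(a)-\widetilde{\Psi}_n$ has leading part
\[
\frac{1}{n}\Sum k_h(X_i-x_0)(\widetilde{X}_{h,i})_\ell\,\frac{1}{\eta}\big(Q(\eta,X_i)-\widehat{Q}(\eta,X_i;a)\big)\big(\one{Y_i\leq Q(\eta,X_i)}-\eta\big)+O_p(w_n^2).
\]
Decomposing $Q-\widehat{Q}$ into the intercept error, the $d$ directional slope errors, and the Taylor remainder produces terms $L_1$, $\{L_{2,k}\}_{k=1}^{d}$, and $L_3$. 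The first $1+d$ are each a Bahadur rate times a conditionally mean-zero kernel average of rate $(n|h|)^{-1/2}$ (the multivariate analog of Lemma~\ref{lemma:basics}(ii)), hence $O_p(w_n(n|h|)^{-1/2})$; here $h_\ell\asymp a_\ell$ is what keeps the $L_{2,k}$ from blowing up. The term $L_3$ has conditional mean zero given $\{X_i\}$ and conditional variance $O_p((\sum_k h_k^2)^2(n|h|)^{-1})$, so $L_3=O_p((\sum_k h_k^2)(n|h|)^{-1/2})$. Collecting gives $\widehat{\Psi}_{n,\ell}(a)-\widetilde{\Psi}_{n,\ell}=O_p(w_n(n|h|)^{-1/2}+w_n^2)$, and substituting $w_n\asymp\sum_k h_k^2+(n|h|)^{-1/2}$, using $(\sum_k h_k^2)^2\asymp\sum_k h_k^4$ and the elementary bound $2(\sum_k h_k^2)(n|h|)^{-1/2}\leq(\sum_k h_k^2)^2+(n|h|)^{-1}$ on the cross term, collapses the remainder to $O_p(\sum_k h_k^4+(n|h|)^{-1})$, as claimed.

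I expect the main obstacle to be the multivariate Bahadur representation of the local linear quantile estimator, in particular securing the directional slope rates $a_\ell(\widehat{q}_{1,\ell}-q_{1,\ell})=O_p(\sum_k a_k^2+(n|a|)^{-1/2})$, since this is the one genuinely new ingredient. Everything downstream, including the role of convexity in the uniform approximation bound, is a faithful transcription of the univariate argument.
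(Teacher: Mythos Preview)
Your proposal is correct and follows exactly the approach the paper intends: the paper explicitly states that this result ``can be proven using exactly the same steps as in the proof of Lemma~\ref{lemma:asy_equivalence}'' and omits the details. Your line-by-line transcription of the univariate argument, including the identification of convexity of $\mathcal{X}$ as the ingredient enabling the Taylor remainder bound and of the multivariate Bahadur representation as the only genuinely new input, matches precisely what the paper has in mind.
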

For $d>1$ the variance component of the remainder in Lemma~\ref{th:main_multi} is of larger order than it is in Lemma~\ref{lemma:asy_equivalence}. However, this result can still be used to obtain asymptotic normality because the infeasible estimator has a bias of order $O_p(\sum h_j^2 )$ and variance of order $O((n|h|)^{-1/2})$, which are smaller than the remainder in Lemma~\ref{th:main_multi}.

\subsection{Higher-Order Polynomials and Derivatives}\label{subsec:porder}
I introduce notation analogous to that in Section \ref{sec:TLLR}, making the dependence on $p$ explicit. The local polynomial quantile estimates are given by
\begin{equation*}
\widehat{q}^\top(\eta,x_0;a,p) = \argmin_{(\beta_0,...,\beta_p)^\top} \Sum k_h(X_i-x_0) \rho_\eta\Big(Y_i - \sum_{j=0}^p \frac{1}{j!}\beta_j (X_i-x_0)^j \Big).
\end{equation*}
I define the estimated $p$-th order approximation of $Q(\eta,\ccdot)$ as
\begin{equation*}
\widehat{Q}(\eta,x;x_0,a,p)=  \sum_{j=0}^p \frac{1}{j!} \widehat{q}_{j}(\eta,x_0;a,p) (x-x_0)^j.
\end{equation*}
The estimator of the $r$-th derivative of $m(\eta,x)$ with respect to $x$ at $x_0$, $\partial^r_{x} m(\eta,x_0)$, is defined as
\begin{align*}
\widehat{m}_r(\eta,x_0;a,h,p)= e_{r+1}^\top \argmin_{ \beta } \Sum k_h(X_i-x_0)\Big(\psi_i(\eta, \widehat{Q}(\eta,X_i;x_0,a,p) ) - \sum_{j=0}^p \frac{1}{j!}\beta_j (X_i-x_0)^j \Big)^2,
\end{align*}
where $e_{r+1}$ is a $(p+1)$-dimensional vector with $1$ at the $(r+1)$-th position and 0 otherwise.
Likewise, the infeasible estimator $\widetilde{m}_r(\eta, x_0; h,p)$ is defined as above but with $\psi_i(\eta, Q(\eta,X_i))$ as the outcome variable.

In order to prove an analog of Lemma~\ref{lemma:asy_equivalence}, one natural modification of Assumption~\ref{ass:quantile} is required.

\begin{assbis}{ass:quantile}\label{ass:quantilenis} $Q(\eta,x)$ is $p$ times differentiable with respect to $x$ on $\mc X$ and $\partial^p_x Q(\eta,x)$ is Lipschitz continuous in $x$. Moreover, Assumptions 2(b) and 2(c) hold.
\end{assbis}

\begin{lemma}\label{th:p}
	Suppose that Assumptions~\ref{ass:ass1}, \ref{ass:quantilenis}, and \ref{ass:kernel} hold, and $h \asymp a$. Then
	\[
	\widehat{m}_r(\eta,x_0;a,h,p) = \widetilde{m}_r(\eta,x_0;h,p)  + O_p( h^{-r}(h^{2(p+1)}+ (nh)^{-1}) ).
	\]
\end{lemma}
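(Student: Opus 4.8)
The plan is to transcribe the proof of Lemma~\ref{lemma:asy_equivalence} almost verbatim, tracking only two new features: the factor $h^{-r}$ coming from estimating a derivative rather than a level, and the higher smoothness order $p+1$ replacing $2$ throughout. First I would record the order-$p$ analogs of the auxiliary results. The generalization of Lemma~\ref{lemma:basics}(i), $\frac1n\Sum \khi \Xihj = \mu_jf_X(x_0)+o_p(1)$, guarantees that the scaled second-moment matrix $\wt S_n$ with $(j,k)$-entry $\frac1n\Sum\khi X_{h,i}^{j+k}$ (up to factorial constants) converges to an invertible limit; the generalization of Lemma~\ref{lemma:basics}(ii) gives $\frac1n\Sum\khi\Xihj(\one{Y_i\le Q(\eta,X_i)}-\eta)=O_p((nh)^{-1/2})$; and the order-$p$ Bahadur representation (the analog of Lemma~\ref{lemma:Bahadur_quantile}) gives $a^j(\wh q_j(\eta;a,p)-q_j(\eta))=O_p(a^{p+1}+(an)^{-1/2})$ for $j\in\{0,\dots,p\}$. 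Crucially, Lemma~\ref{lemma:precision1} is already stated for arbitrary $j\in\mathbb{N}$ and for any, possibly random, $\wt Q$ with $\sup_{x\in\mc X_h}|\wt Q(\eta,x)-Q(\eta,x)|=O_p(w_n)$, so it can be reused without change once I verify that under Assumption~\ref{ass:quantilenis} and $h\asymp a$ the order-$p$ estimator satisfies $\sup_{x\in\mc X_h}|\wh Q(\eta,x;x_0,a,p)-Q(\eta,x)|=O_p(w_n^{(p)})$ with $w_n^{(p)}\equiv h^{p+1}+(nh)^{-1/2}$. This last bound follows exactly as in Lemma~\ref{lemma:Qhat}: the $p$-th order Taylor remainder of $Q$ is $O(h^{p+1})$ because $\partial_x^pQ$ is Lipschitz, and each coefficient error $h^j(\wh q_j-q_j)$ collapses to $O_p(w_n^{(p)})$ once $h\asymp a$.

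Next, after rescaling the regressors by powers of $h$, standard weighted-least-squares algebra gives
\[
\wh m_r(\eta,x_0;a,h,p)-\wt m_r(\eta,x_0;h,p)=h^{-r}\,e_{r+1}^\top\wt S_n^{-1}\big(\wh\Psi_n(a)-\wt\Psi_n\big),
\]
where the components are $\wh\Psi_{n,j}(a)=\frac1n\Sum\khi\Xihj\psi_i(\eta,\wh Q(\eta,X_i;a,p))$ and $\wt\Psi_{n,j}=\frac1n\Sum\khi\Xihj\psi_i(\eta,Q(\eta,X_i))$. Since $\wt S_n^{-1}=O_p(1)$ and the dimension $p+1$ is fixed, it suffices to bound $\wh\Psi_{n,j}(a)-\wt\Psi_{n,j}$ for each $j$ and multiply by $h^{-r}$. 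The expansion of this difference is identical to the one in Lemma~\ref{lemma:asy_equivalence}: adding and subtracting $\frac1\eta\wh Q\,\one{Y_i\le Q(\eta,X_i)}$ and invoking Lemma~\ref{lemma:precision1}(i)--(iii), which is precisely where orthogonality kills the first-order effect of the nuisance, reduces it to
\[
\wh\Psi_{n,j}(a)-\wt\Psi_{n,j}=\frac1n\Sum\khi\Xihj\frac1\eta\big(Q(\eta,X_i)-\wh Q(\eta,X_i;a,p)\big)\big(\one{Y_i\le Q(\eta,X_i)}-\eta\big)+O_p\big((w_n^{(p)})^2\big).
\]

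Then I would decompose the leading term by writing $Q(\eta,X_i)-\wh Q(\eta,X_i;a,p)=\sum_{l=0}^p\frac1{l!}(q_l(\eta)-\wh q_l(\eta;a))\,h^lX_{h,i}^l+R_i$, where $R_i$ is the $p$-th order Taylor remainder, uniformly $O(h^{p+1})$ on $\mc X_h$. Each of the $p+1$ coefficient terms factors into $h^l(q_l-\wh q_l)=O_p(w_n^{(p)})$ times a mean-zero kernel sum $\frac1n\Sum\khi X_{h,i}^{j+l}(\one{Y_i\le Q(\eta,X_i)}-\eta)=O_p((nh)^{-1/2})$, so each is $O_p(w_n^{(p)}(nh)^{-1/2})$. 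The remainder term $\frac1{\eta n}\Sum\khi\Xihj R_i(\one{Y_i\le Q(\eta,X_i)}-\eta)$ is conditionally mean zero given the covariates with conditional variance $O_p(h^{2(p+1)}(nh)^{-1})$, hence $O_p(h^{p+1}(nh)^{-1/2})$, exactly as the term $L_3$ in Lemma~\ref{lemma:asy_equivalence}. Collecting everything, $\wh\Psi_{n,j}(a)-\wt\Psi_{n,j}=O_p(w_n^{(p)}(nh)^{-1/2}+(w_n^{(p)})^2)$; substituting $w_n^{(p)}=h^{p+1}+(nh)^{-1/2}$ and absorbing the cross term $h^{p+1}(nh)^{-1/2}$ yields $O_p(h^{2(p+1)}+(nh)^{-1})$, and multiplying by the prefactor $h^{-r}$ gives the claimed $O_p(h^{-r}(h^{2(p+1)}+(nh)^{-1}))$.

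The hard part is not the algebra, which copies the $p=1$ case line by line, but confirming that the three auxiliary inputs genuinely generalize with the stated rates, in particular the order-$p$ Bahadur bound $a^j(\wh q_j-q_j)=O_p(a^{p+1}+(an)^{-1/2})$ under the weaker smoothness of Assumption~\ref{ass:quantilenis} (only $\partial_x^pQ$ Lipschitz, not $C^{p+1}$), which is what forces the bias to be $O(a^{p+1})$ rather than a higher power. Once that rate and the reusability of Lemma~\ref{lemma:precision1} are in hand, the only genuinely new bookkeeping is carrying the derivative-scaling factor $h^{-r}$ and checking that, under $h\asymp a$, the several error channels all collapse to the single combined rate $w_n^{(p)}=h^{p+1}+(nh)^{-1/2}$.
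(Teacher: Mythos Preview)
Your proposal is correct and takes exactly the approach the paper intends: the paper in fact omits the proof entirely, stating that it ``can be proven using exactly the same steps as in the proof of Lemma~\ref{lemma:asy_equivalence},'' and your write-up carries out precisely those steps with the two modifications you identify (the $h^{-r}$ derivative-scaling factor from the WLS algebra and the replacement of the smoothness order $2$ by $p+1$). Your verification that the cross term $h^{p+1}(nh)^{-1/2}$ is absorbed by $h^{2(p+1)}+(nh)^{-1}$ via AM--GM, and your remark that Lemma~\ref{lemma:precision1} already covers arbitrary $j\in\mathbb{N}$ and arbitrary $\wt Q$ so needs no re-proof, are both on point.
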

With this result, under modified Assumption \ref{ass:mean}, asymptotic normality follows e.g. from the results of \citet{Hong2003}. The stochastic part of $h^r(\widetilde{m}_r(\eta,x_0;h,p) -  \partial^r_{x} m(\eta,x_0))$ is of order $O_p( (nh)^{-1/2} )$, and its leading bias is of order $O_p(h^{p+1})$ or $O_p(h^{p+2})$. Lemma~\ref{th:p} allows to characterize the leading bias for all orders $p$ and derivatives $r\leq p$, both for interior and boundary points, except for the local constant estimator for interior points. Its leading bias is of order $O_p(h^2)$, which is the same as the order of the remainder in the above theorem. This case is discussed by \citet{Kato2012}.

\section{Alternative Approaches}\label{A:Alternative}
I contrast the estimator proposed in the main text with the weighted Nadaraya-Watson estimator of \citet{Kato2012} and two further approaches based on local linear methods that have not been formally studied in the literature so far. As reference points for the last two approaches, I also present the asymptotic distributions of the corresponding infeasible estimators employing the true conditional quantile function. 

\subsection{Weighted Nadaraya-Watson Estimator}\label{A:WNW}
\citet{Kato2012} proposes an estimator of the conditional expected shortfall based on the weighted Nadaraya-Watson (WNW) estimator of the conditional c.d.f. For interior points, the WNW regression estimator is asymptotically equivalent to the local linear estimator. Additionally, the WNW estimator of $F_{Y|X}(y|x_0)$ is monotone in $y$, and it lies between zero and one. Both these properties are not shared by the local linear estimator.
I emphasize that the WNW estimator is not defined for boundary points, but for interior points the estimator of \citet{Kato2012} bears some similarity with the approach developed in this paper. 

In the first step, \citet{Kato2012} estimates the conditional c.d.f. as
\begin{equation}
\widehat{F}^{\sss WNW}_{Y|X}(y|x_0;h)=\frac{\Sum p_i(x_0)\kh \one{Y_i \leq y} }{\Sum p_i(x_0) \kh},
\end{equation}
where $p_i(x_0)\ge 0$ are the empirical likelihood weights, which maximize $\Sum \log(p_i(x_0))$ subject to the constraints
$\Sum p_i(x_0)=1$ and  $\Sum p_i(x_0)(X_i-x_0)k_h(X_i-x_0)=0$.\footnote{If $x_0$ lies on the boundary, such that all $X_i-x_0$ have the same sign, then it is not possible to find non-negative weights satisfying the last constraint.}
He estimates $Q(\eta,x_0)$ as $\widehat{Q}^{\sss WNW}(\eta,x_0;h)=\inf\{y: \eta \leq \widehat{F}^{\sss WNW}_{Y|X}(y|x_0;h) \}$, and $m(\eta,x_0)$ as
\begin{align*}
\widehat{m}^{\sss WNW}(\eta,x_0;h) & =  \frac{ \Sum p_i(x_0)\kh Y_i \one{Y_i \leq \widehat{Q}^{\sss WNW}(\eta,x_0;h)} }{\Sum p_i(x_0)\kh \one{Y_i \leq \widehat{Q}^{\sss WNW}(\eta,x_0;h)} },
\end{align*}
which is essentially the WNW estimator with $\frac{1}{\eta}Y_i\one{Y_i \leq \widehat{Q}^{\sss WNW}(\eta,x_0;h)}$ as the outcome variable. \citet{Kato2012} shows that, under suitable assumptions, the estimator $\widehat{m}^{\sss WNW}$ is asymptotically equivalent to the WNW estimator (and hence to the local linear estimator) with $\psi_i(\eta, Q(\eta,x_0))$ as the outcome variable. In consequence, it is asymptotically normal with asymptotic variance $V(\eta,x_0)$ defined in  Theorem~\ref{th:asy_distribution}.\footnote{\citet{Kato2012} considers time series data, but, under the $\alpha$-mixing condition he imposes, the asymptotic variance of his estimator is the same as for i.i.d. data.} Its bias is of order $h^2$ and the leading bias constant is given by
\begin{align*}
B^{\sss WNW}(\eta, x_0) &= \frac{1}{2} \mu \frac{d^2}{dx^2}  \E[ \psi( \eta, Q(\eta,x_0)|X=x]|_{x=x_0}.
\end{align*}
The difference between the WNW approach and my approach, for interior points, results from the fact that they effectively estimate different curves, which coincide only at the evaluation point $x_0$. The two approaches have the same asymptotic variance but their biases are different, as shown in Proposition \ref{prop:WNW}.
\begin{proposition}\label{prop:WNW}
	Suppose that $\partial^2_x f_{Y|X}(y|x)$ is continuous and $\partial^2_x Q(\eta,x)$ is continuous in $x$. Then
	\[
	B^{\sss WNW}(\eta,x_0) = B(\eta,x_0) - \frac{1}{2 \eta} \mu f_{Y|X}(Q(\eta,x_0)|x_0) (\partial_x Q(\eta,x_0))^2,
	\]
	where $B(\eta,x_0) = \frac{1}{2} \mu\, \partial_x^2 m(\eta, x_0)$.
\end{proposition}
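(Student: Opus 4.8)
The plan is to exploit the observation, implicit in the surrounding discussion, that the two procedures are asymptotically equivalent to local linear regressions with \emph{different} generated outcomes, and hence estimate two curves that agree only at $x_0$. Both estimators are asymptotically local-linear with leading bias equal to $\tfrac12\mu$ times the second derivative of the respective target regression function. The estimator $\wh m$ targets $g_1(x)\equiv\E[\psi(\eta,Q(\eta,x))\mid X=x]=m(\eta,x)$, in which the truncation threshold $Q(\eta,\ccdot)$ tracks the conditioning value, whereas $\wh m^{\sss WNW}$ targets $g_2(x)\equiv\E[\psi(\eta,Q(\eta,x_0))\mid X=x]$, in which the threshold is frozen at the evaluation point $x_0$. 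Since $g_1(x_0)=g_2(x_0)=m(\eta,x_0)$, it suffices to compute $g_1''(x_0)-g_2''(x_0)$ and show it equals $\tfrac1\eta f_{Y|X}(Q(\eta,x_0)\mid x_0)(\partial_xQ(\eta,x_0))^2$; the claimed formula then follows from $B(\eta,x_0)=\tfrac12\mu\,g_1''(x_0)$ and $B^{\sss WNW}(\eta,x_0)=\tfrac12\mu\,g_2''(x_0)$.

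To carry this out I would introduce the bivariate map $\Phi(q,x)\equiv\E[\psi(\eta,q)\mid X=x]$, so that $g_1(x)=\Phi(Q(\eta,x),x)$ and $g_2(x)=\Phi(Q(\eta,x_0),x)$. Writing $Q'\equiv\partial_xQ(\eta,x_0)$ and evaluating all partials at $(q,x)=(Q(\eta,x_0),x_0)$, the chain rule gives
\[
g_1''(x_0)-g_2''(x_0) = \Phi_{qq}\,(Q')^2 + \Phi_q\,\partial_x^2 Q(\eta,x_0) + 2\,\Phi_{qx}\,Q',
\]
since the pure $\Phi_{xx}$ terms from $g_1''$ and $g_2''$ coincide at $x_0$ and cancel. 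The orthogonality property \eqref{eq:orthogonality} is precisely the statement $\Phi_q(Q(\eta,x),x)=0$, so the $\partial_x^2 Q$ term drops out, leaving $\Phi_{qq}(Q')^2+2\Phi_{qx}Q'$.

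The remaining work is to evaluate these partials. Differentiating $\eta\Phi(q,x)=\E[Y\one{Y\le q}\mid X=x]-qF_{Y|X}(q\mid x)+q\eta$ under the integral sign yields $\Phi_q=1-\tfrac1\eta F_{Y|X}(q\mid x)$, hence $\Phi_{qq}=-\tfrac1\eta f_{Y|X}(q\mid x)$ and $\Phi_{qx}=-\tfrac1\eta\,\partial_xF_{Y|X}(q\mid x)$. The one nonobvious input is the cross term: differentiating the defining identity $F_{Y|X}(Q(\eta,x)\mid x)=\eta$ in $x$ gives $\partial_xF_{Y|X}(y\mid x)\big|_{y=Q(\eta,x)}=-f_{Y|X}(Q(\eta,x)\mid x)\,\partial_xQ(\eta,x)$, so that $\Phi_{qx}(Q(\eta,x_0),x_0)=\tfrac1\eta f_{Y|X}(Q(\eta,x_0)\mid x_0)Q'$. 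Substituting, $\Phi_{qq}(Q')^2=-\tfrac1\eta f_{Y|X}(Q')^2$ and $2\Phi_{qx}Q'=+\tfrac2\eta f_{Y|X}(Q')^2$ combine to $g_1''(x_0)-g_2''(x_0)=\tfrac1\eta f_{Y|X}(Q(\eta,x_0)\mid x_0)(Q')^2$, which delivers the proposition.

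The main obstacle, and the only genuinely delicate point, is the cross-partial $\Phi_{qx}$: a naive argument retaining only the curvature term $\Phi_{qq}$ would produce the wrong sign, whereas $2\Phi_{qx}Q'$ overcompensates and flips it, so this term must be tracked carefully and evaluated via the quantile identity above. The accompanying technical chore is justifying the interchange of differentiation and expectation in computing $\Phi_q,\Phi_{qq},\Phi_{qx}$, for which the continuity of $\partial_x^2 f_{Y|X}(y\mid x)$ and $\partial_x^2 Q(\eta,x)$ assumed in the statement suffices by dominated convergence on the compact neighborhood of $(x_0,Q(\eta,x_0))$ supplied by Assumption~\ref{ass:quantile}(c).
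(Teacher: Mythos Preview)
Your proof is correct and reaches the same conclusion as the paper, but the computation is organized differently. The paper writes the difference $l(x)\equiv m(\eta,x)-\E[\psi(\eta,Q(\eta,x_0))\mid X=x]$ directly as the integral $\frac{1}{\eta}\int_{Q(\eta,x_0)}^{Q(\eta,x)}(y-Q(\eta,x_0))f_{Y|X}(y\mid x)\,dy$, then differentiates twice via the Leibniz rule; the upper limit contributes the $(\partial_x Q)^2$ term and everything else vanishes at $x=x_0$. You instead introduce the bivariate map $\Phi(q,x)$, apply the chain rule, and use the orthogonality condition $\Phi_q=0$ together with the quantile identity $F_{Y|X}(Q(\eta,x)\mid x)=\eta$ to evaluate the cross partial $\Phi_{qx}$. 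Your route makes more transparent \emph{why} the $\partial_x^2 Q$ term disappears (it is multiplied by $\Phi_q$, which is zero by Neyman orthogonality), and the sign discussion around the cross partial is a nice touch; the paper's route is shorter since the integral representation of $l(x)$ already bakes in both orthogonality and the quantile identity, so a single application of Leibniz suffices.
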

The second term of the difference on the right-hand side is always non-negative, so that $B^{\sss \sss WNW}(\eta,x_0) \leq  B(\eta,x_0) $.
Which of the two biases is larger in absolute value, depends on the specific data generating process. 
For example, it is possible that $B^{\sss WNW}(\eta,x_0)=0$ and $B(\eta,x_0)>0$, or that $B^{\sss WNW}(\eta,x_0)<0$ and $B(\eta,x_0)=0$.
However, I remark that in a simple location-scale model with a linear conditional expectation function and homoskedastic residuals, my estimator has no bias, whereas $|B^{\sss WNW}(\eta,x_0)|$ can be arbitrarily large.

\subsection{Local Linear Estimator Based on a Non-Orthogonal Moment}\label{A:NM}

\subsubsection{Definition and Discussion}
The non-orthogonal conditional moment (NM) in \eqref{eq:def2} motivates running a local linear regression without the second term included in the generated outcome variable $\psi_i(\eta, \widehat{Q}^{ll}(\eta,X_i;x_0,a))$. Let
\begin{align*}
\widehat{m}^{\sss NM}(\eta,x_0;a,h)= e_1^\top \argmin_{ (\beta_0,\beta_1) } \Sum k_h(X_i-x_0)\left( \frac{1}{\eta} Y_i \one{Y_i \leq \widehat{Q}^{ll}(\eta,X_i;x_0,a)} -\beta_0 - \beta_1(X_i-x_0)\right)^2. 
\end{align*}
Under assumptions, this estimator is consistent and asymptotically normal. However, it has one unappealing property---it is not translation invariant.  Adding a constant to all outcomes and subtracting it from the result can yield a different estimate than applying the estimator to the original data.\footnote{This difference is asymptotically very small in the case when the same bandwidth is used in both stages, but even then, the estimator is not numerically translation invariant.} The estimator $\widehat{m}$ is free of this deficiency.

\subsubsection{Formal Results}
First, I show that in the special case when the same bandwidth is used in both stages, the estimator $\widehat{m}^{\sss NM}(\eta,x_0;h,h)$ is asymptotically equivalent to the infeasible estimator $\widetilde{m}(\eta,x_0;h)$, and I~give the exact rate of the remainder. Second, I derive the asymptotic distribution in the general case allowing for different bandwidths.

\begin{proposition}\label{prop:NM} Suppose that Assumptions~\ref{ass:ass1}, \ref{ass:quantile}, and \ref{ass:kernel} hold. Then
	\[
	R^{\sss NM}(\eta,x_0;h) \equiv \widehat{m}^{\sss NM}(\eta,x_0;h,h) - \widetilde{m}(\eta,x_0;h) = O_p((h+(nh)^{-1/2})(h^2+(nh)^{-1/2})). 
	\]
	Suppose additionally that either $f(x)$ is continuously differentiable and $x_0$ is an interior point or  $\partial_x^1Q(\eta,x_0)=0$. Then $R^{\sss NM}(\eta,x_0;h) = O_p(h^4 + (nh)^{-1})$.
\end{proposition}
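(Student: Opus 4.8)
The plan is to mirror the proof of Lemma~\ref{lemma:asy_equivalence}. By the weighted-least-squares algebra recorded there,
\[
R^{\sss NM}(\eta,x_0;h) = \frac{S_{n,2}(\wh\Psi^{\sss NM}_{n,0} - \wt\Psi_{n,0}) - S_{n,1}(\wh\Psi^{\sss NM}_{n,1}-\wt\Psi_{n,1})}{S_{n,2}S_{n,0}-S_{n,1}^2},
\]
where $\wh\Psi^{\sss NM}_{n,j} = \frac{1}{n}\Sum \khi\Xihj \frac{1}{\eta} Y_i\one{Y_i\leq\widehat Q(\eta,X_i;h)}$ and $\wt\Psi_{n,j}=\frac{1}{n}\Sum\khi\Xihj\psi_i(\eta,Q(\eta,X_i))$. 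The denominator is bounded away from zero, so it suffices to bound the two numerator differences for $j\in\{0,1\}$. Writing $Y_i = (Y_i-Q(\eta,X_i)) + Q(\eta,X_i)$ and regrouping, the per-observation difference of outcomes telescopes, giving
\[
\wh\Psi^{\sss NM}_{n,j}-\wt\Psi_{n,j} = \tfrac{1}{\eta}\,\tfrac{1}{n}\Sum\khi\Xihj (Y_i-Q(\eta,X_i))(\one{Y_i\leq\widehat Q(\eta,X_i;h)}-\one{Y_i\leq Q(\eta,X_i)}) + \tfrac{1}{\eta}T_{n,j},
\]
where the first piece is $O_p(w_n^2)$ by Lemma~\ref{lemma:precision1}(i) with $\wt Q = \widehat Q(\eta,\cdot;h)$ (using Lemma~\ref{lemma:Qhat} and $w_n\asymp h^2+(nh)^{-1/2}$ as $a=h$), and $T_{n,j} = \frac{1}{n}\Sum\khi\Xihj Q(\eta,X_i)(\one{Y_i\leq\widehat Q(\eta,X_i;h)}-\eta)$.

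The crux is the term $T_{n,j}$, and the essential ingredient is the first-order condition of the first-stage local linear quantile regression. Because the same bandwidth $h$ is used in both stages, the subgradient condition yields $G_{n,j}\equiv\frac{1}{n}\Sum\khi\Xihj(\one{Y_i\leq\widehat Q(\eta,X_i;h)}-\eta) = O_p((nh)^{-1})$ for $j\in\{0,1\}$, the remainder coming from the at-most-two observations interpolated by the linear quantile fit. Expanding $Q(\eta,X_i)=q_0(\eta)+q_1(\eta)(X_i-x_0)+R_i$ with $|R_i|\leq C(X_i-x_0)^2$ on $\mc X_h$ (Assumption~\ref{ass:quantile}(a)), I would write
\[
T_{n,j} = q_0(\eta)\,G_{n,j} + q_1(\eta)\,h\,G_{n,j+1} + \tfrac{1}{n}\Sum\khi\Xihj R_i(\one{Y_i\leq\widehat Q(\eta,X_i;h)}-\eta).
\]
The first summand is $O_p((nh)^{-1})$ by the FOC; this is precisely the translation-dependent term, negligible only because $a=h$. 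For the last summand, splitting the indicator difference at $Q(\eta,X_i)$ and invoking Lemmas~\ref{lemma:basics}(ii) and~\ref{lemma:precision1}(iii) gives $O_p(h^2(nh)^{-1/2}+h^2w_n)$. The only genuinely first-order contribution is $q_1(\eta)h\,G_{n,2}$ appearing in $T_{n,1}$; here $G_{n,2}=O_p(w_n)$ (again by Lemmas~\ref{lemma:basics}(ii) and~\ref{lemma:precision1}(iii)), so $q_1(\eta)h\,G_{n,2}=O_p(hw_n)$. Collecting terms, $T_{n,0}=O_p((nh)^{-1}+h^2w_n+h^2(nh)^{-1/2})$ and $T_{n,1}=O_p(hw_n+(nh)^{-1}+h^2w_n+h^2(nh)^{-1/2})$, and since $S_{n,j}=O_p(1)$ the general bound $R^{\sss NM}=O_p((h+(nh)^{-1/2})w_n)$ follows after checking that $w_n^2=O((h+(nh)^{-1/2})w_n)$.

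For the sharpened rate, the potentially dominant contribution is the $O_p(hw_n)$ piece carried by $q_1(\eta)h\,G_{n,2}$ in the $j=1$ numerator. If $\partial_x Q(\eta,x_0)=q_1(\eta)=0$, this term is absent, and the remaining contributions are $O_p(h^4+(nh)^{-1})$ once I absorb the cross term via $h^2(nh)^{-1/2}\leq h^4+(nh)^{-1}$ (AM--GM). If instead $x_0$ is interior and $f_X$ is continuously differentiable, then $\mu_1=0$ and $\E[S_{n,1}]=O(h)$, so that $S_{n,1}=O_p(h+(nh)^{-1/2})$; multiplying $T_{n,1}$ by this smaller factor in the numerator again yields $O_p(h^4+(nh)^{-1})$, while $S_{n,2}T_{n,0}$ is already of this order. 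This gives the second claim. The main obstacle I anticipate is making the first-stage FOC remainder rigorous---arguing that $G_{n,0}$ and $G_{n,1}$ are $O_p((nh)^{-1})$ rather than merely $O_p(w_n)$---since this cancellation is exactly what singles out the $a=h$ case and explains the failure of translation invariance noted after the proposition; the remaining difficulty, controlling the fluctuations of $\frac{1}{n}\Sum\khi\Xihj(\one{Y_i\leq\widehat Q(\eta,X_i;h)}-\one{Y_i\leq Q(\eta,X_i)})$ uniformly over the first-stage estimate, is absorbed into Lemma~\ref{lemma:precision1}(iii), whose proof already supplies the needed stochastic-equicontinuity argument.
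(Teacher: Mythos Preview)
Your argument is correct and uses the same key ingredients as the paper: the approximate first-order conditions of the local linear quantile regression (this is Lemma~\ref{lemma:AFOC} in the appendix, giving exactly the $G_{n,0},G_{n,1}=O_p((nh)^{-1})$ you anticipate) and, for the interior case, the refined rate $S_{n,1}=O_p(h+(nh)^{-1/2})$.

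The only organizational difference is that the paper compares $\widehat m^{\sss NM}$ to $\widehat m$ rather than directly to $\widetilde m$, and then invokes Lemma~\ref{lemma:asy_equivalence} for the step $\widehat m-\widetilde m=O_p(r_n^2)$. Because $\widehat Q(\eta,\cdot;h)$ is \emph{exactly} linear, the difference $T_{n,j}-\wh\Psi_{n,j}(h)$ equals $\tfrac1\eta(\widehat q_0 G_{n,j}+\widehat q_1 h\,G_{n,j+1})$ with no Taylor remainder, so one lands immediately on $\widehat m^{\sss NM}-\widehat m=hS_{n,1}q_1(\eta)O_p(r_n)+O_p(r_n^2)$. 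Your route, expanding $Q(\eta,X_i)$ instead, produces the extra remainder $\tfrac1n\Sum\khi\Xihj R_i(\one{Y_i\le\widehat Q}-\eta)$; your control of it is fine, though note that Lemmas~\ref{lemma:basics}(ii) and~\ref{lemma:precision1}(iii) do not literally carry the factor $R_i$, so one really bounds $|R_i|\le Ch^2$ and appeals to the absolute-value analogues, whose proofs go through verbatim. Either way the dominant surviving term is $S_{n,1}\cdot q_1(\eta)h\,G_{n,2}$, and both the general and the sharpened rates follow exactly as you outline.
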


Let $\widetilde{m}^{\sss NM}(x_0,\eta;h)$ be the oracle estimator corresponding to $\wh{m}^{\sss NM}(x_0,\eta;a,h)$, i.e. a local linear estimator with $\frac{1}{\eta} Y_i \one{Y_i \leq Q(\eta,X_i)}$ as the outcome variable.
\begin{proposition}\label{prop:A:NM}
	Suppose that Assumptions~\ref{ass:ass1}--\ref{ass:kernel} hold, $\partial^2_xQ(\eta,x)$ is continuous in $x$ on $\mc X$, and $h/a \to \rho \in (0, \infty)$. Then
	\begin{enumerate}[wide, labelindent=0pt]
		\item [(i)] $ \displaystyle \sqrt{nh}\big( \widetilde{m}^{\sss NM}(x_0,\eta;h)   - m(\eta,x_0) - \widetilde{\mathcal{B}}^{\sss NM}(\eta,x_0,h)  \big) \xrightarrow{d} \mathcal{N}(0, \widetilde{V}^{NM}(\eta,x_0)),$
		
		where 
		\begin{align*}
			\widetilde{\mathcal{B}}^{\sss NM }(\eta,x_0,h) &=\frac{1}{2} \mu \partial_x^2m(\eta,x_0)h^2+o_p(h^2),\\
			\widetilde{V}^{\sss NM}(\eta,x_0) &= \frac{\kappa}{\eta f_X(x_0)} \left( \Var[Y|Y\leq Q(\eta,x_0), X=x_0] + (1-\eta)m(\eta,x_0)^2 \right).
		\end{align*} 
		\item [(ii)] $ \displaystyle \sqrt{nh}\big(  \widehat{m}^{\sss NM}(x_0,\eta;a,h)    - m(\eta,x_0) - \mathcal{B}^{\sss NM}(\eta, x_0,a,h) \big) \xrightarrow{d} \mathcal{N}(0, V^{\sss NM}(\eta, x_0,\rho)),$
		
		where
		\begin{align*}
			\mathcal{B}^{\sss NM}(\eta,x_0,a,h) & =\frac{1}{2} \mu \left(\partial_x^2m(\eta,x_0)h^2 + C^{\sss NM}(\eta,x_0) \partial_x^2Q(\eta,x_0)(a^2-h^2) \right) + o_p(h^2),\\
			V^{\sss NM}(\eta, x_0,\rho) & = \frac{\kappa}{\eta f_X(x_0)}\Var[Y|Y\leq Q(\eta,x_0), X=x_0]+ \frac{1-\eta}{ \eta f_X(x_0)(\mu_0 \mu_2 - \mu_1^2)^2 }\\ 
			&\,\times  \int_{\mathcal{D}} \left( \rho k(v\rho)(\mu_2 - \mu_1 v \rho) Q(\eta,x_0)  -k(v)(\mu_2-\mu_1 v)m(\eta,x_0)  \right)^2 dv 
		\end{align*}
		with $C^{\sss NM}(\eta, x_0)= \frac{d}{dy} \E[\frac{1}{\eta} Y \one{ Y \leq y } |X=x_0 ] |_{y=Q(\eta,x_0)}=\frac{1}{\eta}f_{Y|X}(Q(\eta,x_0)|x_0)Q(\eta,x_0)$, $\mathcal{D}=[-1,1]$ if $x_0$ lies in the interior of $\mathcal{X}$, $\mathcal{D}=[0,1]$ if $x_0$ lies on the boundary of $\mathcal{X}$, and $\mu_j = \int_{\mathcal{D}} k(v)v^j dv$.
	\end{enumerate}
\end{proposition}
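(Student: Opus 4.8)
The plan is to handle the two parts separately: part~(i) reduces to standard local linear theory with a fixed generated outcome, while part~(ii) requires tracking the non-negligible first-stage contribution and assembling a joint central limit theorem for the two estimation stages. For part~(i), observe that $\wt m^{\sss NM}(\eta,x_0;h)$ is an ordinary local linear regression estimator whose outcome $\tfrac1\eta Y_i\one{Y_i\le Q(\eta,X_i)}$ is a \emph{known} measurable function of $(X_i,Y_i)$. By representation~\eqref{eq:def2} this outcome has conditional mean $m(\eta,x)$, so Assumption~\ref{ass:mean}(a) delivers the leading bias $\tfrac12\mu\,\partial_x^2 m(\eta,x_0)h^2$ and Assumptions~\ref{ass:mean}(b)--(c) supply the Lyapunov-type moment conditions for asymptotic normality, exactly as in Theorem~\ref{th:asy_distribution}. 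The variance then follows from a direct conditional-moment computation: since $\E[\tfrac1\eta Y\one{Y\le Q(\eta,X)}\mid X=x_0]=m(\eta,x_0)$ and $\E[\tfrac1{\eta^2}Y^2\one{Y\le Q(\eta,X)}\mid X=x_0]=\tfrac1\eta(\Var[Y\mid Y\le Q(\eta,x_0),X=x_0]+m(\eta,x_0)^2)$, the conditional variance equals $\tfrac1\eta\Var[Y\mid Y\le Q(\eta,x_0),X=x_0]+\tfrac{1-\eta}\eta m(\eta,x_0)^2$, and multiplying by $\kappa/f_X(x_0)$ yields $\wt V^{\sss NM}(\eta,x_0)$.

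For part~(ii), I would write $\wh m^{\sss NM}-m=(\wt m^{\sss NM}-m)+(\wh m^{\sss NM}-\wt m^{\sss NM})$ and, as in the proof of Lemma~\ref{lemma:asy_equivalence}, express the second difference through the weighted-least-squares combination of $\wh\Psi^{\sss NM}_{n,j}-\wt\Psi^{\sss NM}_{n,j}=\tfrac1n\Sum\khi\Xihj\tfrac1\eta Y_i(\one{Y_i\le\wh Q(\eta,X_i;a)}-\one{Y_i\le Q(\eta,X_i)})$ for $j\in\{0,1\}$. A first-order expansion of the indicator difference, whose remainder and mean-zero fluctuation I would control with a variant of Lemma~\ref{lemma:precision1} that \emph{retains} the linear term rather than discarding it (the fluctuation is of order $o_p((nh)^{-1/2})$ because the indicator difference is supported on a shrinking interval), shows that the conditional-mean part equals $C^{\sss NM}(\eta,x_0)\,\tfrac1n\Sum\khi\Xihj(\wh Q(\eta,X_i;a)-Q(\eta,X_i))$ up to smaller order. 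Decomposing $\wh Q(\eta,X_i;a)-Q(\eta,X_i)=(\wh q_0(\eta;a)-q_0(\eta))+(\wh q_1(\eta;a)-q_1(\eta))(X_i-x_0)+(Q^*(\eta,X_i)-Q(\eta,X_i))$ and passing through the WLS combination, the slope term is linear in $X_i-x_0$ and is therefore absorbed by the local linear fit, cancelling identically; the entire first-stage effect collapses to $C^{\sss NM}(\eta,x_0)(\wh q_0(\eta;a)-q_0(\eta))$ together with the curvature contribution of $Q^*-Q$.

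The bias is then assembled from these pieces. The local linear quantile bias gives $\E[\wh q_0(\eta;a)-q_0(\eta)]=\tfrac12\mu\,\partial_x^2 Q(\eta,x_0)a^2+o(a^2)$, contributing $\tfrac12\mu C^{\sss NM}(\eta,x_0)\partial_x^2 Q(\eta,x_0)a^2$; the curvature term $Q^*(\eta,x)-Q(\eta,x)=-\tfrac12\partial_x^2 Q(\eta,x_0)(x-x_0)^2+o((x-x_0)^2)$ enters the second stage as a quadratic whose local linear bias is $-\tfrac12\mu C^{\sss NM}(\eta,x_0)\partial_x^2 Q(\eta,x_0)h^2$. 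Adding the oracle bias $\tfrac12\mu\,\partial_x^2 m(\eta,x_0)h^2$ reproduces $\mc B^{\sss NM}$, with the signed combination $(a^2-h^2)$ emerging precisely from the cancellation of the first-stage intercept bias against the propagated curvature bias.

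The main obstacle is the variance, where the two bandwidths interact through $\rho=\lim h/a$. Collecting the stochastic parts, $\sqrt{nh}(\wh m^{\sss NM}-m-\mc B^{\sss NM})$ is asymptotically a single linear functional $\sqrt{nh}\,\tfrac1n\Sum g_i$ in which $g_i$ combines the oracle residual $\epsilon_i=\tfrac1\eta Y_i\one{Y_i\le Q(\eta,X_i)}-m(\eta,X_i)$, weighted by the second-stage equivalent kernel $k_h(X_i-x_0)(\mu_2-\mu_1 X_{h,i})$, with the first-stage influence $\eta-\one{Y_i\le Q(\eta,X_i)}$, weighted by $k_a(X_i-x_0)(\mu_2-\mu_1(X_i-x_0)/a)$ and carrying the factor $C^{\sss NM}(\eta,x_0)/f_{Y|X}(Q(\eta,x_0)|x_0)=Q(\eta,x_0)/\eta$ read off from the Bahadur representation of $\wh q_0$ (as in the proof of Lemma~\ref{lemma:Bahadur_est_quantile} or \citet{fan1994robust}). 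Since both weights act on the \emph{same} observations and both drivers depend on $\one{Y_i\le Q(\eta,X_i)}$, a joint central limit theorem is needed and the covariance must be tracked; the pivotal calculation is $\Cov[\epsilon,\,\eta-\one{Y\le Q(\eta,X)}\mid X=x_0]=-(1-\eta)m(\eta,x_0)$. Substituting $X_i-x_0=hv$ so that the first-stage kernel argument becomes $\rho v$, the squared second-stage weight reproduces $\tfrac{\kappa}{\eta f_X(x_0)}\Var[Y\mid Y\le Q(\eta,x_0),X=x_0]$ plus the $m^2$ part of the target integral, the squared first-stage weight produces the $Q^2$ part, and the cross term produces the $-2Qm$ part, assembling $V^{\sss NM}(\eta,x_0,\rho)$. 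I expect the two genuinely delicate points to be the uniform negligibility of the fluctuation and higher-order remainders in the indicator expansion, and the justification that the first- and second-stage linear functionals converge \emph{jointly} with the stated covariance despite being built from overlapping kernel windows of differing widths.
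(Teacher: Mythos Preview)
Your proposal is correct and follows essentially the same route as the paper: apply standard local linear theory for part~(i); for part~(ii), linearize the indicator difference via a stochastic-equicontinuity argument (the paper's Lemma~\ref{lemma:precision3}), observe that the slope contribution $(\wh q_1-q_1)(X_i-x_0)$ is annihilated by the local linear weights, substitute the Bahadur representation of $\wh q_0(\eta;a)$, and read off a joint asymptotically linear representation with kernel windows $h$ and $a$.

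The one organisational difference worth noting is in the variance step. You keep the oracle residual $\epsilon_i=\tfrac1\eta Y_i\one{Y_i\le Q(\eta,X_i)}-m(\eta,X_i)$ and explicitly track $\Cov[\epsilon,\eta-\one{Y\le Q}\mid X=x_0]=-(1-\eta)m(\eta,x_0)$. The paper instead writes $\epsilon_i=u_i^*-m(\eta,x_0)e_i^*$ with $u_i^*=\tfrac1\eta(Y_i-m^*(\eta,X_i))\one{Y_i\le Q^*(\eta,X_i)}$ and $e_i^*=\tfrac1\eta(\eta-\one{Y_i\le Q^*(\eta,X_i)})$, which are conditionally \emph{uncorrelated} at $x_0$. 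This orthogonalisation collapses the variance into $\Var[u^*\mid X=x_0]$ (giving the first summand of $V^{\sss NM}$) plus the variance of the single combined term $\big(Q(\eta,x_0)k_a(\mu_2-\mu_1 X_a)-m(\eta,x_0)k_h(\mu_2-\mu_1 X_h)\big)e^*$, from which the integral over $\mathcal D$ falls out directly without any cross-covariance bookkeeping. Your three-piece assembly ($m^2$, $Q^2$, $-2Qm$) reaches the same expression, just with one extra step.
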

Both bandwidths appear in the bias formula and the ratio $\rho$ appears in the asymptotic variance. When $\rho$ is small, i.e. $a$ is large relative to $h$, then the variance of the feasible estimator is close to the variance of the oracle estimator because $V^{\sss NM}(\eta, x_0,0)=\widetilde{V}^{\sss NM}(\eta,x_0)$.
The bias $\mathcal{B}^{\sss NM}(\eta,x_0,a,h)$ differs from the oracle bias due to the fact that, first, $Q(\eta,\ccdot)$ is replaced by its local linear approximation, and, second, this approximation is estimated.

\subsection{Local Linear Estimator on a Truncated Sample}\label{A:TS}
\subsubsection{Definition and Discussion}
The definition of $m(\eta,x)$ in \eqref{eq:meta} motivates running a local linear regression on a truncated sample (TS) restricted to observations that fall below the estimated conditional $\eta$-quantile function.\footnote{This approach has been proposed in a working paper by \citet{gerard2016identification}, but they do not derive its asymptotic distribution.} Let
\begin{align*}
\widehat{m}^{\sss TS}(\eta,x_0;a,h) = e_1^\top \argmin_{ (\beta_0,\beta_1) } \Sum k_h(X_i-x_0)\left(  Y_i  -\beta_0 - \beta_1(X_i-x_0)\right)^2 \one{Y_i \leq \widehat{Q}^{ll}(\eta,X_i;x_0,a)}.
\end{align*}
This estimator is translation invariant. Unlike in the case of $\widehat{m}$, the asymptotic distribution of $\widehat{m}^{\sss TS}$ explicitly depends on the first-stage bandwidth, and in general it involves more complicated bias and variance formulas than those in  Theorem~\ref{th:asy_distribution}. 
Only in the special case when the bandwidths in both stages are equal, is $\widehat{m}^{\sss TS}$ asymptotically equivalent to the infeasible estimator $\widetilde{m}$, and hence it has the asymptotic distribution given in  Theorem~\ref{th:asy_distribution}. However, for boundary points, the remainder in the Bahadur representation of $\widehat{m}^{\sss TS}(\eta,x_0;h,h)$ is in general of larger order than $O_p(h^4 + (nh)^{-1})$ obtained in Lemma~\ref{lemma:asy_equivalence} for bandwidths converging at the same rates.

The estimator based on the truncated sample with equal bandwidths corresponds most closely to the unconditional truncated mean, where the same (full) sample is used to first estimate the quantile and then to calculate the truncated mean. However, I~advocate using the estimator $\wh{m}$, as it makes the parallel between estimation of conditional expectation functions and truncated conditional expectation functions explicit.\footnote{Standard inference methods cannot be simply applied to the truncated sample.}  The very small remainder in Lemma~\ref{lemma:asy_equivalence} provides a strong theoretical justification for conducting inference as if the oracle estimator was available.

\subsubsection{Formal Results}
First, I show that in the special case when the same bandwidth is used in both stages, the estimator $\widehat{m}^{\sss TS}(\eta,x_0;h,h)$ is asymptotically equivalent to the infeasible estimator $\widetilde{m}(\eta,x_0;h)$, and I give the exact rate of the remainder. Second, I derive the asymptotic distribution in the general case allowing for different bandwidths.

\begin{proposition}\label{prop:TS} Suppose that Assumptions~\ref{ass:ass1}--\ref{ass:kernel} hold. Then
	\[
	R^{\sss TS}(\eta,x_0;h) \equiv \widehat{m}^{\sss TS}(\eta,x_0;h,h) - \widetilde{m}(\eta,x_0;h) = O_p( (h+(nh)^{-1/2})(h^2+(nh)^{-1/2}) ).
	\]
	Suppose additionally that either $f(x)$ is continuously differentiable and $x_0$ is an interior point or  $\partial_x^1Q(\eta,x_0) = \partial_x^1m(\eta,x_0)$. Then $R^{\sss TS}(\eta,x_0;h) = O_p(h^4 + (nh)^{-1})$.
\end{proposition}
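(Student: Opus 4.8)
The plan is to write $\wh{m}^{\sss TS}$ in weighted-least-squares closed form and compare it, moment by moment, with the orthogonal-moment oracle $\wt m$; the whole argument rests on the fact that with $a=h$ the first-stage quantile regression and the second-stage regression use the same kernel window. Introduce the design and response moments $\wt{S}_{n,j}(\wh{Q})=\frac{1}{n}\Sum\khi\Xihj\one{Y_i\le\wh{Q}(\eta,X_i;h)}$ and $\wt{T}_{n,j}(\wh{Q})=\frac{1}{n}\Sum\khi\Xihj Y_i\one{Y_i\le\wh{Q}(\eta,X_i;h)}$, so that
\[
\wh{m}^{\sss TS}(\eta,x_0;h,h)=\frac{\wt{S}_{n,2}(\wh{Q})\,\wt{T}_{n,0}(\wh{Q})-\wt{S}_{n,1}(\wh{Q})\,\wt{T}_{n,1}(\wh{Q})}{\wt{S}_{n,2}(\wh{Q})\,\wt{S}_{n,0}(\wh{Q})-\wt{S}_{n,1}(\wh{Q})^2},
\]
whereas $\wt m(\eta,x_0;h)=(S_{n,2}\wt\Psi_{n,0}-S_{n,1}\wt\Psi_{n,1})/(S_{n,2}S_{n,0}-S_{n,1}^2)$ in the notation of Lemma~\ref{lemma:asy_equivalence}. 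I would reduce each moment of $\wh m^{\sss TS}$ to its oracle counterpart and then linearize the ratio about the oracle values.

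The key ingredient is the subgradient condition of the first-stage local linear quantile regression. Because $a=h$, it gives $\frac{1}{n}\Sum\khi\Xihj(\one{Y_i\le\wh{Q}(\eta,X_i;h)}-\eta)=O_p((nh)^{-1})$ for $j\in\{0,1\}$, hence $\wt{S}_{n,j}(\wh{Q})=\eta S_{n,j}+O_p((nh)^{-1})$ for $j\in\{0,1\}$. For the response moments I would expand $\wt{T}_{n,j}(\wh{Q})-\eta\wt\Psi_{n,j}$ along the lines of the proof of Lemma~\ref{lemma:asy_equivalence}: discarding $O_p(w_n^2)$ terms through Lemma~\ref{lemma:precision1} (with $w_n=h^2+(nh)^{-1/2}$, since $a=h$) and using $\wh{Q}(\eta,X_i;h)=\wh q_0(\eta;h)+\wh q_1(\eta;h)(X_i-x_0)$, the orthogonalization term carried inside $\psi_i$ cancels the leading selection term precisely because of the subgradient condition. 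For $j=0$ the remainder is $O_p(h^4+(nh)^{-1})$. For $j=1$, however, the expansion leaves the term $\wh q_1(\eta;h)\,h\cdot\frac{1}{n}\Sum\khi X_{h,i}^2(\one{Y_i\le\wh{Q}(\eta,X_i;h)}-\eta)$, whose second moment is not pinned down by the linear first-stage conditions; since $\wh q_1(\eta;h)\to\partial_x^1 Q(\eta,x_0)$ and this moment is $O_p(w_n)$, the $j=1$ remainder is $O_p(h^3+h(nh)^{-1/2})$.

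Substituting these expansions and linearizing the ratio, I would show that $\wt{S}_{n,2}(\wh{Q})$, though only accurate to $O_p(w_n)$, enters $\wh m^{\sss TS}-\wt m$ only through the second-order sensitivity proportional to $\wt\Psi_{n,0}-\wt m\,S_{n,0}=\frac{1}{n}\Sum\khi(\psi_i-\wt m)$. The leading value of this residual is $\partial_x^1 m(\eta,x_0)\,h\,S_{n,1}$, and $S_{n,1}$ is $O_p(1)$ at a boundary point but only $O_p(h+(nh)^{-1/2})$ at an interior point by kernel symmetry. Collecting the two surviving contributions---one from the $j=1$ response, carrying $\partial_x^1 Q(\eta,x_0)$, and one from the $\wt{S}_{n,2}$ sensitivity, carrying $\partial_x^1 m(\eta,x_0)$---the leading extra term is proportional to $h\,(\partial_x^1 m(\eta,x_0)-\partial_x^1 Q(\eta,x_0))\,S_{n,1}\cdot\frac{1}{n}\Sum\khi X_{h,i}^2(\one{Y_i\le Q(\eta,X_i)}-\eta)$, which is $O_p(h^3+h(nh)^{-1/2})$; all other remainders are $O_p(h^4+(nh)^{-1})$. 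This yields the general bound $O_p((h+(nh)^{-1/2})(h^2+(nh)^{-1/2}))$.

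The two refinements then read off this representation. Under slope matching $\partial_x^1 Q(\eta,x_0)=\partial_x^1 m(\eta,x_0)$ the coefficient $\partial_x^1 m-\partial_x^1 Q$ is zero and the leading extra term vanishes, leaving $O_p(h^4+(nh)^{-1})$. At an interior point with a symmetric kernel, the first kernel moment vanishes, so $S_{n,1}$ and $\wt\Psi_{n,0}-\wt m\,S_{n,0}$ are both $O_p(h+(nh)^{-1/2})$---differentiability of $f$ is exactly what upgrades these from $o_p(1)$ to the $O_p(h)$ control needed here---and each extra contribution becomes a product of two factors of order $O_p(h^2+(nh)^{-1/2})$, again giving $O_p(h^4+(nh)^{-1})$. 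The main obstacle is the bookkeeping around the single uncontrolled second moment $\frac{1}{n}\Sum\khi X_{h,i}^2(\one{Y_i\le Q(\eta,X_i)}-\eta)$: it surfaces once in the $j=1$ response and once through the $\wt{S}_{n,2}$ sensitivity, and the argument hinges on verifying that these two occurrences combine with the exact coefficient $(\partial_x^1 m-\partial_x^1 Q)$, so that each of the two stated conditions independently annihilates the leading remainder.
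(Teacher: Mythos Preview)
Your approach is correct and leads to the same leading remainder $h\,S_{n,1}\,(\partial_x^1 m-\partial_x^1 Q)\cdot\frac{1}{n}\Sum\khi X_{h,i}^2(\one{Y_i\le\wh Q}-\eta)+O_p(r_n^2)$, from which both the general rate and the two refinements follow exactly as you argue. The paper, however, organizes the computation differently and thereby avoids the ratio-sensitivity bookkeeping that you flag as the main obstacle. Instead of comparing the moments of $\wh m^{\sss TS}$ directly with those of $\wt m$, the paper first inserts $Y_i=m^*(\eta,X_i)+(Y_i-m^*(\eta,X_i))$, where $m^*(\eta,x)=m(\eta,x_0)+\partial_x^1 m(\eta,x_0)(x-x_0)$. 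Because local linear weights reproduce affine functions exactly, this yields $\wh m^{\sss TS}=m(\eta,x_0)+(\wh S_{n,2}U_{n,0}-\wh S_{n,1}U_{n,1})/(\wh S_{n,2}\wh S_{n,0}-\wh S_{n,1}^2)$ with residual moments $U_{n,j}=\frac{1}{\eta n}\Sum\khi\Xihj(Y_i-m^*(\eta,X_i))\one{Y_i\le\wh Q}=O_p(r_n)$. Now the random design moments $\wh S_{n,j}$ can be replaced by $S_{n,j}$ at cost $O_p(r_n)\cdot O_p(r_n)=O_p(r_n^2)$ in a single stroke, so the uncontrolled second moment never has to be tracked through the $\wt S_{n,2}$ channel; it surfaces only once, in the expansion of $U_{n,1}$, already carrying the combined coefficient $\partial_x^1 Q-\partial_x^1 m$. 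Your route recovers this same coefficient, but only after verifying that the $j=1$ response contribution (carrying $\partial_x^1 Q$) and the $\wt S_{n,2}$-sensitivity contribution (carrying $\partial_x^1 m$ via $\wt\Psi_{n,0}-\wt m\,S_{n,0}\approx\partial_x^1 m\cdot hS_{n,1}$) combine correctly---which they do, but the re-centering trick makes this automatic. A minor point: in your final displayed leading term you write $\one{Y_i\le Q(\eta,X_i)}$ with the true quantile, whereas the derivation naturally produces $\one{Y_i\le\wh Q}$; the two differ by $O_p(r_n)$, which is harmless for all stated rates.
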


Let $\widetilde{m}^{\sss TS}(x_0,\eta;h)$ be the oracle estimator corresponding to the estimator $\wh{m}^{\sss TS}(x_0,\eta;a,h)$, i.e. a local linear estimator using observations with $Y_i \leq Q(\eta,X_i)$.

\begin{proposition}\label{prop:A:TS}
	Suppose that Assumptions~\ref{ass:ass1}--\ref{ass:kernel} hold,  $\partial^2_xQ(\eta,x)$ is continuous in $x$ on $\mc X$, and $h/a \to \rho \in (0, \infty)$. Then
	
	\begin{enumerate}[wide, labelindent=0pt]
		\item [(i)] $ \displaystyle	\sqrt{nh}(\widetilde{m}^{\sss TS}(\eta,x_0; h)  - m(\eta,x_0) - \widetilde{\mathcal{B}}^{\sss TS}(\eta,x_0,h) \big) \xrightarrow{d} \mathcal{N}(0, \widetilde{V}^{\sss TS}(\eta,x_0)),$
		where 
		\begin{align*}
			&\widetilde{\mathcal{B}}^{\sss TS}(\eta,x_0,h) = \frac{1}{2} \mu \partial_x^2m(\eta,x_0)h^2 + o_p(h^2), \\
			&\widetilde{V}^{\sss TS}(\eta,x_0) = \frac{\kappa}{\eta f_X(x_0)}\Var[Y|Y\leq Q(\eta,x_0), X=x_0].
		\end{align*}
		
		\item [(ii)] $ \displaystyle  \sqrt{nh}(\widehat{m}^{\sss TS}(\eta,x_0; a,h)  - m(\eta,x_0) - \mathcal{B}^{\sss TS}(\eta, x_0,a,h) \big) \xrightarrow{d} \mathcal{N}(0, V^{\sss TS}(\eta,x_0,\rho)),$
		where
		\begin{align*}
			&\mathcal{B}^{\sss TS}(\eta,x_0,a,h)  = \frac{1}{2} \mu \left( \partial_x^2m(\eta,x_0)h^2-C^{\sss TS}(\eta,x_0) \partial_x^2Q(\eta,x_0)(h^2-a^2)  \right) + o_p(h^2),\\
			&V^{\sss TS}(\eta,x_0,\rho)  = \frac{\kappa}{\eta f_X(x_0)} \Big(\Var[Y|Y\leq Q(\eta,x_0), X=x_0] + \rho  (1-\eta) \left(Q(\eta,x_0) - m(\eta,x_0)\right)^2 \Big)
		\end{align*}
		with $C^{\sss TS}(\eta,x_0)=\frac{d}{dy} \E[Y |X=x_0, Y\leq y ] \big|_{y=Q(\eta,x_0)} =\frac{1}{\eta}f_{Y|X}(Q(\eta,x_0)|x_0)(Q(\eta,x_0) - m(\eta,x_0))$.
	\end{enumerate}
\end{proposition}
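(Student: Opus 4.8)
The plan is to treat both estimators as weighted local linear regressions and to exploit the defining property of the conditional quantile. First I would observe that the oracle estimator $\widetilde{m}^{\sss TS}$ is a local linear regression of $Y_i$ on $X_i$ with the random weight $\khi\one{Y_i\leq Q(\eta,X_i)}$, and that by definition of the conditional $\eta$-quantile, $\E[\one{Y\leq Q(\eta,X)}|X=x]=\eta$, $\E[Y\one{Y\leq Q(\eta,X)}|X=x]=\eta\, m(\eta,x)$, and $\E[(Y-m(\eta,x))^2\one{Y\leq Q(\eta,X)}|X=x]=\eta\,\Var[Y|Y\leq Q(\eta,x),X=x]$. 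Hence the effective regression curve is $m(\eta,\cdot)$, with curvature $\partial_x^2 m$; the effective design mass is $\eta f_X$; and the effective error variance is $\eta\,\Var[Y|Y\leq Q(\eta,x),X=x]$. Applying the standard local linear expansion with these effective quantities, together with the moment bound in the analog of Assumption~\ref{ass:mean}(c) for a Lyapunov condition, delivers Part~(i). The factor $1/\eta$ in $\widetilde{V}^{\sss TS}$ arises because one power of $\eta$ from the effective error variance cancels against the $\eta^2$ from the squared effective mass $(\eta f_X)^2$. The kernel constants $\mu,\kappa$ specialize automatically to $\mu_2,\kappa_0$ or $\bar\mu,\bar\kappa$ according as $x_0$ is interior or on the boundary.

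For Part~(ii) I would expand the feasible estimator around the oracle. Writing $S_n=\frac1n\Sum\khi\widetilde{X}_{h,i}\widetilde{X}_{h,i}^\top w_i$ and $N_n=\frac1n\Sum\khi\widetilde{X}_{h,i}w_iY_i$ with $w_i=\one{Y_i\leq Q(\eta,X_i)}$, and their feasible counterparts with $\widehat w_i=\one{Y_i\leq\widehat{Q}(\eta,X_i;a)}$, the difference linearizes to $e_1^\top S_n^{-1}\Delta N_n - e_1^\top S_n^{-1}\Delta S_n\,S_n^{-1}N_n$, where $\Delta N_n,\Delta S_n$ are the ``flip'' sums built from $\widehat w_i-w_i$. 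These sums are supported on observations with $Y_i$ in a band of width $O_p(w_n)$ around $Q(\eta,X_i)$ and are handled exactly as in Lemma~\ref{lemma:precision1}: each equals its conditional mean given the first stage, $\frac1n\Sum\khi\widetilde{X}_{h,i}\,f_{Y|X}(Q(\eta,X_i)|X_i)(\widehat{Q}(\eta,X_i;a)-Q(\eta,X_i))$, plus a remainder of order $O_p(w_n^2)$, which is negligible since $a\asymp h$ forces $w_n=h^2+(nh)^{-1/2}$. Combining the two terms, the coefficient $Q(\eta,X_i)$ from $\Delta N_n$ merges with the $-m(\eta,x_0)$ coming from $-\Delta S_n S_n^{-1}N_n$, so the first-order effect of replacing $Q$ by $\widehat{Q}$ is governed by $f_{Y|X}(Q(\eta,X_i)|X_i)(Q(\eta,X_i)-m(\eta,x_0))$, that is, by $C^{\sss TS}(\eta,x_0)$ at $x_0$.

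I would then substitute $\widehat{Q}(\eta,X_i;a)-Q(\eta,X_i)=(\widehat q_0-q_0)+(\widehat q_1-q_1)(X_i-x_0)-\tfrac12\partial_x^2 Q(\eta,x_0)(X_i-x_0)^2+o(\cdot)$. The deterministic quadratic term, averaged over the second-stage ($h$) window, produces a bias proportional to $\partial_x^2 Q\cdot h^2$, while the first-stage estimation bias (from Lemma~\ref{lemma:Bahadur_quantile}, of order $a^2$ and proportional to $\partial_x^2 Q$) enters with the opposite sign; together they yield the $(h^2-a^2)$ factor in $\mathcal{B}^{\sss TS}$. The stochastic parts of $\widehat q_0-q_0,\widehat q_1-q_1$ I would replace by their Bahadur representation; since the $f_{Y|X}(Q(\eta,x_0)|x_0)$ appearing there cancels the $f_{Y|X}$ in the coefficient, the propagated noise is proportional to $(Q(\eta,x_0)-m(\eta,x_0))$ times a kernel-weighted average of $(\one{Y_i\leq Q(\eta,X_i)}-\eta)$ at bandwidth $a$. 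Converting its variance from the $na$-scale to the $nh$-scale introduces the factor $\rho=\lim h/a$, giving the additional variance $\frac{\kappa}{\eta f_X(x_0)}\rho(1-\eta)(Q-m)^2$.

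Finally, to assemble Part~(ii) I would establish joint asymptotic normality of the oracle score and the propagated first-stage score. The crucial observation is that their conditional covariance vanishes: $\E[(Y-m(\eta,x))\one{Y\leq Q(\eta,x)}(\one{Y\leq Q(\eta,x)}-\eta)|X=x]=(1-\eta)\E[(Y-m(\eta,x))\one{Y\leq Q(\eta,x)}|X=x]=0$ by the definition of $m(\eta,x)$. This is exactly why $V^{\sss TS}$ is the clean sum of the oracle variance and the first-stage term, with no cross term, mirroring the orthogonality exploited for $\widehat{m}$; a joint Lyapunov CLT for the two kernel-weighted sums then gives the stated normal limit. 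I expect the main obstacle to be the uniform control of the flip sums $\Delta N_n,\Delta S_n$ over the data-dependent band determined by $\widehat{Q}$ --- showing they are well approximated by their conditional means, for which the stochastic-equicontinuity machinery behind Lemmas~\ref{lemma:precision1}, \ref{lemma:equi1}, and \ref{lemma:equi2} is needed --- together with the careful bookkeeping that separates the $h^2$ linearization bias from the $a^2$ first-stage bias so as to recover the exact $(h^2-a^2)$ coefficient and the constant $C^{\sss TS}$.
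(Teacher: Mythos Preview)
Your proposal is correct and follows essentially the same approach as the paper: Part~(i) via the standard local linear expansion with effective design mass $\eta f_X$, and Part~(ii) by replacing the flip sums by their conditional means through the equicontinuity machinery (the paper's Lemma~\ref{lemma:precision3}), inserting the Bahadur representation of $\widehat q_0-q_0$, and using the zero conditional covariance between $(Y-m)\one{Y\le Q}$ and $\one{Y\le Q}-\eta$ to decouple the variances. The only organizational difference is that the paper centers at the linear approximation $m^*$ and uses the intermediate $Q^*$ (so the $-C^{\sss TS}\partial_x^2Q\,h^2$ piece appears via $\tfrac{d^2}{dx^2}\E[u^*\mid X=x]$), whereas you linearize $\widehat m^{\sss TS}-\widetilde m^{\sss TS}$ directly; one minor imprecision is that the remainder after subtracting the conditional mean of the flip sums is $o_p(r_n)$ rather than $O_p(w_n^2)$, but you correctly flag that the equicontinuity lemmas are what make this step rigorous.
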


Both bandwidths appear in the bias formula, and the ratio $\rho$ appears in the asymptotic variance.  When $\rho$ is small, i.e. $a$ is large relative to $h$, then the variance of the feasible estimator is close to the variance of the oracle estimator because $V^{\sss TS}(\eta, x_0,0)=\widetilde{V}^{\sss TS}(\eta,x_0)$. The bias $\mathcal{B}^{\sss TS}(\eta,x_0,a,h)$ differs from the oracle bias due to the fact that, first, $Q(\eta,\ccdot)$ is replaced by its local linear approximation, and, second, this approximation is estimated.

\begin{remark}
	The two-stage procedure yielding $\widehat{m}^{\sss TS}$ with equal bandwidths provides an intuitive decomposition of the asymptotic variance $V(\eta,x_0)$ defined in Theorem~\ref{th:asy_distribution}. The asymptotic variance of the infeasible local linear estimator  equals  the first component of $V(\eta,x_0)$. The second, strictly positive, component of $V(\eta,x_0)$ is due to the first-step estimation.\footnote{An analogous decomposition holds for the unconditional truncated mean. A similar point is also discussed by \citet[Remark 2.9]{dimitriadis2019joint} in a parametric model.} 
\end{remark}

\section{Proofs of the Results in the Online Appendix}

\subsection{Proof of Proposition \ref{prop:WNW}}
\begin{proof}
	Note that
	\begin{align*}
	l(x)&  \equiv \E[m(\eta,X) - \psi(\eta,Q(\eta,x_0))|X=x]  = \frac{1}{\eta} \int_{Q(\eta,x_0)}^{Q(\eta,x)} (y-Q(\eta,x_0)) f_{Y|X}(y|x)dy.
	\end{align*}
	By the Leibniz integral rule, it holds that
	\begin{align*}
	l'(x) & = \frac{1}{\eta} \partial^1_x Q(\eta,x)(Q(\eta,x)-Q(\eta,x_0))f_{Y|X}(Q(\eta,x)|x) + \frac{1}{\eta} \int_{Q(\eta,x_0)}^{Q(\eta,x)} (y-Q(\eta,x_0)) \partial_{x} f_{Y|X}(y|x)dy.
	\end{align*}
	The proof is concluded by noting that
	\[
	l''(x_0) = \frac{1}{\eta} \big(\partial^1_x Q(\eta,x_0)\big)^2 f_{Y|X}(Q(\eta,x_0)|x_0).\qedhere
	\]
\end{proof}

\subsection{Proofs of Propositions \ref{prop:NM} and  \ref{prop:TS}}
An essential result used to prove these two propositions, not required for the proof of Lemma~\ref{lemma:asy_equivalence}, are the approximate first-order conditions of the local linear quantile estimator.

\begin{lemma}\label{lemma:AFOC}
	Suppose that Assumptions~\ref{ass:ass1}~and~\ref{ass:kernel} hold. Then for $j \in \{0,1\}$, it holds that
	\[
	\frac{1}{n}\Sum \khi X^j_{h,i} (\eta - \one{Y_i \leq \widehat{Q}(\eta,X_i;h)}  ) =  O_p( (nh)^{-1}).
	\]
\end{lemma}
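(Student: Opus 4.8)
The plan is to read the claim as the subgradient optimality (``approximate first-order'') condition of the convex, piecewise-linear program defining the local linear quantile estimator, and to show that the only obstruction to an exact first-order condition comes from the handful of observations lying exactly on the fitted line. First I would recall that $\widehat{q}(\eta;h)=(\widehat{q}_0(\eta;h),\widehat{q}_1(\eta;h))^\top$ minimizes the convex function $\Psi(\beta)=\Sum \khi \rho_\eta(Y_i-\beta_0-\beta_1(X_i-x_0))$, which attains a minimizer on an event of probability tending to one (it suffices that the window $\{i:\khi>0\}$ contains at least two distinct covariate values, which holds since $nh\to\infty$ and $f_X$ is positive). Since $\rho_\eta$ is convex with subdifferential $\partial\rho_\eta(v)=\{\eta-\one{v\le 0}\}$ for $v\neq 0$ and $\partial\rho_\eta(0)=[\eta-1,\eta]$, optimality gives $0\in\partial\Psi(\widehat{q})$: there exist $g_i\in\partial\rho_\eta(\widehat{v}_i)$, with $\widehat{v}_i\equiv Y_i-\widehat{Q}(\eta,X_i;h)$, such that $\Sum \khi X_{h,i}^j g_i=0$ for $j\in\{0,1\}$ (for $j=1$ the factor $(X_i-x_0)=hX_{h,i}$ contributes a factor $h$ that cancels).

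Next I would isolate the discrepancy between $g_i$ and the indicator expression in the statement. For every $i$ with $\widehat{v}_i\neq 0$ one has exactly $g_i=\eta-\one{Y_i\le \widehat{Q}(\eta,X_i;h)}$, while for $i$ with $\widehat{v}_i=0$ the two quantities may differ but both lie in $[\eta-1,\eta]$, so $|\eta-\one{Y_i\le\widehat{Q}(\eta,X_i;h)}-g_i|\le 1$. Subtracting the optimality identity $\frac{1}{n}\Sum\khi X_{h,i}^j g_i=0$ from the target sum therefore collapses it onto the active set:
\[
\frac{1}{n}\Sum \khi X^j_{h,i}\big(\eta-\one{Y_i\le\widehat{Q}(\eta,X_i;h)}\big) = \frac{1}{n}\sum_{i:\,\widehat{v}_i=0}\khi X_{h,i}^j\big(\eta-\one{Y_i\le\widehat{Q}(\eta,X_i;h)}-g_i\big).
\]

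Finally I would bound this residual. By Assumption~\ref{ass:kernel}(a), $\khi\le\|k\|_\infty/h$, and $\khi\neq0$ forces $|X_{h,i}|\le 1$, so each active term is $O(1/h)$ in absolute value; together with the bound $|\eta-\one{\cdot}-g_i|\le1$ this gives $\big(\#\{i:\widehat{v}_i=0\}\big)\cdot\|k\|_\infty/(nh)$. The key step, and the main obstacle to make fully rigorous, is the counting of the active set: under Assumption~\ref{ass:ass1}(a) the $(X_i,Y_i)$ are jointly continuous, so almost surely no three of them are collinear, and hence at most $p+1=2$ observations can lie on the fitted line $y=\widehat{q}_0+\widehat{q}_1(x-x_0)$, i.e.\ $\#\{i:\widehat{v}_i=0\}\le 2$ almost surely. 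This is exactly the standard fact that a quantile-regression fit interpolates $p+1$ design points; the care needed is to argue existence of the minimizer, validity of the subgradient characterization at it, and generic non-degeneracy of the active set simultaneously on a single event of probability approaching one. Granting this, the sum has at most two terms and is bounded by $2\|k\|_\infty/(nh)=O((nh)^{-1})$, which yields the claimed $O_p((nh)^{-1})$ for both $j=0$ and $j=1$.
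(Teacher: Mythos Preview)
Your argument is correct and essentially identical to the paper's proof: both exploit the convex optimality condition at $\widehat q(\eta;h)$, observe that the discrepancy from an exact first-order condition is supported on $\{i:\widehat v_i=0\}$, bound each such term by $O(1/h)$ via the kernel's boundedness and compact support, and conclude using the fact that at most two sample points are collinear almost surely. The only cosmetic difference is that the paper phrases optimality through the pair of inequalities $\partial^-_{b_j}G_n(\widehat q)\le 0\le\partial^+_{b_j}G_n(\widehat q)$ on one-sided directional derivatives, whereas you invoke the subdifferential inclusion $0\in\partial\Psi(\widehat q)$ directly; these are equivalent for convex piecewise-linear objectives.
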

\begin{proof}
	Similar claims have been proven by \citet[Theorem 3.3]{koenker1978regression}  and \citet[Theorem 1]{Ruppert1980}. Let
	\[
	G_n(b) = \frac{1}{n} \Sum \khi \rho_\eta(Y_i'(b)),
	\]
	where $\rho_\eta(v)=v[\eta-\one{v\leq0}]$. 
	It holds that 
	$\partial^+_v\rho_\eta(v)  = \eta - \one{v < 0}$ and $\partial^-_v\rho_\eta(v)  = \eta - \one{v \leq 0}$.
	Therefore, also the left and right derivatives of the criterion function exist. For $j \in \{0,1\}$, 
	\begin{align*}
		\partial^+_{b_j}G_n(b) &  =  \frac{h^j}{n}  \Sum \khi  \Xihj   \left( ( \one{ Y_i'(b) < 0 } - \eta )\one{\Xihj < 0} +  ( \one{ Y_i'(b) \leq 0 } - \eta )\one{ 0 < \Xihj }\right), \\
		\partial^-_{b_j}G_n(b) & =  \frac{h^j}{n}  \Sum \khi  \Xihj \left( (\one{ Y_i'(b) \leq 0 } - \eta )\one{\Xihj < 0} +  ( \one{ Y_i'(b) < 0 } - \eta )\one{0 < \Xihj } \right).
	\end{align*}	
	At the minimum, it holds that $\partial^-_{b_j} G_n(\widehat{q}(\eta) ) \leq 0 \leq\partial^+_{b_j} G_n(\widehat{q}(\eta)  ) $. Using these inequalities, I~obtain the following bounds on the expression of interest:
	\begin{align*}
		0 & \leq \frac{h^j}{n} \Sum \khi   \Xihj \left( \one{Y_i \leq \widehat{Q}(\eta,X_i;h)} - \eta -  \one{Y_i= \widehat{Q}(\eta,X_i;h)}\one{\Xihj < 0} \right) \\
		& \leq  \partial^+_{b_j} G_n(\widehat{q}(\eta)) -\partial^-_{b_j} G_n(\widehat{q}(\eta) )\\
		& =  \frac{h^j}{n} \Sum \khi  \Xihj \left(  \one{Y_i = \widehat{Q}(\eta,X_i;h) } \one{ 0 \leq \Xihj } - \one{Y_i = \widehat{Q}(\eta,X_i;h)}  \one{\Xihj < 0} \right).
	\end{align*}
	The lemma follows from the facts that $k$ is bounded with bounded support and 
	\[
	\Sum \one{Y_i = \widehat{Q}(\eta,X_i;h) } \leq 2 \text{ w.p. 1}
	\]
	because the probability of having three collinear points in a sample is equal zero.
\end{proof}

\begin{lemma}
Suppose that Assumptions \ref{ass:ass1}, \ref{ass:quantile}, and \ref{ass:kernel} hold.
If additionally $x_0$ is an interior point, $f_X(x)$ is continuously differentiable, and $j$ is odd, then
$S_{n,j} = O_p(h+(nh)^{-1/2})$.
\end{lemma}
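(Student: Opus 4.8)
The plan is to split $S_{n,j}$ into its expectation and a centered stochastic term, $S_{n,j}=\E[S_{n,j}]+(S_{n,j}-\E[S_{n,j}])$, and to show separately that the first term is $O(h)$ and the second is $O_p((nh)^{-1/2})$. The role of the extra hypotheses is to sharpen the $o_p(1)$ bound implicit in Lemma~\ref{lemma:basics}(i): for $j$ odd the leading kernel constant $\mu_j=\int v^jk(v)\,dv$ vanishes by symmetry of $k$, and because $x_0$ is interior this cancellation is not spoiled by a one-sided domain of integration.

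For the bias, writing $X_h=(X-x_0)/h$ and substituting $v=(x-x_0)/h$ gives
\[
\E[S_{n,j}]=\E\big[k_h(X-x_0)X_h^{\,j}\big]=\int k(v)\,v^{j} f_X(x_0+hv)\,dv.
\]
Since $x_0$ is interior and $h\to0$, for $n$ large $[x_0-h,x_0+h]\subset\mc X$, so the integral runs over the full symmetric support $[-1,1]$. A first-order Taylor expansion with the mean-value form of the remainder, $f_X(x_0+hv)=f_X(x_0)+hv\,f_X'(\tilde x_v)$ with $\tilde x_v$ between $x_0$ and $x_0+hv$, yields
\[
\E[S_{n,j}]=f_X(x_0)\,\mu_j+h\int k(v)\,v^{\,j+1} f_X'(\tilde x_v)\,dv.
\]
Here $\mu_j=0$ because $j$ is odd and $k$ is symmetric, while continuous differentiability of $f_X$ makes $f_X'(\tilde x_v)\to f_X'(x_0)$ uniformly over $v\in[-1,1]$, so the remaining integral is bounded. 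Hence $\E[S_{n,j}]=O(h)$.

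For the stochastic term, the usual kernel calculation bounds the variance by
\[
\Var[S_{n,j}]\le\frac1n\,\E\big[k_h(X-x_0)^2 X_h^{\,2j}\big]=\frac1{nh}\int k(v)^2 v^{2j} f_X(x_0+hv)\,dv=O\big((nh)^{-1}\big),
\]
so Chebyshev's inequality gives $S_{n,j}-\E[S_{n,j}]=O_p((nh)^{-1/2})$. Combining the two pieces yields $S_{n,j}=O(h)+O_p((nh)^{-1/2})=O_p\big(h+(nh)^{-1/2}\big)$, as claimed.

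The only delicate step—and the reason both the interior and the oddness assumptions are needed—is ensuring that the kernel integral in the bias runs over the symmetric set $[-1,1]$, so that $\mu_j$ cancels; at a boundary point the effective domain would be one-sided, the odd moment would not vanish, and the bias would be $O(1)$ instead of $O(h)$. Continuous differentiability of $f_X$ is exactly what upgrades the leftover term from $o(1)$ to the sharp $O(h)$ order.
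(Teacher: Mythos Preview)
Your proof is correct and is precisely the ``standard kernel calculation'' the paper alludes to: the bias--variance split, the change of variables $v=(x-x_0)/h$, the use of symmetry to kill $\mu_j$ for odd $j$ at an interior point, and the $O((nh)^{-1})$ variance bound are exactly the ingredients implied by the paper's one-line proof.
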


\begin{proof}
	Standard kernel calculations.
\end{proof}

\begin{proof}[Proof of Proposition \ref{prop:NM}]
Since $a=h$, $w_n=h^2+(nh)^{-1/2}\equiv r_n$.	It holds that
	\[
	\widehat{m}^{\sss NM}(\eta,x_0;h,h) -\widehat{m}(\eta,x_0;h,h) = \frac{S_{n,2}(T_{n,0} - \Psi_{n,0}(h)) - S_{n,1}(T_{n,1}- \Psi_{n,1}(h)) }{S_{n,2} S_{n,0} - S_{n,1}^2 }
	\]
	where  $T_{n,j}=\frac{1}{n} \Sum \khi \Xihj \frac{1}{\eta} Y_i \one{Y_i \leq \widehat{Q}(\eta,X_i;h)}$, and $\Psi_{n,j}(h)$ is defined in the proof of Lemma~\ref{lemma:asy_equivalence}.
	From Lemma \ref{lemma:AFOC} it immediately follows that
	\begin{align*}
		T_{n,0} - \Psi_{n,0}(h) & = O_p((nh)^{-1}), \\
		T_{n,1} - \Psi_{n,1}(h) & = \frac{1}{\eta} \widehat{q}_1(\eta;h) \frac{h}{n} \Sum \khi X_{h,i}^2 ( \one{Y_i \leq \widehat{Q}(\eta,X_i;h)} - \eta ) + O_p((nh)^{-1}) \\
		& =  \frac{1}{\eta} q_1(\eta) \frac{h}{n} \Sum \khi X_{h,i}^2 ( \one{Y_i \leq \widehat{Q}(\eta,X_i;h)} - \eta ) + O_p(r_n^2).
	\end{align*}
	Hence, 
	\[
	\widehat{m}^{\sss NM}(\eta,x_0;h,h) -\widehat{m}(\eta,x_0;h,h) = hS_{n,1}q_1(\eta)O_p(r_n) + O_p(r_n^2),
	\]
	which, combined with Lemma \ref{lemma:basics}, concludes the proof.
\end{proof}

\begin{proof}[Proof of Proposition \ref{prop:TS}]
	It holds that
	\[
	\widehat{m}^{\sss TS}(\eta,x_0;h,h) = \frac{\widehat{S}_{n,2} T_{n,0} - \widehat{S}_{n,1}T_{n,1}}{\widehat{S}_{n,2} \widehat{S}_{n,0} - \widehat{S}_{n,1}^2 },
	\]
	where $\widehat{S}_{n,j}=\frac{1}{\eta n} \Sum \khi \Xihj \one{Y_i \leq \widehat{Q}(\eta,X_i;h)}$, and $T_{n,j}$ is defined in the proof of Proposition~\ref{prop:NM}. It holds that
	\[
	\widehat{S}_{n,2} \widehat{S}_{n,0} - \widehat{S}_{n,1}^2 = S_{n,2} S_{n,0} - S_{n,1}^2 + O_p( r_n ).
	\]
	Let $m^*(\eta,x)= m(\eta,x_0) + \partial_x^1m(\eta,x_0)(x-x_0)$. By plugging in the expression
	$	Y_i = m^*(\eta,X_i)  + (Y_i-m^*(\eta,X_i))$ in the definition of $\widehat{m}^{\sss TS}(\eta,x_0;h,h)$, it follows that
	\[
	\widehat{m}^{\sss TS}(\eta,x_0;h,h) = m(\eta,x_0) + \frac{\widehat{S}_{n,2} U_{n,0} - \widehat{S}_{n,1}U_{n,1} }{\widehat{S}_{n,2} \widehat{S}_{n,0} - \widehat{S}_{n,1}^2 },
	\]
	where $U_{n,j}=\frac{1}{n} \Sum \khi \Xihj \frac{1}{\eta} (Y_i-m^*(\eta,X_i)) \one{Y_i \leq \widehat{Q}(\eta,X_i;h)}$.
	
	Lemma \ref{lemma:precision1} yields that for $j \in \{0,1\}$
	\begin{align*}
		U_{n,j}	& =\frac{1}{n} \Sum \khi \Xihj \frac{1}{\eta} (Y_i-Q(\eta,X_i)) \one{Y_i \leq Q(\eta,X_i)}, \\
		& + \frac{1}{n} \Sum \khi \Xihj \frac{1}{\eta} (Q(\eta,X_i) - m^*(\eta,X_i) ) \one{Y_i \leq \widehat{Q}(\eta,X_i;h)} + O_p(r_n^2).
	\end{align*}
	
	Moreover, by Lemmas~\ref{lemma:AFOC} and~\ref{lemma:precision1}, it holds that
	\begin{align*}
		&\frac{1}{n} \Sum \khi  \frac{1}{\eta} (Q(\eta,X_i) - m^*(\eta,X_i) ) \{ \one{Y_i \leq \widehat{Q}(\eta,X_i;h)} - \eta \} = O_p(r_n^2),\\
		&\frac{1}{n} \Sum \khi  X_{h,i} \frac{1}{\eta} (Q(\eta,X_i) - m^*(\eta,X_i) ) \{ \one{Y_i \leq \widehat{Q}(\eta,X_i;h)} - \eta \} \\
		&\quad = \frac{1}{\eta}  h(\partial_{x}^1 Q(\eta,x_0) - \partial_{x}^1 m(\eta,x_0)) \frac{1}{n} \Sum \khi X^2_{h,i} \{ \one{Y_i \leq \widehat{Q}(\eta,X_i;h)} - \eta \} + O_p(r_n^2)\\
		&\quad = h(\partial_{x}^1 Q(\eta,x_0) - \partial_{x}^1 m(\eta,x_0)) O_p(r_n) + O_p(r_n^2).
	\end{align*}
	
	Hence,
	\begin{align*}
		U_{n,0}	& =\frac{1}{n} \Sum \khi \frac{1}{\eta} (Y_i-Q(\eta,X_i)) \one{Y_i \leq Q(\eta,X_i)} +Q(\eta,X_i) - m^*(\eta,X_i)+O_p(r_n^2),\\
		U_{n,1}	& =\frac{1}{n} \Sum k_{h,i} X_{h,i} \frac{1}{\eta} (Y_i-Q(\eta,X_i)) \one{Y_i \leq Q(\eta,X_i)} +Q(\eta,X_i) - m^*(\eta,X_i) \\  & + h(\partial_{x}^1 Q(\eta,x_0) - \partial_{x}^1 m(\eta,x_0)) O_p(r_n) +O_p(r_n^2).
	\end{align*}
	In particular, $U_{n,j}=O_p(r_n)$, and hence
	\begin{align*}
		\widehat{m}^{\sss TS}(\eta,x_0;h,h) & = m(\eta,x_0) + \frac{S_{n,2} U_{n,0} - S_{n,1}U_{n,1} }{S_{n,2} S_{n,0} - S_{n,1}^2 } \\
		& = \widetilde{m}(\eta,x_0;h) + hS_{n,1}(\partial_{x}^1 Q(\eta,x_0) - \partial_{x}^1 m(\eta,x_0)) O_p(r_n) + O_p(r_n^2),
	\end{align*}
	which, combined with Lemma \ref{lemma:basics}, concludes the proof.
\end{proof}

\subsection{Proofs of Propositions \ref{prop:A:NM} and \ref{prop:A:TS}}
To prove these propositions, an explicit expansion of the estimators in the coefficients defining the trimming function is required.

\begin{lemma}
	Suppose that Assumptions~\ref{ass:ass1}, \ref{ass:quantile}, and \ref{ass:kernel} hold, and $\partial^2_xQ(\eta,x)$ is continuous in $x$ on $\mc X$. Then
	\begin{align*}
		\widehat{q}_0(\eta;a) - q_0(\eta) & = \frac{1}{2} \mu\,\partial_x^2Q(\eta,x_0) a^2 + \frac{\frac{1}{n}\Sum k_{a,i}( \mu_2 -  \mu_1  X_{a,i}) [\eta - \one{Y_i \leq Q(\eta,X_i)}]}{f_{Y|X}( q_0(\eta) |x_0)  f_X(x_0)(\mu_2\mu_0-\mu_1^2)} +  o_p(r_n).
	\end{align*}
\end{lemma}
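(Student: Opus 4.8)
The plan is to split the error into a deterministic smoothing-bias piece and a stochastic linearization piece,
\[
\widehat{q}_0(\eta;a)-q_0(\eta)=\big(q_0^*(\eta;a)-q_0(\eta)\big)+\big(\widehat{q}_0(\eta;a)-q_0^*(\eta;a)\big),
\]
where $q^*(\eta;a)$ is the bandwidth-dependent estimand from \eqref{eq:LLQestimand}. The first bracket will supply the $\tfrac12\mu\,\partial_x^2Q(\eta,x_0)a^2$ term and the second the averaged influence function. Throughout I write $r_n=a^2+(na)^{-1/2}$, which under $h\asymp a$ is $\asymp h^2+(nh)^{-1/2}$, consistent with the $r_n$ used in the proof of Theorem~\ref{th:esteta}.

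For the bias bracket I would sharpen Lemma~\ref{lemma:q_star_eta_hat}(i), which only gives $O(a^2)$, into an exact leading constant. Inserting $q_0^*(\eta;a)=q_0(\eta)+\beta_0a^2+o(a^2)$ and $q_1^*(\eta;a)=q_1(\eta)+\beta_1a+o(a)$ into the population first-order condition
\[
\int k(v)\begin{bmatrix}1\\ v\end{bmatrix}\big(F_{Y|X}(Q^*(\eta,x_0+va;a)\mid x_0+va)-\eta\big)f_X(x_0+va)\,dv=0,
\]
and Taylor-expanding $Q(\eta,x_0+va)=q_0(\eta)+q_1(\eta)va+\tfrac12\partial_x^2Q(\eta,x_0)v^2a^2+o(a^2)$ (this is precisely where the strengthened hypothesis that $\partial_x^2Q(\eta,\cdot)$ is continuous is used, in place of the Lipschitz condition of Assumption~\ref{ass:quantile}(a)), a one-term mean-value expansion of $F_{Y|X}$ around $Q(\eta,\cdot)$ collapses the condition to a linear system in $(\beta_0,\beta_1)$ whose solution is $\beta_0=\tfrac12\mu\,\partial_x^2Q(\eta,x_0)$. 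In the interior this invokes $\mu_1=0$ and $\mu=\mu_2$; on the boundary the identical computation with the one-sided moments $\bar\mu_j$ returns $\mu=\bar\mu$, matching the kernel constants fixed before Theorem~\ref{th:asy_distribution}.

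For the stochastic bracket I would invoke the Bahadur representation already established in the proof of Lemma~\ref{lemma:Bahadur_est_quantile}, specialized to the fixed level $u=\eta$: with $\widehat{\theta}_n(\eta)=\sqrt{na}\,\big(\widehat{q}_0(\eta;a)-q_0^*(\eta;a),\,a(\widehat{q}_1(\eta;a)-q_1^*(\eta;a))\big)^\top$ one has $\widehat{\theta}_n(\eta)=S^{-1}W_n(\eta)+o_p(1)$, where $S=f_{Y|X}(q_0(\eta)\mid x_0)f_X(x_0)\big[\begin{smallmatrix}\mu_0&\mu_1\\ \mu_1&\mu_2\end{smallmatrix}\big]$ and, with $v_n=(na)^{-1/2}$ and $\widetilde X_{a,i}=(1,X_{a,i})^\top$, $W_n(\eta)=v_n\sum_i k(X_{a,i})\widetilde X_{a,i}\big(\eta-\one{Y_i\le Q^*(\eta,X_i;a)}\big)$. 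Reading off the first coordinate, using $e_1^\top S^{-1}=\big(f_{Y|X}(q_0(\eta)\mid x_0)f_X(x_0)(\mu_0\mu_2-\mu_1^2)\big)^{-1}(\mu_2,-\mu_1)$, and carrying out the routine $1/a$ bookkeeping when dividing by $\sqrt{na}$, I obtain
\[
\widehat{q}_0(\eta;a)-q_0^*(\eta;a)=\frac{\tfrac1n\sum_i k_{a,i}(\mu_2-\mu_1X_{a,i})\big(\eta-\one{Y_i\le Q^*(\eta,X_i;a)}\big)}{f_{Y|X}(q_0(\eta)\mid x_0)f_X(x_0)(\mu_2\mu_0-\mu_1^2)}+o_p(r_n),
\]
which is the claimed expression except that the indicator is centered at $Q^*$ rather than at the true $Q(\eta,\cdot)$.

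The last step, replacing $\one{Y_i\le Q^*(\eta,X_i;a)}$ by $\one{Y_i\le Q(\eta,X_i)}$, is the genuine obstacle. The resulting difference is $\tfrac1n\sum_i k_{a,i}(\mu_2-\mu_1X_{a,i})\big(\one{Y_i\le Q(\eta,X_i)}-\one{Y_i\le Q^*(\eta,X_i;a)}\big)$, and since by Lemma~\ref{lemma:q_star_eta_hat}(i) the two thresholds differ by $O(a^2)$ uniformly on $\mathcal X_a$, a crude application of Lemma~\ref{lemma:precision1}(iii) only bounds it by $O_p(a^2)$, which is \emph{not} negligible at order $r_n$. The key observation I would stress is that its expectation is exactly zero: the conditional mean equals the weight times $\eta-F_{Y|X}(Q^*(\eta,X_i;a)\mid X_i)$, and integrating $k_{a,i}$ and $k_{a,i}X_{a,i}$ against this reproduces precisely the two population first-order conditions defining $q^*(\eta;a)$, so both integrals vanish. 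Only the centered fluctuation remains, and a variance bound of the type used in Lemma~\ref{lemma:precision1} gives $\Var=O(a/n)$, whence the term is $O_p\big((a/n)^{1/2}\big)=O_p\big(a\,(na)^{-1/2}\big)=o_p(r_n)$. Combining the three pieces yields the stated representation. The main difficulty, then, is not the linearization but this replacement: it is the first-order condition for $q^*$—the population analogue of the orthogonality exploited elsewhere—that annihilates the otherwise non-negligible $O(a^2)$ mean and leaves only a higher-order fluctuation.
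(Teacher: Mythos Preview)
Your argument is correct. The paper's own proof is a one-line citation: ``This representation follows from the proof of Theorem~2 of \citet{fan1994robust}.'' Your proposal effectively unpacks that reference using the machinery already built in the paper.

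The only presentational difference worth noting is the choice of centering. A direct application of \citet{fan1994robust} would center the reparametrized criterion at the true $q(\eta)$, so that the bias term $\tfrac12\mu\,\partial_x^2Q(\eta,x_0)a^2$ and the stochastic score emerge simultaneously from a single quadratic approximation (with the indicator already at the true $Q(\eta,X_i)$). You instead center at the bandwidth-dependent estimand $q^*(\eta;a)$, which cleanly separates the deterministic smoothing bias (your first bracket) from the stochastic linearization (your second bracket), at the cost of having to swap $\one{Y_i\le Q^*(\eta,X_i;a)}$ back to $\one{Y_i\le Q(\eta,X_i)}$ afterwards. Your handling of that swap---observing that the mean of the difference is \emph{exactly} zero because the weight $(\mu_2-\mu_1 X_{a,i})$ is the linear combination that reproduces the population first-order condition defining $q^*(\eta;a)$, leaving only a fluctuation of order $O_p(a(na)^{-1/2})=o_p(r_n)$---is correct and is the one step that is not entirely routine. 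Either route delivers the stated representation; yours has the advantage of reusing Lemmas~\ref{lemma:q_star_eta_hat} and~\ref{lemma:Bahadur_est_quantile} rather than reproving them from scratch.
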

\begin{proof}
	This representation follows from the proof of Theorem~2 of \citet{fan1994robust}.
\end{proof}

\begin{lemma}\label{lemma:precision3}  Suppose that Assumptions~\ref{ass:ass1}, \ref{ass:quantile}, and \ref{ass:kernel} hold. Then for $j \in \mathbb{N}$ it holds that
	\begin{align*} 
		 \frac{1}{n} \Sum & \khi \Xihj  \one{Y_i \leq \widehat{Q}(\eta,X_i;a) } = \frac{1}{n}	\Sum \khi \Xihj \one{Y_i \leq Q^*(\eta,X_i)} \\
		& + \frac{1}{n} \sum_{i=1}^n \khi \Xihj f_{Y|X}(Q(\eta,x_0)|x_0) \left( \widehat{q}_0(\eta;a)-q_0(\eta) + (\widehat{q}_1(\eta;a)-q_1(\eta))(X_i-x_0) \right) + o_p(r_n).
	\end{align*}
\end{lemma}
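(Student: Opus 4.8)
The plan is to linearize $\one{Y_i \le \widehat{Q}(\eta,X_i;a)}$ around $\one{Y_i \le Q^*(\eta,X_i)}$ via a drift-plus-fluctuation decomposition, extract the leading drift, and show the fluctuation is negligible by stochastic equicontinuity. Since $a \asymp h$ here (so $w_n \asymp r_n$), Lemma~\ref{lemma:Bahadur_quantile} gives $\widehat{q}_0(\eta;a)-q_0(\eta)=O_p(r_n)$ and $\widehat{q}_1(\eta;a)-q_1(\eta)=O_p(r_n/h)$, so that the pointwise perturbation
\[
\Delta_i \equiv \widehat{Q}(\eta,X_i;a)-Q^*(\eta,X_i)=\big(\widehat{q}_0(\eta;a)-q_0(\eta)\big)+\big(\widehat{q}_1(\eta;a)-q_1(\eta)\big)(X_i-x_0)
\]
satisfies $\sup_{X_i\in\mathcal{X}_h}|\Delta_i|=O_p(r_n)$. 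First I would reparametrize by a deterministic $\beta=(\beta_0,\beta_1)$ and study, over the shrinking neighborhood $\mathcal{N}_n=\{|\beta_0|\le Mr_n,\ h|\beta_1|\le Mr_n\}$, the process
\[
H_{n,j}(\beta)=\frac1n\Sum \khi \Xihj\left(\one{Y_i\le Q^*(\eta,X_i)+\beta_0+\beta_1(X_i-x_0)}-\one{Y_i\le Q^*(\eta,X_i)}\right),
\]
whose value at $\beta=(\widehat{q}_0-q_0,\widehat{q}_1-q_1)$ is exactly the difference between the two sides of the claim, apart from the drift term. By Lemma~\ref{lemma:Bahadur_quantile} this random $\beta$ lies in $\mathcal{N}_n$ with probability tending to one for $M$ large.

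Second, I would split $H_{n,j}(\beta)=\nu_{n,j}(\beta)+\bar{H}_{n,j}(\beta)$, where the conditional-mean part is $\bar{H}_{n,j}(\beta)=\frac1n\Sum\khi\Xihj\big[F_{Y|X}(Q^*(\eta,X_i)+\beta_0+\beta_1(X_i-x_0)|X_i)-F_{Y|X}(Q^*(\eta,X_i)|X_i)\big]$ and $\nu_{n,j}=H_{n,j}-\bar{H}_{n,j}$ is centered. For the drift, a mean-value expansion in $y$ writes the bracket as $f_{Y|X}(\xi_i|X_i)(\beta_0+\beta_1(X_i-x_0))$ with $\xi_i$ between $Q^*(\eta,X_i)$ and $Q^*(\eta,X_i)+\beta_0+\beta_1(X_i-x_0)$; on $\mathcal{X}_h$ with $\beta\in\mathcal{N}_n$ both $\xi_i\to Q(\eta,x_0)$ and $X_i\to x_0$, so continuity of $f_{Y|X}$ near $(x_0,Q(\eta,x_0))$ (Assumption~\ref{ass:quantile}(c)) yields $f_{Y|X}(\xi_i|X_i)=f_{Y|X}(Q(\eta,x_0)|x_0)+o(1)$ uniformly. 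Combined with $\frac1n\Sum\khi|\Xihj|=O_p(1)$ (Lemma~\ref{lemma:basics}) and $|\beta_0|+h|\beta_1|=O(r_n)$, this gives $\bar{H}_{n,j}(\beta)=f_{Y|X}(Q(\eta,x_0)|x_0)\,\frac1n\Sum\khi\Xihj(\beta_0+\beta_1(X_i-x_0))+o_p(r_n)$ uniformly over $\mathcal{N}_n$, which reproduces the claimed leading term after substituting $\beta=(\widehat{q}_0-q_0,\widehat{q}_1-q_1)$.

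Third, and this is the crux, I would show $\sup_{\beta\in\mathcal{N}_n}|\nu_{n,j}(\beta)|=o_p(r_n)$. Pointwise this is direct: the summands are i.i.d.\ and mean-zero given the covariates, and bounding $|\one{Y_i\le Q^*+\delta_i}-\one{Y_i\le Q^*}|$ by the indicator that $Y_i$ lies in a band of width $|\delta_i|=O(r_n)$, together with $C_f<\infty$, gives $\Var[\nu_{n,j}(\beta)]=O(r_n/(nh))$, hence $\nu_{n,j}(\beta)=O_p((r_n/(nh))^{1/2})=o_p(r_n)$ because $r_n\ge(nh)^{-1/2}$ forces $nh\,r_n\to\infty$. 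To upgrade this to uniformity, I would run the grid argument already used in Lemma~\ref{lemma:equi1}, following \citet{Bickel1975} and \citet{RuppertCarroll1978}: cover $\mathcal{N}_n$ by finitely many cells of mesh proportional to $\delta r_n$, control the finite maximum over vertices by the pointwise bound, and bound the within-cell oscillation using monotonicity of the indicator by a band-indicator term whose mean is $O(\delta r_n)$ and whose centered part is again $o_p(r_n)$; sending $\delta\to0$ after $n\to\infty$ gives the uniform bound. The main obstacle is precisely this equicontinuity step over a random, shrinking neighborhood, but it reduces to the monotonicity-based machinery of Lemma~\ref{lemma:equi1} and Lemma~\ref{lemma:precision1}. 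Assembling the three pieces and evaluating at $\beta=(\widehat{q}_0-q_0,\widehat{q}_1-q_1)\in\mathcal{N}_n$ yields the stated expansion.
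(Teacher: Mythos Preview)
Your proposal is correct and follows essentially the same route as the paper: a drift-plus-fluctuation split, a Taylor expansion of the conditional c.d.f.\ for the drift using continuity of $f_{Y|X}$, and a stochastic-equicontinuity grid argument for the centered piece. The only cosmetic point is that the directly relevant equicontinuity result is Lemma~\ref{lemma:equi2} (for $S(b)=\frac{1}{\sqrt{nh}}\Sum k(X_{h,i})X_{h,i}^j\one{Y_i'(b)\le0}$) rather than Lemma~\ref{lemma:equi1}; rescaling that lemma's $o_p(1)$ conclusion by $(nh)^{-1/2}\le r_n$ gives your $\sup_{\beta\in\mathcal{N}_n}|\nu_{n,j}(\beta)|=o_p(r_n)$ immediately, and the grid argument carries over to the slightly larger $r_n$-neighborhood exactly as you describe.
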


\begin{proof} Following the steps of the proof of Lemma \ref{lemma:equi2}, one can show that
	\begin{align*}
		 \frac{1}{n} \Sum \khi \Xihj & \left( \one{Y_i \leq \widehat{Q}(\eta,X_i)  }- \E[ \one{Y \leq L_i(b)}|X=X_i] \big|_{ b=\widehat{q}(\eta) } \right) \\
		& =  \frac{1}{n} \Sum \khi \Xihj  \left( \one{Y_i \leq Q^*(\eta,X_i) }- \E[\one{Y \leq Q^*(\eta,X)}|X=X_i] \right) + o_p(r_n).
	\end{align*}
	The result follows by a Taylor expansion using continuity of $f_{Y|X}(y|x)$.
\end{proof}

\begin{proof}[Proof of Proposition \ref{prop:A:NM}]
	\textit{Part (i).} The result follows by standard arguments, using the fact that
	\begin{align*}
		& \E\left[ \left(\frac{1}{\eta} Y \one{Y \leq Q(\eta,X)}-m(\eta,X)\right)^2\Big|X=x_0 \right] \\
		& \quad = \E\left[ \left(\frac{1}{\eta} (Y-m(\eta,X)) \one{Y \leq Q(\eta,X)}- \frac{1}{\eta} m(\eta,X)(\eta - \one{Y \leq Q(\eta,X)} ) \right)^2\Big|X=x_0 \right] \\
		& \quad =\frac{1}{\eta}\Var[Y|Y\leq Q(\eta,X), X=x_0] + \frac{(1-\eta)}{\eta}m(\eta,x_0)^2.
	\end{align*}
	
	\noindent\textit{Part (ii).} It holds that
	\[
	\widehat{m}^{\sss NM}(\eta,x_0;a,h) = \frac{S_{n,2}T_{n,0}(a) - S_{n,1}T_{n,1}(a) }{S_{n,2} S_{n,0} - S_{n,1}^2 }
	\]
	where  $T_{n,j}(a)=\frac{1}{n} \Sum \khi \Xihj \frac{1}{\eta} Y_i \one{Y_i \leq \widehat{Q}(\eta,X_i;a)}$.
	I consider the numerator. First, by Lemma~\ref{lemma:precision1}:
	\begin{align*}
		T_{n,j}(a) & = \frac{1}{n} \Sum \khi \Xihj \frac{1}{\eta} (Y_i-Q^*(\eta,X_i)) \one{Y_i \leq Q^*(\eta,X_i)}   \\
		& \quad + \frac{1}{n} \Sum \khi \Xihj  \frac{1}{\eta} Q^*(\eta,X_i)\one{Y_i \leq \widehat{Q}(\eta,X_i;a)} + O_p(r_n^2)\\
	\end{align*}
	Further, using Lemma~\ref{lemma:precision3}, 
	\begin{align*}
		 T_{n,j}(a) & = \frac{1}{n} \Sum \khi \Xihj  \frac{1}{\eta} Y_i \one{Y_i \leq Q^*(\eta,X_i)}\\
		& \quad + \frac{1}{n} \sum_{i=1}^n \khi \Xihj f_{Y|X}( q_0(\eta)|x_0) \frac{1}{\eta} q_0(\eta)  \left( \widehat{q}_0(\eta;a)-q_0(\eta) + (\widehat{q}_1(\eta;a)-q_1(\eta))(X_i-x_0)  \right)  \\
		&\quad  + \frac{1}{n} \sum_{i=1}^n \khi \Xihj f_{Y|X}( q_0(\eta)|x_0) \frac{1}{\eta} q_1(\eta)(X_i-x_0) \\
		&\quad \quad \times\left( \widehat{q}_0(\eta;a)-q_0(\eta) + (\widehat{q}_1(\eta;a)-q_1(\eta))(X_i-x_0)  \right) + o_p(r_n).
	\end{align*}
	The last term is of order $O_p(r_nh)$.
	Let $u_i^*(\eta)= \frac{1}{\eta} (Y_i-m^*(\eta,X_i)) \one{Y_i \leq Q^*(\eta,X_i)} $, $e_i^*(\eta)= \frac{1}{\eta}( \eta - \one{Y_i \leq Q^*(\eta,X_i)})$, and
	\begin{align*}
		E_{n,j}(a,h) & =\frac{1}{n} \Sum \khi \Xihj  (u_i^*(\eta) -m^*(\eta,x_0) e_i^*(\eta))  + \frac{1}{n} \Sum k_{a,i} X_{a,i}^j  q_0(\eta) e_i^*(\eta).
	\end{align*}
	It follows that
	\begin{align*}
		\widehat{m}^{\sss NM}(\eta,x_0;a,h) = m(\eta,x_0) + \frac{ \mu_2 E_{n,0}(a,h) - \mu_1 E_{n,1}(a,h) }{ (\mu_2 \mu_0 - \mu_1^2)f_X(x_0) } + o_p(r_n).
	\end{align*}
	Asymptotic normality follows from standard results. The bias expression follows from:
	\begin{align*}
		&	\frac{d^2}{dx^2}\E[ u^{*}(\eta)|X=x]|_{x=x_0}= \partial_{x}^2 m(\eta,x_0) - \frac{1}{\eta} f_{Y|X}(q_0(\eta)|x_0)(q_0(\eta)-m(\eta,x_0) )\partial_x^2Q(\eta,x_0),\\ 
		& \frac{d^2}{dx^2}\E[e^*(\eta)|X=x]|_{x=x_0}= \frac{1}{\eta} f_{Y|X}(q_0(\eta)|x_0) \partial_x^2Q(\eta,x_0).
	\end{align*}
	The variance is calculated as follows. Recall that $h/a \to \rho$. It holds that
	\begin{align*}
		\Var[u^*(\eta)|X=x_0]  = \frac{1}{\eta} \Var[Y|Y\leq Q(\eta,X), X=x_0] \text{ and } \Var[e^*(\eta)|X=x_0]=\frac{1-\eta}{\eta}, 
	\end{align*}
	Furthermore, $u^*(\eta)$ and $e^*(\eta)$ are uncorrelated conditional on $X=x_0$ and
	\begin{align*}
		\Var&\bigg[ \frac{1}{n} \Sum \left( k_a(X-x_0) (\mu_2 - \mu_1 X_a) Q(\eta,x_0) - k_h(X-x_0) (\mu_2 - \mu_1 X_h) m(\eta,x_0) \right) e^*(\eta) |X_1,...,X_n \bigg]\\
		& = \frac{1}{nh} \int_{\mc D} \left(  \rho k(v\rho)(\mu_2 - \mu_1 v \rho)  Q(\eta,x_0) -k(v)(\mu_2-\mu_1 v) m(\eta,x_0)  \right)^2 dv \frac{1-\eta}{\eta} f_X(x_0)(1+o_p(1)),
	\end{align*}
	which concludes the proof.
\end{proof}

\begin{proof}[Proof of Proposition \ref{prop:A:TS}]
	\textit{Part (i).} It holds that
	\[
	\widetilde{m}^{\sss TS}(\eta,x_0;h) = m(\eta,x_0) + \frac{\widetilde{S}_{n,2} \widetilde{U}_{n,0} - \widetilde{S}_{n,1} \widetilde{U}_{n,1} }{\widetilde{S}_{n,2} \widetilde{S}_{n,0} - \widetilde{S}_{n,1}^2 },
	\]
	where 
	\begin{align*}
		& \widetilde{U}_{n,j} = \frac{1}{n} \Sum \khi \Xihj \frac{1}{\eta} (Y_i-m^*(\eta,X_i)) \one{Y_i \leq Q(\eta,X_i)},\\  
		& \widetilde{S}_{n,j} = \frac{1}{\eta n} \Sum \khi \Xihj \one{Y_i \leq Q(\eta,X_i)}= \mu_j f_X(x_0) + o_p(1).
	\end{align*}
	The result follows from standard arguments.
	
	\noindent\textit{Part (ii).} It holds that
	\[
	\widehat{m}^{\sss TS}(\eta,x_0;a,h) = m(\eta,x_0) + \frac{\widehat{S}_{n,2} U_{n,0}(a,h) - \widehat{S}_{n,1}U_{n,1}(a,h) }{\widehat{S}_{n,2} \widehat{S}_{n,0} - \widehat{S}_{n,1}^2 },
	\]
	where 
	\begin{align*}
		& U_{n,j}(a,h)=\frac{1}{n} \Sum \khi \Xihj \frac{1}{\eta} (Y_i-m^*(\eta,X_i)) \one{Y_i \leq \widehat{Q}(\eta,X_i;a)},\\ 
		& \widehat{S}_{n,j}(a)=\frac{1}{\eta n} \Sum \khi \Xihj \one{Y_i \leq \widehat{Q}(\eta,X_i;a)} = \mu_j f_X(x_0) + o_p(1).
	\end{align*}
	Lemma \ref{lemma:precision1} yields that
	\begin{align*}
		U_{n,j}(a,h)	& =\frac{1}{n} \Sum \khi \Xihj \frac{1}{\eta} (Y_i-Q^*(\eta,X_i)) \one{Y_i \leq Q^*(\eta,X_i)} \\
		& \quad+ \frac{1}{n} \Sum \khi \Xihj \frac{1}{\eta} (Q^*(\eta,X_i) - m^*(\eta,X_i) ) \one{Y_i \leq \widehat{Q}(\eta,X_i;h)} + O_p(r_n^2).
	\end{align*}
	Further, using Lemma~\ref{lemma:precision3}, 
	\begin{align*}
		U_{n,j}(a,h)	& =\frac{1}{n} \Sum \khi \Xihj \frac{1}{\eta} (Y_i-Q^*(\eta,X_i)) \one{Y_i \leq Q^*(\eta,X_i)} \\
		& \quad + \frac{1}{n} \Sum \khi \Xihj \frac{1}{\eta} (Q^*(\eta,X_i) - m^*(\eta,X_i) ) \one{Y_i \leq Q^*(\eta,X_i)} \\
		& \quad + \frac{1}{n} \sum_{i=1}^n \khi \Xihj \frac{1}{\eta} (Q^*(\eta,X_i) - m^*(\eta,X_i) ) f_{Y|X}(Q(\eta,x_0)|x_0) \\
		& \quad \times \left( \widehat{q}_0(\eta;a)-q_0(\eta) + (\widehat{q}_1(\eta;a)-q_1(\eta))(X_i-x_0)  \right) + o_p(r_n).	
	\end{align*}
	It follows that 
	\begin{align*}
		\widehat{m}^{\sss TS}(\eta,x_0;h,h) & = m(\eta,x_0) + \frac{\widehat{S}_{n,2} U^*_{n,0}(a,h) - \widehat{S}_{n,1}U^*_{n,1}(a,h) }{\widehat{S}_{n,2} \widehat{S}_{n,0} - \widehat{S}_{n,1}^2 } \\
		& + \frac{1}{\eta} (Q(\eta,x_0) - m(\eta,x_0) ) f_{Y|X}(Q(\eta,x_0)|x_0) (\widehat{q}_0(\eta;a)-q_0(\eta)),
	\end{align*}
	where $U^*_{n,j}(h)=\frac{1}{n} \Sum \khi \Xihj u^{*}_i(\eta)$
	with $u^{*}_i(\eta)= \frac{1}{\eta} (Y_i-m^*(\eta,X_i)) \one{Y_i \leq Q^*(\eta,X_i)}$.
	The variance and bias expressions follow from the calculations in the proof of Proposition~\ref{prop:A:NM}.
\end{proof}

\section{Additional Simulations}\label{A:ROT}
\citet{armstrong2020simple} propose a rule of thumb to calibrate the bound on the second derivative of the conditional expectation function. They run a quartic, global regression, and estimate the maximal second derivative based on it.
I adapt this approach to calibrate the bound on $\partial^2_x m(\eta,x)$. In the first stage, I obtain  $\wh{Q}^{glob}(\eta,X_i)$ in a quartic quantile regression. In the second stage, I run a global quartic regression with $\psi_i(\eta, \wh{Q}^{glob}(\eta,X_i))$ as the outcome variable.

I investigate the performance of this procedure in the setting from Section \ref{sec:MonteCarlo}. The results are presented in Table \ref{table:sim2_A}. In this example, the rule of thumb leads to CIs with good coverage properties. This is consistent with the findings of \citet{armstrong2020simple}.

\begin{table}[hbt]
	\centering
	\caption{Coverage, average bandwidth, and average length of the 95\% CI.}
	\scalebox{0.95}{\begin{threeparttable}[hbt]
			\begin{tabular}{llccccccccc}
				\hline
				& & \multicolumn{3}{c}{Coverage} & \multicolumn{3}{c}{Bandwidth} & \multicolumn{3}{c}{CI length} \\
				& Design for $m_j$:	& 1 & 2 & 3 & 1 & 2 & 3 & 1 & 2 & 3 \\ 
				\hline
				
				\multicolumn{10}{l}{\textit{Homoskedastic errors}} \\
				
				\multirow{2}{*}{$\eta=0.2$} & Infeasible $\widetilde{m}$ & 93.6 & 92.1 & 95.4 & 0.231 & 0.310 & 0.257 & 0.128 & 0.113 & 0.120 \\ 
				& Feasible $\wh{m}$ & 93.4 & 92.2 & 95.7 & 0.227 & 0.307 & 0.260 & 0.128 & 0.113 & 0.119\vspace{4pt} \\ 
				\multirow{2}{*}{$\eta=0.5$} & Infeasible $\widetilde{m}$  & 95.0 & 93.1 & 96.0 & 0.207 & 0.279 & 0.231 & 0.104 & 0.091 & 0.098 \\ 
				& Feasible $\wh{m}$  & 94.9 & 93.3 & 96.1 & 0.204 & 0.277 & 0.233 & 0.104 & 0.092 & 0.098\vspace{4pt} \\ 
				\multirow{2}{*}{$\eta=0.8$} & Infeasible $\widetilde{m}$ & 95.7 & 94.0 & 96.2 & 0.197 & 0.266 & 0.222 & 0.095 & 0.083 & 0.089 \\ 
				& Feasible $\wh{m}$  & 95.7 & 94.0 & 96.4 & 0.196 & 0.265 & 0.222 & 0.095 & 0.084 & 0.089 \\ \hline

				\multicolumn{10}{l}{\textit{Heteroskedastic errors}} \\
				
				\multirow{2}{*}{$\eta=0.2$} & 		Infeasible $\widetilde{m}$ & 93.4 & 92.6 & 95.6 & 0.239 & 0.310 & 0.250 & 0.129 & 0.115 & 0.123 \\ 
				&		Feasible $\wh{m}$ & 93.5 & 92.9 & 95.8 & 0.235 & 0.307 & 0.254 & 0.129 & 0.116 & 0.122\vspace{4pt}\\ 
				\multirow{2}{*}{$\eta=0.5$} &	Infeasible $\widetilde{m}$  & 95.0 & 93.6 & 96.5 & 0.213 & 0.277 & 0.225 & 0.104 & 0.093 & 0.100 \\ 
				&		Feasible $\wh{m}$  & 95.1 & 93.7 & 96.5 & 0.210 & 0.276 & 0.227 & 0.105 & 0.094 & 0.100\vspace{4pt} \\ 
				\multirow{2}{*}{$\eta=0.8$} & Infeasible	$\widetilde{m}$  & 95.7 & 94.3 & 96.6 & 0.202 & 0.264 & 0.215 & 0.095 & 0.085 & 0.091 \\ 
				&		Feasible $\wh{m}$  & 95.7 & 94.3 & 96.7 & 0.201 & 0.263 & 0.216 & 0.096 & 0.085 & 0.092 \\ \hline
			\end{tabular}
			\begin{tablenotes}\small
				\item\hspace{-3pt}\textit{Notes:} Estimators evaluated with their respective RMSE-optimal bandwidths.  The sample size is $n = 1,000$, and the number of simulations is $S = 10,000$. The smoothness constant is selected using the rule of thumb discussed in Online Appendix~\ref{A:ROT}.
			\end{tablenotes}
	\end{threeparttable}}
	\label{table:sim2_A}
\end{table}

\end{document}